\newcommand*{\centernot}{%
	\mathpalette\@centernot
}
\def\@centernot#1#2{%
	\mathrel{%
		\rlap{%
			\settowidth\dimen@{$\m@th#1{#2}$}%
			\kern.5\dimen@
			\settowidth\dimen@{$\m@th#1=$}%
			\kern-.5\dimen@
			$\m@th#1\not$%
		}%
		{#2}%
	}%
}
\newcommand{\vertiii}[1]{{\left\vert\kern-0.20ex\left\vert\kern-0.20ex\left\vert #1 
		\right\vert\kern-0.20ex\right\vert\kern-0.20ex\right\vert}}
\newcommand{\norm}[1]{\left\lvert#1\right\rvert}
\newcommand{\lnorm}[1]{\left\lVert#1\right\rVert}
	\DeclareRobustCommand{\stirlingii}{\genfrac\{\}{0pt}{}}
	\DeclareMathOperator*{\argmax}{arg\,max}
	\newtheorem{defin}{Definition}
	\newtheorem{theorem}{Theorem}
	\newtheorem{corollary}{Corollary}
	\newtheorem{proposition}{Proposition}
	\newtheorem{lemma}{Lemma}
	\newcounter{rmk}
	\newtheorem{rem}[rmk]{Remark}
	\newcounter{example}
	\newtheorem{ex}[example]{Example}
	\newcommand{\N}{\mathbb{N}}
	\newcommand{\Z}{\mathbb{Z}}
	\newcommand{\R}{\mathbb{R}}
	\newcommand{\E}{\mathrm{E}}
	\newcommand{\Y}{\mathbf{Y}}
	\newcommand{\A}{\mathbf{A}}
	\newcommand{\B}{\mathbf{B}}
	\newcommand{\G}{\mathbf{G}}
	\newcommand{\U}{\mathbf{U}}
	\newcommand{\W}{\mathbf{W}}
	\newcommand{\w}{\mathbf{w}}
	\newcommand{\I}{\mathbf{I}}
	\newcommand{\X}{\mathbf{X}}
	\newcommand{\betab}{\boldsymbol{\beta}}
	\newcommand{\lambdab}{\boldsymbol{\lambda}}
	\newcommand{\xib}{\boldsymbol{\xi}}
	\newcommand{\nub}{\boldsymbol{\nu}}
	\newcommand{\thetab}{\boldsymbol{\theta}}
	\newcommand{\mub}{\boldsymbol{\mu}}
	\newcommand{\pib}{\boldsymbol{\pi}}
	\newcommand{\etab}{\boldsymbol{\eta}}
	\newcommand{\Sigmab}{\boldsymbol{\Sigma_\xi}}
	\newcommand{\Gammab}{\boldsymbol{\Gamma}}
	\newcommand{\Fb}{\mathcal{F}}
	\author[1,2,*]{Mirko Armillotta}
	\author[3,**]{Konstantinos Fokianos}
	\affil[1]{Department of Econometrics and Data Science, Vrije Universiteit Amsterdam}
	\affil[2]{Tinbergen Institute}
	\affil[3]{Department of Mathematics and Statistics, University of Cyprus} %
		\renewcommand\AB@affilsepx{: \protect\Affilfont}
		\affil[ ]{Email}
		\renewcommand\AB@affilsepx{, \protect\Affilfont}
		\affil[*]{m.armillotta@vu.nl}
		\affil[**]{fokianos@ucy.ac.cy}
	\title{Count  Network Autoregression}
	\date{16th November 2023}
\begin{document}
		\maketitle
		\begin{abstract}
			\noindent 
			We consider network autoregressive models for count data with a non-random neighborhood structure. The main methodological contribution is the development of conditions that guarantee stability and valid statistical inference for such models. We consider both cases of fixed and increasing network dimension and we show that quasi-likelihood inference provides consistent and asymptotically normally distributed estimators. The work is complemented by simulation results and a data example. 
		\end{abstract}
		\noindent \textbf{Keywords}: generalized linear models, increasing dimension, link function, multivariate count time series, quasi-likelihood. \\ \\
		\noindent \textbf{AMS 2020 subject classification:} 62M10
		
		\advance\cftsecnumwidth 0.5em\relax
		\advance\cftsubsecindent 0.5em\relax
		\advance\cftsubsecnumwidth 0.5em\relax
		
		\tableofcontents
		
		\section{Introduction}
		\label{intro}
		
		The vast availability of integer-valued data, emerging from several real world applications, has motivated the growth of a large body of  literature for modeling and inference of  count time series processes.
		For comprehensive surveys,  see \cite{fok2002}, \cite{weiss2018} \cite{davis_et_al_2021}, among others. 
		The aim of this contribution is to develop a statistical framework  for  network count time series which are simply multivariate time series equipped with a neighborhood structure. Consider the vector which consists of all  node measurements at some time $t$. This is going to be the response vector  we will be studying and we will assume that its evolution is influenced not only by past observations but also by its neighbors.   
		We consider such processes assuming that their neighborhood structure is known. We deal  with a  multivariate problem whose main challenge is that the response vector   is high-dimensional 
		and therefore we study, in detail, this case as we  explain below. 
		
		\subsection{Related Work}
		
		Early contributions to the development of count time series models were  the Integer Autoregressive models (INAR) \cite{al1987, al1990} and observation \citep{liang1986} or parameter driven models \citep{Zeger(1988)}. The latter classification, due to  \cite{cox1981},
		will be particularly useful as we will be developing theory for  count   observation-driven models. 
		
		In this contribution, we appeal to the generalized linear model (GLM) framework, see \citet{McCullaghandNelder(1989)}, as it provides a natural extension of continuous-valued time series
		to integer-valued processes. The GLM framework accommodates likelihood inference and supplies a toolbox  whereby 
		testing and diagnostics can also be advanced.   Some examples of observation-driven models
		for count time series include the works by \cite{davis2003}, \cite{Heinen(2003)}, \cite{FokianosandKedem(2004)}
		and \cite{fer2006}, among others. Related  work includes  \cite{fok2009} and   \cite{fok2011} who develop properties
		and estimation for a class of linear and log-linear count time series models. Further related contributions
		have appeared over the last years; see  \cite{chri2014} for quasi-likelihood inference of negative binomial processes,
		\cite{fra2016} for quasi-likelihood inference based on suitable moment assumptions. In addition,  \cite{douc2013}, \cite{Dunsmuir_2016}, \cite{davis2016}, \cite{cui2017}, \cite{fok2017} and more recently \cite{armillotta_2022_EJS}, among others, provide further generalizations of observation-driven models leaning on general distribution functions or one-parameter exponential family of distributions.
		Theoretical properties of such models have been fully investigated using various techniques;  \cite{fok2009} developed
		initially a perturbation approach,  \cite{Neumann(2010)} employed  the notion of $\beta$-mixing, \citet{Doukhanetal(2011)} (weak dependence approach),    \cite{Woodardetall(2010)} and
		\cite{douc2013}  (Markov chain theory without irreducibility assumptions) and \cite{Wangetal(2014)} 
		(using   $e$-chains theory; see \cite{MeynandTweedie(1993)}).

		Studies of  multivariate 
		INAR models include those of  \cite{lat1997}, \cite{pedeli2011, pedeli2013, pedeli22013}, among others. Theory and inference for multivariate count time series models is a research topic which is receiving increasing attention. In particular, observation-driven models and their properties are discussed by  \cite{HeinenandRegifo(2007)},
		\cite{Liu(2012)}, \cite{Andreassen(2013)}, \cite{Ahmad(2016)} and \cite{Leeetal(2017)}. More recently, \cite{fok2020} introduced a multivariate extension of the linear and log-linear Poisson autoregression model, by employing a copula-based construction for the joint distribution of the counts. The authors employ Poisson processes' properties to introduce joint dependence of counts over time. 
		In doing so, they avoid technical difficulties associated with the non-uniqueness of copula for discrete distributions 
		\citep[pp.~507-508]{GenestandNeslehova(2007)}. They propose a plausible data generating process which preserves, marginally, Poisson processes'  properties,  conditional on  the past. Further details are given by the recent review of \cite{fokianos_2021}.
		
		\subsection{Network Time Series}
		
		Multivariate observation-driven count time series models are useful  for modeling  time-varying network data. Such data is increasingly available in many scientific areas  (social networks, epidemics, etc.).
		Measuring the impact of a network structure to a multivariate time series process has attracted considerable attention over the last years. 
		In an unpublished work, \cite{knight_etal_2016} defined multivariate continuous time series coupled with a network structure  
		as network time series. Furthermore these authors proposed methodology for the analysis of such data. Such  approach has been originally proposed in the  context of spatio-temporal data analysis, referred to as Space-Time Autoregressive Moving Average (STARMA) models;  \cite{cliff_ord_1975space}, \cite{martin_oeppen_1975identification} and \cite{pfeifer1980three},  among many others. In general,  any stream of data for a sample of units whose relations can be modeled through an adjacency  matrix (neighborhood structure), adhere to  statistical techniques developed in this work.
		\cite{zhu2017} have discussed a  similar model, called Network Autoregressive model (NAR), which is an autoregressive model for continuous valued  network data and established associated least squares inference under two asymptotic regimes (a) with increasing time sample size $T\to\infty$ and fixed network dimension $N$ and (b) with both $N,T$ increasing. More precisely, it is assumed that  $N\to \infty$ and $T_N\to\infty$, i.e. the temporal sample size is assumed to depend on $N$. 
		The regime (a) corresponds to standard asymptotic inference in time series analysis. However, in network analysis it is important to understand the behavior of the process  when the network's  dimension grows.  This is a relevant problem in fields where typically the network is large, see, for example, social networks in \cite{wass1994}. It is
		also  essential  to have stability  conditions for large network structures, so that proper time series inference can be advanced; those problems motivate study of asymptotics under regime (b). Significant extension of this work to network quantile autoregressive models has been recently reported by \cite{zhu2019}. Some other extensions of the NAR model include the  grouped least squares estimation \citep{zhu2020} and a network version of the GARCH model, see \cite{zhou2020} but for the case of $T\to\infty$ and fixed network dimension $N$. Under the standard asymptotic regime (a), related work was also developed by  \citet{Knightetal(2020)} who specified a Generalized Network Autoregressive model (GNAR) for continuous random variables, which takes into  account different layers  of relationships within  neighbors of the network. Moreover, the same authors provide \textsf{R} software (package \textsf{GNAR}) for fitting such models. 

		\subsection{Our contribution}
		
		Integer-valued responses are commonly encountered in real applications and are strongly connected to network data. For example, several data of interest in social network analysis correspond to integer-valued responses (number of posts, number of likes, counts of digit employed in comments, etc). Another typical field of application is related to the number of cases in epidemic models for studying the spread of infection diseases in a population; this is even more important in the current COVID-19 pandemic outbreak. Recently, an application of this type which employs a model similar to the NAR with count data has been suggested by  \cite{bracher_2020endemic}. Therefore, the extension of the NAR model to multivariate count time series is an important theoretical and methodological contribution which is not covered by the existing literature, to the best of our knowledge. 
		
		The main goal of this work  is to fill this gap by specifying linear and log-linear Poisson network autoregressions (PNAR) for count processes and by studying in detail the two related types of asymptotic inference discussed above. Moreover, the development of all network time series models discussed so far relies strongly on the assumption that the innovations are Independent and Identically Distributed (IID). Such a condition might not be realistic in many applications. We overcome this limitation  by employing the notion of $L^p$-near epoch dependence (NED), see \cite{and1988}, \cite{PoetscherandPrucha(1997)}, and the related concept of $\alpha$-mixing \citep{rosen1956,douk1994}. These notions allow relaxation of the independence assumption as they provide some guarantee of asymptotic independence over time. An elaborate and flexible dependence structure among variables, over time and over the nodes composing the network, is available for all models we consider due to the definition of a full covariance matrix, where the dependence among variables is captured by the copula construction introduced in \cite{fok2020}. For an alternative approach to modeling multivariate counts in continuous time see \citet{Veraart(2020)},  \cite{eyjolfsson_2021multivariate}, and \cite{fang_2020group} for a network model employing Hawkes processes which are related to the linear and log-linear model we will be studying. Indeed those models are obtained after suitable discretization of the corresponding continuous time process. However our proposal imposes a specific data generating process, does not assume homogeneity across the network and the condition required for obtaining good large sample properties of the QMLE are quite different than those assumed  by \cite{fang_2020group}. 
		
		For the continuous-valued case, \cite{zhu2017} employed ordinary least square (OLS) estimation combined with specific properties imposed on the adjacency matrix for the estimation of unknown model  parameters. However, this method is not applicable to general time series models. In the  case we study, estimation is carried out by using quasi-likelihood methods; see \cite{hey1997}, for example. When the network dimension $N$ is fixed and  $T\to\infty$, standard results for Quasi Maximum Likelihood Estimation (QMLE) from multivariate count autoregressions, as developed by \cite{fok2020}, carry over to the case of  PNAR models. When the network dimension is increasing, the asymptotic properties of the estimators would rely on  the ergodicity of a stationary random process $\left\lbrace \Y_t : t\in\Z \right\rbrace$ with $N\to\infty$. However, there exists no widely accepted definition for stationarity of a process with infinite dimension. Consequently no ergodicity results are available for processes with $N\to\infty$
		and standard time series results concerning convergence of sample means  do not carry over  to  the increasing dimension case. In the present contribution, this problem is bypassed  by providing an alternative proof, based on the laws of large numbers for $L^p$-NED processes of \cite{and1988}. Our method employs the working definition of stationarity of  \citet[Def.~1]{zhu2017} for processes of increasing dimension. All these developments are crucial to a thorough study of QMLE under the double regime asymptotics we consider. Finally, we are addressing several other related problem, including estimation of contemporaneous dependence and improving the efficiency of the QMLE. 
		
		\subsection{Outline}

		The paper is organized as follows: Section \ref{SEC:Linear Models} discusses the PNAR($p$) model specification for the linear and the log-linear case, with lag order $p$, and the related stability properties.  In Section \ref{SEC: inference},  quasi-likelihood inference is established, showing consistency and asymptotic normality of the QMLE for the two types of asymptotics (a)-(b). 
		Section \ref{SEC: application} discusses the results of a simulation study and an application on real data. The paper concludes with an Appendix containing the proofs of Theorem~1 and Lemma~1-2. All the other proofs are included in the Supplementary Material (abbreviated by SM) together with additional results.
		
		\paragraph{Notation:} We denote $|\mathbf{x}|_r=(\sum_{j=1}^{d}\norm{x_j}^r)^{1/r}$ the $l^r$-norm of a $d$-dimensional vector $\mathbf{x}$. If $r=\infty$, $|\mathbf{x}|_\infty=\max_{1\leq j\leq d}|x_{j}|$. Let $\lVert \mathbf{X}\rVert_r=(\sum_{j=1}^{d}\E(|X_j|^r))^{1/r}$ the $L^r$-norm for a random vector $\mathbf{X}$.  For a $q \times p$ matrix $\A=(a_{ij})$, ${i=1,\ldots,q, j=1,\ldots,p}$, denotes the generalized matrix norm $\vertiii{\A}_{r}= \max_{\norm{\bf x}_{r}=1} \norm{\A\textbf{x}}_{r}$. If $r=1$, then $\vertiii{\A}_1=\max_{1\leq j\leq p}\sum_{i=1}^{q}|a_{ij}|$. If $r=2$, $\vertiii{\A}_2=\rho^{1/2}(\A^T\A)$, where $\rho(\cdot)$ is the spectral radius. If $r=\infty$, $\vertiii{\A}_\infty=\max_{1\leq i\leq q}\sum_{j=1}^{p}|a_{ij}|$. If $q=p$, then these norms are matrix norms. Define $\lambda_{\max}(\mathbf{M})$ the largest absolute eigenvalue of a symmetric matrix $\mathbf{M}$. Define $\norm{\mathbf{x}}_v=(\norm{x_1},\dots,\norm{x_d})^\prime$, $\norm{\A}_v=(\norm{a_{i,j}})_{(i,j)}$ and  $\lnorm{\mathbf{X}}_v=(\E\norm{X_1},\dots,\E\norm{X_d})^\prime$ the elementwise $l^1$-norm for vectors, matrices and random vectors, respectively. 
		Moreover, denote by $\preceq$ a partial order relation on $\mathbf{x},\mathbf{y}\in\R^d$ such that $\mathbf{x}\preceq \mathbf{y}$ means $x_i\leq y_i$ for $i=1,\dots,d$. For a $d$-dimensional vector $\mathbf{x}$, with $d\to\infty$, set the following compact notation $\sup_{1\leq i< \infty}x_i=\sup_{i\geq 1}x_i$. 
		The notations $C_r$ and $D_r$ denote a constant which depend on $r$, where $r\in\N$. In particular $C$ denotes a generic constant. Finally, throughout the paper the notation $ \left\lbrace N, T_N \right\rbrace \to\infty$ will be used as a shorthand for $N\to\infty$ and $T_N\to\infty$, where the temporal size $T$ is assumed to depend on the network dimension $N$.

		\section{Stability results for count network time series}
		\label{SEC:Linear Models}
		
		We  consider a network with $N$ nodes (network size) and index $i=1,\dots, N$. The structure of the network is completely described by the adjacency matrix $\A=(a_{ij})\in\R^{N\times N}$, i.e. $a_{ij}=1$ provided that there exists a directed edge from $i$ to $j$, $i\to j$ (e.g. user $i$ follows $j$ on Twitter), and $a_{ij}=0$ otherwise. However, undirected graphs are allowed ($i\leftrightarrow j$). The structure of the network is assumed non-random, by this we mean that the network is known with fixed edges; see also \cite{zhu2017}. Self-relationships are not allowed, i.e. $a_{ii}=0$ for any $i=1,\dots, N$; this is a typical assumption, and it is reasonable for various real situations, e.g. social networks, where  users do not follow themselves;  see \cite{wass1994}, \cite{kola2014}. Define a  count variable $Y_{i,t}\in\R$ for the node $i$ at time $t$. We want to assess the effect of the network structure on the count variable $\left\lbrace Y_{i,t} \right\rbrace $ for $i=1,\dots, N$ over time $t=1,\dots, T$.
		
		In this section, we study the properties of linear and log-linear models. We initiate this study  by considering a simple, yet illuminating, case of a linear model of order one and then we consider  the
		more general case of  $p$'th order model. Finally, we discuss log-linear models.
		In what follows, we  denote by  $\left\lbrace\Y_t=(Y_{i,t},\,i=1, 2,\dots,  N,\,t=0,1,2\dots,T)\right\rbrace $  an $N$-dimensional vector of count time series with
		$\left\lbrace \lambdab_t=(\lambda_{i,t},\,i=1, 2,\dots, N,\,t=1, 2,\dots,T)\right\rbrace $ be the corresponding $N$-dimensional intensity process vector.
		Define by $\Fb_t=\sigma(\Y_s: s\leq t)$. Based on the specification of the model, we assume that  $\lambdab_t=\E(\Y_t|\Fb_{t-1})$.
		\subsection{Linear PNAR(1) model}
		\label{Sec:Properties of order 1 model}
		
		A linear count network model of order 1, is given by
		\begin{equation}
			Y_{i,t}|\mathcal{F}_{t-1}\sim Poisson(\lambda_{i,t}), ~~~ \lambda_{i,t}=\beta_0+\beta_1n_i^{-1}\sum_{j=1}^{N}a_{ij}Y_{j,t-1}+\beta_2Y_{i,t-1}\,,
			\label{lin1}
		\end{equation}
		where $\beta_0, \beta_1, \beta_2 \geq 0$ and $n_i=\sum_{j\neq i}a_{ij}$ is the out-degree, i.e the total number of nodes which $i$ has an edge with. From the left hand side equation of \eqref{lin1}, we observe that the process $Y_{i,t}$ is assumed to be marginally Poisson, conditionally to the past. 
		We call \eqref{lin1} linear Poisson network autoregression of order 1, abbreviated by PNAR(1). 
		
		Model \eqref{lin1} postulates that, for every single node $i$, the marginal  conditional mean of the process is regressed on the past count of the variable itself for $i$
		and the average count of the other nodes $j\neq i$ which have a connection with $i$.
		This model assumes that only the nodes which are directly followed by the focal node $i$  possibly have an impact on the mean process of counts.
		It is a reasonable assumption in many applications. For example, in a social network, the activity of node $k$, which satisfies  $a_{ik}=0$, does not affect node $i$.
		The parameter $\beta_1$ is called network effect, as it measures the average impact of node $i$'s connections $n_i^{-1}\sum_{j=1}^{N}a_{ij}Y_{j,t-1}$. The coefficient $\beta_2$ is called momentum effect because it provides a weight for the impact of past count $Y_{i,t-1}$. This interpretation is in line with the Gaussian NAR as discussed by \cite{zhu2017} for the case of continuous variables.
		
		Equation \eqref{lin1} does not include information about the joint dependence structure of the PNAR(1) model. 
			Then the goal is to introduce a multivariate random vector, at each time point $t$, whose each  component follow marginally the  Poisson distribution (conditionally to the past)  but there exists among them  arbitrary correlation. In a recent work, \cite{fok2020} defined such a distribution in terms of a data generating process  specified  by  an 
			algorithm  which generates a random vector  whose dependence among their components is introduced by imposing a copula  on the waiting times of a Poisson process; see also \cite[p.4]{fokianos_2021}. In this way, we can, define the multivariate copula Poisson distribution with parameter, say  $\lambdab=(\lambda_1, \ldots, \lambda_N)^{T}$, and denote it by $MCP(\lambdab)$, as
			an $N$-dimensional  random vector  whose components are   marginally Poisson distributed with mean $\lambda_i$, $i=1,2,\ldots,N$ and whose structure of dependence is modeled through the copula  $C(\dots)$ on their associated exponential waiting times random variables.
		It is then convenient to   rewrite \eqref{lin1} in vectorial form, following \cite{fok2020},
		\begin{equation} 
				\Y_t| \mathcal{F}_{t-1} \sim MCP(\lambdab_t), ~~~ \lambdab_t=\betab_0+\G\Y_{t-1}\,,
			\label{lin2}
		\end{equation}
		where $\betab_0=\beta_0\mathbf{1}_N\in\R^N$, with $\mathbf{1}=(1,1,\dots,1)^T\in\R^N$, and the matrix $\G=\beta_1\W+\beta_2\I_N$, where $\W=\textrm{diag}\left\lbrace n_1^{-1},\dots, n_N^{-1}\right\rbrace \A$ is the row-normalized adjacency matrix, with
		$\A=(a_{ij})$, so $\w_i=(a_{ij}/n_i,\,j=1,\dots,N)^T\in\R^N$ is the $i$-th row vector of the matrix $\W$, satisfying $\vertiii{\W}_\infty=1$, and  $\I_N$ is the $N \times N$ identity matrix. In general, the  weights $\w_i$ can be  chosen arbitrarily as long as $\vertiii{\W}_\infty=1$ is satisfied. 
		To obtain insight for the data generating process, as introduced by  \eqref{lin2},  consider a set of values $(\beta_0,\beta_1, \beta_2)^T$ and a starting vector $\lambdab_0=(\lambda_{1,0},\dots,\lambda_{N,0})^T$,
		\begin{enumerate}	
			\item Let $\mathbf{U}_{l}=(U_{1,l},\dots,U_{N,l})$, for $l=1,\dots,K$ a sample from a $N$-dimensional copula $C(u_1,\dots, u_N)$, where $U_{i,l}$ follows a Uniform(0,1) distribution, for $i=1,\dots, N$.
			\item The transformation $X_{i,l}=-\log{U_{i,l}}/\lambda_{i,0}$  follows the exponential distribution  with parameter $\lambda_{i,0}$, for $i=1,\dots, N$.
			\item If $X_{i,1}>1$, then $Y_{i,0}=0$, otherwise 
			$Y_{i,0}=\max\left\lbrace k\in[1,K]:  \sum_{l=1}^{k}X_{i,l}\leq 1\right\rbrace$, by taking $K$ large enough.
			Then, $Y_{i,0}|\lambdab_0 \sim Poisson(\lambda_{i,0})$, for $i=1,\dots, N$. So, $\Y_{0}=(Y_{1,0},\dots, Y_{N,0})$ is a set of (conditionally) marginal Poisson processes with mean $\lambdab_0$. 
			\item By using the model \eqref{lin2}, $\lambdab_1$ is obtained.
			\item Return back to step 1 to obtain $\Y_1$, and so on.
		\end{enumerate}
		In practical applications the sample size $K$ should be a large value, e.g. $K=1000$; its value clearly depends, in general, on the magnitude of observed data. Moreover, the copula construction $C(\dots)$ will depend on one or more unknown parameters, say $\rho$, which capture the contemporaneous correlation among the variables.
		
		The previous algorithm generates  a sample of multivariate counts for  practical simulations. In principle, the algorithm 
		simulates realizations of  a stochastic process $\left\lbrace \Y_t; t\in\Z \right\rbrace$, i.e. for all integers.
		Accordingly, $\lambdab_0$ is not a fixed vector but $\lambdab_0=\betab_0+\G\Y_{-1}$, being a function of $\mathcal{F}_{-1}$, then $Y_{i,0}|\lambdab_0 \sim Poisson(\lambda_{i,0})$ is equivalent to say $Y_{i,0}|\mathcal{F}_{-1} \sim Poisson(\lambda_{i,0})$. The same happens for $\lambdab_{-1}$ ad so on. Then, the data generating process (DGP) generates $Y_{i,t}$ being conditionally marginally Poisson for all $t\in\Z$.

		The development of a multivariate count time series
		model would be based on  specification of a joint distribution, so that the standard likelihood inference and testing procedures can be developed. Although several alternatives have been proposed in the literature, see the review in \citet[Sec. 2]{fokianos_2021}, the choice of a suitable multivariate version of the conditional Poisson probability mass function (p.m.f) is a challenging problem. In fact, multivariate Poisson-type p.m.f have usually complicated closed form and the associated likelihood inference is theoretically and computationally cumbersome. Furthermore, in many cases,  the available multivariate Poisson-type p.m.f.  implicitly imply restrictions on models with  limited use in applications (e.g. covariances always positive, constant pairwise correlations). In this work the joint distribution of the vector $\left\lbrace \Y_t \right\rbrace $  is constructed by following the copula approach described above. The proposed DGP ensures that all marginal distributions of $Y_{i,t}$ are univariate Poisson, conditionally to the past, as described in \eqref{lin1}, while it introduces an arbitrary dependence among them in a flexible and general way by the copula construction.  See \cite{inouye_2017} and \cite{fokianos_2021} for a  discussion on the choice of  multivariate count distributions and several alternatives.  Further results regarding the empirical properties of model \eqref{lin2} are discussed in Section~\ref{empirical_lin} of SM.
		
			We  choose the conditional multivariate copula  Poisson distribution for its simplicity and because it is a natural distributional assumption for counting number of events over a time period. However, any multivariate count distribution whose mean is modeled through \eqref{lin1}  and possesses moments up to an appropriate order fits the QMLE methodology   
			which employs  \eqref{log-lik} to derive consistent and asymptotically normally distributed estimators. 
			In fact, theory and applications  can be extended to other count distributions. By exploiting the same copula  construction and modifying suitably the generation of exponential waiting times, we can  define a conditional copula  multivariate Negative Binomial distribution, and more generally a conditional copula mixed Poisson distribution;  see \citet[p.~474]{fok2020}. A  complete treatment of such extensions remains unexplored.

		\subsection{Linear PNAR($p$) model}
		\label{Sec:Linear model of order p}
		
		More generally, we introduce and study an extension of model \eqref{lin1} by allowing $Y_{i,t}$ to depend on the last $p$ lagged values.  We call this the
		linear Poisson NAR($p$) model and its defined analogously to \eqref{lin1} but with 
		\begin{equation}
			\lambda_{i,t}=\beta_0+\sum_{h=1}^{p}\beta_{1h}\left( n_i^{-1}\sum_{j=1}^{N}a_{ij}Y_{j,t-h}\right) +\sum_{h=1}^{p}\beta_{2h}Y_{i,t-h}\,,
			\label{lin1_p}
		\end{equation}
		where $\beta_0, \beta_{1h}, \beta_{2h} \geq 0$ for all $h=1\dots,p$. If $p=1$, set $\beta_{11}=\beta_1$, $\beta_{22}=\beta_2$ to obtain \eqref{lin1}. The joint conditional distribution of the vector $\Y_{t}$ is defined by means of the copula construction
		discussed in Sec. \ref{Sec:Properties of order 1 model}. Without loss
		of generality, we can set coefficients equal to zero if the parameter order is different in both terms of \eqref{lin1_p}.  Then \eqref{lin1_p} is  rewritten as
		\begin{equation} 
				\Y_t |  \mathcal{F}_{t-1} \sim MCP(\lambdab_t) ~~~ \lambdab_t=\betab_0+ \sum_{h=1}^{p} \G_ h\Y_{t-h}\,,
			\label{lin2_p}
		\end{equation}
		where $\G_h=\beta_{1h}\W+\beta_{2h}\I_N$ for $h=1,\dots,p$ by recalling that $\W=\textrm{diag}\left\lbrace n_1^{-1},\dots, n_N^{-1}\right\rbrace \A$. The following result
		establishes  sharp verifiable conditions for proving ergodicity, when $N$ is fixed. 
		\begin{proposition} \rm
			\label{Prop. Ergodicity of linear model}
			Consider model \eqref{lin2_p}, with fixed $N$. Suppose that $\rho(\sum_{h=1}^{p} \G_{h})< 1$. Then,
			the process $\{ \mathbf{Y}_{t},~ t \in \mathbb{Z} \}$ is stationary and ergodic with $\mbox{E}\norm{\mathbf{Y}_{t}}_1^{r}  < \infty$ for any $r\geq1$.
		\end{proposition}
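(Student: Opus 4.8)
The strategy is to represent the process as a time-invariant measurable function of the i.i.d.\ innovations, so that stationarity and ergodicity are inherited from the innovation sequence, and then to control all moments through the same spectral condition. First I would remove the lag order by stacking, setting $\tilde{\Y}_t=(\Y_t^T,\Y_{t-1}^T,\dots,\Y_{t-p+1}^T)^T\in\R^{Np}$, so that \eqref{lin2_p} becomes a first-order recursion $\tilde{\Y}_t=F(\tilde{\Y}_{t-1},\mathbf{N}_t)$ driven by the i.i.d.\ sequence $\{\mathbf{N}_t\}$ of copula-Poisson processes, where the intensity $\lambdab_t=\betab_0+\sum_{h=1}^p\G_h\Y_{t-h}$ is $\Fb_{t-1}$-measurable and each $\G_h=\beta_{1h}\W+\beta_{2h}\I_N$ has nonnegative entries.

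The core of the argument is a monotone coupling of two copies $\{\Y_t\}$ and $\{\Y_t'\}$ started from different initial values but driven by the same innovations $\{\mathbf{N}_t\}$. Because $\mathbf{N}_t(\lambdab)$ is obtained by evaluating a common family of unit-rate Poisson processes at the coordinates of $\lambdab$ (the copula-Poisson construction of Section~\ref{Sec:Properties of order 1 model}), feeding in two intensity vectors $\lambdab_t,\lambdab_t'$ yields coordinatewise increments that are Poisson with mean $|\lambda_{i,t}-\lambda_{i,t}'|$, so $\E|Y_{i,t}-Y_{i,t}'|=|\lambda_{i,t}-\lambda_{i,t}'|$. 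Writing $\mathbf{v}_t=\lnorm{\Y_t-\Y_t'}_{1,v}$ for the vector of componentwise $L^1$ distances and using the triangle inequality together with the nonnegativity of $\G_h$, this gives the vector inequality $\mathbf{v}_t\preceq\sum_{h=1}^p\G_h\mathbf{v}_{t-h}$.

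It remains to convert $\rho(\sum_{h=1}^p\G_h)<1$ into geometric decay of $\mathbf{v}_t$. I would form the $Np\times Np$ companion matrix $\mathbf{G}$ of this vector difference inequality and show that, for nonnegative blocks, $\rho(\mathbf{G})<1$ is equivalent to $\rho(\sum_h\G_h)<1$: if $z=\rho(\mathbf{G})>0$ with nonnegative eigenvector, the companion structure forces the leading block to be a Perron eigenvector of $M(z)=\sum_{h=1}^p z^{-h}\G_h$ with eigenvalue one, and since $z\mapsto\rho(M(z))$ is nonincreasing on $(0,\infty)$ with $\rho(M(1))=\rho(\sum_h\G_h)$, the two spectral conditions coincide. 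Hence $\mathbf{v}_t\to\mathbf{0}$ geometrically and the two copies couple in $L^1$. Running the recursion forward from a fixed value at time $-m$ and letting $m\to\infty$, the iterates, whose $L^1$ norm is uniformly bounded by the moment estimate below, form a Cauchy sequence in $L^1$ whose limit $\Y_t$ is a measurable function of $\{\mathbf{N}_s:s\le t\}$; as a factor of an i.i.d.\ sequence it is stationary and ergodic, and the coupling gives uniqueness of this solution.

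Finally, for the moments I would exploit that, conditionally on $\Fb_{t-1}$, $Y_{i,t}$ is Poisson$(\lambda_{i,t})$, so $\E(Y_{i,t}^r\mid\Fb_{t-1})$ is a polynomial of degree $r$ in $\lambda_{i,t}$ with nonnegative coefficients. By Minkowski's inequality and the nonnegativity of $\G_h$, the vector $\mathbf{m}_t=\lnorm{\Y_t}_{r,v}$ satisfies $\mathbf{m}_t\preceq\mathbf{c}_r+\sum_{h=1}^p\G_h\mathbf{m}_{t-h}$ for a finite constant vector $\mathbf{c}_r$; since $\rho(\sum_h\G_h)<1$ the Neumann series $(\I_N-\sum_h\G_h)^{-1}=\sum_{k\ge0}(\sum_h\G_h)^k$ is nonnegative and finite, bounding the stationary moments by $(\I_N-\sum_h\G_h)^{-1}\mathbf{c}_r$ and yielding $\E\norm{\Y_t}_1^r<\infty$ for every $r\ge1$ (with $N$ fixed). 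The step I expect to be most delicate is the order-$p$ bookkeeping: a naive one-step contraction fails because the operator norm of $\mathbf{G}$ may exceed one even when $\rho(\mathbf{G})<1$, so the proof must route through $p$-step blocks or the Perron-root equivalence above; verifying that equivalence, and the validity of the monotone coupling for the copula-driven innovations, are the two points requiring genuine care.
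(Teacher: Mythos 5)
Your argument is correct, but it is a genuinely different (and far more self-contained) route than the paper's: the paper proves Proposition~\ref{Prop. Ergodicity of linear model} by a one-line appeal to an external ergodicity theorem for observation-driven models (\citet[Thm.~2]{tru2021}), whereas you reconstruct the underlying machinery from scratch. Your three key ingredients all check out. The monotone coupling is valid because in the copula-Poisson construction $Y_{i,t}=N_i([0,\lambda_{i,t}])$ for a unit-rate counting process $N_i$ built from the same waiting times in both copies, so conditionally $\norm{Y_{i,t}-Y_{i,t}'}$ is Poisson with mean $\norm{\lambda_{i,t}-\lambda_{i,t}'}$ (your displayed identity should strictly be a conditional expectation, but this is cosmetic). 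The Perron-root equivalence $\rho(\mathbf{G})<1\Leftrightarrow\rho(\sum_{h}\G_h)<1$ for nonnegative companion blocks is exactly right and is the step a naive one-step contraction would miss; your monotonicity argument for $z\mapsto\rho(M(z))$ closes it. For the moment recursion, note that passing from $\mathbf{m}\preceq\mathbf{c}_r+\sum_h\G_h\mathbf{m}$ to $\mathbf{m}\preceq(\I_N-\sum_h\G_h)^{-1}\mathbf{c}_r$ presupposes finiteness of $\mathbf{m}$; this is repaired, as you implicitly do, by running the bound on the approximating processes started from a fixed initial value (which have finite moments by forward induction, uniformly in the starting time) and passing to the limit by Fatou. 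What the two approaches buy: the paper's citation is shorter and inherits ergodicity in a framework already tailored to such models; your construction makes explicit the backward-coupling representation $\Y_t=\Phi(\mathbf{N}_t,\mathbf{N}_{t-1},\dots)$, which is essentially the same mechanism the paper later redeploys (Lemma~\ref{construction} and the proofs of Proposition~\ref{finite_moment} and Theorem~\ref{Thm. Ergodicity of linear model N}) and works under the sharp condition $\rho(\sum_h\G_h)<1$ rather than the stronger norm conditions $\vertiii{\G}_1<1$ or $\vertiii{\G}_2<1$ mentioned in the surrounding discussion.
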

		The result  follows from \citet[Thm.~2]{tru2021}.
		Similar results have been recently proved by \cite{fok2020} when the lagged conditional mean $\lambdab_{t-1}$ is added as a feedback term  in the model. Following these authors, we obtain the same results of Proposition \ref{Prop. Ergodicity of linear model} but under stronger conditions. For example, when $p=1$, we will need to assume either $\vertiii{\G}_1< 1$ or $\vertiii{\G}_2<1$ to obtain identical conclusions. Results about the first and second order properties of model \eqref{lin1_p} are given in SM~\ref{moment_lin}; 
		see also \citet[Prop.~3.2]{fok2020}.

		Proposition~\ref{Prop. Ergodicity of linear model} establishes the existence of the moments of the count process with fixed $N$, but this property is not guaranteed to hold when  $N\to\infty$.  The following results show that,  is 
		$N \rightarrow \infty $, the conclusions of Proposition~\ref{Prop. Ergodicity of linear model}  are still true.
		
		\begin{proposition} \label{finite_moment} \rm
			Consider model \eqref{lin2_p} and $\sum_{h=1}^{p}(\beta_{1h} + \beta_{2h})<1$.
			Then, $\sup_{i\geq 1}\E\norm{Y_{i,t}}^r\leq C_r<\infty$, for any $r\in\N$.
		\end{proposition}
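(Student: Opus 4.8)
The plan is to exploit the conditional Poisson structure of \eqref{lin2_p} together with the exact moment expansion of the Poisson law, and to run a double induction: an outer induction on the order $r$ and an inner recursion in time on the quantity $u_t := \sup_{i\geq 1}\E^{1/r}\norm{Y_{i,t}}^r$. The target is to derive a linear recursive inequality of the form $u_t \leq \tilde b + a\,\max_{1\leq h\leq p} u_{t-h}$ with contraction rate $a := \sum_{h=1}^p(\beta_{1h}+\beta_{2h}) < 1$ and a finite constant $\tilde b$, from which the uniform moment bound follows immediately.

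First I would record that, since $Y_{i,t}\mid\Fb_{t-1}\sim Poisson(\lambda_{i,t})$ and $\lambda_{i,t}$ is $\Fb_{t-1}$-measurable, the Touchard (Bell-polynomial) formula gives $\E(Y_{i,t}^r\mid\Fb_{t-1}) = \sum_{k=1}^{r}\stirlingii{r}{k}\lambda_{i,t}^k$, where the $\stirlingii{r}{k}$ are the nonnegative Stirling numbers of the second kind and $\stirlingii{r}{r}=1$. Taking expectations separates the leading term $\E\lambda_{i,t}^r$, whose coefficient is exactly one, from the lower-order terms $\sum_{k=1}^{r-1}\stirlingii{r}{k}\E\lambda_{i,t}^k$. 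For the latter I would invoke the outer induction hypothesis that the claim holds for every order $k<r$: writing $\lambda_{i,t}=\beta_0+\sum_h\beta_{1h}(n_i^{-1}\sum_j a_{ij}Y_{j,t-h})+\sum_h\beta_{2h}Y_{i,t-h}$ and applying Minkowski's inequality, together with the fact that $(a_{ij}/n_i)_j$ are nonnegative weights summing to one, yields $\lnorm{\lambda_{i,t}}_k \leq \beta_0 + a\sup_{j,s}\lnorm{Y_{j,s}}_k$, which is finite and uniform in $i$ and $N$ by the induction hypothesis. Hence the lower-order contribution is bounded by a finite constant $B_r$ independent of $i$, $t$ and $N$.

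For the leading term the same Minkowski-plus-convex-weights estimate gives $\sup_{i}\lnorm{\lambda_{i,t}}_r \leq \beta_0 + a\,\max_{1\leq h\leq p}u_{t-h}$. Combining, $\E\norm{Y_{i,t}}^r \leq \lnorm{\lambda_{i,t}}_r^r + B_r$; taking $r$-th roots and using subadditivity of $x\mapsto x^{1/r}$ linearizes the bound into $u_t \leq (\beta_0+B_r^{1/r}) + a\,\max_{1\leq h\leq p}u_{t-h}$. Setting $\tilde b:=\beta_0+B_r^{1/r}<\infty$ delivers the desired recursion; under the working stationarity of the process $u_t\equiv U$ does not depend on $t$, so $U\leq \tilde b + aU$ forces $U\leq \tilde b/(1-a)<\infty$, and raising to the $r$-th power gives $\max_{i\geq 1}\E\norm{Y_{i,t}}^r\leq C_r$ (for a process started from $L^r$ initial conditions one argues identically by passing to $\limsup_t u_t$). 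The base case $r=1$ is the same computation with no lower-order terms ($B_1=0$, $\tilde b=\beta_0$), since $\E(Y_{i,t}\mid\Fb_{t-1})=\lambda_{i,t}$.

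I expect the main obstacle to be keeping the multiplicative constant in front of the contraction term equal to $a$ and not letting it inflate above one. A crude Poisson moment bound such as $\E(Y^r\mid\lambda)\leq C_r(1+\lambda^r)$ would multiply the rate $a$ by $C_r^{1/r}>1$ and could push the effective rate past unity, destroying the contraction; this is exactly why the argument must use the \emph{exact} Touchard expansion, whose top coefficient $\stirlingii{r}{r}=1$ preserves the sharp rate, and must isolate all inflation into the additive constant $B_r$ absorbed via the induction on $r$. The second point requiring care is uniformity as $N\to\infty$: every estimate must be free of the network size, which is guaranteed because the row-normalization $\sum_j a_{ij}/n_i=1$ turns each neighborhood average into a genuine convex combination, so the triangle inequality never introduces $N$-dependent factors.
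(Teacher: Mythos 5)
Your proposal is correct and follows essentially the same route as the paper's proof: the exact Poisson moment expansion via Stirling numbers of the second kind (keeping the leading coefficient equal to one), Minkowski's inequality combined with the row-normalization $\sum_j a_{ij}/n_i=1$ to obtain an $N$-free contraction with rate $\sum_h(\beta_{1h}+\beta_{2h})<1$, lower-order terms absorbed into an additive constant by induction on $r$, and stationarity (with finiteness of moments for fixed $N$ from Proposition~1) to solve the resulting fixed-point inequality. The paper presents this for $r=2,3$ explicitly and invokes induction for larger $r$; your write-up merely organizes the same argument more systematically.
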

		
		In order to investigate the stability results of the process $\left\lbrace \Y_t \in\N^N \right\rbrace$ when the network size is diverging ($N\to\infty$) we employ the working definition of stationarity for increasing dimensional processes as discussed  by  \citet[Def.~1]{zhu2017}. The following result holds.

		\begin{theorem} \rm 
			\label{Thm. Ergodicity of linear model N}
			
			Consider model \eqref{lin2_p}. Assume $\sum_{h=1}^{p}(\beta_{1h} + \beta_{2h})<1$ and $N\to\infty$. 
			Then, there exists a unique strictly stationary solution $\{ \mathbf{Y}_{t}\in\N^N,~ t \in \mathbb{Z}\}$ to the linear PNAR($p$) model, with $\sup_{i\geq 1}\E\norm{Y_{i,t}}^r\leq C_r <\infty$, for all $r\geq 1$.
		\end{theorem}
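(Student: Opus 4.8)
The plan is to construct the stationary solution directly as a measurable, time-invariant function of the infinite past of the innovation sequence $\{\mathbf{N}_t\}$, and to obtain both existence and uniqueness from a single geometric contraction estimate whose constants do not depend on $N$. First I would recast \eqref{lin2_p} as a first-order stochastic recursion by stacking the last $p$ lags: setting $\mathbf{Z}_t=(\Y_t^T,\dots,\Y_{t-p+1}^T)^T$, the pair (intensity, counts) evolves through a time-homogeneous map $\mathbf{Z}_t=F(\mathbf{Z}_{t-1},\mathbf{N}_t)$ driven by the IID copula-Poisson processes $\{\mathbf{N}_t\}$. Because $\{\mathbf{N}_t\}$ is an IID (hence strictly stationary) input and $F$ does not depend on $t$, any solution obtained as a limit of iterates of $F$ will automatically inherit shift invariance in the sense of \citet[Def.~1]{zhu2017}.

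Next I would build a candidate solution by iterating from the infinite past. For each $m\ge 0$ let $\Y_t^{(m)}$ denote the process started at time $t-m$ from a fixed deterministic initial condition (say the zero vector) and propagated forward to time $t$ using the innovations $\mathbf{N}_{t-m+1},\dots,\mathbf{N}_t$. I would then show that $\{\Y_t^{(m)}\}_m$ is Cauchy in the coordinate-wise $L^r$ sense, i.e.\ $\max_{i\ge 1}\E^{1/r}|Y_{i,t}^{(m+1)}-Y_{i,t}^{(m)}|^r\to 0$ geometrically. The key estimate couples two runs that differ only in their initial segment but are driven by the same innovations: since the conditional Poisson increments are mean-preserving, the expected absolute difference of the intensities propagates linearly through $\sum_{h=1}^{p}\G_h$, and the crucial point is that $\vertiii{\sum_{h=1}^{p}\G_h}_\infty\le\sum_{h=1}^{p}(\beta_{1h}\vertiii{\W}_\infty+\beta_{2h})=\sum_{h=1}^{p}(\beta_{1h}+\beta_{2h})<1$, where $\vertiii{\W}_\infty=1$ holds by construction. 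This contraction factor is strictly below one uniformly in $N$, so the effect of the initial condition decays geometrically at a rate independent of the network size, and the limit $\Y_t$ exists in $L^r$ with control by $\max_{i\ge 1}$.

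Uniqueness and strict stationarity follow from the same contraction. Any two stationary solutions driven by the same innovation sequence agree up to a difference that is geometrically small over arbitrarily long common histories, hence coincide almost surely; and because $\Y_t$ is a measurable time-invariant function of $\{\mathbf{N}_s:s\le t\}$, shifting $t$ merely shifts the identically distributed innovations, so the finite-dimensional distributions are shift invariant. The uniform moment bound $\max_{i\ge 1}\E|Y_{i,t}|^r\le C_r<\infty$ is then inherited from Proposition \ref{finite_moment}, since $L^r$ convergence together with Fatou's lemma transfers the bound already established for the iterates to the limiting process.

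The main obstacle I anticipate is keeping every constant uniform in $N$. The whole argument must run in the max-row-sum norm $\vertiii{\cdot}_\infty$ (equivalently, coordinate-wise with a $\max_{i\ge 1}$ in front) rather than in a spectral norm, precisely so that the contraction rate and the moment constants do not deteriorate as $N\to\infty$; this is what makes the row-normalization $\vertiii{\W}_\infty=1$ indispensable. A secondary technical point is that controlling the higher $L^r$ differences, not only the $L^1$ mean, requires bounding the conditional-variance contributions of the Poisson increments, which is where the already-available uniform moment bounds of Proposition \ref{finite_moment} must be fed back into the recursion in order to close the estimate.
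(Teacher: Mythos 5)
Your proposal is correct in substance but follows a genuinely different route from the paper. The paper's proof works with the causal moving-average representation obtained by backward substitution, $\Y_t=(\I_N-\G)^{-1}\betab_0+\sum_{j=0}^{\infty}\G^j\xib_{t-j}$, where $\xib_t=\Y_t-\lambdab_t$ is a martingale difference sequence; it then verifies the working definition of stationarity \citep[Def.~1]{zhu2017} head-on, by showing via the Monotone Convergence Theorem that $\boldsymbol{\omega}_N^T\Y_t$ converges as $N\to\infty$ for every absolutely summable weight sequence $\boldsymbol{\omega}$, and it obtains uniqueness by comparing the tail of this series with the $m$-step-back expansion of any competing stationary solution. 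You instead construct the solution as a coupling-from-the-infinite-past limit of iterates of the stochastic recursion driven by the actual copula-Poisson innovations, and extract existence, uniqueness and shift invariance from a single pathwise contraction estimate. Both arguments hinge on exactly the same uniform-in-$N$ mechanism — $\W\mathbf{1}_N=\mathbf{1}_N$ gives $\G^j\mathbf{1}_N=(\beta_1+\beta_2)^j\mathbf{1}_N$, i.e.\ $\vertiii{\G}_\infty<1$ independently of $N$ — and both feed the uniform moment bounds of Proposition~\ref{finite_moment} back in. Your coupling is in fact closer in spirit to the construction the paper uses later (Lemma~\ref{construction}) for the NED arguments; it buys a cleaner pathwise uniqueness statement and makes explicit that the iterates are genuine integer-valued realizations of the model, at the cost of having to control the extra conditional-variance terms of the Poisson increments for the $L^r$ ($r\ge 2$) Cauchy estimate, a complication the paper's additive MA representation avoids.

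One point you should tighten: the theorem's notion of strict stationarity for $N\to\infty$ is precisely \citet[Def.~1]{zhu2017}, which requires, beyond time-shift invariance at each fixed $N$, that $\lim_{N\to\infty}\boldsymbol{\omega}_N^T\Y_t$ exists with probability one and is itself stationary for every summable weight $\boldsymbol{\omega}$. Your remark that shift invariance ``in the sense of Def.~1'' follows automatically from time-homogeneity of the recursion conflates ordinary shift invariance with this definition and skips the $N\to\infty$ limit. The gap is easily closed — $\E\sum_{i=1}^{N}\norm{\omega_i}\norm{Y_{i,t}}\le C_1\sum_{i}\norm{\omega_i}<\infty$ by your uniform first-moment bound, so MCT gives the almost sure limit exactly as in the paper — but it is the step that makes the increasing-dimension claim meaningful and should be stated.
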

		Theorem.~\ref{Thm. Ergodicity of linear model N} extends  \cite[Thm.1]{zhu2017}. Although stronger than the conditions of  Proposition~\ref{Prop. Ergodicity of linear model},  $\sum_{h=1}^{p}(\beta_{1h}+\beta_{2h})<1$ allows to prove stationarity for increasing network size $N$ and the existence of moments of the process; moreover, it is more
		natural assumption than the condition $\rho(\sum_{h=1}^{p} \G_{h})< 1$, and it complements the existing work for continuous valued models; \cite{zhu2017}. 
		It is worth pointing out that the copula construction is not used in the proof of Theorem~\ref{Thm. Ergodicity of linear model N} (see also Theorem~\ref{Thm. Ergodicity of log-linear model N} for log-linear model). However, it is used in Section \ref{simulations} where we report a simulation study. It is interesting though, that even under this setup,  stability conditions are independent of the correlation structure of  innovations; this is  similar to the case of  multivariate ARMA models.  
		
		\begin{rem} \rm \label{rem_unconditional distribution}
			Models \eqref{lin2_p} implies  that   $Y_{i,t}$ are marginally Poisson distributed conditionally on the past of the process, $\Fb_{t-1}$.
			There is no any assumption about the marginal and joint unconditional distributions of the process. In general, the unconditional distribution of $\Y_t$ is unknown. However, from the results of Theorem~\ref{Thm. Ergodicity of linear model N} we can conclude that
			$\Y_t$ is a stationary Markov chain of order $p$ so its (unconditional) distribution exists, is unique, does not depend on $t$ and all its moments are uniformly bounded. Moreover, we derive explicitly the first two moments of such distribution (SM~\ref{moment_lin}).
		\end{rem}

		\begin{rem} \rm \label{rem_gnar}
			A count GNAR($p$) extension similar to the model introduced by \citet[eq.~1]{Knightetal(2020)}, for the standard asymptotic regime ($T\to\infty$), in the context of  continuous-valued random variables, is included in the framework we consider. Such model adds an average neighbor impact for several stages of connections between the nodes of a  given network. That is, $\mathcal{N}^{(r)}(i)=\mathcal{N}\left\lbrace \mathcal{N}^{(r-1)}(i) \right\rbrace / \left[ \left\lbrace \cup_{q=1}^{r-1}\mathcal{N}^{(q)}(i)\right\rbrace\cup\left\lbrace i\right\rbrace \right]  $, for $r=2,3,\dots$ and $\mathcal{N}^{(1)}(i)=\mathcal{N}(\left\lbrace i\right\rbrace )$, with $\mathcal{N}(\left\lbrace i\right\rbrace )=\left\lbrace j\in\left\lbrace 1,\dots,N\right\rbrace : i\to j\right\rbrace  $ the set of neighbors of the node $i$. (So, for example, $\mathcal{N}^{(2)}(i)$ describes the neighbors of the neighbors of the node $i$, and so on.)
			In this case, the row-normalized adjacency matrix have elements $\left( \W^{(r)}\right)_{i,j}=w_{i,j}\times I(j\in\mathcal{N}^{(r)}(i))$, where $w_{i,j}=1/\mathrm{card}(\mathcal{N}^{(r)}(i))$, $\mathrm{card}(\cdot)$ denotes the cardinality of a set and $I(\cdot)$ is the indicator function. Several $M$ types of edges are allowed in the network. 
			The Poisson GNAR($p$) has the following formulation.
			\begin{equation}
				\lambda_{i,t}=\beta_0+\sum_{h=1}^{p}\left(\sum_{m=1}^{M}\sum_{r=1}^{s_h}\beta_{1,h,r,m} \sum_{j\in \mathcal{N}^{(r)}_t(i)}w_{i,j,m}Y_{j,t-h} +\beta_{2,h}Y_{i,t-h}\right)\,,
				\label{gnar}
			\end{equation}
			where $s_h$ is the maximum stage of neighbor dependence for the time lag $h$ and all the parameters of the model need to be non-negative.
			Model \eqref{gnar} can be included in the formulation \eqref{lin2_p} by setting $\G_{h}=\sum_{m=1}^{M}\sum_{r=1}^{s_h}\beta_{1,h,r,m}\W^{(r,m)}+\beta_{2,h}\I_N$. Since it holds that $\sum_{j\in \mathcal{N}^{(r)}(i)}\sum_{m=1}^{M}w_{i,j,m}=1$, we have $\vertiii{\sum_{m=1}^{M}\W^{(r,m)}}_\infty=1$. 
			Hence, the result of the present contribution, i.e. existence of the moments of the model, the related stability properties and the associated inferential results, under the standard asymptotic regime, apply to \eqref{gnar}. 
		\end{rem}
		
		\subsection{Log-linear PNAR models}
		Recall model \eqref{lin1}. The network effect $\beta_1$ of model \eqref{lin1} is typically expected to be positive, see \cite{chen2013}, and the impact of $Y_{i,t-1}$ is positive, as well. Hence, positive constraints on the parameters are theoretically justifiable as well as practically sound. However, in order to allow a natural link to the GLM theory, \cite{McCullaghandNelder(1989)}, and allowing the possibility to iclude covariates as well as  real valued coefficients, we additionally study  the following log-linear model, see \cite{fok2011}:
		\begin{equation}
			Y_{i,t}|\mathcal{F}_{t-1}\sim Poisson(\exp(\nu_{i,t})), ~~~ \nu_{i,t}=\beta_0+\beta_1n_i^{-1}\sum_{j=1}^{N}a_{ij}\log(1+Y_{j,t-1})+\beta_2\log(1+Y_{i,t-1})\,,
			\label{log_lin1}
		\end{equation}
		where $\nu_{i,t}=\log(\lambda_{i,t})$ for every $i=1,\dots, N$. No parameters constraints are required for model \eqref{log_lin1} since $\nu_{i,t}\in\R$. Interpretation of all parameters is  the same, as in the case of \eqref{lin1},
		but in the logarithmic scale. Again, the model can be rewritten in vectorial form, as in the case of model \eqref{lin2}
		\begin{equation} 
				\Y_t |  \mathcal{F}_{t-1} \sim MCP(\exp(\nub_t)), ~~~ \nub_t=\betab_0+\G\log(\mathbf{1}_N+\Y_{t-1})\,,
			\label{log_lin2}
		\end{equation}
		where $MCP(\exp(\nub_t))$ is an $N$-dimensional copula conditional Poisson distribution, as above. 
		Furthermore, it can be useful rewriting the model as follow.
		\begin{equation}
			\log(\mathbf{1}_N+\Y_{t})=\betab_0+\G\log(\mathbf{1}_N+\Y_{t-1})+\boldsymbol{\psi}_{t}\,,\nonumber
		\end{equation}
		where $\boldsymbol{\psi}_{t}=\log(\mathbf{1}_N+\Y_{t})-\nub_{t}$. By Lemma A.1 in \cite{fok2011} $\E(\boldsymbol{\psi}_{t}|\Fb_{t-1})\to0$ as $\nub_{t}\to\infty$, so $\boldsymbol{\psi}_{t}$ is approximately martingale difference sequence (MDS). This means that the formulation of first two moments established for the linear model in SM~\ref{moment_lin} hold, approximately, for $\log(\mathbf{1}_N+\Y_{t})$. We discuss empirical properties of the count process $\Y_t$ of model \eqref{log_lin1} in Section~\ref{empirical_log} of the SM. Moreover,  $\boldsymbol{\xi}_t=\Y_t-\exp(\nub_t)$ is a MDS.
		We define the log-linear PNAR($p$) by
		\begin{equation}
			\nu_{i,t}=\beta_0+\sum_{h=1}^{p}\beta_{1h}\left(n_i^{-1}\sum_{j=1}^{N}a_{ij}\log(1+Y_{j,t-h})\right) +\sum_{h=1}^{p}\beta_{2h}\log(1+Y_{i,t-h})\,,
			\label{log_lin1_p}
		\end{equation} 
		using the same notation as before. Then 
		\begin{equation} 
				\Y_t |  \mathcal{F}_{t-1} \sim MCP((\exp(\nub_t)), ~~~ \nub_t=\betab_0+\sum_{h=0}^{p}\G_h\log(\mathbf{1}_N+\Y_{t-h})\,,
			\label{log_lin2_p}
		\end{equation}
		where $\G_h=\beta_{1h}\W+\beta_{2h}\I_N$ for $h=1,\dots,p$. The following results are complementing Proposition~\ref{Prop. Ergodicity of linear model}-\ref{finite_moment} and Theorem~\ref{Thm. Ergodicity of linear model N} proved for the case of log-linear model.
		\begin{proposition} \rm
			\label{Prop. Ergodicity of log-linear model}
			Consider model \eqref{log_lin2_p}, with fixed $N$. Suppose that $\rho(\sum_{h=1}^{p} \norm{\G_{h}}_v)< 1$. Then
			the process $\{ \mathbf{Y}_{t},~ t \in \mathbb{Z} \}$ is stationary and ergodic with $\mbox{E}\norm{\mathbf{Y}_{t}}_1  < \infty$. Moreover, if $\vertiii{\norm{\G_{h}}_v}_\infty< 1$, there exists some $\delta>0$ such that $\mbox{E}[\exp(\delta\norm{\mathbf{Y}_{t}}_1)]  < \infty$ and $\mbox{E}[\exp(\delta\norm{\nub_{t}}_1)]  < \infty$.
		\end{proposition}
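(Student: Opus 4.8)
Consider the final statement, Proposition \ref{Prop. Ergodicity of log-linear model}.

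The plan is to regard the stacked process $\mathbf{X}_t=(\Y_t^T,\dots,\Y_{t-p+1}^T)^T$ (a generic state being written $\mathbf{x}=(\mathbf{y}_1^T,\dots,\mathbf{y}_p^T)^T$) as a time-homogeneous Markov chain on the countable space $(\N^N)^p$ and to prove both assertions through Foster--Lyapunov drift conditions, appealing to the ergodic theory for such chains (\cite{MeynandTweedie(1993)}; cf. \cite{douc2013}, \cite{Wangetal(2014)}). Conditionally on $\Fb_{t-1}$ each coordinate $Y_{i,t}$ is Poisson with the strictly positive, finite mean $\exp(\nu_{i,t})$, so irreducibility (with respect to counting measure), aperiodicity and the weak Feller property follow from the copula--Poisson construction, the Poisson marginals having full support on $\N$; the whole burden therefore lies in the drift. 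The one structural fact that drives everything is that the log-linear link forces the conditional intensity to grow only \emph{polynomially} in the lagged counts: bounding coordinatewise by absolute values in $\nu_{i,t}=\beta_0+\sum_{h=1}^p(\G_h\log(\mathbf{1}_N+\Y_{t-h}))_i$ gives
\[
\exp(\nu_{i,t})\le e^{|\beta_0|}\prod_{h=1}^{p}\prod_{j=1}^{N}\bigl(1+Y_{j,t-h}\bigr)^{|(\G_h)_{ij}|},
\]
where the total exponent $\sum_{h,j}|(\G_h)_{ij}|$ equals the $i$-th row sum of $\sum_h\norm{\G_h}_v$.

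For the first assertion I would take the linear test function $V(\mathbf{x})=1+\sum_{h=1}^{p}a_h\norm{\mathbf{y}_h}_1$, with positive weights $(a_h)$ read off from the left Perron--Frobenius eigenvector of the block companion matrix built from $(\norm{\G_h}_v)_h$; the hypothesis $\rho(\sum_h\norm{\G_h}_v)<1$ is exactly what makes this companion stable and the weights well defined. A weighted AM--GM (Young) inequality applied to the product bound turns the multiplicative conditional intensity into a controllable function of the lagged counts, so that $\E[\norm{\lambdab_t}_1\mid\Fb_{t-1}]$ is dominated by $\sum_h\norm{\G_h}_v$ acting on the past; this yields a geometric drift off a finite set and hence a unique stationary ergodic law $\pi$ with $\E_\pi[V]<\infty$, i.e. $\E\norm{\Y_t}_1<\infty$. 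The delicate point of this part is extracting a \emph{genuine first} moment rather than a mere logarithmic one from the multiplicative structure; the interpolation that linearises the product of powers, together with the Perron weighting supplied by the spectral condition, is what closes this gap.

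The exponential moments are the crux, and here I would use $V(\mathbf{x})=\exp\!\bigl(\delta\sum_{h=1}^{p}a_h\norm{\mathbf{y}_h}_1\bigr)$ with strictly decreasing weights $a_1>\dots>a_p>0$ and a small $\delta>0$, proving the bound for the counts \emph{themselves} through the Poisson moment generating function rather than on the log scale. Two facts combine. First, the stronger hypothesis, which amounts to $\bar s:=\vertiii{\sum_h\norm{\G_h}_v}_\infty<1$, makes the total exponent in the product bound strictly below one, so $\norm{\lambdab_t}_1\le Ne^{|\beta_0|}\bigl(1+\sum_{h}\norm{\Y_{t-h}}_1\bigr)^{\bar s}$, a \emph{strictly sublinear} function of the past counts. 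Second, applying H\"older's inequality across the $N$ coordinates (so only the Poisson marginals enter and the copula is irrelevant) gives $\E[\exp(\delta a_1\norm{\Y_t}_1)\mid\Fb_{t-1}]\le\exp(\kappa\norm{\lambdab_t}_1)$ with $\kappa=(e^{\delta a_1 N}-1)/N\to0$ as $\delta\to0$. Substituting the sublinear intensity bound, the freshly injected term is $\exp\!\bigl(o(\sum_h\norm{\Y_{t-h}}_1)\bigr)$, which for large states is swamped by the strictly negative linear contribution produced by the decreasing weights; this gives the geometric drift $\E[V(\mathbf{X}_t)\mid\mathbf{X}_{t-1}=\mathbf{x}]\le\gamma V(\mathbf{x})+b$ and hence $\E[\exp(\delta\norm{\Y_t}_1)]=\E_\pi[V]<\infty$. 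Reconciling the doubly exponential blow-up of the Poisson Laplace transform with the exponential target by exploiting $\bar s<1$ is the step I expect to be the main obstacle.

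Finally, the bound on $\nub_t$ follows from that on $\Y_t$: since $\log(1+y)\le y$, one has $\norm{\nub_t}_1\le N|\beta_0|+C\sum_{h=1}^{p}\norm{\Y_{t-h}}_1$ for a constant $C$ depending on the column sums of the $\norm{\G_h}_v$, so $\E[\exp(\delta'\norm{\nub_t}_1)]<\infty$ for small $\delta'$ by H\"older across the finitely many lags together with stationarity and the exponential moment just established for $\Y_t$. Once the sublinear-intensity bound is in hand, this passage, the qualitative Markov properties, and the first-moment part are comparatively routine, and the exponential drift verification remains the single genuinely hard step.
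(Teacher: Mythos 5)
Your proposal takes a genuinely different route from the paper (the paper's entire proof is the one-line citation ``The result follows from \citet[Thm.~5]{tru2019}'', a stability theorem for multivariate count autoregressions established by contraction/coupling of iterated random maps), but it contains a genuine gap precisely where those cited techniques are designed to help. You assert that irreducibility with respect to counting measure, aperiodicity and the weak Feller property ``follow from the copula--Poisson construction, the Poisson marginals having full support on $\N$.'' That inference is invalid: the construction pins down only the conditional \emph{marginals}; the joint conditional support is dictated by the copula, and the proposition places no restriction whatsoever on the copula. For the comonotone copula all coordinates are driven by a single uniform sequence, so $Y_{i,t}=M_t(\exp(\nu_{i,t}))$ for one common unit-rate Poisson path $M_t$; the counts are then a.s.\ ordered according to their intensities and the conditional law never charges most of $\N^N$, so counting-measure irreducibility fails. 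Worse, for $N=2$ with the countermonotone copula, $\Rho(\Y_t=\mathbf{0}\mid\Fb_{t-1})$ equals the copula mass of the box $[0,e^{-\lambda_{1,t}}]\times[0,e^{-\lambda_{2,t}}]$, which is zero, so even the natural atom $\mathbf{0}$ can be unreachable; $\psi$-irreducibility would have to be argued copula-by-copula and cannot be assumed. Without $\psi$-irreducibility, weak Feller plus a Foster--Lyapunov drift yields existence of \emph{a} stationary law but neither uniqueness nor ergodicity, i.e.\ not the conclusion of the proposition. This is exactly why the literature the paper leans on (\cite{douc2013}, \citet[Thm.~2]{tru2021}, \citet[Thm.~5]{tru2019}) avoids irreducibility altogether.

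There is a second soft spot in your first-moment step. The weighted AM--GM/Young linearization of $\prod_{h,j}(1+Y_{j,t-h})^{\norm{(\G_h)_{ij}}}$ into a linear function of the lagged counts is available only when the total exponent $s_i=\sum_{h,j}\norm{(\G_h)_{ij}}$ (the $i$-th row sum of $\sum_h\norm{\G_h}_v$) is at most one, i.e.\ under the \emph{second} hypothesis. For the literal PNAR parameterization this is harmless, because $\W$ is row-stochastic, $\mathbf{1}$ is a Perron eigenvector of $\sum_h\norm{\G_h}_v$, and hence $\rho(\sum_h\norm{\G_h}_v)=\vertiii{\sum_h\norm{\G_h}_v}_\infty=\sum_h(\norm{\beta_{1h}}+\norm{\beta_{2h}})$ --- the two hypotheses coincide, and your ``delicate interpolation'' is doing no work. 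But your sketch presents the spectral condition as strictly weaker and claims Perron weighting bridges the gap; in the generality where the two conditions actually differ (the setting of the cited theorem, or e.g.\ a network with an isolated node so that a row of $\W$ degenerates) this claim is false: for $\norm{\G}_v=\bigl(\begin{smallmatrix}0 & 2\\ 1/10 & 0\end{smallmatrix}\bigr)$ one has $\rho=\sqrt{1/5}<1$ while $\E[Y_{1,t}\mid\Fb_{t-1}]=e^{\beta_0}(1+Y_{2,t-1})^{2}$ is quadratic in the past, so no linear test function can satisfy a geometric drift; one needs multiplicative test functions such as $V(\mathbf{x})=\sum_i(1+y_i)^{q_i}$ with exponents $q_i\ge 1$ solving $\sum_{h,j}\norm{(\G_h)_{ij}}q_j<q_i$, which is feasible exactly when the spectral radius is below one. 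By contrast, your exponential-moment argument under the row-sum condition (sublinear intensity bound, H\"older across coordinates so that only the conditional Poisson marginals enter, the Poisson moment generating function, strictly decreasing lag weights) and the final passage to $\nub_t$ are sound --- but both remain conditional on the irreducibility you have not established.
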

		The result  follows from \citet[Thm.~5]{tru2019}. Analogously to the linear model, we need to show the uniform boundedness of moments of the process and the stationarity of the model with increasing dimension. Since the noise $\boldsymbol{\psi}_t$ is approximately MDS, the following result is proved by employing approximate arguments.
		\begin{proposition} \label{finite_moment_log} \rm
			Consider model \eqref{log_lin2_p} and $\norm{\sum_{h=1}^{p}(\beta_{1h}+\beta_{2h})}<1$.
			Then, $\sup_{i\geq 1}\E\norm{Y_{i,t}}^r \leq C_r<\infty$, and  $\sup_{i\geq 1}\E[\exp(r\norm{\nu_{i,t}})] \leq D_r<\infty$, for any $r\in\N$.
		\end{proposition}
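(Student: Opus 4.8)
The plan is to collapse both assertions into a single contractive recursion for the sequence
$m_r(t):=\max_{i\geq 1}\E[(1+Y_{i,t})^r]=\max_{i\geq 1}\E[\exp(r\log(1+Y_{i,t}))]$,
exploiting that an exponential moment of $\nu_{i,t}$ is, after the $\log(1+\cdot)$ transform, just a polynomial moment of $1+Y$. Writing $Z_{i,t}=\log(1+Y_{i,t})\geq 0$, the recursion \eqref{log_lin2_p} reads $\nu_{i,t}=\beta_0+\sum_{h=1}^{p}\beta_{1h}n_i^{-1}\sum_{j=1}^{N}a_{ij}Z_{j,t-h}+\sum_{h=1}^{p}\beta_{2h}Z_{i,t-h}$. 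Since each $Z_{j,t-h}\geq 0$, the triangle inequality gives $\norm{\nu_{i,t}}\leq\norm{\beta_0}+\sum_{h=1}^{p}\norm{\beta_{1h}}n_i^{-1}\sum_{j=1}^{N}a_{ij}Z_{j,t-h}+\sum_{h=1}^{p}\norm{\beta_{2h}}Z_{i,t-h}$, and the hypothesis secures $b:=\sum_{h=1}^{p}(\norm{\beta_{1h}}+\norm{\beta_{2h}})<1$. Because the weights $a_{ij}/n_i$ sum to one over $j$ and the rescaled coefficients $\norm{\beta_{1h}}/b,\norm{\beta_{2h}}/b$ sum to one over $h$, the entire $Z$-part equals $b$ times a genuine convex combination $\sum_{s=1}^{p}\sum_{j\geq 1}p_{j,s}Z_{j,t-s}$ of the lagged transformed counts.

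Convexity of $x\mapsto e^{rbx}$ together with Jensen's inequality then linearises the exponential, $\exp(r\norm{\nu_{i,t}})\leq e^{r\norm{\beta_0}}\exp(rb\sum_{s,j}p_{j,s}Z_{j,t-s})\leq e^{r\norm{\beta_0}}\sum_{s,j}p_{j,s}\exp(rbZ_{j,t-s})$. Taking expectations and invoking concavity of $x\mapsto x^{b}$ for $0<b<1$, so that $\E[\exp(rbZ_{j,t-s})]\leq(\E[\exp(rZ_{j,t-s})])^{b}\leq m_r(t-s)^{b}$, delivers the first half of the recursion, $\max_{i\geq 1}\E[\exp(r\norm{\nu_{i,t}})]\leq e^{r\norm{\beta_0}}\max_{1\leq s\leq p}m_r(t-s)^{b}$. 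For the second half I would use the conditional Poisson law: since $Y_{i,t}|\Fb_{t-1}\sim\mathrm{Poisson}(e^{\nu_{i,t}})$, one has the exact identity $\E[(1+Y_{i,t})^r|\Fb_{t-1}]=q_r(e^{\nu_{i,t}})$ with $q_r$ a degree-$r$ polynomial having nonnegative coefficients (a Touchard/Bell polynomial); bounding $e^{k\nu}\leq 1+e^{r\norm{\nu}}$ for $0\leq k\leq r$ then gives $m_r(t)\leq q_r(1)(1+\max_{i\geq 1}\E[\exp(r\norm{\nu_{i,t}})])$.

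Substituting one display into the other yields the closed recursion $m_r(t)\leq q_r(1)+q_r(1)e^{r\norm{\beta_0}}\max_{1\leq s\leq p}m_r(t-s)^{b}$. Because $b<1$, the map $x\mapsto q_r(1)+q_r(1)e^{r\norm{\beta_0}}x^{b}$ is concave and sublinear, hence lies below the diagonal for every $x$ beyond its unique positive fixed point $x^\star$; iterating from a finite initialisation forces $\sup_t m_r(t)\leq x^\star=:C_r'<\infty$, with $x^\star$ depending on $r,\beta_0,b$ but \emph{not} on $N$. The two conclusions then follow at once: since $Y_{i,t}\geq 0$ we get $\max_{i\geq 1}\E\norm{Y_{i,t}}^r\leq m_r(t)\leq C_r$, and feeding $C_r'$ back into the first display, $\max_{i\geq 1}\E[\exp(r\norm{\nu_{i,t}})]\leq e^{r\norm{\beta_0}}(C_r')^{b}=:D_r<\infty$. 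The delicate point—exactly where the contraction $b<1$ is indispensable and the weaker spectral condition of Proposition~\ref{Prop. Ergodicity of log-linear model} would not suffice—is turning these node-wise inequalities into a bound uniform in both $t$ and $i$ as $N\to\infty$, and in particular securing the a priori finiteness of $m_r(t)$ needed to run the fixed-point argument. This is the step at which the only-approximately-MDS nature of $\boldsymbol{\psi}_t=\log(\mathbf{1}_N+\Y_t)-\nub_t$ must be absorbed; it is handled cleanly here because the exact identity $\E[(1+Y_{i,t})^r|\Fb_{t-1}]=q_r(e^{\nu_{i,t}})$ sidesteps any martingale approximation entirely.
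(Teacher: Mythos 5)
Your argument is correct in its essentials but follows a genuinely different route from the paper. The paper's own proof is deliberately approximate: it treats $\mathbf{Z}_t=\log(\mathbf{1}_N+\Y_t)$ as (approximately) a linear VAR driven by the approximate martingale difference $\boldsymbol{\psi}_t$, borrows the moment formulas of Proposition~\ref{moments_p} to get $\E(Z_{i,t})\approx\mu$, and then transfers this to $\E(Y_{i,t}^r)$ and $\E[\exp(r\norm{\nu_{i,t}})]$ via Taylor expansions, so every step carries an ``$\approx$''. Your proof replaces all of this with exact inequalities: the polynomial identity $\E[(1+Y_{i,t})^r\mid\Fb_{t-1}]=q_r(e^{\nu_{i,t}})$ sidesteps the martingale approximation entirely, the two Jensen steps (convexity of $e^{rbx}$, concavity of $x^b$) produce the closed recursion $m_r(t)\leq q_r(1)+q_r(1)e^{r\norm{\beta_0}}\max_{1\leq s\leq p} m_r(t-s)^{b}$, and the concave fixed-point argument yields a bound uniform in $i$, $t$ and $N$. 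This is tighter and more self-contained than the published proof. Two caveats. First, your triangle-inequality step discards sign cancellations, so what you actually prove requires $b=\sum_{h=1}^{p}(\norm{\beta_{1h}}+\norm{\beta_{2h}})<1$, which is strictly stronger than the literal hypothesis $\norm{\sum_{h=1}^{p}(\beta_{1h}+\beta_{2h})}<1$ of the proposition when the coefficients have mixed signs (which the log-linear model permits); your condition is, however, exactly the one assumed in Theorem~\ref{Thm. Ergodicity of log-linear model N} and Lemma~\ref{limits_log}, where the proposition is actually used, so nothing is lost downstream. Second, the fixed-point step needs the a priori finiteness of $m_r(t)$: you flag this correctly, and it is supplied either by initializing the recursion at a deterministic $\Y_0$, which gives $\sup_t m_r(t)\leq\max(M_0,x^\star)$ by forward induction, or by working under stationarity, where $m_r\leq\phi(m_r)$ with $\phi$ concave and sublinear immediately forces $m_r\leq x^\star$.
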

		Analogously to Theorem~\ref{Thm. Ergodicity of linear model N}, a strict stationarity result for network of increasing order is given  for the log-linear PNAR model \eqref{log_lin2_p}. 
		\begin{theorem} \rm 
			\label{Thm. Ergodicity of log-linear model N}
			
			Consider model \eqref{log_lin2_p}. Assume $\sum_{h=1}^{p}(\norm{\beta_{1h}}+\norm{\beta_{2h}})<1$ and $N\to\infty$. 
			Then, there exists a unique strictly stationary solution $\{ \mathbf{Y}_{t}\in\N^N,~ t \in \mathbb{Z} \}$ to the log-linear PNAR model, with $\sup_{i\geq 1}\E\norm{Y_{i,t}}^r \leq C_r<\infty$, and  $\sup_{i\geq 1}\E[\exp(r\norm{\nu_{i,t}})] \leq D_r<\infty$, for all $r\geq 1$.
		\end{theorem}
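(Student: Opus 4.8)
The plan is to mirror the coupling argument behind the linear case, Theorem~\ref{Thm. Ergodicity of linear model N}, transferred to the logarithmic scale. The uniform moment bounds $\max_{i\geq 1}\E\norm{Y_{i,t}}^r\leq C_r$ and $\max_{i\geq 1}\E[\exp(r\norm{\nu_{i,t}})]\leq D_r$ are already supplied by Proposition~\ref{finite_moment_log}, so the substantive task is to produce a \emph{unique strictly stationary} solution of \eqref{log_lin2_p} in the sense of \citet[Def.~1]{zhu2017}. To this end set $Z_{i,t}=\log(1+Y_{i,t})$ and $\mathbf{Z}_t=\log(\mathbf{1}_N+\Y_t)$, so that the recursion reads $\nub_t=\betab_0+\sum_{h=1}^{p}\G_h\mathbf{Z}_{t-h}$ together with $\Y_t=\mathbf{N}_t(\exp(\nub_t))$; thus $\mathbf{Z}_t$ is a measurable, time-invariant function of $\mathbf{Z}_{t-1},\dots,\mathbf{Z}_{t-p}$ and the innovation $\mathbf{N}_t$.

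First I would construct the solution as a limit of backward iterations: for each $s$ let $\mathbf{Z}^{(s)}_t$ be obtained by initialising $\mathbf{Z}^{(s)}_{s-1}=\dots=\mathbf{Z}^{(s)}_{s-p}=\mathbf{0}$ and running \eqref{log_lin2_p} forward with a single fixed innovation stream $\{\mathbf{N}_t\}$, then couple two copies $\mathbf{Z}^{(1)}_t,\mathbf{Z}^{(2)}_t$ through that common $\{\mathbf{N}_t\}$. The key step is the nodewise contraction bound
\begin{equation}
\E\bigl[\,\norm{Z^{(1)}_{i,t}-Z^{(2)}_{i,t}}\,\big|\,\Fb_{t-1}\bigr]\le \norm{\nu^{(1)}_{i,t}-\nu^{(2)}_{i,t}}\,.\nonumber
\end{equation}
This holds because, conditionally on $\Fb_{t-1}$, the copula-Poisson construction couples the two counts of node $i$ monotonically in their intensities through the common marginal exponential stream; hence $Z^{(1)}_{i,t}-Z^{(2)}_{i,t}$ is a.s.\ of one sign and its conditional $L^1$ norm equals $\norm{g(\nu^{(1)}_{i,t})-g(\nu^{(2)}_{i,t})}$, where $g(\nu)=\E[\log(1+N(\exp\nu))]$ is $1$-Lipschitz (its derivative is $\exp(\nu)\,\E[\log\{(2+N)/(1+N)\}]\le 1-\exp(-\exp\nu)<1$). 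This compression is exactly the phenomenon underlying the approximate martingale-difference property of $\boldsymbol{\psi}_t$, i.e.\ Lemma~A.1 of \cite{fok2011}.

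Combining this with $\norm{\nu^{(1)}_{i,t}-\nu^{(2)}_{i,t}}\le\sum_{h=1}^{p}\sum_{j}(\norm{\G_h}_v)_{ij}\norm{Z^{(1)}_{j,t-h}-Z^{(2)}_{j,t-h}}$ and taking expectations yields, with $d_t=\max_{i\geq 1}\E\norm{Z^{(1)}_{i,t}-Z^{(2)}_{i,t}}$,
\begin{equation}
d_t\le\sum_{h=1}^{p}\vertiii{\norm{\G_h}_v}_\infty\,d_{t-h}=\sum_{h=1}^{p}(\norm{\beta_{1h}}+\norm{\beta_{2h}})\,d_{t-h}\,,\nonumber
\end{equation}
where $\vertiii{\norm{\G_h}_v}_\infty=\norm{\beta_{1h}}+\norm{\beta_{2h}}$ follows from $\vertiii{\W}_\infty=1$ and $a_{ii}=0$. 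Since $\sum_{h=1}^{p}(\norm{\beta_{1h}}+\norm{\beta_{2h}})<1$ and the $d_t$ are uniformly bounded (Proposition~\ref{finite_moment_log}, using $\log(1+Y)\le Y$), the scalar linear recursion forces $d_t\to0$ geometrically as the initialisation recedes ($s\to-\infty$). Thus $\{\mathbf{Z}^{(s)}_t\}$ is Cauchy in $L^1$ uniformly over nodes; its limit $\mathbf{Z}_t$ is independent of the initialisation, solves the recursion, and is a time-invariant measurable function of the stationary innovation stream, so $\{\Y_t\}$ is strictly stationary in the sense of \citet[Def.~1]{zhu2017}. Uniqueness follows by applying the same contraction to any two strictly stationary solutions: the associated $d_t$ is then finite and constant in $t$, and $d\le(\sum_h(\norm{\beta_{1h}}+\norm{\beta_{2h}}))\,d$ with the sum below one forces $d=0$, whence the two solutions coincide a.s.

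The hard part is the nodewise contraction bound. For the linear model the monotone Poisson coupling gives the \emph{exact} identity $\E[\norm{Y^{(1)}_{i,t}-Y^{(2)}_{i,t}}\mid\Fb_{t-1}]=\norm{\lambda^{(1)}_{i,t}-\lambda^{(2)}_{i,t}}$, whereas here the exponential link could a priori amplify intensity differences. The decisive observation that rescues the argument is that $\log(1+\cdot)$ compresses the count back onto the $\nu$-scale, rendering the conditional-mean map $g$ genuinely $1$-Lipschitz; carefully verifying this Lipschitz property, together with the monotonicity of the copula-Poisson coupling at the level of each node, is the technical heart of the proof, and everything else reduces to the same $\ell^\infty$-type recursion used for Theorem~\ref{Thm. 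Ergodicity of linear model N}.
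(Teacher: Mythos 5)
Your proof is correct, but it follows a genuinely different route from the paper's. The paper obtains existence and Def.-1 stationarity cheaply: it invokes the fixed-$N$ stationarity of Proposition~\ref{Prop. Ergodicity of log-linear model} together with the uniform moment bounds of Proposition~\ref{finite_moment_log}, deduces $\E\norm{\boldsymbol{\omega}_N^T\Y_t}<\infty$ for summable weights, and applies monotone convergence to get the a.s.\ limit $\lim_{N\to\infty}\boldsymbol{\omega}_N^T\Y_t$; uniqueness is then handled by transferring the backward-substitution argument of Theorem~\ref{Thm. Ergodicity of linear model N} to $\mathbf{Z}_t=\betab_0+\G\mathbf{Z}_{t-1}+\boldsymbol{\psi}_t$, treating $\boldsymbol{\psi}_t$ as an (approximate) martingale difference. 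You instead build the solution directly by a monotone-coupling contraction: your key estimate, that $g(\nu)=\E[\log(1+\mathrm{Poisson}(e^\nu))]$ has derivative $e^\nu\,\E[\log((2+N)/(1+N))]\le 1-e^{-e^\nu}<1$, is correct (use $\log(1+x)\le x$ and $\E[(1+N)^{-1}]=(1-e^{-\lambda})/\lambda$), and combined with $\vertiii{\norm{\G_h}_v}_\infty=\norm{\beta_{1h}}+\norm{\beta_{2h}}$ it yields an honest $L^1$ contraction without the Taylor/``approximately MDS'' heuristics that the paper's Propositions~\ref{finite_moment_log} and uniqueness step lean on --- that is what your approach buys. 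The trade-offs: (i) your coupling explicitly uses the copula-Poisson generating mechanism (a common uniform stream at each $t$), whereas the paper stresses that its stability proofs are independent of the copula construction, so your uniqueness is pathwise uniqueness relative to a fixed innovation stream (which is also what the paper effectively proves); (ii) your contraction in $s$ gives existence and uniqueness for each fixed $N$, but you should still add the one-line monotone-convergence step showing that $\lim_{N\to\infty}\boldsymbol{\omega}_N^T\Y_t$ exists for every $\boldsymbol{\omega}$ with $\sum_i\norm{\omega_i}<\infty$ (immediate from $\max_{i\geq1}\E\norm{Y_{i,t}}\leq C_1$), since that limit is what \citet[Def.~1]{zhu2017} actually requires; and (iii) like the paper, you delegate the uniform moment and exponential-moment bounds to Proposition~\ref{finite_moment_log}, so those are not re-proved here.
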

		
		\begin{rem} \rm \label{rem_covariates}
			For simplicity, model \eqref{lin2_p} has been defined without including  covariates. But time-invariant positive covariates $\mathbf{Z}\in\R^d_+$ can  be included without affecting the results of the present contribution, under suitable moments existence assumptions. 
			This is a useful fact because we can  consider available node-specific characteristics, for example. Moreover, the log-linear version \eqref{log_lin2_p} ensures the inclusion of covariates whose values belong to $\R^d$.
		\end{rem}
		\begin{rem} \rm \label{rem_gnar_loglin}
			Analogous arguments made in Remark \ref{rem_gnar} for the linear model case hold true for the log-linear model \eqref{log_lin1_p} and a log-linear  GNAR($p$) can be advanced. 
		\end{rem}

		\section{Quasi-likelihood  inference for increasing network size} \label{SEC: inference}
		
		We develop inference  for the unknown vector of parameters of models \eqref{lin2_p},\eqref{log_lin2_p}, denoted by $\thetab=(\beta_0, \beta_{11},\dots, \beta_{1p}, \beta_{21},\dots, \beta_{2p})^T\in \mathbf{\Theta} \subset \R^m$, where $m=2p+1$ and $\mathbf{\Theta}$ is the parameter space. 
		Full parametric likelihood inference  requires specification of the conditional joint p.m.f., which is hard to obtain, because 
		the exponential waiting times employed for   steps 2-3 of the DGP algorithm   are latent  random variables. This implies that  the  imposed copula function   cannot be used   to obtain the  full  model likelihood. Nevertheless, the marginal conditional distributions of the DGP are well defined  quantities and can be employed for estimation of unknown parameters.  Then, the estimation problem is approached by  using  the quasi maximum likelihood theory; 
		see 
		\cite{wedderburn1974quasi} and \cite{gourieroux1984pseudo} among others. Developing proofs of consistency and asymptotic normality of the Quasi Maximum Likelihood Estimation (QMLE), when $N \rightarrow \infty$ and $T_{N} \rightarrow \infty$,   is the main goal of the present section. 
		Define the conditional quasi log-likelihood function for the vector of unknown parameters $\thetab$ by 
		\begin{equation}
			l_{NT}(\thetab)=\sum_{t=1}^{T}\sum_{i=1}^{N} \Bigl(Y_{i,t}\log\lambda_{i,t}(\thetab)-\lambda_{i,t}(\thetab) \Bigr) \equiv \sum_{t=1}^{T}\sum_{i=1}^{N}  l_{i,t}(\thetab)\,, 
			\label{log-lik}
		\end{equation}
		which is the log-likelihood one would obtain if time series modeled in \eqref{lin2_p}, or \eqref{log_lin2_p}, are contemporaneously independent. Clearly such an approach does not require any specification/estimation of the copula structure $C(\dots, \rho)$ and its set of parameters $\rho$. Note that although the copula is not included in the maximization of the ``working" log-likelihood \eqref{log-lik}, the QMLE is not computed  under the assumption of independence; this is  easily seen by  the form of the  information matrix \eqref{B} below, which depends on the  true conditional covariance matrix of the process $\Y_t$.
		
		The quasi log-likelihood \eqref{log-lik}  allows computational simplifications  and guarantees valid asymptotic properties of the estimator at the cost of a lower efficiency when compared top the full  maximum likelihood estimator. In particular, \eqref{log-lik} is a member  of  the one-parameter exponential family; then, even though we do not employ the true likelihood, \citet[Thm.~1-3]{gourieroux1984pseudo} gives an indication that the resulting estimator will be  consistent and asymptotically normal. Note that  we study a different framework  since both $T,N$ are assumed to tend to infinity.
		Since $\W$ is a non-random sequence of matrices indexed by $N$, the specification of the asymptotic properties of the estimator deals with two diverging indexes, $N\to\infty$ and $T\to\infty$, allowing to establish a double-dimensional-type of converge, when both the temporal size and the network dimension grow together. 
		Assuming that there exists a true vector of parameter, say $\thetab_0$, such that the mean model specification \eqref{lin2_p} (or equivalently \eqref{log_lin2_p}) is correct, regardless the true DGP, then we obtain a consistent and asymptotically normal estimator by maximizing the quasi log-likelihood \eqref{log-lik}. This is a novel result as most contributions in the literature deal either with the case $N=1$ or $N$ fixed; see previous references.

		Consider the linear PNAR model \eqref{lin2_p}. Denote by $\hat{\thetab}\coloneqq\argmax_{\thetab \in \mathbf{\Theta}} l_{NT}(\thetab)$, the QMLE for $\thetab$. The score function for the linear model is given by
		\begin{eqnarray}
			&\textbf{S}_{NT}(\thetab)&=\sum_{t=1}^{T}\sum_{i=1}^{N}\left(  \frac{Y_{i,t}}{\lambda_{i,t}(\thetab)}-1\right) \frac{\partial\lambda_{i,t}(\thetab)}{\partial\thetab}\nonumber\\
			&&=\sum_{t=1}^{T}\frac{\partial\lambdab^T_{t}(\thetab)}{\partial\thetab}\mathbf{D}_t^{-1}(\thetab)\Big(\Y_t-\lambdab_{t}(\thetab)\Big)=\sum_{t=1}^{T}\textbf{s}_{Nt}(\thetab)\,,
			\label{score}
		\end{eqnarray}
		where
		\begin{equation}
			\frac{\partial\lambdab_{t}(\thetab)}{\partial\thetab^T}=(\mathbf{1}_N, \W\Y_{t-1},\dots, \W\Y_{t-p}, \Y_{t-1}, \dots, \Y_{t-p})
			\nonumber
		\end{equation}
		is a $N\times m$ matrix and $\mathbf{D}_t(\thetab)$ is the $N\times N$ diagonal matrix with diagonal elements equal to $\lambda_{i,t}(\thetab)$ for $i=1,\dots, N$. The Hessian matrix (multiplied by -1) is given by
		\begin{equation}
			\mathbf{H}_{NT}(\thetab)=
			\sum_{t=1}^{T}\frac{\partial\lambdab^T_{t}(\thetab)}{\partial\thetab}\mathbf{C}_t(\thetab)\frac{\partial\lambdab_{t}(\thetab)}{\partial\thetab^T}=\sum_{t=1}^{T}{\textbf{H}}_{Nt}(\thetab)\,,
			\label{H_T}
		\end{equation}
		with $\mathbf{C}_t(\thetab)=\textrm{diag}\left\lbrace Y_{1,t}/\lambda^2_{1,t}(\thetab)\dots Y_{N,t}/\lambda^2_{N,t}(\thetab)\right\rbrace $ and the conditional information matrix is
		\begin{equation}
			\mathbf{B}_{NT}(\thetab)= 
			\sum_{t=1}^{T}\frac{\partial\lambdab^T_{t}(\thetab)}{\partial\thetab}\mathbf{D}^{-1}_t(\thetab)\mathbf{\Sigma}_t(\thetab)\mathbf{D}^{-1}_t(\thetab)\frac{\partial\lambdab_{t}(\thetab)}{\partial\thetab^T}=\sum_{t=1}^{T}{\textbf{B}}_{Nt}(\thetab)\,,
			\label{B_T}
		\end{equation}
		where $\boldsymbol{\Sigma}_t(\thetab)=\E(\boldsymbol{\xi}_{t}\boldsymbol{\xi}_{t}^T|\Fb_{t-1})$ denotes the true conditional covariance matrix of the vector $\Y_t$ and recalling $\boldsymbol{\xi}_{t} \equiv \Y_t-\lambdab_t$.  Expectation is taken with respect to the stationary distribution of $\left\lbrace \Y_t\right\rbrace $. Moreover, the theoretical counterpart of the Hessian and information matrices, respectively, are the following.
			\begin{equation}
				\mathbf{H}_N(\thetab)=\E\Bigg[\frac{\partial\lambdab^T_{t}(\thetab)}{\partial\thetab}\mathbf{D}_t^{-1}(\thetab)\frac{\partial\lambdab_{t}(\thetab)}{\partial\thetab^T}\Bigg]\,,
				\label{H}
			\end{equation}
			\begin{equation}
				\mathbf{B}_N(\thetab)=\E\Bigg[\frac{\partial\lambdab^T_{t}(\thetab)}{\partial\thetab}\mathbf{D}_t^{-1}(\thetab)\mathbf{\Sigma}_t(\thetab) \mathbf{D}_t^{-1}(\thetab)\frac{\partial\lambdab_{t}(\thetab)}{\partial\thetab^T}\Bigg]\,.
				\label{B}
			\end{equation}
			Similarly for the log-linear PNAR model, we have that the score function is given by:
			\begin{eqnarray}
				&\textbf{S}_{NT}(\thetab)&=\sum_{t=1}^{T}\sum_{i=1}^{N}\Big( Y_{i,t}-\exp(\nu_{i,t}(\thetab))\Big)\frac{\partial\nu_{i,t}(\thetab)}{\partial\thetab}
				=\sum_{t=1}^{T}\frac{\partial\nub^T_{t}(\thetab)}{\partial\thetab}\Big(\Y_t-\exp(\nub_{t}(\thetab))\Big), 
				\label{score_log}
			\end{eqnarray}
			where
			\begin{equation}
				\frac{\partial\nub_{t}(\thetab)}{\partial\thetab^T}=(\mathbf{1}_N, \W\log(\mathbf{1}_N+\Y_{t-1}),\dots,\W\log(\mathbf{1}_N+\Y_{t-p}), \log(\mathbf{1}_N+\Y_{t-1}),\dots, \log(\mathbf{1}_N+\Y_{t-p}))
				\nonumber
			\end{equation}
			is a $N\times m$ matrix, and
			\begin{equation} \label{H_T_log}
				\mathbf{H}_{NT}(\thetab)=\sum_{t=1}^{T}\frac{\partial\nub^T_{t}(\thetab)}{\partial\thetab}\mathbf{D}_t(\thetab)\frac{\partial\nub_{t}(\thetab)}{\partial\thetab^T}\,,
			\end{equation}
			\begin{equation}
				\mathbf{B}_{NT}(\thetab)=\sum_{t=1}^{T}\frac{\partial\nub^T_{t}(\thetab)}{\partial\thetab}\mathbf{\Sigma}_t(\thetab)\frac{\partial\nub_{t}(\thetab)}{\partial\thetab^T}\,,\nonumber
			\end{equation}
			where $\mathbf{D}_t(\thetab)$ is the $N\times N$ diagonal matrix with diagonal elements equal to $\exp(\nu_{i,t}(\thetab))$ for $i=1,\dots, N$ and $\boldsymbol{\Sigma}_t(\thetab)=\E(\boldsymbol{\xi}_{t}\boldsymbol{\xi}_{t}^T|\Fb_{t-1})$ with $\boldsymbol{\xi}_{t}=\Y_t-\exp(\nub_t(\thetab))$. Moreover, 
			\begin{equation} \label{H_log}
				\mathbf{H}_N(\thetab)=\E\Bigg[\frac{\partial\nub^T_{t}(\thetab)}{\partial\thetab}\mathbf{D}_t(\thetab)\frac{\partial\nub_{t}(\thetab)}{\partial\thetab^T}\Bigg]\,,
			\end{equation}
			\begin{equation}
				\mathbf{B}_N(\thetab)=\E\Bigg[\frac{\partial\nub^T_{t}(\thetab)}{\partial\thetab}\boldsymbol{\Sigma}_t(\thetab)\frac{\partial\nub_{t}(\thetab)}{\partial\thetab^T}\Bigg]
				\label{B_log}
			\end{equation}
			are respectively (minus) the Hessian matrix and the information matrix.

			\subsection{Linear model inference}
			Recall \eqref{log-lik}.
			We drop the dependence on $\thetab$ when a quantity is evaluated at $\thetab_0$.  
			For ease of presentation, consider model \eqref{lin2} with first moment $\E(\Y_t)=\mub=\mu\mathbf{1}_N$ where $\mu=\beta_0/(1-\beta_1-\beta_2)$ (see SM~\ref{moment_lin}). Moreover, the elementwise absolute value of the error covariance matrix is defined as $\Sigmab=\E\norm{\xib_t\xib_t^T}_v$. Define the following expectations $\Pi_{222}=N^{-1}\sum_{i=1}^{N}\E[(\w_i^T(\Y_{t-1}-\mub))^3/\lambda_{i,t}]$, $\Pi_{223}=N^{-1}\sum_{i=1}^{N}\E[(\w_i^T(\Y_{t-1}-\mub))^2Y_{i,t-1}/\lambda_{i,t}]$, $\Pi_{233}=N^{-1}\sum_{i=1}^{N}\E[\w_i^T(\Y_{t-1}-\mub)Y_{i,t-1}^2/\lambda_{i,t}]$, $\Pi_{333}=N^{-1}\sum_{i=1}^{N}\E[Y_{i,t-1}^3/\lambda_{i,t}]$. 
			Those expectations  constitute summands for some of the  elements of the expected third derivative matrix of $l_{i,t}(\thetab)$. Consider the set $\Omega_d=\left\lbrace (2,2,2), (2,2,3), (2,3,3), (3,3,3)\right\rbrace $,  $(j^*,l^*,k^*)= \argmax_{1 \leq j,l,k \leq m}  \allowbreak 
			\norm{{N}^{-1}\sum_{i=1}^{N}\partial^3l_{i,t}(\thetab)/\partial\thetab_j\partial\thetab_l\partial\thetab_k}$ is the set of indices where the absolute value of the third derivative is maximum. Assume the following:
			\begin{enumerate}[label=B\arabic*]
				\item  The process $\left\lbrace \boldsymbol{\xi}_t,\,\mathcal{F}^{N}_{t}:\,N\in\N, t\in\Z\right\rbrace $
				is $\alpha$-mixing, where $\mathcal{F}^{N}_{t}=\sigma\left(\xi_{i,s}:\,1\leq i\leq N, s\leq t\right)$. 
				
				\item Let $\W$ be a sequence of matrices with non-random entries indexed by $N$.
				\begin{enumerate}[label*=.\arabic*]
					\item Consider $\W$ as a transition probability matrix of a Markov chain, whose state space is defined as the set of all the nodes in the network (i.e., $\left\lbrace 1,\dots, N\right\rbrace$). The Markov chain is assumed to be irreducible and aperiodic. Further, define $\pib = (\pi_1,\dots, \pi_N)^T\in\R^N$ as the stationary distribution of the Markov chain, where $\pi_i\geq 0$, 
					$\sum_{i=1}^{N}\pi_i=1$ and $\pib=\W^T\pib$. Furthermore, assume that $\lambda_{\max}(\Sigmab)\sum_{i=1}^{N}\pi_i^2\to0$ as $N\to\infty$.
					\item Define $\W^*=\W+\W^T$ and assume that  $\lambda_{\max}(\W^*)=\mathcal{O}(\log N)$ and $\lambda_{\max}(\Sigmab)=\mathcal{O}((\log N)^\delta)$, for some $\delta\geq 1$.
				\end{enumerate}
				\item Set $\boldsymbol{\Lambda}=\E(\mathbf{D}^{-1}_t)$, $\bar{\boldsymbol{\Gamma}}(0)=\E[\mathbf{D}^{-1/2}_t(\Y_{t-1}-\mub)(\Y_{t-1}-\mub)^T\mathbf{D}^{-1/2}_t]$ and $\boldsymbol{\Delta}(0)=\E[\mathbf{D}^{-1/2}_t\W(\Y_{t-1}-\mub)(\Y_{t-1}-\mub)^T\W^T\mathbf{D}^{-1/2}_t]$. Assume the following limits exist: $d_1=\lim_{N\to\infty}N^{-1}\textrm{tr}\left( \boldsymbol{\Lambda}\right)$,  $d_2=\lim_{N\to\infty}N^{-1}\textrm{tr}\left[ \bar{\boldsymbol{\Gamma}}(0)\right] $, $d_3=\lim_{N\to\infty}N^{-1}\textrm{tr}\left[ \W\bar{\boldsymbol{\Gamma}}(0)\right] $, $d_4=\lim_{N\to\infty}N^{-1}\textrm{tr}\left[ \boldsymbol{\Delta}(0)\right] $ and, if $(j^*,l^*,k^*)\in\Omega_d$, $d_*=\lim_{N\to\infty}\Pi_{j^*,l^*,k^*}$.
			\end{enumerate}
			
			
			Assumption B1 (see \cite{douk1994}) is a mixing condition.  Recall that $\xib_t$ 
			is an $\alpha$-mixing array if, namely,
			\begin{equation}
				\alpha(J)=\sup_{N\in\N}\alpha_N(J)=\sup_{t\in\Z, N\in\N}\sup_{A\in\mathcal{F}^{N}_{-\infty, t},B\in\mathcal{F}^{N}_{t+J,\infty}}\left|\mathrm{P}(A\cap B)-\mathrm{P}(A)\mathrm{P}(B) \right|\xrightarrow{J\to\infty}0
				\nonumber 
			\end{equation}
			where $\mathcal{F}^{N}_{t}\equiv\mathcal{F}^{N}_{-\infty, t}=\sigma\left(\xi_{i,s}: 1\leq i\leq N, s\leq t\right)$, $\mathcal{F}^{N}_{t+J,\infty}=\sigma\left(\xi_{i,s}: 1\leq i\leq N, s\geq t+J\right)$. 	
			This assumption holds  true for the  simple  example of   $\xib_t \sim IID(0, \boldsymbol{\Sigma})$ where $\xib_t$ is constructed by the copula method proposed in this paper.  In this case,  the noise is independent over time but it is non contemporaneous independent. Another example would be all the processes which satisfy   $\alpha_N(J)\leq f(J)$, where $f(J)$ is some function which does not depend on $N$, such that $f(J)\to 0 $ as $J\to\infty$.

			Assumption B2 on the network structure implies that the edges between nodes are known and as $N$ increases  to $N + 1$ an additional node is added  with some edges to the previous $N$ nodes, but the edges among the previous $N$ nodes do not change. Moreover, it
			requires some uniformity conditions, and it is equivalent set of conditions as \citet[C2, C2.1-C2.2]{zhu2019}.  Finally, B2.2 requires that the network structure admits certain uniformity property ($\lambda_{\max}(\W^*)$ diverges slowly). \citet[Supp. Mat., Sec.~7.1-7.3]{zhu2017} found empirically that this is the case for several network models. 
			In our case, regularity assumptions on the structure of dependence among the errors, when the network grows, are required by imposing that the diverging rate of $\lambda_{\max}(\Sigmab)$ will be slower than order $\mathcal{O}(N)$, in B2.2, and its product with the squared sum of the stationary distribution of the chain,  $\pib$, will tend to 0, in B2.1. We give an empirical verification of such conditions in Section~\ref{SUPP network} of the SM. In the continuous-valued case introduced in \cite{zhu2017} such assumptions are not necessary because   the errors are IID with common variance $\sigma^2$. Moreover, in this case, the absolute value is no more required because  $\boldsymbol{\Sigma_\xi}=\E(\xib_t\xib_t^T)=\sigma^2\I_N$.
			
			The conditions outlined in B3 are law of large numbers-like assumptions, which are quite standard in the existing literature, since little is known about the behavior of the process as $N\to\infty$. These assumptions are required to guarantee that the Hessian matrix \eqref{H_T} converges to a  matrix which exists. Section ~\ref{SUPP network} includes numerical study examples showing the validity of these limits. If OLS estimation with IID errors was performed, conditions B3 would correspond exactly to those in \citet[C3]{zhu2017}. 

			\begin{lemma}\rm
				For the linear model \eqref{lin2}, suppose $\beta_1+\beta_2 < 1$ and B1-B3 hold. Consider $\mathbf{S}_{NT}$ and $\mathbf{H}_{NT}$ defined as in \eqref{score} and \eqref{H_T}, respectively. Then, as $\left\lbrace N,T_N \right\rbrace \to\infty$
				\begin{enumerate} 
					\item $(NT_N)^{-1}\mathbf{H}_{NT_N}\xrightarrow{p}\mathbf{H}\,,$
					\item $(NT_N)^{-1}\mathbf{S}_{NT_N}\xrightarrow{p}\textbf{0}_m\,,$
					\item $\max_{j,l,k}\sup_{\thetab\in\mathcal{O}(\thetab_0)}\left|(NT_N)^{-1}\sum_{t=1}^{T_N}\sum_{i=1}^{N}\frac{\partial^3l_{i,t}(\thetab)}{\partial\thetab_j\partial\thetab_l\partial\thetab_k}\right|\leq M_{NT_N}\xrightarrow{p}M\,,$
				\end{enumerate}
				where $\mathcal{O}(\thetab_0)=\left\lbrace \thetab:|\thetab-\thetab_0|_2<\delta\right\rbrace$ is a neighborhood of $\thetab_0$, $M_{NT_N}\coloneqq(NT_N)^{-1}\sum_{t=1}^{T_N}\sum_{i=1}^{N}m_{i,t}$, $M$ is a finite constant, $\mathbf{H}=\lim_{N\to\infty}N^{-1}\mathbf{H}_N$ is non singular and 
				\begin{equation}
					\mathbf{H}=\begin{pmatrix}
						d_1 & \mu d_1 & \mu d_1  \\
						& \mu^2 d_1+d_4 & \mu^2 d_1+d_3 \\
						&  & \mu^2 d_1+d_2
					\end{pmatrix}\,. \label{H div N}
				\end{equation}
				\label{limits}
			\end{lemma}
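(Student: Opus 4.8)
The plan is to read each of the three statements as a (possibly weighted) average over the two diverging indices and to combine a law of large numbers along time, justified by the $\alpha$-mixing Assumption B1, with the cross-sectional limits postulated in B3, while the uniform-in-$N$ moment bounds of Proposition~\ref{finite_moment} and Theorem~\ref{Thm. Ergodicity of linear model N} supply the integrability needed throughout. The natural device is to set $g_{Nt}=N^{-1}\mathbf{H}_{Nt}(\thetab_0)$, so that $(NT_N)^{-1}\mathbf{H}_{NT_N}=T_N^{-1}\sum_{t=1}^{T_N}g_{Nt}$, and likewise for the score and the third-derivative envelope. For each fixed $N$, $g_{Nt}$ is a stationary sequence in $t$ whose entries are fixed measurable functions of finitely many lagged values of $\Y$; hence, by B1 together with boundedness of all moments, $g_{Nt}$ is $L^1$-NED (indeed $L^p$-NED) on the $\alpha$-mixing base process, with NED coefficients and moment bounds that are \emph{uniform in} $N$. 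Applying the law of large numbers for NED arrays of \cite{and1988} then gives $T_N^{-1}\sum_{t=1}^{T_N}g_{Nt}-\E(g_{Nt})\xrightarrow{p}\mathbf{0}$, and since $\E(g_{Nt})=N^{-1}\mathbf{H}_N\to\mathbf{H}$ by B3, part~1 follows once the limit matrix is identified.

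To identify $\mathbf{H}=\lim_N N^{-1}\mathbf{H}_N$ I would insert the decomposition $\Y_{t-1}=\mub+(\Y_{t-1}-\mub)$ into each column $(\mathbf{1}_N,\W\Y_{t-1},\Y_{t-1})$ of $\partial\lambdab_t/\partial\thetab^T$ and expand the quadratic form $N^{-1}\E[\,\partial\lambdab_t^T/\partial\thetab\,\mathbf{D}_t^{-1}\,\partial\lambdab_t/\partial\thetab^T]$. Using $\W\mathbf{1}_N=\mathbf{1}_N$ (row-normalization), the constant pieces produce the multiples of $d_1$ in the stated matrix, while the genuinely quadratic centered pieces reproduce $d_2,d_3,d_4$ through the trace limits $N^{-1}\textrm{tr}\,\bar{\boldsymbol{\Gamma}}(0)\to d_2$, $N^{-1}\textrm{tr}[\W\bar{\boldsymbol{\Gamma}}(0)]\to d_3$ and $N^{-1}\textrm{tr}\,\boldsymbol{\Delta}(0)\to d_4$ of B3. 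Non-singularity of $\mathbf{H}$ would follow by exhibiting it as the limit of the positive-semidefinite Gram matrices $N^{-1}\mathbf{H}_N$ and checking that the limiting quadratic form is strictly positive, i.e. that $\mathbf{1}_N,\W\Y_{t-1},\Y_{t-1}$ are not asymptotically collinear in the $\mathbf{D}_t^{-1}$ inner product, which rules out a degenerate determinant.

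For part~2 the key structural fact is that $\mathbf{s}_{Nt}(\thetab_0)=\partial\lambdab_t^T/\partial\thetab\,\mathbf{D}_t^{-1}\xib_t$ is a martingale difference sequence, because $\E(\xib_t\mid\Fb_{t-1})=\E(\Y_t-\lambdab_t\mid\Fb_{t-1})=\mathbf{0}$. Hence $\E[(NT_N)^{-1}\mathbf{S}_{NT_N}]=\mathbf{0}$ and the time cross-covariances vanish, so that $\textrm{Var}[(NT_N)^{-1}\mathbf{S}_{NT_N}]=(N^2T_N)^{-1}\mathbf{B}_N=(NT_N)^{-1}(N^{-1}\mathbf{B}_N)$. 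Because B2 controls $\lambda_{\max}(\Sigmab)$ (and $\lambda_{\max}(\W^*)$), the normalized information $N^{-1}\mathbf{B}_N$ stays bounded, so the variance is $\mathcal{O}((NT_N)^{-1})\to\mathbf{0}$ and Chebyshev's inequality delivers $(NT_N)^{-1}\mathbf{S}_{NT_N}\xrightarrow{p}\mathbf{0}_m$.

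For part~3 I would first differentiate directly: since $\lambda_{i,t}$ is affine in $\thetab$, only the $\log\lambda_{i,t}$ term contributes and $\partial^3 l_{i,t}/\partial\theta_j\partial\theta_l\partial\theta_k=2Y_{i,t}(\partial_j\lambda_{i,t})(\partial_l\lambda_{i,t})(\partial_k\lambda_{i,t})/\lambda_{i,t}^3$, with each first derivative equal to one of $1,\w_i^T\Y_{t-1},Y_{i,t-1}$. On $\mathcal{O}(\thetab_0)$ one has $\lambda_{i,t}(\thetab)\geq\beta_0^{\min}>0$ uniformly, so each term is dominated by an envelope $m_{i,t}$, a product of $Y_{i,t}$ with three lagged count/neighborhood factors divided by a constant; this envelope has finite moments of every order by Proposition~\ref{finite_moment}, and the supremum over $\thetab$ is attained on the compact neighborhood. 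The average $M_{NT_N}=(NT_N)^{-1}\sum_{t,i}m_{i,t}$ is then handled exactly as in part~1 (NED plus the cross-sectional limit $d_*$ attached to the maximizing triple in $\Omega_d$), giving $M_{NT_N}\xrightarrow{p}M<\infty$. \textbf{The main obstacle} is the double-indexed law of large numbers underlying parts~1 and~3: one must verify that the NED coefficients and the moment/variance bounds for $g_{Nt}$ are uniform in $N$ so that the joint limit $\{N,T_N\}\to\infty$ is legitimate, and this uniformity in the cross-section is exactly what the eigenvalue conditions B2 on $\Sigmab$ and $\W^*$ are designed to guarantee; the bookkeeping matching the centered second-moment traces to $d_2,d_3,d_4$ (clean/exact only in the Gaussian OLS benchmark of \cite{zhu2017}) is the secondary technical point.
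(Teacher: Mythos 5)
Your overall architecture --- a NED/mixingale law of large numbers along time via B1 and \cite{and1988}, cross-sectional identification of the limit via B3, and envelopes supplied by the uniform moment bounds --- is the same as the paper's. There are, however, two places where the proposal has a genuine gap. The first is in identifying $\mathbf{H}=\lim_N N^{-1}\mathbf{H}_N$: after substituting $\Y_{t-1}=\mub+(\Y_{t-1}-\mub)$, the expansion of $N^{-1}\E[\partial\lambdab_t^T/\partial\thetab\,\mathbf{D}_t^{-1}\,\partial\lambdab_t/\partial\thetab^T]$ produces not only the constant pieces (giving $d_1$) and the centered quadratic pieces (giving $d_2,d_3,d_4$), but also mixed terms such as $N^{-1}\E[\mathbf{1}_N^T\mathbf{D}_t^{-1}\W(\Y_{t-1}-\mub)]$. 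These do not vanish by row-normalization or by centering, because $\mathbf{D}_t^{-1}$ is itself a function of $\Y_{t-1}$; showing they are $o(1)$ is the real work. The paper devotes a separate result (Lemma~\ref{lemma2}) to this, bounding $N^{-1}\norm{H_{12b}}$ through the MA($\infty$) representation $\Y_t=\mub+\sum_{j}\G^j\xib_{t-j}$ and then invoking B2.1--B2.2 (the stationary distribution $\pib$ of $\W$, $\lambda_{\max}(\Sigmab)\sum_i\pi_i^2\to0$, and $\lambda_{\max}(\W^*)=\mathcal{O}(\log N)$) to drive the bound to zero. You instead assign B2 the role of making the NED coefficients uniform in $N$, which is a misreading: that uniformity comes for free from $\vertiii{\G}_\infty\leq\beta_1+\beta_2=d<1$ together with the uniform moment bounds, and B2 is consumed entirely by these cross terms. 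Without this step the stated form of $\mathbf{H}$ is not established.

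The second gap is that the NED property is asserted rather than derived. Saying that $g_{Nt}$ is ``a fixed measurable function of finitely many lagged values of $\Y$'' and hence NED on the mixing base conflates two different base processes. The paper takes the base to be the innovations $\xib_s$ and builds explicit truncated predictors $\hat{\Y}^t_{t-J}$ (Lemma~\ref{construction}) satisfying $\norm{\Y_t-\hat{\Y}^t_{t-J}}_\infty\leq d^J\sum_{j}d^j\norm{\xib_{t-J-j}}_\infty$; it is this contraction, combined with H\"older/Minkowski manipulations of ratios such as $(\w_i^T\Y_{t-1})^2Y_{i,t}/\lambda_{i,t}^2$ (using $\lambda_{i,t}\geq\beta_0$), that yields $\lVert h_{22,t}-h^t_{22,t-J}\rVert_2\leq c_{22}d^{J-1}$ and hence the $L^2$-NED rate. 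This is where $\beta_1+\beta_2<1$ enters beyond mere moment existence, and it is absent from your sketch. Two minor remarks: your Chebyshev argument for part~2 is a legitimate and slightly more elementary alternative to the paper's mixingale route, but the justification is off --- boundedness of $N^{-1}\mathbf{B}_N$ requires B4 (or B3$^\prime$), which Lemma~\ref{limits} does not assume; what saves the argument is the crude bound $N^{-2}\E(\etab^T\mathbf{s}_{Nt})^2\leq C$, which already makes the variance $\mathcal{O}(T_N^{-1})$. Finally, for non-singularity of $\mathbf{H}$ both you and the paper are terse; your ``no asymptotic collinearity'' criterion is the right idea but would need to be checked, since a limit of positive definite matrices need not be non-singular.
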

			Some preliminary results  required  to show the lemma  are proved  in SM~\ref{proof preliminary lemmata}. The proof of Lemma~\ref{limits} is given  in Appendix~\ref{proof}.

			
			Consider now the following conditions:
			\begin{enumerate}
				\item[B3$^\prime$] Set $\boldsymbol{\Lambda}_t=\mathbf{\Sigma}_t^{1/2}\mathbf{D}^{-1}_t$, $\boldsymbol{\Lambda}=\E(\boldsymbol{\Lambda}^T_t\boldsymbol{\Lambda}_t)$, $\bar{\boldsymbol{\Gamma}}(0)=\E[\boldsymbol{\Lambda}_t(\Y_{t-1}-\mub)(\Y_{t-1}-\mub)^T\boldsymbol{\Lambda}^T_t]$ and $\boldsymbol{\Delta}(0)=\E[\boldsymbol{\Lambda}_t\W(\Y_{t-1}-\mub)(\Y_{t-1}-\mub)^T\W^T\boldsymbol{\Lambda}^T_t]$. Assume that the following limits exist:\\ $f_1=\lim_{N\to\infty}N^{-1}\left( \mathbf{1}_N^T\boldsymbol{\Lambda} \mathbf{1}_N\right)$, $f_2=\lim_{N\to\infty}N^{-1}\textrm{tr}\left[ \bar{\boldsymbol{\Gamma}}(0)\right] $, $f_3=\lim_{N\to\infty}N^{-1}\textrm{tr}\left[ \W\bar{\boldsymbol{\Gamma}}(0)\right] $,\\ $f_4=\lim_{N\to\infty}N^{-1}\textrm{tr}\left[ \boldsymbol{\Delta}(0)\right] $ and, if $(j^*,l^*,k^*)\in\Omega_d$, $d_*=\lim_{N\to\infty}\Pi_{j^*,l^*,k^*}$.
				\item[B4]  There exists a  non negative, non increasing sequence $\left\lbrace \varphi_h \right\rbrace_{h=1,\dots,\infty}$ such that $\sum_{h=1}^{\infty} \varphi_h = \Phi<\infty$ and, for $i<j$, almost surely
				\begin{equation}
					\norm{\textrm{Corr}(Y_{i,t}, Y_{j,t}\left| \right. \Fb_{t-1} )}\leq \varphi_{j-i}\,, \label{weak dependence}
				\end{equation}
			\end{enumerate}
			Condition B3$^\prime$ is simply an extension of assumption B3, required for the convergence of the conditional information matrix \eqref{B_T} to a valid limiting information matrix, see $\eqref{B div N}$ below. More precisely, the reader can verify that B3 is just a special case of B3$^\prime$, when $\mathbf{\Sigma}_t=\mathbf{D}_t$. The main reason that  this  assumption is introduced  is that, for the QMLE, the conditional information matrix and the Hessian matrix are, in general, different. This does not occur in the case studied by \cite{zhu2017}. Analogously to B3,  when  $\Y_t$ is  continuous-valued random vector, and we deal with  IID errors $\xib_t$, B3$^\prime$ reduces again to the conditions in \citet[C3]{zhu2017}.
			
			Assumption B4 could be considered as a contemporaneous weak dependence condition.  Indeed, even in the very simple case of independence model, i.e. $\lambda_{i,t}=\beta_0$, for all $i=1,\dots, N$, the reader can easily verify that, without any further constraints, $N^{-1}\mathbf{B}_N=\mathcal{O}(N)$, so the limiting variance of the estimator will eventually diverge, since it depends on the limit of the conditional information matrix. Instead, under B4, $N^{-1}\mathbf{B}_N=\mathcal{O}(1)$, and the existence of the limiting covariance matrix can be shown, as in  Lemma~\ref{limits 2} and Theorem~\ref{can2} below. 
			Insights about weak dependence conditions have been stated in \citet[p.~1102]{zhu2017}. When the errors are independent over different nodes and the past \citep[C1]{zhu2017}, B4 is trivially satisfied, since $\norm{\textrm{Cov}(Y_{i,t}, Y_{j,t}\left| \right. \Fb_{t-1} )}=\norm{\E(\xi_{i,t} \xi_{j,t})}=0$, for $i\neq j$. See Section~\ref{SUPP weak dependence} of the SM, for an example where B4 is empirically verified. Define $\etab\in\R^m$, a non-null real-valued vector.
			
			\begin{lemma}\rm
				For the linear model \eqref{lin2}, suppose $\beta_1+\beta_2 < 1$ and B1-B2, B3$^\prime$-B4 hold. Consider $\mathbf{S}_{NT}$ and $\mathbf{B}_{NT}$ defined as in \eqref{score} and \eqref{B_T}, respectively. Assume $N^{-2}\E(\etab^T \mathbf{s}_{Nt})^4<\infty$. Then, as $\left\lbrace N,T_N \right\rbrace \to\infty$
				\begin{enumerate} 
					\item $(NT_N)^{-1}\mathbf{B}_{NT_N}\xrightarrow{p}\mathbf{B}\,,$
					\item $(NT_N)^{-\frac{1}{2}}\mathbf{S}_{NT_N}\xrightarrow{d}N(\mathbf{0}_m,\mathbf{B})\,,$
				\end{enumerate}
				where $\mathbf{B}=\lim_{N\to\infty}N^{-1}\mathbf{B}_N$ and
				\begin{equation}
					\mathbf{B}=\begin{pmatrix}
						f_1 & \mu f_1 & \mu f_1  \\
						& \mu^2 f_1+f_4 & \mu^2 f_1+f_3 \\
						&  & \mu^2 f_1+f_2
					\end{pmatrix}\,. \label{B div N}
				\end{equation}
				\label{limits 2}
			\end{lemma}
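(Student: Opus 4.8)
The plan is to prove the two claims in sequence, with claim~1 (a law of large numbers for the conditional information matrix) feeding directly into claim~2 (a martingale central limit theorem for the score). The hinge linking them is the observation that, since $\partial\lambdab_t/\partial\thetab$, $\mathbf{D}_t$ and $\boldsymbol{\Sigma}_t=\E(\xib_t\xib_t^T\mid\mathcal{F}^{N}_{t-1})$ are all $\mathcal{F}^{N}_{t-1}$-measurable while $\E(\xib_t\mid\mathcal{F}^{N}_{t-1})=\mathbf{0}_N$, each summand $\mathbf{s}_{Nt}=(\partial\lambdab_t^T/\partial\thetab)\mathbf{D}_t^{-1}\xib_t$ of the score \eqref{score} is a martingale difference whose conditional second moment is exactly $\E(\mathbf{s}_{Nt}\mathbf{s}_{Nt}^T\mid\mathcal{F}^{N}_{t-1})=(\partial\lambdab_t^T/\partial\thetab)\mathbf{D}_t^{-1}\boldsymbol{\Sigma}_t\mathbf{D}_t^{-1}(\partial\lambdab_t/\partial\thetab^T)=\mathbf{B}_{Nt}$.

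For claim~1, I would write $(NT_N)^{-1}\mathbf{B}_{NT_N}=T_N^{-1}\sum_{t=1}^{T_N}(N^{-1}\mathbf{B}_{Nt})$ and argue separately in the two indices, in parallel with Lemma~\ref{limits}. The sequence $\{N^{-1}\mathbf{B}_{Nt}\}_t$ is $\mathcal{F}^{N}_{t-1}$-measurable, strictly stationary by Theorem~\ref{Thm. Ergodicity of linear model N}, and $\alpha$-mixing uniformly in $N$ by B1. The decisive point is that its entries, being spatial averages of $\W$-weighted quadratic forms in $(\Y_{t-1}-\mub)$ divided by the intensities, have moments bounded uniformly in $N$: the contemporaneous weak-dependence bound B4 forces the cross-node correlations of $\boldsymbol{\Sigma}_t$ to be summable, so the spatial average is $\mathcal{O}(1)$ rather than $\mathcal{O}(N)$, while Proposition~\ref{finite_moment} together with B2 supplies the required finite moments. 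A law of large numbers for $\alpha$-mixing (or $L^2$-near epoch dependent) arrays \citep{and1988} then gives $T_N^{-1}\sum_t N^{-1}\mathbf{B}_{Nt}-N^{-1}\mathbf{B}_N\xrightarrow{p}\mathbf{0}$, and combining with the limit $N^{-1}\mathbf{B}_N\to\mathbf{B}$ guaranteed by B3$'$ yields the convergence. The explicit form \eqref{B div N} follows exactly as in Lemma~\ref{limits}: expanding $\partial\lambdab_t/\partial\thetab^T=(\mathbf{1}_N,\W\Y_{t-1},\Y_{t-1})$, writing $\Y_{t-1}=\mub+(\Y_{t-1}-\mub)$, and matching the blocks against $f_1,\dots,f_4$, where the $\mu$ and $\mu^2$ coefficients come from the $\mub$ cross-terms and $f_2,f_3,f_4$ from the centered quadratic forms weighted by $\boldsymbol{\Lambda}_t=\boldsymbol{\Sigma}_t^{1/2}\mathbf{D}_t^{-1}$.

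For claim~2 I would invoke the Cram\'er--Wold device: fix a non-null $\etab\in\R^m$ and set $Z_{Nt}=(NT_N)^{-1/2}\etab^T\mathbf{s}_{Nt}$, so that $(NT_N)^{-1/2}\etab^T\mathbf{S}_{NT_N}=\sum_{t=1}^{T_N}Z_{Nt}$ is a sum of a martingale-difference triangular array relative to $\{\mathcal{F}^{N}_t\}$, and then verify the two hypotheses of a martingale central limit theorem for triangular arrays. The conditional-variance condition is immediate from claim~1, since $\sum_t\E(Z_{Nt}^2\mid\mathcal{F}^{N}_{t-1})=\etab^T\{(NT_N)^{-1}\mathbf{B}_{NT_N}\}\etab\xrightarrow{p}\etab^T\mathbf{B}\etab$. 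The Lindeberg condition I would obtain through a conditional Lyapunov bound: by stationarity $\sum_t\E\,Z_{Nt}^4=(NT_N)^{-2}\sum_t\E(\etab^T\mathbf{s}_{Nt})^4=T_N^{-1}\{N^{-2}\E(\etab^T\mathbf{s}_{Nt})^4\}=\mathcal{O}(T_N^{-1})\to0$, which is precisely where the fourth-moment hypothesis $N^{-2}\E(\etab^T\mathbf{s}_{Nt})^4<\infty$ enters. Hence $\sum_t Z_{Nt}\xrightarrow{d}N(0,\etab^T\mathbf{B}\etab)$, and as $\etab$ is arbitrary, $(NT_N)^{-1/2}\mathbf{S}_{NT_N}\xrightarrow{d}N(\mathbf{0}_m,\mathbf{B})$.

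The main obstacle is claim~1, and specifically the joint control of the two diverging indices. Pure temporal averaging would follow from B1 alone, but the summands are themselves spatial averages over a network of growing size, so one must establish moment bounds and a variance rate for $N^{-1}\mathbf{B}_{Nt}$ that are uniform in $N$; this is exactly the role of B4, since without summable contemporaneous correlations the quadratic forms defining the information matrix would be of order $N$ and $N^{-1}\mathbf{B}_N$ would fail to stabilize. Reconciling the uniform-in-$N$ mixing of B1 with the uniform moment bounds demanded by the law of large numbers of \cite{and1988}, under the simultaneous regime $\{N,T_N\}\to\infty$, is the technical heart of the argument; once it is in place, the martingale CLT of claim~2 is routine.
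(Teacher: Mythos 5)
Your proposal follows essentially the same route as the paper: part (1) is obtained by combining B4 (which keeps the spatial average of the conditional-correlation-weighted quadratic forms $\mathcal{O}(1)$), B3$'$ and B2 (for the existence and block form of $\mathbf{B}$), and the weak law of large numbers of \cite{and1988}; part (2) is the Hall--Heyde martingale CLT with the Cram\'er--Wold device, conditional variances supplied by part (1), and the Lindeberg condition checked through exactly the Lyapunov/fourth-moment bound you write down. The one imprecision is that the summands $N^{-1}\mathbf{B}_{Nt}$ are not directly $\alpha$-mixing in the sense of B1, whose $\sigma$-fields are generated by the innovations $\xib_s$ and of whose \emph{infinite} past $\Y_{t-1}$ is a function; the paper instead verifies the $L^2$-NED property you mention only parenthetically, via an explicit truncated predictor $\hat{\Y}^{t}_{t-J}$ (Lemmas~\ref{construction} and the coupling bounds), and then passes to a uniformly integrable $L^1$-mixingale before invoking the LLN.
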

			Note that the assumption  $N^{-2}\E(\etab^T \mathbf{s}_{Nt})^4<\infty$ is not implied by condition  B4 which  is satisfied provided that \eqref{weak dependence} holds true for higher-order moments of the vectors $\left\lbrace \Y_t \right\rbrace $;
			See Section ~\ref{fourth moments} more.
			\begin{theorem} \rm\label{can2}
				Consider model \eqref{lin2}. Let $\thetab\in\boldsymbol{\Theta}\subset\R^m_{+}$. Suppose that $\boldsymbol{\Theta}$ is compact and assume that the true value $\thetab_0$ belongs to the interior of $\boldsymbol{\Theta}$. Suppose that the conditions of Lemma \ref{limits}-\ref{limits 2} hold. Then, there exists a fixed open neighborhood $\mathcal{O}(\thetab_0)=\left\lbrace \thetab:|\thetab-\thetab_0|_2<\delta\right\rbrace$
				of $\thetab_0$ such that  with probability tending to 1
				as $\left\lbrace N,T_N \right\rbrace \to\infty$, for the score function \eqref{score}, the equation $S_{NT_N}(\thetab)=\mathbf{0}_m$ has a unique solution, called $\hat{\thetab}$, 
				which is consistent and asymptotically normal:
				\begin{equation}
					\sqrt{NT_N}(\hat{\thetab}-\thetab_0)\xrightarrow{d}N(\mathbf{0}_m,\mathbf{H}^{-1}\mathbf{B}\mathbf{H}^{-1})\,.
					\nonumber
				\end{equation}
			\end{theorem}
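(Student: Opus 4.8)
The plan is to derive Theorem~\ref{can2} as the assembly of Lemma~\ref{limits} and Lemma~\ref{limits 2} through a second-order Taylor expansion of the score \eqref{score} combined with a Brouwer fixed-point argument. Since $\mathbf{H}_{NT}(\thetab)$ in \eqref{H_T} is (minus) the Jacobian of the score, i.e. $\partial\mathbf{S}_{NT}(\thetab)/\partial\thetab^T=-\mathbf{H}_{NT}(\thetab)$, expanding each component of $\mathbf{S}_{NT_N}(\thetab)$ about $\thetab_0$ gives
\[
\mathbf{S}_{NT_N}(\thetab)=\mathbf{S}_{NT_N}(\thetab_0)-\mathbf{H}_{NT_N}(\thetab_0)(\thetab-\thetab_0)+\mathbf{R}_{NT_N}(\thetab),
\]
where the $j$-th entry of the remainder $\mathbf{R}_{NT_N}$ is a quadratic form in $(\thetab-\thetab_0)$ whose coefficients are third derivatives $\partial^3 l_{i,t}/\partial\thetab_j\partial\thetab_l\partial\thetab_k$ at an intermediate point. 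Each $\mathbf{H}_{NT}(\thetab)$ is positive semidefinite because $\mathbf{C}_t(\thetab)$ is diagonal with nonnegative entries, and the limit $\mathbf{H}$ in \eqref{H div N} is nonsingular by Lemma~\ref{limits}, hence positive definite with smallest eigenvalue $\lambda_{\min}(\mathbf{H})>0$. Since $\thetab_0$ is interior to the compact $\boldsymbol{\Theta}$, the ball $\mathcal{O}(\thetab_0)$ lies in $\boldsymbol{\Theta}$ for $\delta$ small, so the expansion is legitimate.

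First I would establish existence of a consistent root. Writing $\thetab=\thetab_0+\delta\mathbf{u}$ with $|\mathbf{u}|_2=1$ and $\delta>0$ small, the three lemma parts control the three terms of
\[
(NT_N)^{-1}(\thetab-\thetab_0)^T\mathbf{S}_{NT_N}(\thetab)=\delta\,\mathbf{u}^T(NT_N)^{-1}\mathbf{S}_{NT_N}(\thetab_0)-\delta^2\,\mathbf{u}^T\Big[(NT_N)^{-1}\mathbf{H}_{NT_N}(\thetab_0)\Big]\mathbf{u}+\rho_{NT_N},
\]
namely the first term is $o_p(1)\,\delta$ by part~2 of Lemma~\ref{limits}, the quadratic term converges to $-\delta^2\mathbf{u}^T\mathbf{H}\mathbf{u}\le-\delta^2\lambda_{\min}(\mathbf{H})$ by part~1, and the remainder obeys $|\rho_{NT_N}|\le C\,M_{NT_N}\delta^3$ with $M_{NT_N}\xrightarrow{p}M<\infty$ by part~3. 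Choosing $\delta$ small enough that the negative quadratic term dominates, the displayed quantity is strictly negative, uniformly over $|\mathbf{u}|_2=1$, with probability tending to one. A standard consequence of Brouwer's fixed-point theorem (as used, e.g., by \citealt{fok2009}) then guarantees that $\mathbf{S}_{NT_N}(\thetab)=\mathbf{0}_m$ has a root $\hat\thetab$ in $\mathcal{O}(\thetab_0)=\{\thetab:|\thetab-\thetab_0|_2<\delta\}$ with probability tending to one; since $\delta$ is arbitrary this yields $\hat\thetab\xrightarrow{p}\thetab_0$. Uniqueness on $\mathcal{O}(\thetab_0)$ follows because any two roots $\thetab_1,\thetab_2$ would satisfy $\mathbf{0}_m=\mathbf{S}_{NT_N}(\thetab_1)-\mathbf{S}_{NT_N}(\thetab_2)=-\mathbf{H}_{NT_N}(\bar\thetab)(\thetab_1-\thetab_2)$ with $\mathbf{H}_{NT_N}(\bar\thetab)$ invertible for large $N,T_N$ on the fixed neighborhood, forcing $\thetab_1=\thetab_2$.

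For asymptotic normality I would evaluate the expansion at the root. From $\mathbf{S}_{NT_N}(\hat\thetab)=\mathbf{0}_m$ and the mean-value form of the expansion,
\[
\sqrt{NT_N}\,(\hat\thetab-\thetab_0)=\Big[(NT_N)^{-1}\mathbf{H}_{NT_N}(\bar\thetab)\Big]^{-1}(NT_N)^{-1/2}\mathbf{S}_{NT_N}(\thetab_0),
\]
for some $\bar\thetab$ on the segment joining $\hat\thetab$ and $\thetab_0$. Consistency gives $\bar\thetab\xrightarrow{p}\thetab_0$, and the uniform third-derivative bound of part~3 lets me replace $\mathbf{H}_{NT_N}(\bar\thetab)$ by $\mathbf{H}_{NT_N}(\thetab_0)$ up to an $o_p(1)$ error, so the bracketed factor converges in probability to the nonsingular $\mathbf{H}$. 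Combining this with the central limit theorem $(NT_N)^{-1/2}\mathbf{S}_{NT_N}(\thetab_0)\xrightarrow{d}N(\mathbf{0}_m,\mathbf{B})$ from Lemma~\ref{limits 2} and Slutsky's theorem yields $\sqrt{NT_N}(\hat\thetab-\thetab_0)\xrightarrow{d}\mathbf{H}^{-1}N(\mathbf{0}_m,\mathbf{B})=N(\mathbf{0}_m,\mathbf{H}^{-1}\mathbf{B}\mathbf{H}^{-1})$, using symmetry of $\mathbf{H}$.

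The main obstacle is not any single step but the fact that all stochastic control must be exercised as the two indices diverge together, $\{N,T_N\}\to\infty$, where the usual ergodic-theorem justification of sample-mean convergence is unavailable. Concretely, the Brouwer argument needs the inward-pointing inequality to hold on a fixed (non-shrinking) sphere and uniformly in the direction $\mathbf{u}$, and the normality step needs the Hessian to converge uniformly over the whole neighborhood $\mathcal{O}(\thetab_0)$ rather than merely at $\thetab_0$. Both hinge on the uniform bound for the normalized third derivatives in part~3 of Lemma~\ref{limits}; that bound is what makes the remainder terms genuinely negligible and lets the neighborhood be chosen independently of $N$ and $T_N$, so the bulk of the real work is already carried by Lemmas~\ref{limits}--\ref{limits 2}.
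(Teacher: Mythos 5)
Your argument is correct, and the two proofs share the same skeleton: both reduce Theorem~\ref{can2} to the three conclusions of Lemma~\ref{limits} plus the CLT of Lemma~\ref{limits 2}, and both finish the normality step by a mean-value expansion of the score, replacement of $\mathbf{H}_{NT_N}(\bar{\thetab})$ by $\mathbf{H}_{NT_N}(\thetab_0)$ via the third-derivative bound, and Slutsky. Where you diverge is the existence/consistency step. The paper Taylor-expands the quasi-log-likelihood $l_{NT_N}(\thetab)-l_{NT_N}(\thetab_0)$ over the \emph{shrinking} set $\mathcal{K}_N(\delta)=\{\thetab:\norm{\thetab-\thetab_0}_2\le\delta/\sqrt{NT_N}\}$, shows the linear term is $O_p(\delta)$ (using $\E\norm{\mathbf{S}_{NT_N}(\thetab_0)/\sqrt{NT_N}}_2^2<\infty$ from Lemma~\ref{limits 2}) while the quadratic term is $\le -\tfrac{1}{2}\delta^2\lambda_{\min}(\mathbf{H})(1+o_p(1))$, and concludes (following the Fokianos--Tj{\o}stheim argument) that a root exists inside $\mathcal{K}_N(\delta)$; this delivers $\sqrt{NT_N}$-consistency directly. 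You instead work on a \emph{fixed} ball, show $(\thetab-\thetab_0)^T\mathbf{S}_{NT_N}(\thetab)<0$ uniformly on its boundary using the weaker fact $(NT_N)^{-1}\mathbf{S}_{NT_N}(\thetab_0)\xrightarrow{p}\mathbf{0}_m$, and invoke Brouwer; this gives ordinary consistency first and recovers the rate only in the normality step. Both are standard and valid; the paper's route buys the rate earlier, yours uses only part~2 of Lemma~\ref{limits} rather than the second moment of the normalized score at the existence stage. Two minor points to tighten: the identity $\mathbf{S}_{NT_N}(\thetab_1)-\mathbf{S}_{NT_N}(\thetab_2)=-\mathbf{H}_{NT_N}(\bar{\thetab})(\thetab_1-\thetab_2)$ invokes a mean-value theorem for a vector-valued map, which should be applied row by row (with distinct intermediate points) or written in integral form $-\bigl[\int_0^1\mathbf{H}_{NT_N}(\thetab_2+s(\thetab_1-\thetab_2))\,ds\bigr](\thetab_1-\thetab_2)$; and for uniqueness it is cleaner to note, as the paper does, that $\mathbf{H}_{NT_N}(\thetab)$ in \eqref{H_T} is positive semidefinite for every $\thetab$ (so $l_{NT_N}$ is concave and has at most one stationary point) rather than relying on invertibility of the Hessian at an unknown intermediate point.
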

			The extension of Theorem~\ref{can2} to the general order linear PNAR($p$) model is immediate, 
			by using the well-know VAR(1) companion  matrix; see \eqref{var1_c}. Assumptions B1-B2 and B4 remain substantially unaffected, using  \citet[Lemma~1.1]{tru2021}. B3-B3$^\prime$ can be suitably rearranged similarly to \citet[C4]{zhu2017} and the result follows by \citet[Supp. Mat., Sec.~4]{zhu2017}. We omit the details.

			\begin{rem} \rm \label{Rem fixed N or T}
				A standard asymptotic inference result, with $T\to\infty$, is obtained for the QMLE $\hat{\thetab}$, where the ``sandwich`" covariance is $\mathbf{H}^{-1}_N\B_N\mathbf{H}^{-1}_N$, by Theorem~\ref{can2}, as a special case, when $N$ is fixed. This result requires only the stationarity conditions of Proposition~\ref{Prop. Ergodicity of linear model}, the compactness of the parameter space, and assuming that the true value of the parameters belongs to its interior. Such result is proved along the  lines of Theorem 4.1 in \cite{fok2020}. Similar comments apply also for the log-linear model below. The case, 
				where $T$ fixed and $N$ diverging, cannot be studied in the framework we consider, since the convergence of the quantities involved in Lemmas~\ref{limits}-\ref{limits 2} requires both indexes to diverge together. For details see also the related proofs in the Appendix~\ref{proof}. This is empirically confirmed by some numerical bias found in the simulations of Sec.~\ref{simulations}, when $T$ is small compared to $N$. 
			\end{rem}
			
			\begin{rem} \rm \label{Remark: feedback}
				It is worth pointing out  that model \eqref{lin2} may be extended by including a feedback process such as  as 
				\begin{equation} 
						\Y_t|  \mathcal{F}_{t-1} \sim MCP(\lambdab_t), ~~~ \lambdab_t=\betab_0+\G\Y_{t-1}+ \mathbf{J}\lambdab_{t-1}\,,
					\label{lin2_feed}
				\end{equation}
				where $\mathbf{J}=\alpha_1\W+\alpha_2\I_N$ and $\alpha_1, \alpha_2 \geq 0$ will be a network and autoregressive coefficients, respectively, for the past values of the conditional mean process. Such extension is suitable when the mean process $\lambdab_t$ depends on the whole past history   of the count process. When the network dimension is fixed, model \eqref{lin2_feed}
				is just  a special case of \citet[eq.~3]{fok2020}, with a specific neighbor structure of the coefficients matrices therein. The stability conditions and asymptotic properties of the QMLE follow immediately. Note that  
				\eqref{lin2_feed} implies  $\lambdab_t=f(\Y_{t-1}, \Y_{t-2},\dots, )$, so all likelihood based quantities are evaluated recursively (for more, see \citet[eq.~12]{fok2020}), Therefore, when $N \rightarrow \infty$,  verification of Assumptions like  B1-B4, which guarantee good large-sample properties of the corresponding estimators, is  quite challenging problem  because  the  dimension of the hidden process grows.  See SM~\ref{proofs} for comparison. Similarly,  the log-linear model \eqref{log_lin1} can be extended by including the process ${\nub}_{t-1}$  in the right hand side but the same problem persists.  
			\end{rem}
			
			\subsection{Log-linear model inference}
			We now state the analogous result for the log-linear model \eqref{log_lin2} and the notation corresponds to eq. \eqref{score_log}--\eqref{B_log}. Set $\mathbf{Z}_t=\log(\textbf{1}_N+\Y_t)$ and recall that $\E(\mathbf{Z}_t)\approx\mub$ by the discussion below eq.~\eqref{log_lin2}. Define $\sigma_{ij}=\E(\xi_{i,t}\xi_{j,t})$ the single element of the error covariance matrix, and $\Pi_{222}^L=N^{-1}\sum_{i=1}^{N}\E[\exp(\nu_{i,t})(\w_i^T(\mathbf{Z}_{t-1}-\mub))^3]$, $\Pi_{223}^L=N^{-1}\sum_{i=1}^{N}\E[\exp(\nu_{i,t})(\w_i^T(\mathbf{Z}_{t-1}-\mub))^2Y_{i,t-1}]$, $\Pi_{233}^L=N^{-1}\sum_{i=1}^{N}\E[\exp(\nu_{i,t})\w_i^T(\Y_{t-1}-\mub)Y_{i,t-1}^2]$, $\Pi_{333}^L=N^{-1}\sum_{i=1}^{N}\E[\exp(\nu_{i,t})Y_{i,t-1}^3]$.  Assumption B1$^L$ is the same as assumption  B1 in the linear model. This holds also for B2$^L$, by considering $\boldsymbol{\Sigma_\psi}=\E\norm{\boldsymbol{\psi}_t\boldsymbol{\psi}_t^T}_v$ instead of $\boldsymbol{\Sigma_\xi}$ in B2 above. 
			\begin{itemize}
				\item[B3$^L$] Set  $\bar{\boldsymbol{\Gamma}}^L(0)=\E[\boldsymbol{\Sigma}^{1/2}_t(\mathbf{Z}_{t-1}-\mub)(\mathbf{Z}_{t-1}-\mub)^T\boldsymbol{\Sigma}^{1/2}_t]$ and $\boldsymbol{\Delta}^L(0)=\E[\boldsymbol{\Sigma}^{1/2}_t\W(\mathbf{Z}_{t-1}-\mub)(\mathbf{Z}_{t-1}-\mub)^T\W^T\boldsymbol{\Sigma}^{1/2}_t]$. Assume the following limits exist: $l_1=\lim_{N\to\infty}N^{-1}\E[\mathbf{1}_N^T\mathbf{D}_t\W(\mathbf{Z}_{t-1}-\mub)]$, $l_2=\lim_{N\to\infty}N^{-1}\E[\mathbf{1}_N^T\mathbf{D}_t(\mathbf{Z}_{t-1}-\mub)]$, $\varsigma=\lim_{N\to\infty}N^{-1}\sum_{i\neq j}\sigma_{ij}$, $g_3=\lim_{N\to\infty}N^{-1}\textrm{tr}\left[ \bar{\boldsymbol{\Gamma}}^L(0)\right] $, $g_4=\lim_{N\to\infty}N^{-1}\textrm{tr}\left[ \W\bar{\boldsymbol{\Gamma}}^L(0)\right] $, $g_5=\lim_{N\to\infty}N^{-1}\textrm{tr}\left[ \boldsymbol{\Delta}^L(0)\right] $ and, if $(j^*,l^*,k^*)\in\Omega_d$, $d_*=\lim_{N\to\infty}\Pi^L_{j^*,l^*,k^*}$.
			\item[B4$^L$]  There exists a  non negative, non increasing sequence $\left\lbrace \phi_h \right\rbrace_{h=1,\dots,\infty}$ such that $\sum_{h=1}^{\infty} \phi_h = \Phi<\infty$ and, for $i<j$, almost surely
			\begin{equation}
				\norm{\textrm{Cov}(Y_{i,t}, Y_{j,t}\left| \right. \Fb_{t-1} )}\leq \phi_{j-i} \label{weak dependence_log}
			\end{equation}
			
		\end{itemize}
		The same remarks made for  the case of linear model hold true  in this case as well.. 
		Condition B4$^L$ has been stated  in terms of conditional covariances instead of correlations. This is simply due to the different form of the information matrix \eqref{B_log}, which only includes the conditional covariance matrix $\boldsymbol{\Sigma}_t$. In contrast the linear model information matrix which corresponds to \eqref{B} is given by  $\B_N=\E(\partial\lambdab^T_{t}/\partial\thetab\mathbf{D}_t^{-1/2}\mathbf{R}_t \mathbf{D}_t^{-1/2}\partial\lambdab_{t}/\partial\thetab^T)$, where $\mathbf{R}_t= \mathbf{D}_t^{-1/2}\mathbf{\Sigma}_t \mathbf{D}_t^{-1/2}$ is conditional correlation matrix, and $\mathbf{D}_t^{-1/2}\preceq \beta_0^{-1}\I_N$ (elementwise), so that working with the correlations is more natural and convenient. Numerical verification  of assumptions B2$^L$-B3$^L$ is given  in SM~\ref{SUPP network}, and complement the results of the linear model. Recall that $\etab\in\R^m$, denotes a non-null real-valued vector.
		
		\begin{lemma}\rm
			For the log-linear model \eqref{log_lin2}, suppose $\norm{\beta_1}+\norm{\beta_2} < 1$ and B1$^L$-B4$^L$ hold. Consider $\mathbf{S}_{NT}$ and $\mathbf{H}_{NT}$ defined as in \eqref{score_log} and \eqref{H_T_log}, respectively. Assume 
			$N^{-2}\E(\etab^T \mathbf{s}_{Nt})^4<\infty$. Then, as $\left\lbrace N,T_N \right\rbrace \to\infty$
			\begin{enumerate} 
				\item $(NT_N)^{-1}\mathbf{H}_{NT_N}\xrightarrow{p}\mathbf{H}\,,$
				\item $(NT_N)^{-\frac{1}{2}}\mathbf{S}_{NT_N}\xrightarrow{d}N(\mathbf{0}_m,\mathbf{B})\,,$
				\item $\max_{j,l,k}\sup_{\thetab\in\mathcal{O}(\thetab_0)}\left|(NT_N)^{-1}\sum_{t=1}^{T_N}\sum_{i=1}^{N}\frac{\partial^3l_{i,t}(\thetab)}{\partial\thetab_j\partial\thetab_l\partial\thetab_k}\right|\leq M_{NT_N}\xrightarrow{p}M\,,$
			\end{enumerate}
			where 
			$\mathbf{H}=\lim_{N\to\infty}N^{-1}\mathbf{H}_N$ is non singular and 
			\begin{equation}
				\mathbf{H}=\begin{pmatrix}
					\mu_y & l^*_1 & l^*_2  \\
					&  \mu(l^*_1+l_1)+l_5 & \mu(l^*_1+l_2)+l_4 \\
					&  & \mu(l^*_2+l_2)+l_3
				\end{pmatrix}
				\,,\,\,\mathbf{B}=\begin{pmatrix}
					\mu^*_y & g^*_1 & g^*_2  \\
					&  \mu(g^*_1+l_1)+g_5 & \mu(g^*_1+l_2)+g_4 \\
					&  & \mu(g^*_2+l_2)+g_3
				\end{pmatrix} \,, \label{H,B div N log}
			\end{equation}
			where $\mu_y=\E(Y_{i,t})$, $l^*_1=\mu\mu_y+l_1$, $l^*_2=\mu\mu_y+l_2$, $(l_3, l_4, l_5)$ equal $(g_3,g_4,g_5)$, respectively, when $\boldsymbol{\Sigma}_t=\mathbf{D}_t$, $\mu_y^*=\mu_y+\varsigma$, $g^*_1=\mu\mu^*_y+l_1$ and $g^*_2=\mu\mu^*_y+l_2$.
			\label{limits_log}
		\end{lemma}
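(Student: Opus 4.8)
The plan is to mirror the argument used for the linear model in Lemmas~\ref{limits}--\ref{limits 2}, exploiting the fact that for the log-linear specification $\xib_t=\Y_t-\exp(\nub_t)$ is an \emph{exact} martingale difference sequence with respect to $\mathcal{F}^{N}_{t-1}$. Consequently the score summands $\mathbf{s}_{Nt}=(\partial\nub^T_t/\partial\thetab)\,\xib_t$ form a martingale difference array indexed by $N$, which makes part~2 directly amenable to a martingale central limit theorem. Throughout I would set $\mathbf{Z}_{t-1}=\log(\mathbf{1}_N+\Y_{t-1})$ and use the decomposition $\mathbf{Z}_{t-1}=(\mathbf{Z}_{t-1}-\mub)+\mub$ with $\mub=\mu\mathbf{1}_N$; since $\w_i^T\mathbf{1}_N=1$, this is precisely the split that separates the constant part (producing the $\mu\mu_y$ and $\mu^2\mu_y$ offsets) from the centred part (producing the traces $g_3,g_4,g_5$ and the cross terms $l_1,l_2$ of B3$^L$). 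The uniform polynomial and exponential moment bounds $\max_{i\ge1}\E\norm{Y_{i,t}}^r\le C_r$ and $\max_{i\ge1}\E[\exp(r\norm{\nu_{i,t}})]\le D_r$ of Proposition~\ref{finite_moment_log}, together with the strict stationarity at diverging $N$ from Theorem~\ref{Thm. Ergodicity of log-linear model N}, supply the integrability required at every step.

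For part~1 I would expand $(NT_N)^{-1}\mathbf{H}_{NT_N}$ blockwise, each entry being a double average over $t$ and the nodes $i$ of products of $\exp(\nu_{i,t})=\lambda_{i,t}$ with the components of $\W\mathbf{Z}_{t-1}$ and $\mathbf{Z}_{t-1}$. Invoking the preliminary lemmas of Appendix~\ref{proof preliminary lemmata}, each such summand is $L^2$-NED on the $\alpha$-mixing base of B1$^L$ with uniformly bounded moments, so the law of large numbers for NED arrays of \cite{and1988} lets the time average replace the fixed-$N$ expectation $N^{-1}\E[\cdot]$; letting $N\to\infty$, the limits of B3$^L$ then identify the entries. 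The $(1,1)$ block gives $\E(\lambda_{i,t})=\mu_y$; centring turns the $(1,2)$ and $(1,3)$ blocks into $\mu\mu_y+l_1=l^*_1$ and $\mu\mu_y+l_2=l^*_2$; and the diagonal and remaining off-diagonal blocks combine the constant-part pieces $\mu^2\mu_y$ and $\mu l_1,\mu l_2$ with the centred traces $(l_3,l_4,l_5)$ (i.e.\ $g_3,g_4,g_5$ evaluated at $\boldsymbol{\Sigma}_t=\mathbf{D}_t$), reproducing $\mathbf{H}$ in \eqref{H,B div N log}. Non-singularity of $\mathbf{H}$ then follows as in the linear case.

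Part~2 I would treat by the Cram\'er--Wold device: for fixed non-null $\etab$ the scalars $\etab^T\mathbf{s}_{Nt}$ are martingale differences along $t$, and a martingale central limit theorem for triangular arrays applies once two conditions are verified. The conditional-variance condition is $(NT_N)^{-1}\sum_{t}\E(\mathbf{s}_{Nt}\mathbf{s}_{Nt}^T\mid\mathcal{F}^{N}_{t-1})\xrightarrow{p}\mathbf{B}$, i.e.\ convergence of the normalised conditional information matrix $\mathbf{B}_{NT}$; its entries are obtained exactly as in part~1 but now carry the conditional covariance $\boldsymbol{\Sigma}_t$ in place of $\mathbf{D}_t$. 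The diagonal of $\boldsymbol{\Sigma}_t$ contributes the Poisson marginal variances $\sigma_{ii}=\E(\lambda_{i,t})=\mu_y$ --- giving, for example, the $(1,1)$ entry $\mu_y+\varsigma=\mu^*_y$ once the off-diagonal average $N^{-1}\sum_{i\ne j}\sigma_{ij}\to\varsigma$ of B3$^L$ is added --- while the remaining entries are read off through $g^*_1,g^*_2,g_3,g_4,g_5$. Here B4$^L$ is essential: the summable bound $\sum_h\phi_h=\Phi<\infty$ on the conditional covariances keeps every cross-node double sum of order $\mathcal{O}(N)$ rather than $\mathcal{O}(N^2)$, so that $N^{-1}\mathrm{Var}(\etab^T\mathbf{s}_{Nt})=\mathcal{O}(1)$ and the limit $\mathbf{B}$ exists. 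The Lyapunov condition reduces to $T_N^{-1}\bigl(N^{-2}\E(\etab^T\mathbf{s}_{Nt})^4\bigr)\to0$, which is immediate from the assumed fourth-moment bound and $T_N\to\infty$.

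Part~3 follows the same template: the third derivatives $\partial^3 l_{i,t}(\thetab)/\partial\thetab_j\partial\thetab_l\partial\thetab_k$ of the log-linear quasi-log-likelihood are explicit multiples of $\exp(\nu_{i,t})$ times products of the regressors (the higher derivatives of the linear predictor $\nu_{i,t}$ vanish), so the exponential bound $\max_i\E[\exp(r\norm{\nu_{i,t}})]\le D_r$ dominates them uniformly over $\mathcal{O}(\thetab_0)$ by an integrable envelope $m_{i,t}$; one further application of the NED law of large numbers yields $M_{NT_N}\xrightarrow{p}M<\infty$. The principal obstacle, exactly as in the linear case, is the joint $\{N,T_N\}\to\infty$ regime: one cannot simply iterate the fixed-$N$ ergodic theorem and only afterwards send $N\to\infty$. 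The real work therefore lies in verifying the uniform $L^p$-NED property of these nonlinear functionals of $\{\Y_t\}$ on the mixing base and in showing that the node-level weak dependence B4$^L$, together with the network uniformity B2$^L$, forces the node averages to concentrate at the rate the joint limit demands --- the points isolated in the preliminary lemmas of Appendix~\ref{proof preliminary lemmata}, which I would invoke rather than re-derive.
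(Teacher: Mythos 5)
Your proposal follows essentially the same route as the paper: $L^p$-near epoch dependence of the Hessian and information entries on the $\alpha$-mixing base of B1$^L$ combined with Andrews' law of large numbers for part 1, Cram\'er--Wold with a martingale central limit theorem (conditional variance identified via B3$^L$--B4$^L$, Lindeberg condition from the fourth-moment assumption) for part 2, and an integrable envelope plus the same NED law of large numbers for the third derivatives in part 3, with the split of $\mathbf{Z}_{t-1}$ into $\mub$ and its centred part producing exactly the $\mu\mu_y$, $l_1$, $l_2$ and trace terms of \eqref{H,B div N log}. The one caveat is that the preliminary lemmas of Appendix~\ref{proof preliminary lemmata} are stated for the linear model, so they cannot simply be invoked: the paper introduces a separate log-linear coupling lemma for $\mathbf{Z}_t=\log(\mathbf{1}_N+\Y_t)$ and controls the exponential functionals through $\norm{\exp(x)-\exp(y)}\leq\exp(\norm{x}+\norm{y})\norm{x-y}$ together with the exponential moment bounds of Proposition~\ref{finite_moment_log}, which is precisely the extra work your sketch defers.
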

		
		\begin{theorem}\rm \label{can2_log}
			Consider model \eqref{log_lin2}. Let $\thetab\in\boldsymbol{\Theta}\subset\R^m$. Suppose that $\boldsymbol{\Theta}$ is compact and assume that the true value $\thetab_0$ belongs to the interior of $\boldsymbol{\Theta}$. Suppose that the conditions of Lemma~\ref{limits_log} hold. Then, there exists a fixed open neighborhood $\mathcal{O}(\thetab_0)=\left\lbrace \thetab:|\thetab-\thetab_0|_2<\delta\right\rbrace$ of $\thetab_0$ such that with probability tending to 1 as $\left\lbrace N,T_N \right\rbrace \to\infty$, for the score function \eqref{score_log}, the equation $S_{NT_N}(\thetab)=\mathbf{0}_m$ has a unique solution, called $\hat{\thetab}$, which is consistent and asymptotically normal:
			\begin{equation}
				\sqrt{NT_N}(\hat{\thetab}-\thetab_0)\xrightarrow{d}N(\mathbf{0}_m,\mathbf{H}^{-1}\mathbf{B}\mathbf{H}^{-1})\,.
				\nonumber
			\end{equation}
		\end{theorem}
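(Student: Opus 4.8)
The plan is to mirror the estimating-equations argument used for the linear model in Theorem~\ref{can2}, now feeding in the log-linear limits supplied by Lemma~\ref{limits_log}. The three conclusions of that lemma are precisely the ingredients a standard M-estimation proof requires: convergence of the normalized Hessian $(NT_N)^{-1}\mathbf{H}_{NT_N}\xrightarrow{p}\mathbf{H}$ to a nonsingular limit, asymptotic normality of the normalized score $(NT_N)^{-1/2}\mathbf{S}_{NT_N}\xrightarrow{d}N(\mathbf{0}_m,\mathbf{B})$, and a uniform bound on the third derivatives of the quasi-log-likelihood over a fixed neighborhood $\mathcal{O}(\thetab_0)$. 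Since $\boldsymbol{\Theta}$ is compact and $\thetab_0$ is interior, I may work entirely inside $\mathcal{O}(\thetab_0)$.

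First I would establish existence of a consistent root. Using the componentwise mean value theorem, I expand the score as $\mathbf{S}_{NT_N}(\thetab)=\mathbf{S}_{NT_N}(\thetab_0)-\mathbf{H}_{NT_N}(\bar{\thetab})(\thetab-\thetab_0)$ for some $\bar{\thetab}$ on the segment joining $\thetab$ and $\thetab_0$. Normalizing by $NT_N$, part~2 of Lemma~\ref{limits_log} gives $(NT_N)^{-1}\mathbf{S}_{NT_N}(\thetab_0)=o_p(1)$ (the score is of order $(NT_N)^{1/2}$), while part~1 together with the third-derivative bound of part~3 lets me replace $(NT_N)^{-1}\mathbf{H}_{NT_N}(\bar{\thetab})$ by $\mathbf{H}$ uniformly over $\mathcal{O}(\thetab_0)$ up to an $o_p(1)$ term. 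Forming the inner product with $(\thetab-\thetab_0)$ on a small sphere $|\thetab-\thetab_0|_2=\epsilon$, the dominant contribution is the negative-definite quadratic $-(\thetab-\thetab_0)^T\mathbf{H}(\thetab-\thetab_0)$, so $(\thetab-\thetab_0)^T(NT_N)^{-1}\mathbf{S}_{NT_N}(\thetab)<0$ on that sphere with probability tending to one. A Brouwer-type fixed-point argument (as in the linear case, following the route of \cite{zhu2017}) then forces a zero of the score inside the ball, giving a root $\hat{\thetab}$ with $|\hat{\thetab}-\thetab_0|_2<\epsilon$; letting $\epsilon\to0$ yields consistency. Uniqueness follows because the same expansion shows $-\mathbf{H}_{NT_N}(\thetab)$ is negative definite throughout $\mathcal{O}(\thetab_0)$ with probability tending to one, so the score cannot vanish twice there.

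For asymptotic normality I would reuse the expansion at the root, $\mathbf{0}_m=\mathbf{S}_{NT_N}(\thetab_0)-\mathbf{H}_{NT_N}(\bar{\thetab})(\hat{\thetab}-\thetab_0)$, and solve to obtain
\begin{equation}
\sqrt{NT_N}(\hat{\thetab}-\thetab_0)=\left[(NT_N)^{-1}\mathbf{H}_{NT_N}(\bar{\thetab})\right]^{-1}(NT_N)^{-1/2}\mathbf{S}_{NT_N}(\thetab_0)\,.\nonumber
\end{equation}
Consistency forces $\bar{\thetab}\xrightarrow{p}\thetab_0$, and combining part~1 of Lemma~\ref{limits_log} with the third-derivative bound gives $(NT_N)^{-1}\mathbf{H}_{NT_N}(\bar{\thetab})\xrightarrow{p}\mathbf{H}$, which is invertible. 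Part~2 of the lemma supplies $(NT_N)^{-1/2}\mathbf{S}_{NT_N}(\thetab_0)\xrightarrow{d}N(\mathbf{0}_m,\mathbf{B})$, and Slutsky's theorem then delivers the sandwich limit $N(\mathbf{0}_m,\mathbf{H}^{-1}\mathbf{B}\mathbf{H}^{-1})$.

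The step I expect to be most delicate is the uniform replacement of the random Hessian $\mathbf{H}_{NT_N}(\bar{\thetab})$ by its deterministic limit $\mathbf{H}$ over the entire neighborhood, which is exactly where the third-derivative bound of Lemma~\ref{limits_log} is needed. The subtlety specific to the log-linear model is that $\boldsymbol{\psi}_t$ is only approximately a martingale difference sequence and the relevant matrices carry $\exp(\nu_{i,t})$ factors; controlling these requires the uniform exponential moment bounds $\max_{i\geq1}\E[\exp(r\norm{\nu_{i,t}})]\leq D_r$ from Proposition~\ref{finite_moment_log} and Theorem~\ref{Thm. Ergodicity of log-linear model N}, which guarantee the remainder terms are genuinely negligible as $\left\lbrace N,T_N \right\rbrace \to\infty$. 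Once that uniformity is secured, the remaining steps are the routine estimating-equations calculus already carried out for the linear model.
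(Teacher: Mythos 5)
Your proposal is correct and follows essentially the same route as the paper: the paper's proof of Theorem~\ref{can2_log} simply defers to the argument for Theorem~\ref{can2}, which is the same Taylor-expansion/M-estimation scheme you describe — a quadratic expansion dominated by the negative-definite Hessian term on a small neighborhood of $\thetab_0$ (the paper uses a shrinking ball of radius $\delta/\sqrt{NT_N}$ and expands the quasi-log-likelihood rather than the score, but this is an immaterial variant), existence and uniqueness of the root from positive definiteness of $\mathbf{H}_{NT_N}$, and asymptotic normality via the score expansion, Lemma~\ref{limits_log} and Slutsky. No gap.
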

		The conclusion follows arguing as in the proof of  Theorem~\ref{can2} above. An analogous result can be established for $p>1$, since also $\log(\mathbf{1}_N+\Y_t)$ in \eqref{log_lin2} can be approximately rewritten as a VAR(1) model; see also  \eqref{var1_c}.

		\subsection{Estimation of covariance matrix}
		
		We  provide a consistent estimator  for the limiting covariance matrix of the QMLE. Towards this goal, define the following matrix 
		\begin{equation}
			\hat{\B}_{NT}(\hat{\thetab})=\sum_{t=1}^{T}\textbf{s}_{Nt}(\hat{\thetab})\textbf{s}^{T}_{Nt}(\hat{\thetab})\,. \label{B_hat}
		\end{equation}
		Let $\mathbf{V}\coloneqq \mathbf{H}^{-1}\mathbf{B}\mathbf{H}^{-1}$ and $\mathbf{V}_{NT}(\hat{\thetab})\coloneqq (NT)\mathbf{H}^{-1}_{NT}(\hat{\thetab})\hat{\B}_{NT}(\hat{\thetab})\mathbf{H}^{-1}_{NT}(\hat{\thetab})$. 
		The following theorem  shows how  to consistently estimate the covariance matrix obtained by  Theorems~\ref{can2}--\ref{can2_log} by using the usual sandwich estimator.
		\begin{theorem} \rm
			Consider model \eqref{lin2} ( respectively, model \eqref{log_lin2}). Suppose the conditions of Theorem~\ref{can2} (respectively, Theorem~\ref{can2_log}) hold true. 
			Then, as $\left\lbrace N,T_N \right\rbrace \to\infty$, $\mathbf{V}_{NT_N}(\hat{\thetab})\xrightarrow{p}\mathbf{V}$.
			\label{covariance}	
		\end{theorem}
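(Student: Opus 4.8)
The plan is to reduce the statement to two in-probability limits and then invoke the continuous mapping theorem. Writing $\bar{\mathbf{H}}_{NT}(\thetab)=(NT)^{-1}\mathbf{H}_{NT}(\thetab)$, the definition of $\mathbf{V}_{NT}(\hat{\thetab})$ rearranges, after cancelling the factor $NT$, into
\[
\mathbf{V}_{NT}(\hat{\thetab})=\bar{\mathbf{H}}^{-1}_{NT}(\hat{\thetab})\,\bigl[(NT)^{-1}\hat{\mathbf{B}}_{NT}(\hat{\thetab})\bigr]\,\bar{\mathbf{H}}^{-1}_{NT}(\hat{\thetab}).
\]
Hence it suffices to show (i) $\bar{\mathbf{H}}_{NT}(\hat{\thetab})\xrightarrow{p}\mathbf{H}$ and (ii) $(NT)^{-1}\hat{\mathbf{B}}_{NT}(\hat{\thetab})\xrightarrow{p}\mathbf{B}$. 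Since $\mathbf{H}$ is nonsingular by Lemma~\ref{limits} (respectively Lemma~\ref{limits_log}), matrix inversion and multiplication are continuous at $(\mathbf{H},\mathbf{B})$, and the claim $\mathbf{V}_{NT_N}(\hat{\thetab})\xrightarrow{p}\mathbf{V}=\mathbf{H}^{-1}\mathbf{B}\mathbf{H}^{-1}$ follows by the continuous mapping theorem once (i) and (ii) are in hand.

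For (i), Lemma~\ref{limits} (part~1) already delivers $\bar{\mathbf{H}}_{NT}(\thetab_0)\xrightarrow{p}\mathbf{H}$ at the true value, so only the passage from $\thetab_0$ to $\hat{\thetab}$ remains. I would combine the consistency $\hat{\thetab}\xrightarrow{p}\thetab_0$ from Theorem~\ref{can2} with the uniform third-derivative bound in part~3 of Lemma~\ref{limits}: a first-order Taylor expansion of each entry of $\bar{\mathbf{H}}_{NT}(\cdot)$ about $\thetab_0$, whose remainder is controlled on the neighborhood $\mathcal{O}(\thetab_0)$ by $M_{NT_N}\xrightarrow{p}M$, gives $\bar{\mathbf{H}}_{NT}(\hat{\thetab})-\bar{\mathbf{H}}_{NT}(\thetab_0)\xrightarrow{p}\mathbf{0}$, and (i) follows.

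The substance of the proof is (ii), which I would first prove at $\thetab_0$ and then transfer to $\hat{\thetab}$. Because $\mathbf{s}_{Nt}(\thetab_0)$ is a martingale difference with $\E[\mathbf{s}_{Nt}\mathbf{s}_{Nt}^T\mid\Fb_{t-1}]=\mathbf{B}_{Nt}(\thetab_0)$, I decompose
\[
(NT)^{-1}\hat{\mathbf{B}}_{NT}(\thetab_0)=(NT)^{-1}\mathbf{B}_{NT}(\thetab_0)+(NT)^{-1}\sum_{t=1}^{T}\Bigl(\mathbf{s}_{Nt}\mathbf{s}_{Nt}^T-\E[\mathbf{s}_{Nt}\mathbf{s}_{Nt}^T\mid\Fb_{t-1}]\Bigr).
\]
The first term converges to $\mathbf{B}$ by Lemma~\ref{limits 2} (part~1) in the linear case, and by the conditional-information convergence established en route to Lemma~\ref{limits_log} in the log-linear case. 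The second term is a normalized martingale-difference array; projecting onto an arbitrary direction $\etab$ and using $\etab^T\mathbf{s}_{Nt}\mathbf{s}_{Nt}^T\etab=(\etab^T\mathbf{s}_{Nt})^2$, its variance is bounded, by orthogonality of the increments, by $(N^2T)^{-1}\E(\etab^T\mathbf{s}_{Nt})^4=O(T^{-1})$ under the standing moment assumption $N^{-2}\E(\etab^T\mathbf{s}_{Nt})^4<\infty$. Chebyshev's inequality together with the Cramér--Wold device then forces the second term to $\mathbf{0}$, giving $(NT)^{-1}\hat{\mathbf{B}}_{NT}(\thetab_0)\xrightarrow{p}\mathbf{B}$. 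The transfer to $\hat{\thetab}$ again uses consistency and a mean-value expansion of the entries of $\mathbf{s}_{Nt}(\cdot)\mathbf{s}_{Nt}(\cdot)^T$, with the derivatives bounded uniformly on $\mathcal{O}(\thetab_0)$ by the same $\alpha$-mixing (B1, resp. B1$^L$), network-uniformity (B2) and moment machinery used in Lemmas~\ref{limits}--\ref{limits 2}.

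The main obstacle I anticipate is precisely this last uniform control when replacing $\thetab_0$ by $\hat{\thetab}$ inside the outer-product matrix $\hat{\mathbf{B}}_{NT}$: differentiating $\mathbf{s}_{Nt}\mathbf{s}_{Nt}^T$ produces cross terms involving both the score and its derivative, and these must be bounded uniformly over $\mathcal{O}(\thetab_0)$ \emph{and} uniformly in $N$ as the dimension diverges, which is where the higher-order moment bounds of Propositions~\ref{finite_moment} and \ref{finite_moment_log} enter. Structurally the argument is the sandwich-estimator analogue of the convergences already proved for $\mathbf{H}_{NT}$ and $\mathbf{B}_{NT}$, so no genuinely new asymptotic device is required beyond a weak law for martingale-difference arrays; the log-linear case is handled identically, with $\mathbf{D}_t$, $\nub_t$ and $\boldsymbol{\Sigma}_t$ taking the roles of their linear-model counterparts.
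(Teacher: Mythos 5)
Your proof is correct and follows essentially the same route as the paper: the same decomposition (transfer from $\hat{\thetab}$ to $\thetab_0$ via consistency, the martingale-difference gap between $(NT)^{-1}\hat{\B}_{NT}(\thetab_0)$ and $(NT)^{-1}\B_{NT}(\thetab_0)$ controlled by the fourth-moment condition $N^{-2}\E(\etab^T\mathbf{s}_{Nt})^4<\infty$, and Lemmas~\ref{limits}--\ref{limits 2} for the limits at $\thetab_0$), finished off with the continuous mapping/Slutsky step. The only cosmetic difference is that you close the martingale-difference term with Chebyshev and orthogonality of increments, whereas the paper treats $J_{Nt}(\thetab_0)$ as a uniformly integrable $L^1$-mixingale and cites \citet[Thm.~2]{and1988}; both deliver the same conclusion under the same moment assumption.
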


		\subsection{Effect of network misspecification} 
		We now study the effect of network misspecification. Consider, for instance,  model \eqref{lin2}. Suppose the data $Y_{i,t}$ are generated by the true adjacency matrix $\A$. From  Section~\ref{SEC: inference}, $\hat{\thetab}$ is consistent estimator of $\thetab$. Suppose that the  adjacency matrix $\A$ is misspecified and the true network matrix is  $\A^* = (a^*_{ij})$. Accordingly, let $\W^* = (w^*_{ij})$ be the row-normalized adjacency matrix $\A^*$ and $\lambda^*_{i,t}(\thetab)$ defined as in \eqref{lin1} but with the elements of $\W^*$ instead of $\W$. Then, the  QMLE, in this case, is given by $\hat{\thetab}^* = \argmax_{\thetab \in \mathbf{\Theta}}	l^*_{NT}(\thetab)$ where $\l^*_{NT}(\thetab)=\sum_{t=1}^{T}\sum_{i=1}^{N} \Bigl(Y_{i,t}\log\lambda^*_{i,t}(\thetab)-\lambda^*_{i,t}(\thetab) \Bigr)$.
		
		\begin{corollary} \rm 
			\label{Cor mis}
			Assume the conditions of Theorem~\ref{can2} hold. Define $\Delta_N(\W,\W^*)=\sum_{i,j=1}^{N}\norm{w_{ij}-w^*_{ij}}$ the total amount of misspecification
			of $\W$. Assume $\Delta_N(\W,\W^*)=o(1)$, then as $\left\lbrace  N, T_N \right\rbrace \to\infty$ $\hat{\thetab}^*\xrightarrow{p}\thetab_0$.
		\end{corollary}
		The proof is postponed to Section~\ref{SUPP mis} of SM. Corollary~\ref{Cor mis} shows  that, under network misspecification, the QMLE is still consistent estimator  provided that the amount of misspecification is under  control, i.e. $\lim_{N\to\infty}\Delta_N(\W,\W^*)=0$.  For example  $\Delta_N(\W,\W^*) \leq C/\sqrt{N}$ or $C/\log(N)$ for some constant $C >0$ implies  Corollary~\ref{Cor mis}.  Similar results hold for $p >1$ and log-linear models \eqref{log_lin2_p}.

		\subsection{Further discussion} 
		
		Some further  issues are described next. 
		\paragraph{Copula Estimation:} Copula estimation is briefly discussed  in the SM but a thorough  study of the problem  requires separate  treatment. In particular,  Section~\ref{SUPP copula estimation} of the SM contains  results of a simulation study after employing a  heuristic parametric bootstrap estimation algorithm. Such method potentially can be useful to select an adequate copula structure and provide an estimator of  the associated copula parameter. 
		
		\paragraph{Efficiency of QMLE:} The QMLE based on \eqref{log-lik}, is general  inefficient. Therefore,  in Sec.~\ref{SUPP 2 step GEE} of the SM, a novel regression estimator is proposed  by considering a two-step Generalized Estimating Equations (GEE). During the first step,   the mean parameters are estimated by QMLE  and employed to compute a working weighting covariance matrix. A second step of estimation is then carried out by employing the obtained weighting matrix.  Numerical studies show that the  GEE is more efficient than the QMLE,  especially when there exists  considerable correlation among the counts. In a recent paper by \cite{francq_2021two} a similar kind of estimators, but  for univariate models, have been shown to be optimal QMLEs, under suitable regularity condition.
		
			\paragraph{State-space modeling:} An alternative approach to the methodology developed in this work is to consider a state-space model as in the work by  \citet{Zhangetalnew(2017)}, for example. These authors  develop methodology for  the log-linear model \eqref{log_lin1} by adding Gaussian noise to the right hand side of the model defining equation. In addition, marginal counts are assumed to be Poisson($\lambda_{i,t}$) distributed--recall the notation of Sec. \ref{SEC:Linear Models}. 
			The authors develop particle filtering  and smoothing methods together with   Monte Carlo Expectation Maximization algorithm to advance inference. A fully Bayesian approach, related to network models,  is taken by \citet{chen_banks_west_2019} who introduce models  within the 
			framework of dynamic GLM (see \citet{WestandHarrison(1997)}), that include  time-varying covariates for Poisson conditionally distributed time series; for more on the Bayesian point of view see \citet{West(2020)}.

		\section{Applications} \label{SEC: application}
		\subsection{Simulations}\label{simulations}
		We study the  finite sample behavior of the QMLE for models \eqref{lin2_p} and \eqref{log_lin2_p}. We run a simulation study with $S=1000$ repetitions and different  time series length and network dimension. We
		consider the cases  $p=(1,2)$. The adjacency matrix is generated by
		using one of the most popular network structure, the stochastic block model (SBM):
		
		\begin{ex}\rm \label{sbm}(Stochastic Block Model). A block label $(k = 1,\dots, K=5)$ is assigned for each node with equal probability and $K$ is the total number of blocks. Then, set $\mathrm{P}(a_{ij}=1) = N^{-0.3}$ if $i$ and $j$ belong to the same block, and $\mathrm{P}(a_{ij}=1)= N^{-1}$ otherwise. Practically, the model assumes that nodes within the same block are more likely to be connected with respect to nodes from different blocks. 
		\end{ex}
		For details on SBM see \cite{wang1987}, \cite{nowicki_2001}, and \cite{zhao_2012}, among others.
		The SBM model with $K=5$ blocks is generated by using the \texttt{igraph} \texttt{R} package \citep{csardi_2006}. The network density is set equal  to 1\%. We performed simulations with a network density equal to 0.3\% and 0.5\%, as well, but we obtained  similar results, hence we do not  report them here. The parameters are set to 
		$(\beta_0, \beta_1, \beta_2)^T=(0.2, 0.3, 0.2)^T$.
		The observed time series are generated using the copula-based algorithm of Section~\ref{Sec:Properties of order 1 model}. The specified copula  is Gaussian, say $C^{Ga}_\mathbf{R}(\dots)$, with correlation matrix $\mathbf{R}=(R_{ij})$, where $R_{ij}=\rho^{|i-j|}$, the so called first-order autoregressive correlation matrix, henceforth AR-1. Then $C^{Ga}_\mathbf{R}(\dots)=C^{Ga}(\dots,\rho)$.
		Tables~\ref{sim_gauss_lin_05} and Table~\ref{sim_gauss_log_05} summarize the simulation results for models \eqref{lin2} and \eqref{log_lin2}, respectively. For each simulated dataset, the QMLE estimation of unknown parameters has been computed  by using the \texttt{R} package \texttt{nloptr} (\citeauthor{NLoptr}). It allows 
		to run constrained optimization;  for  the linear model \eqref{lin2_p}, for example, the quasi log-likelihood \eqref{log-lik} is maximized under the positive parameters constraint.
		Additional findings are  given
		in Section~\ref{simulations_app} of the SM-- Tables  \ref{sim_gauss_lin_02}--\ref{sim_gauss_log_00}. 
		
		Then, the estimates for parameters and their standard errors (in brackets)  are obtained by averaging out the results from all simulations; see the first two rows of  Tables ~\ref{sim_gauss_lin_05}--\ref{sim_gauss_log_05}.
		The third row below each coefficient shows the percentage frequency of $t$-tests rejecting  $H_0: \beta=0$ at nominal  level $5\%$ and it is calculated  over the $S$ simulations. We also report  the percentage of cases where various information criteria select the correct generating model. In this study,  we employ  the Akaike (AIC), the Bayesian (BIC) and the Quasi (QIC) information criteria. The latter is a special case of the AIC which takes into account that estimation is done by quasi-likelihood methods. See \cite{pan2001} for more details.
		
		We observe that the estimates are close to the real values and the standard errors are small for all the cases considered. When  there is a strong correlation  between count variables $Y_{i,t}$--see  Table \ref{sim_gauss_lin_05}-- 
		and $T$ is small when compared to the network size $N$, then the estimates of the network effect  $\hat{\beta}_1$ have slight bias. The same conclusion is drawn from Table  \ref{sim_gauss_lin_02}. Instead, when  both  $T$ and $N$  are reasonably large  (or at least $T$ is large), then the approximation to the true values of the parameters is adequate. This fact confirms the related asymptotic results obtained in Section~\ref{SEC: inference} by requiring $N\to\infty$ and $T_N\to\infty$.
		Standard errors reduce as $T$ increases. Regarding estimators of  the log-linear model (see Table \ref{sim_gauss_log_05} and  \ref{sim_gauss_log_02}), we obtain similar results.
		
		The $t$-tests and percentage of right selections due to various information criteria provide empirical confirmation for  the model selection procedure. 
		Based on these results, the BIC provides the best selection procedure for the case of the linear model; its success selection rate is about  99\%; this is so because it tends to select models with fewer parameters. In sharp contrast , the AIC is not performing as well as BIC but still selects the right model around 92\% of time. The QIC provides a good balance between the other two information criteria; its value is  around 95\%. Moreover, it has the advantage to be more robust, especially when employed  to  misspecified models. This fact is further confirmed by the results 
		concerning the log-linear model, even though  the rate of right selections for the QIC does not exceed 88\%.	
		To validate these results, we consider the case where all series are independent   (Gaussian copula with $\rho=0$). Then 
		QMLE  provides  satisfactory results if $N$ is large enough, even if $T$ is small  (see Table \ref{sim_gauss_lin_00}, \ref{sim_gauss_log_00}). Moreover, the slight bias reported, for some coefficients, when $\rho>0$, is not observed in this case. Intuitively, the reason lies on the complexity of the network relations, which does not grow with $N$, since variables concerning different nodes are independent. Furthermore, the QMLE for this case coincides to  the true likelihood function. 
		From the QQ-plot shown in Figures \ref{qq_lin}-\ref{qq_log} we can conclude that, with $N$ and $T$ large enough,
		the asserted asymptotic normality is quite adequate. 
		A more extensive discussion and further simulation results can be found in Sec.~\ref{simulations_app} of the SM.
		
		\begin{table}[H]
			\centering
			\caption{Estimators obtained from $S=1000$ simulations of model \eqref{lin2}, for various values of $N$ and $T$. Network generated by Ex.~\ref{sbm}. Data are generated by using the Gaussian AR-1 copula, with $\rho=0.5$ and $p=1$. Model \eqref{lin2_p} is also fitted using $p=2$ to check the performance of various information criteria.}
			\hspace*{-1cm}
			\scalebox{0.75}{
				\begin{tabular}{c|c|ccc|ccccc|ccc}\hline\hline
					\multicolumn{2}{c|}{Dim.} & \multicolumn{3}{c|}{$p=1$}  & \multicolumn{5}{c|}{$p=2$} & \multicolumn{3}{c}{IC $(\%)$}\\\hline
					$N$ & $T$ & $\hat{\beta}_0$& $\hat{\beta}_1$ & $\hat{\beta}_2$ & $\hat{\beta}_0$ & $\hat{\beta}_{11}$ & $\hat{\beta}_{21}$ & $\hat{\beta}_{12}$ & $\hat{\beta}_{22}$  & $AIC$ & $BIC$ &$QIC$\\\hline
					\multirow{6}{*}{20} & \multirow{3}{*}{100} & 0.201 & 0.296 & 0.199 & 0.197 &  0.292 & 0.196 & 0.009 &  0.007 & \multirow{3}{*}{94.1.0} & \multirow{3}{*}{99.5} & \multirow{3}{*}{95.1}\\
					&  & (0.019) & (0.036) & (0.028) & (0.021) & (0.037) & (0.029) & (0.031) & (0.023) &  &  & \\\addlinespace[-0.4ex]
					&  & 100 & 100 & 100 & 100 &  100 & 100 & 1.4 &  1.5 &  &  & \\\cline{2-13}
					& \multirow{3}{*}{200} & 0.200 & 0.297 & 0.199 & 0.197 &  0.294 & 0.197 & 0.008 &  0.005 & \multirow{3}{*}{93.9} & \multirow{3}{*}{99.9} & \multirow{3}{*}{95.2} \\
					&  & (0.013) & (0.027) & (0.020) & (0.014) & (0.028) & (0.021) & (0.023) & (0.016) & & & \\\addlinespace[-0.4ex]
					&  & 100 & 100 & 100 & 100 &  100 & 100 & 1.5 &  1.6 &  &  & \\
					\hline
					\multirow{12}{*}{100} & \multirow{3}{*}{20} & 0.203 & 0.292 & 0.198 & 0.196 &  0.286 & 0.195 & 0.015 &  0.008 & \multirow{3}{*}{93.1} & \multirow{3}{*}{97.1} & \multirow{3}{*}{93.5}\\
					&  & (0.024) & (0.048) & (0.028) & (0.029) & (0.050) & (0.029) & (0.046) & (0.024) &  &  & \\\addlinespace[-0.4ex]
					&  & 100 & 100 & 100 & 100 &  100 & 100 & 2.9 & 2.2 &  &  & \\\cline{2-13}
					& \multirow{3}{*}{50} & 0.202 & 0.294 & 0.199 & 0.197 &  0.290 & 0.197 & 0.011 &  0.005 & \multirow{3}{*}{91.4} & \multirow{3}{*}{98.8} & \multirow{3}{*}{94.1} \\
					&  & (0.015) & (0.032) & (0.018) & (0.018) & (0.033) & (0.019) & (0.031) & (0.015) & & & \\\addlinespace[-0.4ex]
					&  & 100 & 100 & 100 & 100 &  100 & 100 & 3.3 &  2.0 &  &  & \\
					\cline{2-13}
					& \multirow{3}{*}{100} & 0.201 & 0.299 & 0.200 & 0.198 & 0.296 & 0.198 & 0.008 &  0.004 & \multirow{3}{*}{91.9} & \multirow{3}{*}{99.2} & \multirow{3}{*}{94.9}\\
					&  & (0.011) & (0.023) & (0.013) & (0.013) & (0.023) & (0.013) & (0.022) & (0.011) &  &  & \\\addlinespace[-0.4ex]
					&  & 100 & 100 & 100 & 100 &  100 & 100 & 2.0 &  1.8 &  &  & \\\cline{2-13}
					&\multirow{3}{*}{200} & 0.200 & 0.299 & 0.200 & 0.198 &  0.298 & 0.199 & 0.005 &  0.003 & \multirow{3}{*}{92.3} & \multirow{3}{*}{99.7} & \multirow{3}{*}{95.2} \\
					&  & (0.008) & (0.016) & (0.009) & (0.009) & (0.017) & (0.009) & (0.015) & (0.008) & & & \\\addlinespace[-0.4ex]
					&  & 100 & 100 & 100 & 100 &  100 & 100 & 2.0 &  1.6 &  &  & \\
					\hline
					\hline
				\end{tabular}
			}
			\hspace*{-1cm}
			\label{sim_gauss_lin_05}
		\end{table}

		\begin{table}[H]
			\centering
			\caption{Estimators obtained from $S=1000$ simulations of model \eqref{log_lin2}, for various values of $N$ and $T$. Network generated by Ex.~\ref{sbm}. Data are generated by using the Gaussian AR-1 copula, with $\rho=0.5$ and $p=1$. Model \eqref{log_lin2_p} is also fitted using $p=2$ to check the performance of various information criteria.}
			\hspace*{-1cm}
			\scalebox{0.75}{
				\begin{tabular}{c|c|ccc|ccccc|ccc}\hline\hline
					\multicolumn{2}{c|}{Dim.} & \multicolumn{3}{c|}{$p=1$}  & \multicolumn{5}{c|}{$p=2$} & \multicolumn{3}{c}{IC $(\%)$}\\\hline
					$N$ & $T$ & $\hat{\beta}_0$ & $\hat{\beta}_1$ & $\hat{\beta}_2$ & $\hat{\beta}_0$ & $\hat{\beta}_{11}$ & $\hat{\beta}_{21}$ & $\hat{\beta}_{12}$ & $\hat{\beta}_{22}$  & $AIC$ & $BIC$ &$QIC$\\\hline
					\multirow{6}{*}{20} & \multirow{3}{*}{100} & 0.206 & 0.298 & 0.196 & 0.208 &  0.298 & 0.196 & -0.002 &  -0.001 & \multirow{3}{*}{81.6} & \multirow{3}{*}{97.5} & \multirow{3}{*}{86.1}\\
					&  & (0.061) & (0.040) & (0.034) & (0.072) & (0.041) & (0.035) & (0.040) & (0.034) &  &  & \\\addlinespace[-0.4ex]
					&  & 91.3 & 100 & 100 & 81.3 &  100 & 99.9 & 2.0 &  2.7 &  &  & \\\cline{2-13}
					& \multirow{3}{*}{200} & 0.203 & 0.298 & 0.199 & 0.203 &  0.298 & 0.199 & 0.001 &  -0.001 & \multirow{3}{*}{80.7} & \multirow{3}{*}{98.9} & \multirow{3}{*}{85.8} \\
					&  & (0.043) & (0.030) & (0.025) & (0.049) & (0.032) & (0.025) & (0.032) & (0.024) & & & \\\addlinespace[-0.4ex]
					&  & 99.5 & 100 & 100 & 98.1 &  100 & 100 & 2.3 &  2.4 &  &  & \\
					\hline
					\multirow{12}{*}{100} & \multirow{3}{*}{20} & 0.209 & 0.292 & 0.196 & 0.215 &  0.293 & 0.197 & -0.006 &  -0.002 & \multirow{3}{*}{74.6} & \multirow{3}{*}{88.2} & \multirow{3}{*}{80.7}\\
					&  & (0.082) & (0.069) & (0.036) & (0.097) & (0.069) & (0.037) & (0.067) & (0.036)  &  &  & \\\addlinespace[-0.4ex]
					&  & 70.0 & 97.5 & 99.9 & 59.7 & 97.4 & 99.9 & 3.8 & 3.3 &  &  & \\\cline{2-13}
					& \multirow{3}{*}{50} & 0.204 & 0.296 & 0.200 & 0.207 &  0.296 & 0.200 & -0.004 &  -0.001 & \multirow{3}{*}{78.4} & \multirow{3}{*}{94.6} & \multirow{3}{*}{86.6} \\
					&  & (0.053) & (0.045) & (0.023) & (0.065) & (0.045) & (0.023) & (0.045) & (0.022) & & & \\\addlinespace[-0.4ex]
					&  & 96.3 & 100 & 100 & 86.9 & 100 & 100 & 2.9 & 2.3 &  &  & \\
					\cline{2-13}
					& \multirow{3}{*}{100} & 0.203 & 0.297 & 0.199 & 0.204 & 0.297 & 0.200 & 0.000 &  -0.001 & \multirow{3}{*}{78.9} & \multirow{3}{*}{97.2} & \multirow{3}{*}{85.7}\\
					&  & (0.037) & (0.031) & (0.016) & (0.046) & (0.032) & (0.016) & (0.031) & (0.016) &  &  & \\\addlinespace[-0.4ex]
					&  & 100 & 100 & 100 & 99.4 &  100 & 100 & 3.1 &  2.0 &  &  & \\\cline{2-13}
					&\multirow{3}{*}{200} & 0.201 & 0.300 & 0.199 & 0.203 &  0.300 & 0.199 & -0.002 &  0.000 & \multirow{3}{*}{80.5} & \multirow{3}{*}{97.5} & \multirow{3}{*}{88.1} \\
					&  & (0.026) & (0.022) & (0.011) & (0.033) & (0.022) & (0.011) & (0.022) & (0.011) & & & \\\addlinespace[-0.4ex]
					&  & 100 & 100 & 100 & 100 & 100 & 100 & 2.9 &  2.7 &  &  & \\
					\hline
					\hline
				\end{tabular}
			}
			\hspace*{-1cm}
			\label{sim_gauss_log_05}
		\end{table}

		\subsection{Data analysis}
		\label{sec:Data Analysis}
		The application on real data concerns the monthly number of burglaries on the south side of Chicago from 2010-2015 ($T=72$). The counts are registered for the $N=552$ census block groups. The data are taken by \cite{clark_2021}, \href{https://github.com/nick3703/Chicago-Data}{\url{https://github.com/nick3703/Chicago-Data}}. The undirected network structure arises naturally, as an edge between block $i$ and $j$ is set if the locations share (at least) a border. In this case, the network connection is well-represented by the geographic map of the census blocks in Figure~\ref{chicago}. The density of the network is 1.74\%. The median degree is 5.

		\begin{figure}[H]
			\begin{center}
				\includegraphics[scale=0.3, width=0.40\linewidth]{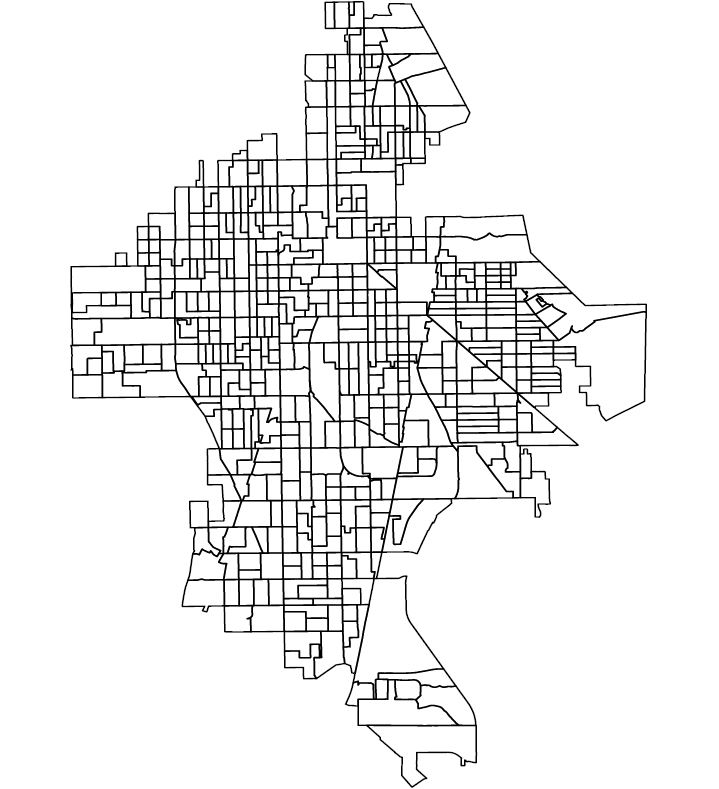}
				\caption{Census block groups in South Chicago.}%
				\label{chicago}
			\end{center}
		\end{figure}
		
		\begin{figure}[H]
			\begin{center}
				\includegraphics[width=0.75\linewidth]{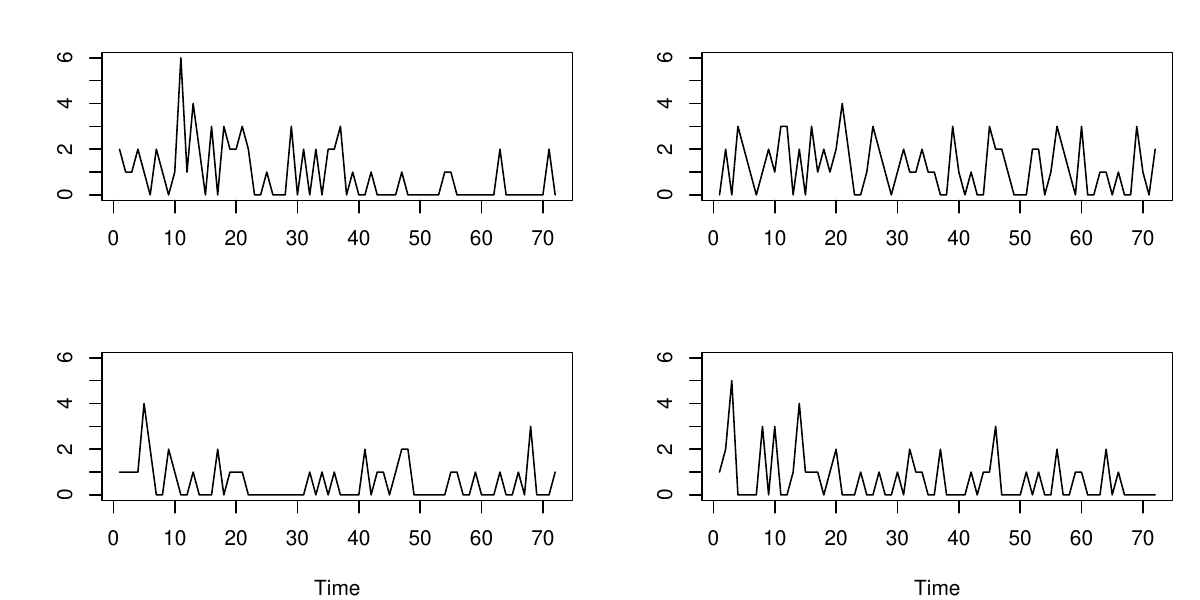}
				\caption{Monthly burglaries count time series for some census block groups.}%
				\label{plot_count}
			\end{center}
		\end{figure}
		
		Some  time series of burglaries are plotted in Figure~\ref{plot_count}. The maximum number of burglaries in a month in a census block is 17. We fit the linear and log-linear PNAR(1) and PNAR(2) models. The results are summarized in Tables \ref{chicago_est}-\ref{chicago_inc}. All fitted models produce  significant results. The magnitude of the network effects $\beta_{11}$ and $\beta_{12}$ seems reasonable, as an increasing number of burglaries in a block can lead to a growth in the same type of crime committed in a close area. The lagged effects have a positive impact on the counts. Interestingly, the log-linear model is able to account for the general downward trend registered from 2010 to 2015 for this type of crime in the area analyzed. All the information criteria select the PNAR(2) models, in accordance with the significance of the estimates.  
		\begin{table}[H]
			\centering
			\caption{Estimation results for Chicago crime data.}
			\scalebox{0.8}{
				\begin{tabular}{cccc|ccc}
					\hline\hline
					& \multicolumn{3}{c|}{Linear PNAR(1)} & \multicolumn{3}{c}{Log-linear PNAR(1)} \\
					\hline
					& Estimate & SE ($\times10^2$) & $p$-value & Estimate & SE ($\times10^2$) & $p$-value \\
					\hline
					$\beta_0$ & 0.4551 & 2.1607 & $<$0.01 &  -0.5158 & 3.8461 & $<$0.01 \\
					$\beta_1$& 0.3215 & 1.2544  & $<$0.01 & 0.4963 & 2.8952 & $<$0.01\\
					$\beta_2$ & 0.2836 & 0.8224 & $<$0.01 & 0.5027 & 1.2105 & $<$0.01\\
					\hline
					& \multicolumn{3}{c|}{Linear PNAR(2)} & \multicolumn{3}{c}{Log-linear PNAR(2)} \\
					\hline\hline
					& Estimate & SE ($\times10^2$) & $p$-value & Estimate & SE ($\times10^2$) & $p$-value \\
					\hline
					$\beta_0$ & 0.3209 & 1.8931 & $<$0.01 & -0.5059 & 4.7605 & $<$0.01 \\
					$\beta_{11}$& 0.2076  & 1.1742 & $<$0.01 &  0.2384 & 3.4711 & $<$0.01\\
					$\beta_{21}$ & 0.2287 & 0.7408 & $<$0.01 & 0.3906 & 1.2892 & $<$0.01 \\
					$\beta_{12}$ & 0.1191 & 1.4712 & $<$0.01 & 0.0969 & 3.3404 & $<$0.01 \\
					$\beta_{22}$ & 0.1626 & 0.7654 & $<$0.01 & 0.2731 & 1.2465 & $<$0.01 \\
					\hline\hline
				\end{tabular}
			}
			\label{chicago_est}
		\end{table}
		\begin{table}[H]
			\centering
			\caption{Information criteria for Chicago crime data. Smaller values in bold.}
			\scalebox{0.8}{
				\begin{tabular}{ccc|cc|cc}
					\hline\hline
					& \multicolumn{2}{c|}{AIC$\times10^{-3}$} & \multicolumn{2}{c|}{BIC$\times10^{-3}$} & \multicolumn{2}{c}{QIC$\times10^{-3}$} \\
					\hline
					& linear & log-linear & linear & log-linear & linear & log-linear\\
					PNAR(1) & 115.06 & 115.37 & 115.07 & 115.38 & 115.11 & 115.44\\
					PNAR(2) & \textbf{111.70} & \textbf{112.58} & \textbf{111.72} & \textbf{112.60} & \textbf{111.76} & \textbf{112.68}\\
					\hline\hline
				\end{tabular}
			}
			\label{chicago_inc}
		\end{table}
		
		We compare  the out-sample forecasting performance of the linear PNAR model with $p=1$ versus a baseline STARMA(1,1) model \citep{pfeifer1980three}, which after some rearrangement is defined as follows
			\begin{equation*}
				\Y_t = \boldsymbol{\delta}_0 + \left( \phi_1 \W + \phi_0 \I_N \right) \Y_{t-1} + \left( \theta_1 \W + \theta_0 \I_N \right)  \boldsymbol{\epsilon}_{t-1} + \boldsymbol{\epsilon}_t
			\end{equation*}	
			where $\boldsymbol{\epsilon}_t$ are independent normal vectors, and $\boldsymbol{\delta}_0$, $\phi_i$, $\theta_i$, $i=0,1$ are 
			unknown parameters. 
			The Root Mean Square Error (RMSE) obtained by both models is computed. For the PNAR model the RMSE is 0.038 which is 
			less than 0.079    obtained by fitting the  STARMA(1,1) model. This shows significant 
			accuracy improvement of the prediction for the PNAR($1$) model. In addition, PNAR avoids estimation of moving average parameters.
		
		Estimation of the copula is advanced   according to the algorithm of Sec.~\ref{SUPP copula estimation} of the SM. The Gaussian AR-1 copula,  described in Sec.~\ref{simulations}, is compared versus the Clayton copula, over a grid of values for the associated copula parameter, with 100 bootstrap simulations.
		As a preliminary step for the estimation of  Gaussian AR-1 copula we need to reorder the observations $Y_{i,t}$ for $i=1,\dots, N$  to mimic the structure of the AR-1 copula correlation matrix $\mathbf{R}=(R_{ij})$, where $R_{ij}=\rho^{|i-j|}$. A coherent ordering for $Y_{i,t}$ will be the one where the empirical correlation matrix of $\Y_t$, say $\mathbf{R}_e$, contains  highest correlations close to the main diagonal and then progressively smaller values where the distance from the main diagonal increases. This is a combinatorial problem and  for small $N$ it is not hard to solve it by trying all the possible orderings. However, when $N$ grows,  we can recover such ordering  by defining the dissimilarity matrix $\mathbf{D}_e = \mathbf{1}_{N \times N} - \mathbf{R}_e$, where $\mathbf{1}_{N \times N}$ is the $N \times N$ matrix of ones,  and appealing to the concept of anti-Robinson matrix \cite[Sec.~2.1]{hahsler2008_seriation}. In this type of matrix, the smallest dissimilarity (largest correlation) values appear close to the main diagonal and the largest  dissimilarity (smallest correlation)  values appear far from it. Hence, by defining a loss function that quantifies the divergence of a matrix from the anti-Robinson matrix  \cite[Sec.~2.2]{hahsler2008_seriation}  reordering of the observations is  solved by heuristic optimization employing  the Anti-Robinson Simulated Annealing (ARSA); see \cite{brusco2008heuristic}. The \texttt{R} implementation of the algorithm is easily performed by using the package \texttt{seriation} \citep{hahsler2008_seriation}. The resulting ordering is quite satisfactorily and is plotted in Figure~\ref{reorder} (right) against a random ordering configuration (left). 
		
		\begin{figure}[H]
			\centering
			\begin{tabular}{cc} 
				\includegraphics[scale = 0.65]{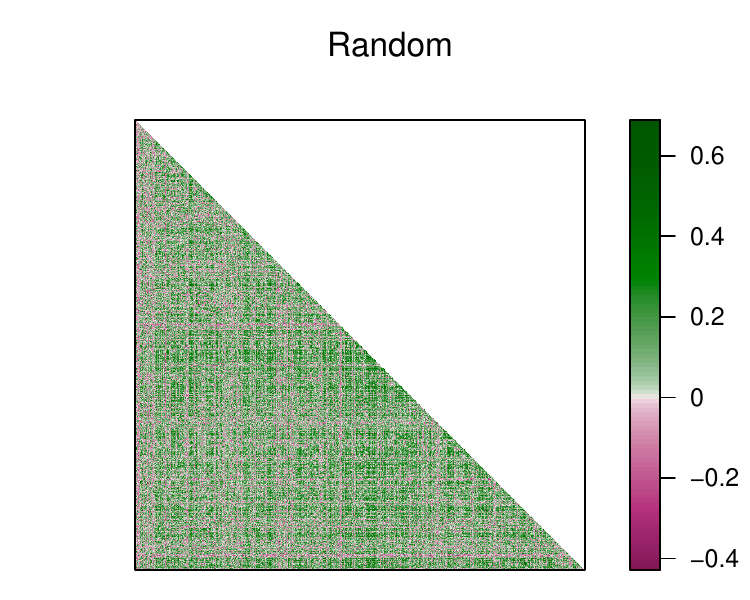} &
				\includegraphics[scale = 0.65]{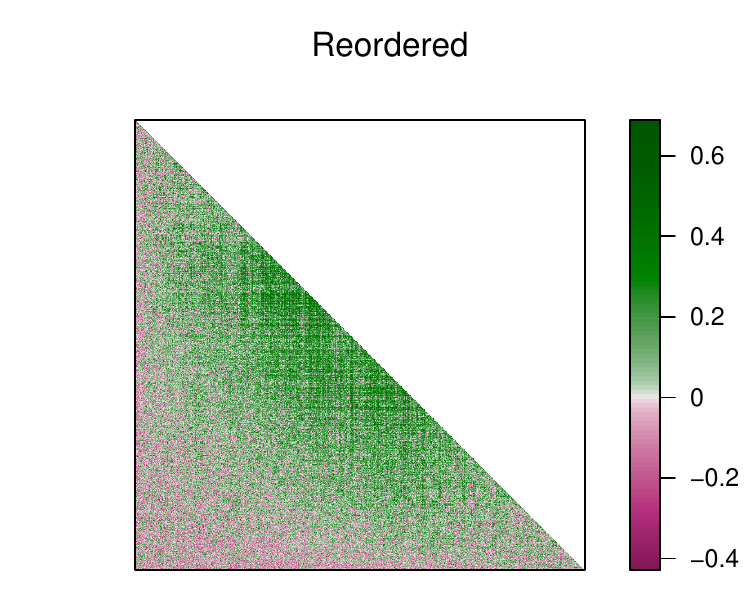} \\
			\end{tabular}
			\caption{Empirical correlation matrix for the Chicago crime data. Left: random ordering of the variables. Right: matrix reordered through ARSA optimization.}
			\label{reorder}
		\end{figure}
		
		Using this ordering 
		the Gaussian AR-1 copula is selected 94\% and 95\% of the times, for the linear and the log-linear PNAR(1) model, respectively. The estimated copula parameter is $\hat{\rho}=0.689$ and $\hat{\rho}=0.612$, for the linear and log-linear model, respectively, with small standard errors 0.064 and 0.062, correspondingly. 
		
		A further estimation step  for the PNAR(1) models is performed by applying the two-step GEE estimation method discussed  in Sec.~\ref{SUPP 2 step GEE}. The  QMLE estimates are used as starting values of the two-step procedure. An AR-1 working correlation matrix $\mathbf{P(\tau)}$ is selected, with $\hat{\tau}_1$ as the estimator of the correlation parameter. To compare the relative efficiency of the GEE ($\tilde{\thetab}$) versus QMLE ($\hat{\thetab}$),  their bootstrap standard errors have been calculated using 100 simulations  by using the estimated copula.  We  compute the ratio of the standard errors obtained, $q(\hat{\thetab}, \tilde{\thetab})=\sum_{h=1}^{m}SE(\hat{\beta}_h)/\sum_{h=1}^{m}SE(\tilde{\beta}_h)$. The results are 
		$q(\hat{\thetab}, \tilde{\thetab})=1.019$ and $q(\hat{\thetab}, \tilde{\thetab})=1.002$, 
		for the linear and log-liner model, respectively. 
		We note  a marginal gain in efficiency from the GEE estimation; this is probably due the a small value of the estimated correlation parameter $\tau$, which is found to be  around 
		0.008 and 0.005
		on average,  for linear and log-linear model, respectively. Using   different kind of estimator for the correlation parameter might yield significant  
		efficiency improvement but a further  study in this direction is  needed.

		\section*{Acknowledgments}
		This work was completed when M. Armillotta was with the Department
		of Mathematics \& Statistics at the University of Cyprus. We greatly appreciate comments
		made by two reviewers on an earlier version of the manuscript. 
		Both authors acknowledge the hospitality of the Department of Mathematics \& Statistics at Lancaster University, where this work was initiated. This work has been co-financed by the European Regional Development Fund and the Republic of Cyprus through the Research and Innovation Foundation, under the project INFRASTRUCTURES/1216/0017 (IRIDA). In addition, K. Fokianos  acknowledges travel  support  by CY Initiative of Excellence (grant "Investissements d'Avenir" ANR-16-IDEX-0008),
		Project "EcoDep" PSI-AAP2020-0000000013.

		\appendix
		\section{Appendix} \label{SEC appendix}
		\renewcommand{\theequation}{A-\arabic{equation}}
		\renewcommand{\thefigure}{A-\arabic{figure}}
		\renewcommand{\thetable}{A-\arabic{table}}
		\setcounter{equation}{0}
		\setcounter{figure}{0}
		\setcounter{table}{0}
		\setcounter{proposition}{0}
		\setcounter{lemma}{0}
		\renewcommand{\theproposition}{\Alph{section}.\arabic{proposition}}
		\renewcommand{\thelemma}{\Alph{section}.\arabic{lemma}}
		
		Recall that $C$ is a generic constant and $C_r$ is a constant depending on $r\in\N$. See also the notation paragraph in the introductory Section~\ref{intro}.

		\subsection{Proof of Theorem~\ref{Thm. Ergodicity of linear model N}} \label{Proof Thm. Ergodicity of linear model N}

		Recall from \citet[Def.~1]{zhu2017} that $\mathcal{W} = \left\lbrace \boldsymbol{\omega}\in\R^\infty:\boldsymbol{\omega}_\infty=\sum\norm{\omega_i}<\infty\right\rbrace $, where $\boldsymbol{\omega} = (\omega_i\in\R: 1\leq i < \infty)^T\in\R^\infty$. For each $\boldsymbol{\omega}\in\mathcal{W}$, let $\boldsymbol{\omega}_N = (\omega_1,\dots,\omega_N)^T\in\R^N$ be the its truncated $N$-dimensional version. By considering the VAR(1) representation for the PNAR(1) model \eqref{lin2}, defined in SM~\ref{moment_lin}, the process can be rewritten by backward substitution, $\Y_t=(\I_N-\G)^{-1}\betab_0+\sum_{j=0}^{\infty}\G^j\xib_{t-j}$. For sake of clarity we show the result for the PNAR(1) model. However, the general $p$-lags parallel result extends straightforwardly, by considering the companion VAR(1) representation form \eqref{var1_c} of the linear PNAR($p$) model. By Proposition~\ref{finite_moment}, it holds that $\E(Y_{i,t})\leq\mu=\beta_0/(1-\beta_1-\beta_2)$ for all $1\leq i < \infty$ and, since $\xib_t=\Y_t-\lambdab_t$, $\E\norm{\xi_{i,t}}\leq 2 \E(Y_{i,t}) \leq 2\mu=c <\infty$. Similar uniform bounds are obtained for moments of order $r >1$. For any $\boldsymbol{\omega}\in\mathcal{W}$, $\E\norm{\betab_0+ \xib_t}_v \preceq (\beta_0+ c)\textbf{1}_N=C\textbf{1}_N<\infty$, $\G^j\textbf{1}_N=(\beta_1+\beta_2)^j\textbf{1}_N$ and $\E\norm{\boldsymbol{\omega}_N^T\sum_{j=0}^{\infty}\G^j(\betab_0+\xib_{t-j})}\leq C \omega_\infty\sum_{j=0}^{\infty}(\beta_1+\beta_2)^j=C_2$. Then, by Monotone Convergence Theorem (MCT), $\lim_{N\to\infty}\boldsymbol{\omega}_N^T\Y_t$ exists and is finite with probability 1, moreover $Y_t^\omega=\lim_{N\to\infty}\boldsymbol{\omega}_N^T\Y_t$ is strictly stationary and therefore $\left\lbrace \Y_t \right\rbrace $ is strictly stationary, following \citet[Def.~1]{zhu2017}. To verify the uniqueness of the solution, take another stationary solution $\tilde{\Y}_t$ to the PNAR model with finite moments of any order. Then, $\E(\tilde{\Y}_t)\preceq C_1\textbf{1}_N$, where $C_1$ is a constant and $\E\norm{\boldsymbol{\omega}_N^T\Y_t-\boldsymbol{\omega}_N^T\tilde{\Y}_t}=\norm{\sum_{j=m}^{\infty}\boldsymbol{\omega}_N^T\sum_{j=0}^{\infty}\G^j(\betab_0+\xib_{t-j})-\boldsymbol{\omega}_N^T\G^m\tilde{\Y}_{t-m}}\leq \omega_\infty \sum_{j=m}^{\infty}[C_2(\beta_1+\beta_2)^j+C_1(\beta_1+\beta_2)^m]$, for any $N$ and weight $\boldsymbol{\omega}$. Since $m$ is arbitrary, $Y_t^\omega=\tilde{Y}_t^\omega$ with probability one. 
		\qed

		
		\subsection{Proof of Lemma~\ref{limits}} \label{proof}
		We split the proof accordingly to each single result given in Lemma~\ref{limits}.
		\subsubsection{Proof of (1)} \label{proof_1}
		Define $\W_t=(\Y_t, \Y_{t-1})^T$, $\hat{\W}^t_{t-J}=(\hat{\Y}^{t}_{t-J}, \hat{\Y}^{t-1}_{t-J})^T\coloneqq f(\boldsymbol{\xi}_t,\dots,\boldsymbol{\xi}_{t-J})$, $\hat{Y}_{i,t}$, $\hat{\lambda}_{i,t}$ the $i$-th elements of $\hat{\Y}^t_{t-J}$ and $\hat{\lambdab}^t_{t-J}$. Consider the following triangular array $\left\lbrace g_{Nt}(\W_t): 1\leq t\leq T_N, N\geq1\right\rbrace $, where $T_N\to\infty$ as $N\to\infty$. For any $\etab\in\R^m$, $g_{Nt}(\W_t)=N^{-1}\etab^T\frac{\partial\lambdab_t^T}{\partial\thetab}\textbf{C}_t\frac{\partial\lambdab_t}{\partial\thetab^T}\etab=\sum_{r=1}^{m}\sum_{l=1}^{m}\eta_r\eta_lh_{rl,t}$ where $N^{-1}\mathbf{H}_{Nt}=(h_{rl,t})_{1\leq r,l\leq m}$. We take the most complicated element, $h_{22,t}$, the result is analogously proven for the other elements. Define $l_{1,i,t}=\norm{(\w_i^T\Y_{t-1})^2Y_{i,t}(\hat{\lambda}_{i,t}+\lambda_{i,t})}$, $l_{2,i,t}=\norm{(\w_i^T\Y_{t-1})^2\lambda_{i,t}^2}$ and $l_{3,i,t}=\norm{ \hat{Y}_{i,t}\lambda_{i,t}^2( Y_{i,t-1}+\hat{Y}_{i,t-1})\sum_{j=1}^{N}w_{ij}(Y_{j,t-1}+\hat{Y}_{j,t-1})}$. Additionally, the equality $\norm{\hat{\lambda}_{i,t}-\lambda_{i,t}}=\norm{Y_{i,t}-\hat{Y}_{i,t}}$ is a consequence of the constructions in Lemma~\ref{construction}. Then
		\begin{align}
			\norm{h_{22,t}-h_{22,t-J}^{t}}&=\norm{\frac{1}{N}\sum_{i=1}^{N}\frac{(\w_i^T\Y_{t-1})^2Y_{i,t}}{\lambda^2_{i,t}}-\frac{1}{N}\sum_{i=1}^{N}\frac{(\w_i^T\hat{\Y}^{t-1}_{t-J})^2\hat{Y}_{i,t}}{\hat{\lambda}^2_{i,t}}} \nonumber\\
			&\leq\frac{\beta_0^{-4}}{N}\sum_{i=1}^{N}\norm{(\w_i^T\Y_{t-1})^2Y_{i,t}\hat{\lambda}_{i,t}^2-(\w_i^T\hat{\Y}^{t-1}_{t-J})^2\hat{Y}_{i,t}\lambda_{i,t}^2} \nonumber\\
			&\leq\frac{\beta_0^{-4}}{N}\sum_{i=1}^{N}\norm{(\w_i^T\Y_{t-1})^2Y_{i,t}(\hat{\lambda}_{i,t}^2-\lambda_{i,t}^2)+\left[ (\w_i^T\Y_{t-1})^2Y_{i,t}-(\w_i^T\hat{\Y}^{t-1}_{t-J})^2\hat{Y}_{i,t}\right] \lambda_{i,t}^2} \nonumber\\
			&\leq\frac{\beta_0^{-4}}{N}\norm{\sum_{i=1}^{N}(\w_i^T\Y_{t-1})^2Y_{i,t}(\hat{\lambda}_{i,t}+\lambda_{i,t})(\hat{\lambda}_{i,t}-\lambda_{i,t})}+\frac{\beta_0^{-4}}{N}\norm{\sum_{i=1}^{N}(\w_i^T\Y_{t-1})^2\lambda_{i,t}^2(Y_{i,t}-\hat{Y}_{i,t})}\nonumber\\
			&\quad+\frac{\beta_0^{-4}}{N}\norm{\sum_{i=1}^{N}\hat{Y}_{i,t}\lambda_{i,t}^2\left[ (\w_i^T\Y_{t-1})^2-(\w_i^T\hat{\Y}^{t-1}_{t-J})^2\right]} \nonumber\\
				&\leq\frac{\beta_0^{-4}}{N}\sum_{i=1}^{N}l_{1it}\norm{\hat{\lambda}_{i,t}-\lambda_{i,t}}+\frac{\beta_0^{-4}}{N}\sum_{i=1}^{N}l_{2it}\norm{Y_{i,t}-\hat{Y}_{i,t}}\nonumber\\
				&\quad+\frac{\beta_0^{-4}}{N}\sum_{i=1}^{N}\hat{Y}_{i,t}\lambda_{i,t}^2\norm{ (\w_i^T\Y_{t-1})+(\w_i^T\hat{\Y}^{t-1}_{t-J})}\norm{(\w_i^T\Y_{t-1})-(\w_i^T\hat{\Y}^{t-1}_{t-J})} \nonumber\\
				&\leq\frac{\beta_0^{-4}}{N}\sum_{i=1}^{N}l_{1it}\norm{\hat{\lambda}_{i,t}-\lambda_{i,t}}+\frac{\beta_0^{-4}}{N}\sum_{i=1}^{N}l_{2it}\norm{Y_{i,t}-\hat{Y}_{i,t}}\nonumber\\
				&\quad+\frac{\beta_0^{-4}}{N}\sum_{i=1}^{N}\hat{Y}_{i,t}\lambda_{i,t}^2\norm{ \sum_{j=1}^{N}w_{ij}(Y_{j,t-1}+\hat{Y}_{j,t-1})}\norm{ \sum_{j=1}^{N}w_{ij}(Y_{j,t-1}-\hat{Y}_{j,t-1})}\nonumber\\
				&\leq\frac{\beta_0^{-4}}{N}\sum_{i=1}^{N}\left(l_{1,i,t}+ l_{2,i,t}\right) \norm{Y_{i,t}-\hat{Y}_{i,t}}+\frac{\beta_0^{-4}}{N}\sum_{i=1}^{N}l_{3,i,t}\norm{ \sum_{j=1}^{N}w_{ij}(Y_{j,t-1}-\hat{Y}_{j,t-1})}\,.\nonumber
		\end{align}
		Set $1/a+1/b=1/2$ and $1/q+1/p+1/n=1/a$. By Cauchy-Schwartz inequality, as $w_{ij}>0$ for $j=1,\dots,N$ and $\sum_{j=1}^{N}w_{ij}=1$ we have that $(\w_i^T\Y_{t-1})^2=\left(\sum_{j=1}^{N}w_{ij}Y_{j,t-1}\right)^2\leq\sum_{j=1}^{N}w_{ij}Y^2_{j,t-1}$. As a consequence, $\max_{1\leq i\leq N}\lnorm{(\w_i^T\Y_{t-1})^2}_q\leq\max_{1\leq i\leq N}\left( \sum_{j=1}^{N}w_{ij}\lnorm{Y_{j,t-1}^2}_q\right)\leq\sup_{i\geq 1}\lnorm{Y_{i,t}^2}_q\leq C_{2q}^{1/q}<\infty$, by Proposition \ref{finite_moment}. Moreover, $\sup_{i\geq 1}\lnorm{\lambda_{i,t}^2}_n\leq\sup_{i\geq 1}\lnorm{Y_{i,t}^2}_n\leq C_n$, by the conditional Jensen's inequality. Similarly, $\sup_{i\geq 1}\lnorm{\hat{\lambda}_{i,t}^2}_n\leq\sup_{i\geq 1}\lnorm{\hat{Y}_{i,t}^2}_n$. An application of Lemma~\ref{construction} provides $\sup_{i\geq 1}\lnorm{Y_{i,t}-\hat{Y}_{i,t}}_b\leq d^J\sum_{j=0}^{t-J-1}d^j\sup_{i\geq 1}\lnorm{\xi_{i,t}}_b\leq d^J2C_b^{1/b}/(1-d)$. By an analogous recursion argument, it holds that $\sup_{i\geq 1}\lnorm{\hat{Y}_{i,t}^2}_n\leq2\beta_0\sum_{j=0}^{\infty}d^j+\sum_{j=0}^{\infty}d^j\sup_{i\geq 1}\lnorm{\xi_{i,t}}_n\leq(2\beta_0+2C_n^{1/n})/(1-d)\coloneqq\Delta<\infty$. It is immediate to see that, by Holder's inequality $\sup_{i\geq 1}\lnorm{l_{1,i,t}}_a\leq\sup_{i\geq 1}\lnorm{(\w_i^T\Y_{t-1})^2}_q\lnorm{Y_{i,t}}_p\left(\lnorm{\hat{\lambda}_{i,t}}_n+\lnorm{\hat{\lambda}_{i,t}}_n\right)<l_{1}<\infty$. In the same way we can conclude that $\sup_{i\geq 1}\lnorm{l_{2,i,t}}_q<l_2<\infty$ and $\sup_{i\geq 1}\lnorm{l_{3,i,t}}_q<l_3<\infty$. Then, by Minkowski inequality
		\begin{align}
			\lnorm{h_{22,t}-h_{22,t-J}^{t}}_2&\leq\frac{\beta_0^{-4}}{N}\sum_{i=1}^{N}\lnorm{l_{1,i,t}+ l_{2,i,t}}_a \lnorm{Y_{i,t}-\hat{Y}_{i,t}}_b+\frac{\beta_0^{-4}}{N}\sum_{i=1}^{N}\lnorm{l_{3,i,t}}_a \sum_{j=1}^{N}w_{ij}\lnorm{Y_{j,t-1}-\hat{Y}_{j,t-1}}_b\nonumber\\
			&\leq\beta_0^{-4}\max_{1\leq i\leq N}\left( \lnorm{l_{1,i,t}}_a+ \lnorm{l_{2,i,t}}_a\right)  \lnorm{Y_{i,t}-\hat{Y}_{i,t}}_b+\beta_0^{-4}\max_{1\leq i\leq N}\lnorm{l_{3,i,t}}_a\lnorm{Y_{i,t-1}-\hat{Y}_{i,t-1}}_b\nonumber\\
			&\leq\beta_0^{-4}\left(l_1+l_2+l_3\right)2C_b^{1/b}d^{J-1}\sum_{j=0}^{t-J-1}d^j\leq\frac{\beta_0^{-4}\left(l_1+l_2+l_3\right)2C_b^{1/b}}{1-d}d^{J-1}\coloneqq c_{22}\nu_J\nonumber\,,
		\end{align}
		with $\nu_J=d^{J-1}$. By the definition in B1, set $\mathcal{F}^{N}_{t-J,t+J}=\sigma\left(\xi_{i,t}: 1\leq i\leq N, t-J\leq t\leq t+J\right)$. Since $\E\left[ g_{Nt}(\W_t)|\mathcal{F}^{N}_{t-J,t+J}\right]$ is the optimal $\mathcal{F}^{N}_{t-J,t+J}$-measurable approximation to $g_{Nt}(\W_t)$ in the $L^2$-norm and $g_{Nt}(\hat{\W}^{t}_{t-J})$ is $\mathcal{F}^{N}_{t-J,t+J}$-measurable, it follows that
		\begin{eqnarray}
			&\lnorm{g_{Nt}(\W_t)-\E\left[ g_{Nt}(\W_t)|\mathcal{F}^{N}_{t-J,t+J}\right]}_2&\leq\lnorm{g_{Nt}(\W_t)-g_{Nt}(\hat{\W}^{t}_{t-J})}_2\nonumber\\
			&&\leq\sum_{r=1}^{m}\sum_{l=1}^{m}\eta_k\eta_l\lnorm{h_{rlt}-\hat{h}^{t}_{rl,t-J}}_2\nonumber\\
			&&\leq c_{Nt}\nu_j\nonumber\,,
		\end{eqnarray}
		where $c_{Nt}=\sum_{r=1}^{m}\sum_{l=1}^{m}\eta_r\eta_lc_{rl}$ and $\nu_J=d^{J-1}\to0$ as $J\to\infty$, establishing $L^p$-near epoch dependence ($L^p$-NED), with $p\in[1,2]$, for the triangular array $\left\lbrace \bar{X}_{Nt}=g_{Nt}(\W_t)-\E\left[ g_{Nt}(\W_t)\right] \right\rbrace $; see \cite{and1988}. Moreover, by a similar argument above, it is easy to see that $\E\norm{\bar{X}_{Nt}}^2<\infty$, by the finiteness of all the moments of the process $\Y_t$. Then, using B1 and the argument in \citet[p.~464]{and1988}, we have that $\left\lbrace \bar{X}_{Nt}\right\rbrace $ is a uniformly integrable $L^1$-mixingale. Furthermore, since $\lim_{N\to\infty}T^{-1}_N\sum_{t=1}^{T_N}c_{Nt}<\infty$ the law of large number of Theorem 2 in \cite{and1988} provides the desired result $(NT_N)^{-1}\etab^T\textbf{H}_{NT_N}\etab\xrightarrow{p}\etab^T\textbf{H}\etab$ as $\left\lbrace N,T_N \right\rbrace\to\infty$. We only need to show the existence of the matrix $\textbf{H}$ according to \eqref{H div N}. Consider the single elements of the matrix $\mathbf{H}_N$:
		\begin{equation}
			H_{11}=\mathrm{E}\left( \sum_{i=1}^{N}\frac{1}{\lambda_{i,t}}\right) ,\quad H_{12}=\mathrm{E}\left(\sum_{i=1}^{N}\frac{\w_i^T\Y_{t-1}}{\lambda_{i,t}}\right) ,\quad H_{13}=\mathrm{E}\left(\sum_{i=1}^{N}\frac{Y_{i,t-1}}{\lambda_{i,t}}\right) \,,
			\nonumber
		\end{equation}
		\begin{equation}
			H_{22}=\mathrm{E}\left[\sum_{i=1}^{N}\frac{\left( \w_i^T\Y_{t-1}\right) ^2}{\lambda_{i,t}}\right] ,\quad H_{23}=\mathrm{E}\left(\sum_{i=1}^{N}\frac{\w_i^T\Y_{t-1}Y_{i,t-1}}{\lambda_{i,t}}\right) ,
			\nonumber
		\end{equation}
		\begin{equation}
			\quad H_{33}=\mathrm{E}\left[\sum_{i=1}^{N}\frac{\left( Y_{i,t-1}\right)^2}{\lambda_{i,t}}\right]\,.
			\nonumber
		\end{equation}
		Note that the linear model \eqref{lin2} can be rewritten has $\Y_t=\boldsymbol{\mu}+\sum_{j=0}^{\infty}\G^j\boldsymbol{\xi}_{t-j}=\boldsymbol{\mu}+\tilde{\Y}_t$ where $\boldsymbol{\mu}=(\I_N-\G)^{-1}\beta_0\mathbf{1}=\beta_0(1-\beta_1-\beta_2)^{-1}\mathbf{1}$ and $\boldsymbol{\xi}_t$ is MDS. 
		As $N\to\infty$,
		\begin{equation}
			\frac{1}{N}H_{11}=\mathrm{E}\left(\frac{1}{N}\mathbf{1}^T\mathbf{D}^{-1}_t\mathbf{1}\right)=\frac{1}{N}\text{tr}(\boldsymbol{\Lambda})\rightarrow d_1\,,
		\end{equation}
		by assumption B3. The second term
		\begin{equation}
			\frac{1}{N}H_{12}=\mathrm{E}\left(\frac{1}{N}\mathbf{1}^T\mathbf{D}^{-1}_t\W\Y_{t-1}\right)=\frac{1}{N}H_{12a}+\frac{1}{N}H_{12b}\,, \nonumber
		\end{equation}
		where $H_{12a}/N=N^{-1}\mathrm{E}\left(\mathbf{1}^T\mathbf{D}^{-1}_t\W\boldsymbol{\mu}\right)=N^{-1}\mathbf{1}^T\boldsymbol{\Lambda}\W(\I_N-\G)^{-1}\beta_0\mathbf{1}=\beta_0N^{-1}\mathbf{1}^T\boldsymbol{\Lambda}\W(1-\beta_1-\beta_2)^{-1}\mathbf{1}=\mu \mathbf{1}^T\boldsymbol{\Lambda}\W\mathbf{1}/N=\mu\mathbf{1}^T\boldsymbol{\Lambda}\mathbf{1}/N\rightarrow \mu d_1$, as $N\to\infty$. Define $e_{i,t}=\norm{\xib^T_{t-1-i}}_v(\G^T)^i\W^T\mathbf{1}$. Then,
		\begin{align}
			\norm{\frac{H_{12b}}{N}}&\leq\frac{1}{N}\left[ \mathrm{E}\left(\mathbf{1}^T\mathbf{D}^{-1}_t\W\tilde{\Y}_{t-1}\right)^2\right]^{1/2}\leq \frac{\beta_0^{-1}}{N}\left[\mathrm{E}\left(\mathbf{1}^T\W\norm{\tilde{\Y}_{t-1}}_v\right)^2\right]^{1/2} \label{Dt abs} \\
			&
			\leq \frac{\beta_0^{-1}}{N}\sum_{i,j=0}^{\infty}\mathrm{E}^{1/2}\left(\mathbf{1}^T\W\G^j\norm{\xib_{t-1-j}}_v\norm{\xib^T_{t-1-i}}_v(\G^T)^i\W^T\mathbf{1}\right) \nonumber \\
			&
			\leq \frac{\beta_0^{-1}}{N}\sum_{i,j=0}^{\infty}\mathrm{E}^{1/4}(e_{j,t}^2)\mathrm{E}^{1/4}(e_{i,t}^2) = \frac{\beta_0^{-1}}{N}\left[ \sum_{j=0}^{\infty}\mathrm{E}^{1/4}(e_{j,t}^2)\right] ^2 \nonumber \\
			&\leq \beta_0^{-1}\left[\sum_{j=0}^{\infty} \frac{1}{\sqrt{N}}\mathrm{E}^{1/4}\left(\mathbf{1}^T\W\G^j\norm{\xib_{t-1-j}\xib^T_{t-1-j}}_v(\G^T)^j\W^T\mathbf{1}\right)\right] ^2 \nonumber \\
			&\leq \beta_0^{-1}\left[ \sum_{j=0}^{\infty}\frac{1}{\sqrt{N}}\left( \mathbf{1}^T\W\G^j\Sigmab (\G^T)^j \W^T \mathbf{1}\right)^{1/4} \right] ^2\nonumber
		\end{align}
		converges to $0$, as $N\to\infty$, where the first inequality holds by Minkowski and Jensen's inequalities, the second inequality is a consequence of $\mathbf{D}_t^{-1}\preceq \beta_0^{-1}\I_N$ and the fourth is deduced by Cauchy inequality. 
		The convergence follows by applying Lemma~\ref{lemma2}. Then, $H_{12}/N\rightarrow \mu d_1$ as $N\to\infty$. For the same reason $H_{13}/N\rightarrow \mu d_1$. We move to the following term.
		\begin{equation}
			\frac{H_{22}}{N}=\mathrm{E}\left(\frac{1}{N}\Y_{t-1}^T\W^T\mathbf{D}^{-1}_t\W\Y_{t-1}\right)=\frac{H_{22a}}{N}+\frac{H_{22b}}{N}+\frac{H_{22c}}{N}+\frac{H_{22d}}{N}\,, \nonumber
		\end{equation}
		where,  as $N\to\infty$, $H_{22a}/N=\mathrm{E}\left(N^{-1}\boldsymbol{\mu}^T\W^T\mathbf{D}^{-1}_t\W\boldsymbol{\mu}\right)=\mu^2\mathbf{1}^T\W^T\boldsymbol{\Lambda}\W\mathbf{1}/N=\mu^2\text{tr}(\boldsymbol{\Lambda})/N\rightarrow \mu^2d_1$ and $H_{22b}/N=H_{22c}/N=\mu H_{12b}/N\rightarrow 0$. Finally,
		\begin{equation}
			\frac{H_{22d}}{N}=\frac{1}{N}\mathrm{E}\left(\tilde{\Y}_{t-1}^T\W^T\mathbf{D}^{-1}_t\W\tilde{\Y}_{t-1}\right)=\frac{1}{N}\textrm{tr}\,\,\E\left[\mathbf{D}^{-1/2}_t\W\left(\Y_{t-1}-\mub \right)\left(\Y_{t-1}-\mub \right)^T\W^T \mathbf{D}^{-1/2}_t\right]\rightarrow d_4 \nonumber
		\end{equation}
		as $N\to\infty$, using B3. So $H_{22}/N\rightarrow \mu^2d_1+d_4$ as $N\to\infty$. For the same reason $H_{23}/N\rightarrow \mu^2d_1+d_3$ and $H_{33}/N\rightarrow \mu^2d_1+d_2$. Finally, note that $\mathbf{H}$ is positive definite, and nonsingular, as $\mathbf{H}_N/N$ is positive definite.  \qed
		
		\subsubsection{Proof of (2)}
		For all non-null $\etab\in\R^m$, the triangular array $\left\lbrace \etab^T\mathbf{s}_{Nt}/N: 1\leq t\leq T_N, N\geq1\right\rbrace $ is a martingale difference array. Moreover, $\E(\etab^T\mathbf{s}_{Nt}/N)^2<\infty$, by Cauchy inequality and the boundedness of all the moments of $\left\lbrace \Y_t\right\rbrace $. Then, $\etab^T\mathbf{s}_{Nt}/N$ is trivially a uniformly integrable $L^1$-mixingale. An application of \citet[Thm.~2]{and1988} provides the result.\qed
		
		\subsubsection{Proof of (3)}
		From $\thetab\in\mathcal{O}(\thetab_0)$, we have $\beta_{0,*}\leq \beta_0 \leq \beta_0^*$, where $\beta_{0,*}, \beta_0^*$ are suitable positive constants. Consider the third derivative
		\begin{equation}
			\frac{\partial^3l_{i,t}(\thetab)}{\partial\thetab_j\partial\thetab_l\partial\thetab_k}=2\frac{Y_{i,t}}{\lambda^3_{i,t}(\thetab)}\left( \frac{\partial\lambda_{i,t}(\thetab)}{\partial\thetab_j}\frac{\partial\lambda_{i,t}(\thetab)}{\partial\thetab_l}\frac{\partial\lambda_{i,t}(\thetab)}{\partial\thetab_k}\right)\leq2\frac{\beta_{0,*}^{-1}Y_{i,t}}{\lambda^2_{i,t}(\thetab)}\left( \frac{\partial\lambda_{i,t}(\thetab)}{\partial\thetab_{j^*}}\frac{\partial\lambda_{i,t}(\thetab)}{\partial\thetab_{l^*}}\frac{\partial\lambda_{i,t}(\thetab)}{\partial\thetab_{k^*}}\right)\coloneqq m_{i,t} \,. \nonumber
		\end{equation}
		Take the maximum of the third derivatives among $\left\lbrace i,l,k\right\rbrace$ to be, for example, at $\thetab_{j^*}=\thetab_{l^*}=\thetab_{k^*}=\beta_1$, the proof is analogous for the other derivatives,
		\begin{equation}
			\frac{1}{N}\sum_{i=1}^{N}\frac{\partial^3l_{i,t}(\thetab)}{\partial\beta_1^3}=\frac{1}{N}\sum_{i=1}^{N}2\frac{Y_{i,t}}{\lambda^3_{i,t}(\thetab)}\left( \w_i^T\Y_{t-1}\right)^3\leq \frac{1}{N}\sum_{i=1}^{N}2\frac{\beta_{0,*}^{-1}Y_{i,t}}{\lambda^2_{i,t}(\thetab)}\left( \w_i^T\Y_{t-1}\right)^3\coloneqq\frac{1}{N}\sum_{i=1}^{N}m_{i,t}\,.\nonumber
		\end{equation}
		Now, define $M_{NT_N}\coloneqq(NT_N)^{-1}\sum_{t=1}^{T_N}\sum_{i=1}^{N}m_{i,t}$ and  $N^{-1}\sum_{i=1}^{N}\E(m_{i,t})<\infty$ since all the moment of $\Y_t$ exist. It is easy to see that $M_{NT_N}\xrightarrow{p}M$ as $\left\lbrace N,T_N \right\rbrace \to\infty$, similarly to the steps of \ref{proof_1} above, with $M=\lim_{N\to\infty}N^{-1}\sum_{i=1}^{N}\E(m_{i,t})$. Then point (3) of Lemma~\ref{limits} follows by the last limit of B3. We omit the details. \qed

		\subsection{Proof of Lemma~\ref{limits 2}} \label{proof information}
		Analogously to \ref{proof}, we address separately each point of Lemma~\ref{limits 2}.
		\subsubsection{Proof of (1)}
		Let $\tilde{g}_{Nt}(\W_t)=N^{-1}\etab^T\frac{\partial\lambdab_t^T}{\partial\thetab}\textbf{D}^{-1}_t\boldsymbol{\Sigma}_t\textbf{D}^{-1}_t\frac{\partial\lambdab_t}{\partial\thetab^T}\etab=\sum_{r=1}^{m}\sum_{l=1}^{m}\eta_r\eta_lb_{rl,t}$ where $N^{-1}\mathbf{B}_{Nt}=(b_{rl,t})_{1\leq r,l\leq m}$ and $\boldsymbol{\Sigma}_t=\E(\boldsymbol{\xi}_{t}\boldsymbol{\xi}_{t}^T|\Fb^N_{t-1})$, with $\boldsymbol{\xi}_{t}=\Y_t-\lambdab_t=\hat{\Y}^{t}_{t-J}-\hat{\lambdab}^{t}_{t-J}$, since $\E(\hat{\Y}^{t}_{t-J}|\Fb^N_{t-1})=\hat{\lambdab}^{t}_{t-J}$ . We consider again the most complicated element, that is $b_{22,t}$. For $1\leq i,j \leq N$, define $\sigma_{ijt}=\E(\xi_{i,t}\xi_{j,t}|\Fb^N_{t-1})$ and $\rho_{ijt}=\E(\xi_{i,t}\xi_{j,t}|\Fb^N_{t-1})/(\sqrt{\lambda_{i,t}}\sqrt{\lambda_{j,t}})$, which are the elementwise conditional covariances and correlations, respectively. Then
		\begin{align}
			\norm{b_{22,t}-b_{22,t-J}^{t}}&=\norm{\frac{1}{N}\sum_{i=1}^{N}\sum_{j=1}^{N}\frac{(\w_i^T\Y_{t-1})(\w_j^T\Y_{t-1})\sigma_{ijt}}{\lambda_{i,t}\lambda_{j,t}}-\frac{1}{N}\sum_{i=1}^{N}\sum_{j=1}^{N}\frac{(\w_i^T\hat{\Y}^{t-1}_{t-J})(\w_j^T\hat{\Y}^{t-1}_{t-J})\sigma_{ijt}}{\hat{\lambda}_{i,t}\hat{\lambda}_{j,t}}} \nonumber\\
			&\leq\beta_0^{-3}\frac{1}{N}\sum_{i=1}^{N}\sum_{j=1}^{N}\frac{\norm{\sigma_{ijt}}}{\lambda_{i,t}^{1/2}\lambda_{j,t}^{1/2}}\norm{(\w_i^T\Y_{t-1})(\w_j^T\Y_{t-1})\hat{\lambda}_{i,t}\hat{\lambda}_{j,t}-(\w_i^T\hat{\Y}^{t-1}_{t-J})(\w_j^T\hat{\Y}^{t-1}_{t-J})\lambda_{i,t}\lambda_{j,t}} \nonumber\\
			&\leq\beta_0^{-3}\frac{1}{N}\sum_{i,j=1}^{N}\norm{\rho_{ijt}}\left( r_{1,i,j,t}  \norm{\lambda_{i,t}-\hat{\lambda}_{i,t}}+r_{2,i,j,t}\norm{ \sum_{j=1}^{N}w_{ij}(Y_{j,t-1}-\hat{Y}_{j,t-1})}\right)\,.  \nonumber
		\end{align}
		The second inequality is obtained employing the arguments used for the element $h_{22,t}$ of the Hessian as in \ref{proof}. Moreover, $r_{1,i,j,t}=(\w_i^T\Y_{t-1})(\w_j^T\Y_{t-1})(\hat{\lambda}_{j,t}+\lambda_{j,t})$ and $r_{2,i,j,t}=\lambda_{i,t}\lambda_{j,t}(\w_i^T\Y_{t-1}+\w_i^T\Y^{t-1}_{t-J})$. 
		Set $1/q+1/h=1/b$. Note that $\sup_{i,j\geq 1}\lnorm{r_{1,i,j,t}}_q<r_1<\infty$, $\sup_{i,j\geq 1}\lnorm{r_{2,i,j,t}}_q<r_2<\infty$ by the same argument of $\sup_{i\geq 1}\lnorm{l_{1,i,t}}_a<l_1$ above. When $i=j$, $\sigma_{iit}=\lambda_{i,t}$, consequently, $N^{-1}\sum_{i,j=1}^{N}\lnorm{\rho_{ijt}}_a=N^{-1}\sum_{i=1}^{N}\lnorm{1}_a=1$. Instead, 
	when $i\neq j$,
\begin{align}
	\max_{1\leq i\leq N}\sum_{j=1}^{N}\norm{\rho_{ijt}}=\max_{1\leq i\leq N}\sum_{j=1}^{i-1}\norm{\rho_{ijt}}+\max_{1\leq i\leq N}\sum_{j=i+1}^{N}\norm{\rho_{ijt}} \nonumber 
	\leq \max_{1\leq i\leq N}\sum_{j=1}^{i-1}\varphi_{i-j}+\max_{1\leq i\leq N}\sum_{j=i+1}^{N}\varphi_{j-i} \nonumber 
	\leq 2\sum_{h=1}^{N-1}\varphi_h \nonumber 
\end{align}
which is bounded by $2\Phi$ and the first inequality is a consequence of B4. Then, $\forall i,j=1,\dots,N$, we have $N^{-1}\sum_{i,j=1}^{N}\lnorm{\rho_{ijt}}_a\leq \lambda$, where $\lambda=\max\left\lbrace 1, 2\Phi\right\rbrace$. This entails that
\begin{align}
	\lnorm{b_{22,t}-b_{22,t-J}^{t}}_2&\leq\beta_0^{-3}\frac{1}{N}\sum_{i,j=1}^{N}\lnorm{\rho_{ijt}}_a\lnorm{ r_{1,i,j,t}  \norm{\lambda_{i,t}-\hat{\lambda}_{i,t}}+r_{2,i,j,t}\norm{ \sum_{j=1}^{N}w_{ij}(Y_{j,t-1}-\hat{Y}_{j,t-1})}}_b\nonumber\\
	&\leq\beta_0^{-3}\lambda\max_{1\leq i,j\leq N}\lnorm{r_{1,i,j,t}}_q\lnorm{Y_{i,t}-\hat{Y}_{i,t}}_h+\beta_0^{-4}\lambda\max_{1\leq i,j\leq N}\lnorm{r_{2,i,j,t}}_q\lnorm{Y_{i,t-1}-\hat{Y}_{i,t-1}}_h\nonumber\\
	&\leq\frac{\beta_0^{-3}\lambda\left(r_1+r_2\right)2C^{1/h}_{h}}{1-d}d^{J-1}\coloneqq r_{22}\nu_J\,.\nonumber
\end{align}
Here again $\nu_J=d^{J-1}$. Then, the triangular array $\left\lbrace \tilde{X}_{Nt}=\tilde{g}_{Nt}(\W_t)-\E\left[ \tilde{g}_{Nt}(\W_t)\right] \right\rbrace $ is $L^p$-NED, with $\E\tilde{X}_{Nt}^2<\infty$, and Theorem 2 in \cite{and1988} holds for it. This result and B1 yield to the convergence 
\begin{equation}
	(NT_N)^{-1}\etab^T\textbf{B}_{NT_N}\etab\xrightarrow{p}\etab^T\textbf{B}\etab\,,
	\label{limB}
\end{equation}
as $\left\lbrace N,T_N \right\rbrace\to\infty$, for any non-null $\etab\in\R^m$. The existence of the limiting information matrix \eqref{B div N} follows the same methodology used in \ref{proof_1} for the existence of \eqref{H div N}, by considering B3$^\prime$ instead of B3. The same notation $\B_N=(B_{k,l})_{k,l=1,\dots,m}$ and the same splits for each elements of the information matrix are adopted. So we highlight only the element which is different, i.e. $N^{-1}B_{12b}=N^{-1}\E(\mathbf{1}_N^T\mathbf{D}^{-1}_t\boldsymbol{\Sigma}_t\mathbf{D}^{-1}_t\W\Y_{t-1})=N^{-1}B_{12a}+N^{-1}B_{12b}$. Clearly, $N^{-1}B_{12a}=\mu \mathbf{1}_N^T\boldsymbol{\Lambda}\mathbf{1}_N\to \mu f_1$. Moreover, when $i=j$, $\norm{N^{-1}B_{12b}}=\norm{N^{-1}\E\left[ \sum_{i,j=1}^{N}\sigma_{ijt}(\w_i^T\tilde{\Y}_{t-1})/(\lambda_{i,t}\lambda_{j,t})\right] }=\norm{N^{-1}H_{12b}}\to 0$, as $N\to\infty$. When $i\neq j$ 
\begin{align}
	\norm{\frac{B_{12b}}{N}}\leq \frac{\beta_0^{-1}}{N}\E\left( \max_{1\leq i\leq N}\sum_{j=1}^{N}\norm{\rho_{ijt}}\sum_{i=1}^{N}\w_i^T\norm{\tilde{\Y}_{t-1}}_v\right)\leq \frac{2\Phi\beta_0^{-1}}{N}\E\left( \mathbf{1}^T_N\W\norm{\tilde{\Y}_{t-1}}_v\right) \nonumber
\end{align}
which converges to $0$, as $N\to\infty$, following \eqref{Dt abs}. \qed
\subsubsection{Proof of (2)}
Now we show asymptotic normality. Define $\varepsilon_{Nt}=\etab^T\frac{\partial\lambdab_{t}}{\partial\thetab}^T\mathbf{D}^{-1}_t\boldsymbol{\xi}_t$, and recall the $\sigma$-field $\mathcal{F}^{N}_{t}=\sigma\left(\xi_{i,s}: 1\leq i \leq N, s\leq t\right) $. Set $S_{Nt}=\sum_{s=1}^{t}\varepsilon_{Ns}$, so $\left\lbrace S _{Nt}, \mathcal{F}^{N}_{t}: t\leq T_N, N\geq1\right\rbrace $ is a martingale array.
By $N^{-2}\E(\etab^T \mathbf{s}_{Nt})^4<\infty$, the Lindberg's condition is satisfied
\begin{equation}
	\frac{1}{NT_N}\sum_{t=1}^{T_N}\E\left[\varepsilon^2_{Nt}I\left( \norm{\varepsilon_{Nt}}>\sqrt{NT_N}\delta\right)\left| \right. \mathcal{F}^{N}_{t-1} \right]\leq\frac{\delta^{-2}}{N^2T_N^2}\sum_{t=1}^{T_N}\E\left(\varepsilon^4_{Nt}\left| \right.\mathcal{F}^{N}_{t-1} \right)\xrightarrow{p}0\nonumber\,,
	\end{equation}
	for any $\delta>0$, as $N\to\infty$. By the result in equation \eqref{limB}
	\begin{equation}
\frac{1}{NT_N}\sum_{t=1}^{T_N}\E\left(\varepsilon^2_{Nt}\left| \right. \mathcal{F}^{N}_{t-1}\right)=\frac{1}{NT_N}\sum_{t=1}^{T_N}\etab^T\frac{\partial\lambdab_{t}}{\partial\thetab}^T\mathbf{D}^{-1}_t\E(\boldsymbol{\xi}_t\boldsymbol{\xi}_t^T\left| \right. \mathcal{F}^{N}_{t-1})\mathbf{D}^{-1}_t\frac{\partial\lambdab_{t}}{\partial\thetab^T}\etab\xrightarrow{p}\etab^T\B\etab \nonumber\,,
\end{equation}
for any $\delta>0$, as $N\to\infty$. Then, the central limit theorem for martingale array in \citet[Cor. 3.1]{hall1980} applies, $(NT_N)^{-1/2}S_{NT_N}\xrightarrow{d}N(0,\etab^T\textbf{B}\etab)$, and an application of the Cramér-Wold theorem leads to the desired result. \qed


%

\setcounter{section}{18}
\begin{center}
	\section{Supplementary Material} \label{SEC supplementary material}
\end{center}

The supplementary material contains further details on moments of linear and log-linear PNAR models and the proofs of the remaining asymptotic results of the QMLE. A more extended discussion about conditions of Lemma~\ref{limits 2} and an empirical exploration of assumptions B2-B4  are also provided. Additional simulation results are presented. A numerical study concerning  a more  efficient  
GEE estimator is included. Finally, guidelines for the estimation of the copula and its parameter are discussed.

Recall that $C$ is a generic constant and $C_r$ is a constant depending on $r\in\N$. See also the notation paragraph in the introductory Section~\ref{intro}.

\renewcommand{\thesection}{S-\arabic{section}}
\renewcommand{\theequation}{S-\arabic{equation}}
\renewcommand{\thelemma}{S-\arabic{lemma}}
\renewcommand{\thefigure}{S-\arabic{figure}}
\renewcommand{\thetable}{S-\arabic{table}}
\setcounter{equation}{0}
\setcounter{figure}{0}
\setcounter{table}{0}
\setcounter{proposition}{0}
\setcounter{section}{0}
\renewcommand{\theproposition}{S-\arabic{proposition}}
\setcounter{example}{0}
\renewcommand{\theexample}{S\arabic{example}}

\vspace{0.5cm}

\section{Further results linear PNAR($p$) model}
\label{moment_lin}

It is easy to derive some elementary properties of the linear PNAR($p$) model. Define $\mathbf{E}=\G_1+\dots+\G_p$.
Fix $\boldsymbol{\mu}=(\I_N-\mathbf{E})^{-1}\betab_0$; we can again rewrite model \eqref{lin1_p} as a Vector Autoregressive VAR($p$)  model
\begin{equation}
	\Y_t-\boldsymbol{\mu}=\G_1(\Y_{t-1}-\boldsymbol{\mu})+\dots+\G_p(\Y_{t-p}-\boldsymbol{\mu})+\boldsymbol{\xi}_t\,,
	\nonumber
\end{equation}
where $\boldsymbol{\xi}_t$ is a martingale difference sequence, and rearrange it in a $Np$-dimensional VAR(1) form by
\begin{equation}
	\Y^*_t-\boldsymbol{\mu}^*=\G^*(\Y^*_{t-1}-\boldsymbol{\mu}^*)+\boldsymbol{\Xi}_t\,.
	\label{var1_c}
\end{equation}
Here we have $\Y^*_t=(\Y_t^T,\Y_{t-1}^T,\dots,\Y_{t-p+1}^T)^T$, $\boldsymbol{\mu}^*=(\mathbf{I}_{Np}-\mathbf{G}^*)^{-1}\mathbf{B}_0$, $\textbf{B}_0=(\betab_0^T,\mathbf{0}^T_{N(p-1)})^T$
$\boldsymbol{\Xi}_t=(\boldsymbol{\xi}_t,\mathbf{0}^T_{N(p-1)})^T$, where $\mathbf{0}_{N(p-1)}$ is a $N(p-1)\times1$ vector of zeros, and
\begin{equation*}
	\G^* =
	\begin{pmatrix}
		\G_1 & \G_2 & \cdots & \G_{p-1} & \G_p \\
		\I_N & \mathbf{0}_{N,N} & \cdots & \mathbf{0}_{N,N} &\mathbf{0}_{N,N} \\
		\mathbf{0}_{N,N} & \I_N & \cdots & \mathbf{0}_{N,N} &\mathbf{0}_{N,N} \\
		\vdots  & \vdots  & \ddots & \vdots & \vdots  \\
		\mathbf{0}_{N,N} & \mathbf{0}_{N,N} & \cdots & \I_N &\mathbf{0}_{N,N}
	\end{pmatrix}\,,
\end{equation*}
where $\mathbf{0}_{N,N}$ is a $N\times N$ matrix of zeros.

For model \eqref{var1_c} we can find the unconditional mean $\E(\Y^*_t)=\boldsymbol{\mu}^*$ and variance $\mathrm{vec}[\mathrm{Var}(\mathrm{\mathbf{Y}}^*_t)]=(\I_{(Np)^2}-\mathbf{G}^*\otimes\mathbf{G}^*)^{-1}\mathrm{vec}[\E(\boldsymbol{\Sigma}^*_t)]$ with $\E(\boldsymbol{\Sigma}^*_t)=\E(\boldsymbol{\Xi}_t\boldsymbol{\Xi}_t^T)$, where $\mathrm{vec}(\cdot)$ denotes the vec operator and $\otimes$ the Kronecker product. For details about the VAR(1) representation of a VAR($p$) model and its moments, see \cite{lut2005}. Define the selection matrix $\mathbf{J}=(\I_N:\mathbf{0}_{N,N}:\dots:\mathbf{0}_{N,N})$ with dimension $N\times Np$. Moreover, note that $\mathbf{J}\boldsymbol{\mu}^*=(\I_N-\mathbf{E})^{-1}\betab_0=\beta_0\sum_{j=0}^{\infty}\mathbf{E}^j\mathbf{1}_N=\beta_0\sum_{j=0}^{\infty}(\sum_{h=1}^{p}(\beta_{1h}+\beta_{2h}))^j\mathbf{1}_N$, where the first equality follows by $\rho(\sum_{h=1}^{p} \G_{h})< 1$ and the second one is true since $\W\mathbf{1}_N=\mathbf{1}_N$, by construction, and so $\mathbf{E}\mathbf{1}_N=\sum_{h=1}^{p}(\beta_{1h}+\beta_{2h})\mathbf{1}_N$; the iteration of this argument $j$ times gives the result. Then, the following proposition holds.
\begin{proposition} \rm
	Assume $\sum_{h=1}^{p}(\beta_{1h}+\beta_{2h})<1$ in model \eqref{lin1_p}. Then, the PNAR($p$) model has the following unconditional moments:
	\begin{align}
		\E(\mathrm{\mathbf{Y}}_t)&=\mathbf{J}\boldsymbol{\mu}^*=\beta_0\left(1-\sum_{h=1}^{p}\left( \beta_{1h}+\beta_{2h}\right)  \right)^{-1}\mathbf{1}_N= \boldsymbol{\mu}\,,
		\nonumber\\
		\mathrm{vec}[\Gammab(0)]&=(\mathbf{J}\otimes\mathbf{J})\mathrm{vec}[\mathrm{Var}(\mathrm{\mathbf{Y}}^*_t)]\,,
		\nonumber\\
		\mathrm{vec}[\Gammab(h)]&=(\mathbf{J}\otimes\mathbf{J})(\I_{Np}-\G^*)^{h}\mathrm{vec}[\mathrm{Var}(\mathrm{\mathbf{Y}}^*_t)]
		\nonumber\,,
	\end{align}
	where $\Gammab(0)=\mathrm{Var}(\mathrm{\mathbf{Y}}_t)$ and $\Gammab(h)=\mathrm{Cov}(\mathrm{\mathbf{Y}}_t,\mathrm{\mathbf{Y}}_{t-h})$.
	\label{moments_p}
\end{proposition}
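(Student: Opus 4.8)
The plan is to read off all three moment formulas directly from the companion VAR(1) representation \eqref{var1_c}, for which the first two unconditional moments are entirely standard (see \cite{lut2005}). First I would check that the summability condition supplies everything required. Since $\W$ and $\I_N$ commute, the matrices $\G_h=\beta_{1h}\W+\beta_{2h}\I_N$ are simultaneously triangularizable, and $\rho(\W)\leq\vertiii{\W}_\infty=1$; hence for any eigenvalue $\omega$ of $\W$ the scalar lag polynomial $1-\sum_{h=1}^{p}(\beta_{1h}\omega+\beta_{2h})z^{h}$ has all roots outside the closed unit disk, because $\sum_{h=1}^{p}\norm{\beta_{1h}\omega+\beta_{2h}}\leq\sum_{h=1}^{p}(\beta_{1h}+\beta_{2h})<1$. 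This is exactly $\rho(\G^*)<1$, so $\I_{Np}-\G^*$ and $\I_{(Np)^2}-\G^*\otimes\G^*$ are invertible; moreover Proposition~\ref{Prop. Ergodicity of linear model} applies and guarantees stationarity with finite second moments, so the unconditional moments below are well defined.

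Next, the mean. Taking expectations in the centered recursion $\Y^*_t-\boldsymbol{\mu}^*=\G^*(\Y^*_{t-1}-\boldsymbol{\mu}^*)+\boldsymbol{\Xi}_t$ and using that $\boldsymbol{\Xi}_t$ is a martingale difference together with stationarity gives the fixed point $(\I_{Np}-\G^*)(\E(\Y^*_t)-\boldsymbol{\mu}^*)=\mathbf{0}$, whence $\E(\Y^*_t)=\boldsymbol{\mu}^*$. Selecting the leading block with $\mathbf{J}$ gives $\E(\Y_t)=\mathbf{J}\E(\Y^*_t)=\mathbf{J}\boldsymbol{\mu}^*$, and the closed form is then precisely the geometric-series computation carried out just before the statement, using $\mathbf{E}\mathbf{1}_N=\sum_{h=1}^{p}(\beta_{1h}+\beta_{2h})\mathbf{1}_N$ (a consequence of $\W\mathbf{1}_N=\mathbf{1}_N$), which sums to $\beta_0(1-\sum_{h=1}^{p}(\beta_{1h}+\beta_{2h}))^{-1}\mathbf{1}_N=\boldsymbol{\mu}$.

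For the second-order moments I would first recall that $\mathrm{Var}(\Y^*_t)$ solves the discrete Lyapunov equation $\mathrm{Var}(\Y^*_t)=\G^*\mathrm{Var}(\Y^*_t)(\G^*)^T+\E(\boldsymbol{\Xi}_t\boldsymbol{\Xi}_t^T)$, whose vectorized solution is the formula $\mathrm{vec}[\mathrm{Var}(\Y^*_t)]=(\I_{(Np)^2}-\G^*\otimes\G^*)^{-1}\mathrm{vec}[\E(\boldsymbol{\Xi}_t\boldsymbol{\Xi}_t^T)]$ already recorded in the text. Since $\Y_t=\mathbf{J}\Y^*_t$, we have $\Gammab(0)=\mathbf{J}\,\mathrm{Var}(\Y^*_t)\,\mathbf{J}^T$; applying $\mathrm{vec}(\mathbf{A}\mathbf{B}\mathbf{C})=(\mathbf{C}^T\otimes\mathbf{A})\mathrm{vec}(\mathbf{B})$ with $\mathbf{A}=\mathbf{C}^T=\mathbf{J}$ gives $\mathrm{vec}[\Gammab(0)]=(\mathbf{J}\otimes\mathbf{J})\mathrm{vec}[\mathrm{Var}(\Y^*_t)]$. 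For the autocovariances, post-multiplying the centered recursion by $(\Y^*_{t-h}-\boldsymbol{\mu}^*)^T$ and taking expectations annihilates the innovation term (as $\boldsymbol{\Xi}_t$ is a martingale difference with respect to $\Fb_{t-1}$ while $\Y^*_{t-h}$ is $\Fb_{t-1}$-measurable for $h\geq1$), yielding the Yule--Walker recursion $\Gammab^*(h)=\G^*\Gammab^*(h-1)$ and hence $\Gammab^*(h)=(\G^*)^h\mathrm{Var}(\Y^*_t)$, where $\Gammab^*(h)=\mathrm{Cov}(\Y^*_t,\Y^*_{t-h})$. Selecting the leading block, $\Gammab(h)=\mathbf{J}(\G^*)^h\mathrm{Var}(\Y^*_t)\mathbf{J}^T$, and vectorizing gives the stated identity.

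The steps are individually routine, so the only place where I expect to have to be careful is the Kronecker bookkeeping in the autocovariance formula. To put $\mathrm{vec}[\Gammab(h)]$ in the announced form one should use $\mathrm{vec}[(\G^*)^h\mathrm{Var}(\Y^*_t)]=(\I_{Np}\otimes(\G^*)^h)\mathrm{vec}[\mathrm{Var}(\Y^*_t)]$, so that the matrix power enters through a Kronecker factor $\I_{Np}\otimes(\G^*)^h$ acting on $\mathrm{vec}[\mathrm{Var}(\Y^*_t)]$ rather than on the variance matrix directly; this is the one point where the dimensions ($\mathbf{J}\otimes\mathbf{J}$ being $N^2\times(Np)^2$ and $(\G^*)^h$ being $Np\times Np$) must be reconciled, and where a careless contraction of factors would produce an inconsistent expression.
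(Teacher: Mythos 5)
Your proposal is correct and follows essentially the same route as the paper, which simply reads the moments off the companion VAR(1) form \eqref{var1_c} and defers the standard computations to \cite{lut2005}; your additional verification that $\sum_{h=1}^{p}(\beta_{1h}+\beta_{2h})<1$ implies $\rho(\G^*)<1$ (via simultaneous triangularization of the $\G_h$, all polynomials in $\W$) is a worthwhile detail the paper leaves implicit. One remark: your Kronecker bookkeeping in the autocovariance step is the right one --- the correct identity is $\mathrm{vec}[\Gammab(h)]=(\mathbf{J}\otimes\mathbf{J})\bigl(\I_{Np}\otimes(\G^*)^{h}\bigr)\mathrm{vec}[\mathrm{Var}(\Y^*_t)]$, whereas the factor $(\I_{Np}-\G^*)^{h}$ as printed in the statement is dimensionally inconsistent with the $(Np)^2$-vector it multiplies and should be read as a typographical slip for $\I_{Np}\otimes(\G^*)^{h}$.
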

Applying these results to model \eqref{lin1} (equivalently \eqref{lin2}), we obtain
\begin{align}
	\E(\mathrm{\mathbf{Y}}_t)&=(\I_N-\G)^{-1}\betab_0=\beta_0(1-\beta_1-\beta_2)^{-1}\mathbf{1}
	\nonumber\,,\\
	\mathrm{vec}[\Gammab(0)]&=(\I_{N^2}-\G\otimes\G)^{-1}\mathrm{vec}(\boldsymbol{\Sigma})
	\label{lin_mean}\,,\\
	\mathrm{vec}[\Gammab(h)]&=(\I_{N}-\G)^{h}\mathrm{vec}[\Gammab(0)]\,,
	\nonumber
\end{align}
where $\boldsymbol{\Sigma}=\E(\xib_t\xib_t^T)=\E(\boldsymbol{\Sigma}_t)$. The mean of $\Y_t$ depends on the network effect $\beta_1$ and the momentum effect $\beta_2$ but not on the structure of the network; this is true in the case the covariates are not present \cite[Case~1]{zhu2017}. By contrast, the network structure always has an impact (through $\W$) on the second moments; in addition, the conditional covariance $\boldsymbol{\Sigma}_t$ shows
that it depends on the copula correlation. Equations \eqref{lin_mean} are analogous to equations (2.4) and (2.5) of
\citet[Prop. 1]{zhu2017}, who studied the continuous variable case. Then, the interpretations (Case 1 and 2 pp.~1099-1100) and the potential applications (Section 3, p.~1105) apply also here for integer-valued case.

\subsection{Proof of Proposition~\ref{finite_moment}}
For clarity in the notation, we present the result for the PNAR(1) model, but it can be easily  to the case  $p>1$. By $\beta_1+\beta_2<1$ and \eqref{lin_mean}, we have that $\E(Y_{i,t})=\mu=\beta_0/(1-\beta_1-\beta_2)$ for all $1\leq i \leq N$. Then, $\max_{1\leq i\leq N}\E(Y_{i,t})=\mu$ and $\lim_{N\to\infty}\max_{1\leq i\leq N}\E(Y_{i,t})=\sup_{i\geq 1}\E(Y_{i,t})\leq\mu=C_1$, using properties of monotone bounded functions.
Moreover, $\E(Y_{i,t}^r|\Fb_{t-1})=\sum_{k=1}^{r}\stirlingii{r}{k}\lambda_{i,t}^k$ , employing Poisson properties, where $\stirlingii{r}{k}$ are the Stirling numbers of the second kind. Set $r=2$. For the law of iterated expectations, 
we have that
\begin{align}
	\max_{1\leq i\leq N}\lnorm{Y_{i,t}}_2&=\max_{1\leq i\leq N}\left[\E\left( \lambda_{i,t}^2+\lambda_{i,t}\right)  \right]^{1/2}=\max_{1\leq i\leq N}\left[ \E\left( \beta_0+\beta_1\sum_{j=1}^{N}w_{ij}Y_{j,t-1} +\beta_2Y_{i,t-1}\right)^2+\mu\right]^{1/2} \nonumber\\
	&\leq\beta_0+\beta_1\max_{1\leq i\leq N}\left( \sum_{j=1}^{N}w_{ij}\lnorm{Y_{j,t-1}}_2\right) +\beta_2\max_{1\leq i\leq N}\lnorm{Y_{i,t-1}}_2+\mu^{1/2}\nonumber\\
	&\leq\beta_0+(\beta_1+\beta_2)\max_{1\leq i\leq N}\lnorm{Y_{i,t-1}}_2+\mu^{1/2}\nonumber\\
	&\leq\frac{\beta_0+\mu^{1/2}}{1-\beta_1-\beta_2}=C_2^{1/2}<\infty\nonumber\,,
\end{align}
where the last inequality follows for the stationarity of the process $\left\lbrace \Y_t, t\in \Z \right\rbrace $ and the finiteness of its moments, with fixed $N$. As $\max_{1\leq i\leq N}\E\norm{Y_{i,t}}^2$ is bounded by $C_2$, for the same reason above $\sup_{i\geq 1}\E\norm{Y_{i,t}}^2\leq C_2$. Since $\E(Y_{i,t}^3|\Fb_{t-1})=\lambda_{i,t}^3+3\lambda_{i,t}^2+\lambda_{i,t}$, similarly as above
\begin{align}
	\max_{1\leq i\leq N}\lnorm{Y_{i,t}}_3&\leq\beta_0+(\beta_1+\beta_2)\max_{1\leq i\leq N}\lnorm{Y_{i,t-1}}_3+(3\E(\lambda_{i,t}^2))^{1/3}+\mu^{1/3}\nonumber\\
	&\leq\beta_0+(\beta_1+\beta_2)\max_{1\leq i\leq N}\lnorm{Y_{i,t-1}}_3+(3C_2)^{1/3}+\mu^{1/3}\nonumber\\
	&\leq\frac{\beta_0+(3C_2)^{1/3}+\mu^{1/3}}{1-\beta_1-\beta_2}=C_3^{1/3}<\infty\nonumber\,,
\end{align}
where the second inequality holds because of the conditional Jensen's inequality, and so on, for $r>3$, the proof works analogously by induction and therefore is omitted.
\qed

\subsection{Empirical properties of the linear PNAR(1) model} \label{empirical_lin}

To gain intuition for  model \eqref{lin1}, we simulate a network from the stochastic block model \citep{wang1987}; see Figure \ref{net_corr}.
Moments of the linear model \eqref{lin1} exist and have a closed form expression; see \eqref{lin_mean}. The mean vector of the process has elements  $\E(Y_{i,t})$ which vary between  0.333 to 0.40, for $i=1,\dots, N$ whereas the diagonal elements of $\mathrm{Var}(\Y_t)$
take values between 0.364 and 0.678. We take this simulated model as a baseline for comparisons and its correlation structure is shown in the upper-left plot of Figure \ref{net_corr}.
The top-right panel displays the same information but for the case of increasing activity in the network. The bottom panel of the same figure shows the same information as the upper panel but   with a more sparse network, i.e. $K=10$. Increasing the number of relationships among nodes of the network boosts the correlation among the count processes. A more sparse structure of the network does not appear to alter the correlation properties of the process though.

Figure \ref{copula_corr} shows a substantial increase in the correlation values which is due to the choice of the  copula parameter.
Interestingly, the intense activity of the network increases the correlation values of the count process.
This aspect may be expected in real applications. For the Clayton copula (see lower plots of the same figure) we observe the same phenomenon but the values of the correlation matrix are lower when compared to those of the Gaussian copula. We did not observe any substantial changes for the marginal  mean and variances.

Figure \ref{beta_corr} shows the impact of increasing network and momentum effects. We observe that the network effect is prevalent, as it can be seen  from the top-right panel which also shows the block network structure.
Significant inflation for the correlation can be also noticed  when increasing the momentum effect (bottom-left panel). When increasing the network effect the marginal means vary between  0.333 to 1 and have large variability within the nodes; this is a direct consequence of the block network structure. When increasing the momentum effect, the marginal means take values from  0.5 to 0.667. When both effects grow, the mean values increase and are between   0.5 and  2.


\begin{figure}[h]
	\begin{center}
		\includegraphics[width=0.45\linewidth,height=0.25\textheight]{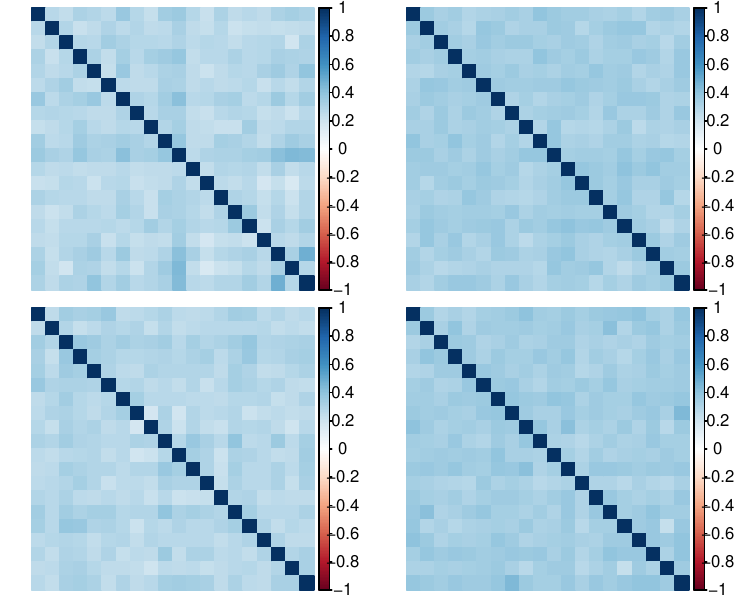}
		\caption{\small{Correlation matrix of model \eqref{lin1}. Top-left: Data are generated by employing a stochastic block model with $K=5$ and an adjacency matrix $\textbf{A}$ with elements generated by
				$\mathrm{P}(a_{ij}=1)=0.3N^{-0.3}$, if $i$ and $j$ belong to the same block, and $\mathrm{P}(a_{ij}=1)=0.3N^{-1}$, otherwise. In addition, we employ a Gaussian copula with parameter $\rho=0.5$,
				$(\beta_0,\beta_1, \beta_2)=(0.2, 0.1, 0.4)^T$,
				$T=2000$ and $N=20$. Top-right plot: Data are generated by employing a stochastic block model with $K=5$
				and an adjacency matrix $\textbf{A}$ with elements generated by  $\mathrm{P}(a_{ij}=1)=0.7N^{-0.0003}$ if $i$ and $j$ belong to the same block, and $\mathrm{P}(a_{ij}=1)=0.6N^{-0.3}$ otherwise.
				Same values for $\beta$'s, $T$, $N$ and choice of copula.  Bottom-left: The same graph, as in the upper-left side but with $K=10$.
				Bottom-right: The same graph, as in upper-right side but with $K=10$.}}
		\label{net_corr}
	\end{center}
\end{figure}

\begin{figure}[h]
	\begin{center}
		\includegraphics[width=0.6\linewidth,  height=0.35\textheight]{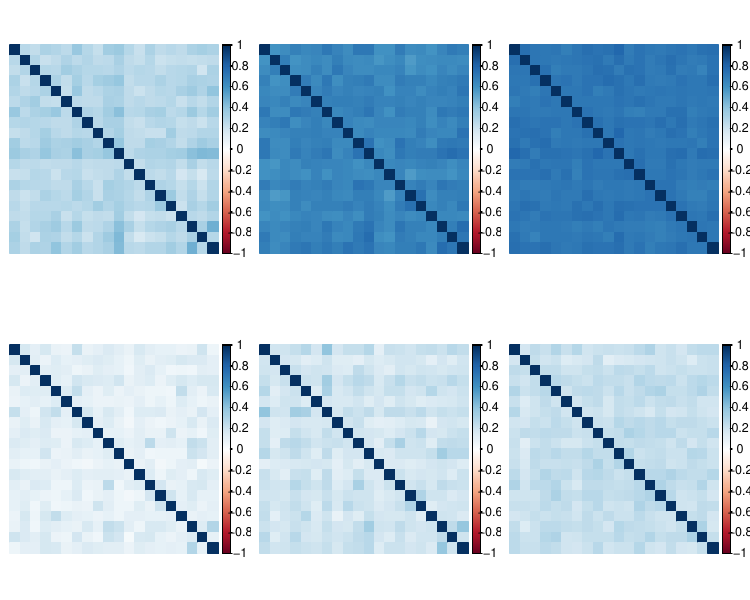}
		\caption{\small{Correlation matrix of model \eqref{lin1}. Top: Data have been generated  as in top-left of Figure \ref{net_corr} (left), with copula correlation parameter  $\rho=0.9$ (middle)
				and as in the top-right of Figure \ref{net_corr} but with copula parameter $\rho=0.9$ (right).
				Bottom: same information as the top plot  but data are generated  by using  a  Clayton copula.}}
		\label{copula_corr}
	\end{center}
\end{figure}

%

\begin{figure}[h]
	\begin{center}
		\includegraphics[width=0.45\linewidth, height=0.25\textheight]{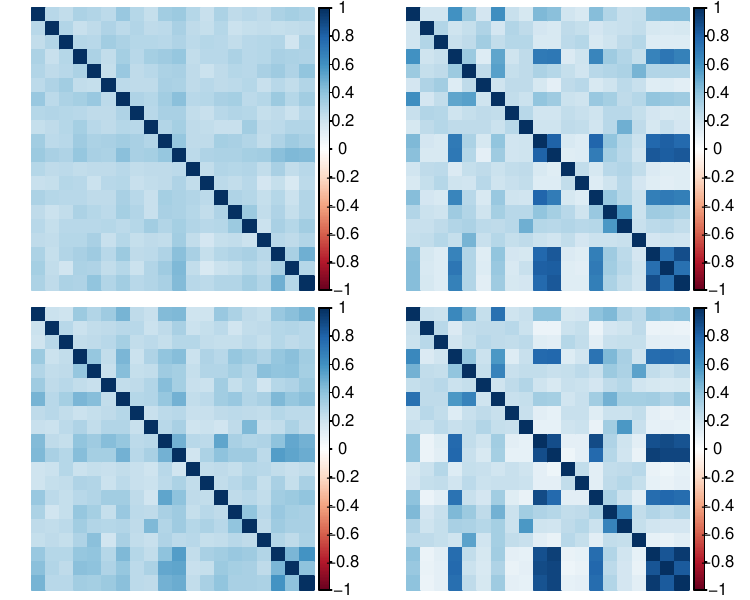}
		\caption{\small{Correlation matrix of model \eqref{lin1}. Data have been generated  as in top-left of Figure \ref{net_corr} (top-left), higher network effect  $\beta_1=0.4$ (top-right),
				higher momentum effect $\beta_2=0.6$ (lower-left) and higher network and momentum effect $\beta_1=0.3,\,\beta_2=0.6$ (lower-right).}}
		\label{beta_corr}
	\end{center}
\end{figure}

\section{Further results for log-linear PNAR($p$) model}
\label{moment_log_lin}

\subsection{Proof of Proposition~\ref{finite_moment_log}}
For simplicity set $p=1$. Since $\boldsymbol{\psi}_t$ is approximately MDS and $\norm{\beta_1+\beta_2}<1$, an approximated version of Proposition~\ref{moments_p} holds for $\mathbf{Z}_t=\log(\mathbf{1}_N+\Y_t)$ in \eqref{log_lin2_p}, with suitable adjustments. Then, $\E(Z_{i,t})\approx \mu$, and a first order Taylor approximation provides $\E(Y_{i,t})^r=\E(\exp(Z_{i,t})-1)^r\approx (\exp(\mu)-1)^r\leq C_r < \infty$, for all $r\geq 1$. By assuming the existence of moments of order $k\in\N$, i.e. $\sup_{i\geq 1}\E(Z_{i,t})^k<\infty$, we can obtain a more accurate approximation of polynomial order $k$. From the approximation above, $\E(Z_{i,t})^r\leq C_r $ and then $\E\norm{\nu_{i,t}}^r\leq (\norm{\beta_0}+(\norm{\beta_1}+\norm{\beta_2})C^{1/r}_r)^r\coloneqq C^\nu_r $, for all $r\in\N$. The existence of the latter moments allows to perform a Taylor approximation for the function $\exp(r\norm{\nu_{i,t}})$ of any arbitrary order on $\norm{\nu_{i,t}}$, around its mean, leading to the conclusion $\E(\exp(r\norm{\nu_{i,t}}))\leq D_r$, $\forall r\geq 1$. \qed

\subsection{Proof of Theorem~\ref{Thm. Ergodicity of log-linear model N}} \label{Proof Thm. Ergodicity of log-linear model N}
By Proposition~\ref{Prop. Ergodicity of log-linear model}, $\boldsymbol{\omega}_N^T\Y_t$ is strictly stationary, for any $N$ and  all the moments of $\Y_t$ exist by Proposition~\ref{finite_moment_log}. Then, $\E\norm{\boldsymbol{\omega}_N^T\Y_t}<\infty$ and $Y_t^\omega=\lim_{N\to\infty}\boldsymbol{\omega}_N^T\Y_t$ exists with probability one and is stationary. Hence, $ \left\lbrace \Y_t \right\rbrace $ is strictly stationary, following \citet[Def.~1]{zhu2017}. To prove the uniqueness of the solution, note that the proof of Theorem~\ref{Thm. Ergodicity of linear model N} applies to $\mathbf{Z}_t=\log(\textbf{1}_N+\Y_t)=\betab_0+\G\mathbf{Z}_{t-1}+\boldsymbol{\psi}_t$, by suitable adjustments, such as, $\norm{\G}_v^j\textbf{1}_N=(\norm{\beta_1}\W+\norm{\beta_2})^j\textbf{1}_N$. Therefore, $\left\lbrace \mathbf{Z}_t \right\rbrace $ is strictly stationary, in the sense of \citet[Def.~1]{zhu2017}, and unique stationary solution to the log-linear PNAR model. The same holds for the process $ \left\lbrace  \Y_t=\exp(\mathbf{Z}_t)-\textbf{1}_N \right\rbrace $ since it is a one-to-one deterministic function of the unique solution.
\qed

\subsection{Empirical properties of the log-linear PNAR(1) model}
\label{empirical_log}
We give here some insight on the structure of the model \eqref{log_lin1}.  Here an explicit formulation of the unconditional moments is not possible for the count process $\left\lbrace \Y_t \right\rbrace $. We report the sample statistics to estimate the unknown quantities and replicate the same baseline characteristics and the same scenarios of the linear case. In Figure \ref{log_net_corr} we can see that, analogously to the linear case, the correlations among counts grow when more activity in the network is showed. However, here a more sparse matrix seems to slightly affect correlations. The general levels of correlations are higher than the linear case in Figure \ref{net_corr}. The mean ranges around 1.7 and 2; it tends to rise with higher network activities up to 2.2. For the variance we find analogous results.
\begin{figure}[H]
	\begin{center}
		\includegraphics[width=0.45\linewidth, height=0.25\textheight]{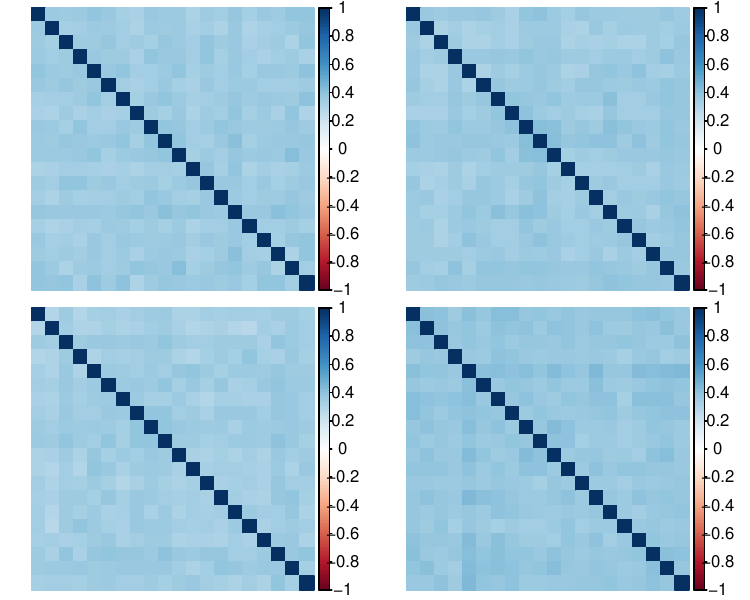}
		\caption{\small{Correlation matrix of model \eqref{log_lin1}. Top-left: Data are generated by employing a stochastic block model with $K=5$ and an adjacency matrix $\textbf{A}$ with elements generated by
				$\mathrm{P}(a_{ij}=1)=0.3N^{-0.3}$, if $i$ and $j$ belong to the same block, and $\mathrm{P}(a_{ij}=1)=0.3N^{-1}$, otherwise. In addition, we employ a Gaussian copula with parameter $\rho=0.5$,
				$(\beta_0,\beta_1, \beta_2)=(0.2, 0.1, 0.4)^T$,
				$T=2000$ and $N=20$. Top-right plot: Data are generated by employing a stochastic block model with $K=5$
				and an adjacency matrix $\textbf{A}$ with elements generated by  $\mathrm{P}(a_{ij}=1)=0.7N^{-0.0003}$ if $i$ and $j$ belong to the same block, and $\mathrm{P}(a_{ij}=1)=0.6N^{-0.3}$ otherwise.
				Same values for $\beta$'s, $T$, $N$ and choice of copula.  Bottom-left: The same graph, as in the upper-left side but with $K=10$.
				Bottom-right: The same graph, as in upper-right side but with $K=10$.}}
		\label{log_net_corr}
	\end{center}
\end{figure}
\begin{figure}[H]
	\begin{center}
		\includegraphics[width=0.6\linewidth, height=0.35\textheight]{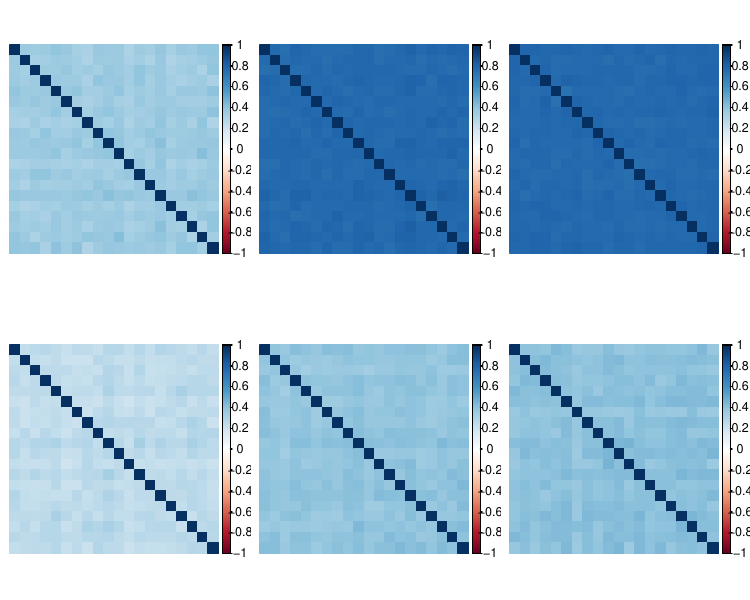}
		\caption{\small{Correlation matrix of model \eqref{log_lin1}. Top: Data have been generated  as in top-left of Figure \ref{log_net_corr} (left), with copula correlation parameter  $\rho=0.9$ (middle)
				and as in the top-right of Figure \ref{log_net_corr} but with copula parameter $\rho=0.9$ (right).
				Bottom: same information as the top plot  but data are generated  by using  a  Clayton copula.}}
		\label{log_copula_corr}
	\end{center}
\end{figure}
\begin{figure}[H]
	\begin{center}
		\includegraphics[width=0.45\linewidth, height=0.25\textheight]{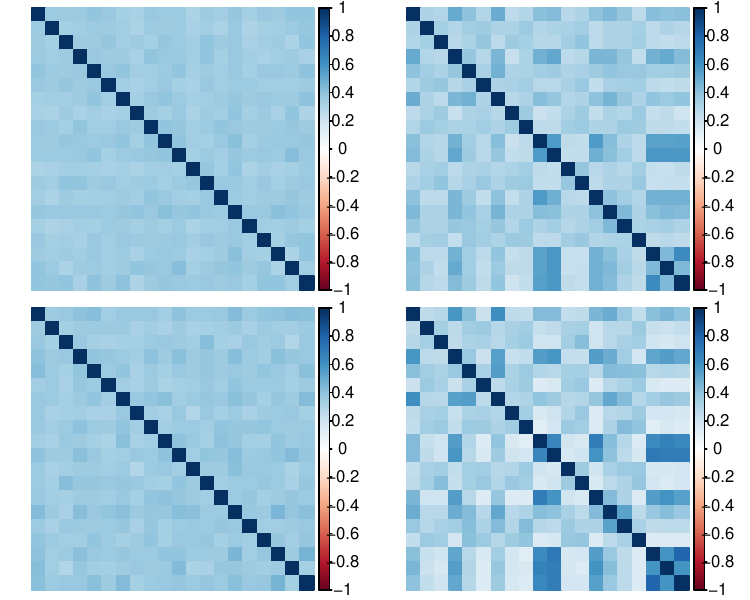}
		\caption{\small{Correlation matrix of model \eqref{log_lin1}. Data have been generated  as in top-left of Figure \ref{log_net_corr} (top-left), higher network effect  $\beta_1=0.4$ (top-right),
				higher momentum effect $\beta_2=0.6$ (lower-left) and higher network and momentum effect $\beta_1=0.3,\,\beta_2=0.6$ (lower-right).}}
		\label{log_beta_corr}
	\end{center}
\end{figure}
\begin{figure}[H]
	\begin{center}
		\includegraphics[width=0.45\linewidth, height=0.25\textheight]{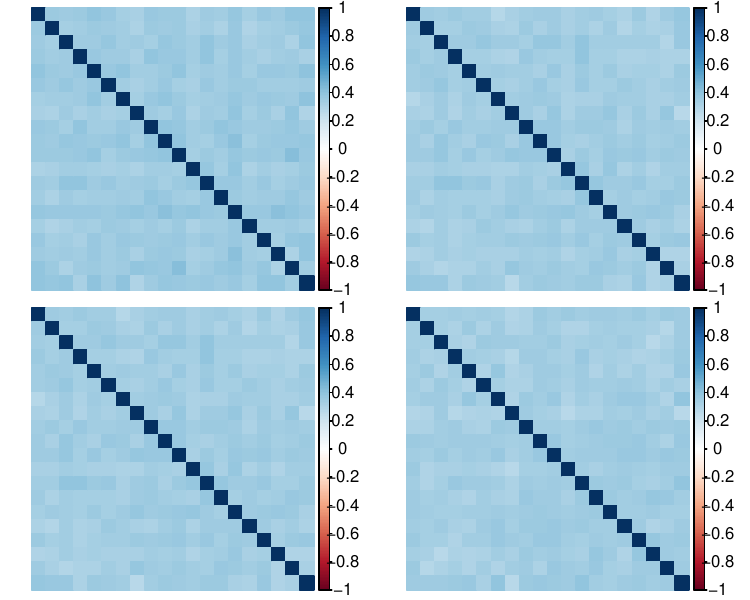}
		\caption{\small{Correlation matrix of model \eqref{log_lin1}. Data have been generated  as in top-left of Figure \ref{log_net_corr} (top-left), negative network effect $\beta_1=-0.1$ (top-right), negative momentum effect $\beta_2=-0.4$ (lower-left) and negative network and momentum effect $\beta_1=-0.1,\,\beta_2=-0.4$ (lower-right).}}
		\label{log_beta_neg_corr}
	\end{center}
\end{figure}
Figure \ref{log_copula_corr} shows the outcomes obtained by varying the copula structure and the copula parameter $\rho$. The results are similar to Figure \ref{copula_corr} but here the correlations tend to be more homogeneous. By adding positive weights to the network and momentum effect in Figure \ref{log_beta_corr} we notice comparable results with those of the linear model in Figure \ref{beta_corr}, but here the growth in parameters leads to a less severe effect on correlations. Significant increases in mean and variance are detected. In the log-linear model negative values for the parameters are allowed. In Figure \ref{log_beta_neg_corr} we see no remarkable impact of negative coefficients on correlations. However, the sample means and variances decrease when compared to the corresponding plots produced using
$\beta_1, ~ \beta_2 > 0$. 

\section{Additional proofs on the asymptotic properties of QMLE}

Before proving Lemmas~\ref{limits}-\ref{limits 2} we introduce the following preliminary results.

\subsection{Preliminary Lemmas} \label{proof preliminary lemmata}

\begin{lemma} \rm
	For model \eqref{lin2} assume $\beta_1+\beta_2<1$ and that the conditions B2 hold. Then, there exists $K>0$ such that for any integer $n>0$, $|\G^n|_v\preceq n^K(\beta_1+\beta_2)^n\mathbf{M}$ where $\mathbf{M}=C\mathbf{1}\pib^T+\sum_{j=0}^{K}\W^j$, $C>1$ is a constant and $\pib$ is defined in B2.1. Moreover, for integers $0\leq k_1\leq 1$ and $j>0$,
	\begin{equation}
		\sum_{j=0}^{\infty}N^{-1/2}\left[ \mathbf{1}^T\norm{\W^{k_1}\G^j\boldsymbol{\Sigma_\xi}(\G^T)^j(\W^T)^{k_1}}_v\mathbf{1}\right] ^{1/4}\xrightarrow{N\to\infty}0\,. \nonumber
	\end{equation}	
	\label{lemma2}
\end{lemma}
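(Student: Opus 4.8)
The plan is to establish the two assertions in turn, using the entrywise bound on $\G^n$ as the main tool for the convergence statement.

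For the first assertion I would exploit that $\W$ and $\I_N$ commute, so the binomial theorem gives $\G^n=\sum_{k=0}^{n}\binom{n}{k}\beta_1^k\beta_2^{n-k}\W^k$; since $\G=\beta_1\W+\beta_2\I_N$ has nonnegative entries, $\norm{\G^n}_v=\G^n$ and I may work directly with this expansion. I would then split the sum at a threshold $k=K$. For the low-order terms $k\le K$ I bound $\binom{n}{k}\le n^K$ and $\beta_1^k\beta_2^{n-k}\le(\beta_1+\beta_2)^n$, which contributes $n^K(\beta_1+\beta_2)^n\sum_{k=0}^{K}\W^k$. For the high-order terms $k>K$ the key input is the ergodic behaviour of the chain in B2.1: since $\W$ is irreducible and aperiodic with $\W^k\to\mathbf{1}\pib^T$ and $\pi_i>0$, there exist a fixed $K$ and a constant $C>1$ with $\W^k\preceq C\mathbf{1}\pib^T$ for all $k>K$; summing $\binom{n}{k}\beta_1^k\beta_2^{n-k}\le(\beta_1+\beta_2)^n$ over $k>K$ then contributes $C(\beta_1+\beta_2)^n\mathbf{1}\pib^T$. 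Adding the two pieces yields $\G^n\preceq n^K(\beta_1+\beta_2)^n\left(C\mathbf{1}\pib^T+\sum_{j=0}^{K}\W^j\right)=n^K(\beta_1+\beta_2)^n\mathbf{M}$, exactly in the claimed form. The delicate point, and the one I expect to be the main obstacle, is that $K$ and $C$ must be chosen uniformly in $N$: for a generic chain the mixing time grows with $N$, so the uniform entrywise domination $\W^k\preceq C\mathbf{1}\pib^T$ is precisely where the network-regularity requirements of B2 (playing the role of C2 in \cite{zhu2017}) are indispensable.

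For the second assertion I would feed the first bound into the summand. Using the entrywise submultiplicativity $\norm{AB}_v\preceq\norm{A}_v\norm{B}_v$ together with the nonnegativity of $\W$ and $\G$, and $\norm{\boldsymbol{\Sigma_\xi}}_v\preceq\Sigmab$ (Jensen's inequality, recalling $\Sigmab=\E\norm{\xib_t\xib_t^T}_v$), I obtain $\norm{\W^{k_1}\G^j\boldsymbol{\Sigma_\xi}(\G^T)^j(\W^T)^{k_1}}_v\preceq j^{2K}(\beta_1+\beta_2)^{2j}\,\W^{k_1}\mathbf{M}\Sigmab\mathbf{M}^T(\W^T)^{k_1}$. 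Taking the quadratic form $\mathbf{1}^T(\cdot)\mathbf{1}$ and the fourth root pulls out the scalar factor $j^{K/2}(\beta_1+\beta_2)^{j/2}$, whose sum over $j$ is finite because $\beta_1+\beta_2<1$; since $N^{-1/2}=(N^{-2})^{1/4}$, it then suffices to show that $N^{-2}\,\mathbf{1}^T\W^{k_1}\mathbf{M}\Sigmab\mathbf{M}^T(\W^T)^{k_1}\mathbf{1}\to0$.

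The last step is to expand $\mathbf{M}=C\mathbf{1}\pib^T+\sum_{j=0}^{K}\W^j$. In the dominant rank-one term I use $\W^{k_1}\mathbf{1}=\mathbf{1}$ and $\mathbf{1}^T\W^{k_1}\mathbf{1}=N$, so that $\mathbf{1}^T\W^{k_1}(C\mathbf{1}\pib^T)\Sigmab(C\pib\mathbf{1}^T)(\W^T)^{k_1}\mathbf{1}=C^2N^2\,\pib^T\Sigmab\pib$; after the $N^{-2}$ scaling this equals $C^2\,\pib^T\Sigmab\pib\le C^2\lambda_{\max}(\Sigmab)\sum_{i=1}^{N}\pi_i^2\to0$ by B2.1, which is exactly the cancellation that the $N^{-1/2}$ and the fourth root are calibrated to produce. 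The remaining cross and transient terms reduce to quadratic forms $\mathbf{u}^T\Sigmab\mathbf{u}$, where $\mathbf{u}$ is a column-sum vector of a finite sum of powers of $\W$; these I would bound by $\lambda_{\max}(\Sigmab)\lnorm{\mathbf{u}}_2^2$ and control through B2.2, so that $\lambda_{\max}(\W^*)=\mathcal{O}(\log N)$ and $\lambda_{\max}(\Sigmab)=\mathcal{O}((\log N)^\delta)$ render them of strictly smaller order than $N^2$ and hence negligible after scaling. Collecting these estimates and summing the convergent series in $j$ delivers the stated limit.
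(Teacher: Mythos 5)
Your proposal is correct and follows essentially the same route as the paper: the entrywise bound on $\G^n$ via the binomial expansion and the ergodic limit of $\W^k$ is exactly the content of Lemma~2 in \citet{zhu2017} (which the paper simply cites), and the convergence statement is reduced, just as in the paper, to showing $N^{-2}\mathbf{1}^T\W^{k_1}\mathbf{M}\Sigmab\mathbf{M}^T(\W^T)^{k_1}\mathbf{1}\to0$, with the rank-one term controlled by B2.1 through $\pib^T\Sigmab\pib\leq\lambda_{\max}(\Sigmab)\sum_{i}\pi_i^2$ and the remaining terms by B2.2 via a Rayleigh--Ritz bound involving $\lambda_{\max}(\W^*)$. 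No substantive difference from the paper's argument.
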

\begin{proof}
	The proof of Lemma~\ref{lemma2} follows the same line of arguments of \citet[Supp. Mat. pp.~6-8]{zhu2017}. Here, we show only the parts that are  different. Set $k_1=1$. The same can be easily proved for the other values. By B2.1 and Lemma 2 in \cite{zhu2017}, with the same notation, note that $|\G^n\boldsymbol{\Sigma_\xi}(\G^T)^n|_v\preceq n^{2K}(\beta_1+\beta_2)^{2n}\mathbf{M}\boldsymbol{\Sigma_\xi}\mathbf{M}$ and $|\W\G^j\boldsymbol{\Sigma_\xi}(\G^T)^j\W^T|_v\preceq j^{2K}(\beta_1+\beta_2)^{2j}\mathcal{M}$ where $\mathcal{M}=\W\mathbf{M}\boldsymbol{\Sigma_\xi}\mathbf{M}^T\W^T$. With $\alpha=\sum_{j=0}^{\infty}j^{K/2}(\beta_1+\beta_2)^{j/2}<\infty$, it holds the inequality
	$\sum_{j=0}^{\infty}N^{-1/2}[\mathbf{1}^T\W\G^j\boldsymbol{\Sigma_\xi}(\G^T)^j\W^T\mathbf{1}]^{1/4}\leq N^{-1/2}\alpha(\mathbf{1}^T\mathcal{M}\mathbf{1})^{1/4}$, then we need to prove the limit $N^{-1}(\mathbf{1}^T\mathcal{M}\mathbf{1})^{1/2}\xrightarrow{N\to\infty}0$, which is equivalent to show $N^{-2}\mathbf{1}^T\mathcal{M}\mathbf{1}\xrightarrow{N\to\infty}0$. The expansion of the matrix $\mathcal{M}$ provides $\mathbf{1}^T\mathcal{M}\mathbf{1}=N^2C\pib^T\boldsymbol{\Sigma_\xi}\pib+2NC\sum_{j=1}^{K+1}\pib^T\boldsymbol{\Sigma_\xi}(\W^T)^j\mathbf{1}+\sum_{j=1}^{K+1}\mathbf{1}^T\W^j\boldsymbol{\Sigma_\xi}(\W^T)^j\mathbf{1}+\sum_{i\neq j}\mathbf{1}^T\W^i\boldsymbol{\Sigma_\xi}(\W^T)^j\mathbf{1}$. Note that $2NC\sum_{j=1}^{K+1}\pib^T\boldsymbol{\Sigma_\xi}(\W^T)^j\mathbf{1}=2NC\sum_{j=1}^{K+1}\pib^T\boldsymbol{\Sigma^{1/2}_\xi}\boldsymbol{\Sigma^{1/2}_\xi}(\W^T)^j\mathbf{1}$, with $\boldsymbol{\Sigma^{1/2}_\xi}\boldsymbol{\Sigma^{1/2}_\xi}=\boldsymbol{\Sigma_\xi}$. So,
	\begin{eqnarray}
		&&\sum_{j=1}^{K+1}\pib^T\boldsymbol{\Sigma^{1/2}_\xi}\boldsymbol{\Sigma^{1/2}_\xi}(\W^T)^j\mathbf{1}\leq\sum_{j=1}^{K+1}[\pib^T\boldsymbol{\Sigma^{1/2}_\xi}\boldsymbol{\Sigma^{1/2}_\xi}\pib]^{1/2}[\mathbf{1}^T\W^j\boldsymbol{\Sigma^{1/2}_\xi}\boldsymbol{\Sigma^{1/2}_\xi}(\W^T)^j\mathbf{1}]^{1/2} \nonumber
	\end{eqnarray}
	\begin{eqnarray}
		&&\sum_{i\neq j}\mathbf{1}^T\W^i\boldsymbol{\Sigma_\xi}(\W^T)^j\mathbf{1}\leq\sum_{i\neq j}[\mathbf{1}^T\W^i\boldsymbol{\Sigma_\xi}(\W^T)^i\mathbf{1}]^{1/2}[\mathbf{1}^T\W^j\boldsymbol{\Sigma_\xi}(\W^T)^j\mathbf{1}]^{1/2} \nonumber
	\end{eqnarray}
	using Cauchy-Schwartz inequality. This means one just needs to prove
	\begin{equation}
		\pib^T\boldsymbol{\Sigma_\xi}\pib\xrightarrow{N\to\infty}0,\quad N^{-2}\mathbf{1}^T\W^j\boldsymbol{\Sigma_\xi}(\W^T)^j\mathbf{1}\xrightarrow{N\to\infty}0.
		\label{lemma2proof}
	\end{equation}
	For the first term of \eqref{lemma2proof}, by applying the spectral decomposition on $\boldsymbol{\Sigma_\xi}$ we have $\pib^T\boldsymbol{\Sigma_\xi}\pib=\pib^T\mathbf{Q_\xi}\boldsymbol{\Lambda_\xi}\mathbf{Q_\xi}^T\pib=\mathbf{z_\xi}^T\boldsymbol{\Lambda_\xi}\mathbf{z_\xi}=q(\mathbf{z_\xi})$, which is a diagonal quadratic form, where $\mathbf{Q_\xi}=[\mathbf{q}_1,\dots,\mathbf{q}_N]$ is an orthogonal matrix whose columns are orthonormal eigenvectors of $\boldsymbol{\Sigma_\xi}$ and $\boldsymbol{\Lambda_\xi}$ is a diagonal matrix with its eigenvalues. Then, for Cauchy inequality, it holds that $q(\mathbf{z_\xi})=\sum_{i=1}^{N}\lambda_i(\boldsymbol{\Sigma_\xi})z^2_i\leq\lambda_{\max}(\boldsymbol{\Sigma_\xi})\sum_{i=1}^{N}(\boldsymbol{\pi}^T\boldsymbol{\pi})(\boldsymbol{q}_i^T\boldsymbol{q}_i)=\lambda_{\max}(\boldsymbol{\Sigma_\xi})\sum_{i=1}^{N}\pi_i^2\xrightarrow{N\to\infty}0$, for B2.1. By Raleigh-Ritz theorem \cite[6.58]{seber2008}, the second term is bounded as follow
	\begin{eqnarray}
		&N^{-2}\mathbf{1}^T\W^j\boldsymbol{\Sigma_\xi}\boldsymbol(\W^T)^j\mathbf{1}&
		\leq N^{-2}\left( \mathbf{1}^T\W^j(\W^T)^j\mathbf{1}\right) \lambda_{\max}(\boldsymbol{\Sigma_\xi})\leq \frac{\lambda_{\max}(\boldsymbol{\Sigma_\xi})\lambda_{\max}(\W^*)^{2j}}{N} \xrightarrow{N\to\infty}0\,, \nonumber
	\end{eqnarray}	
	where the second inequality is due to \citet[Supp. Mat., p.~7]{zhu2017}, with $\W^*=\W+\W^T$ and the convergence follows by B2.2.
\end{proof}
\begin{lemma}\rm
	\label{construction}
	Rewrite the linear model \eqref{lin2} as $\Y_t=f(\Y_{t-1},\thetab)+\boldsymbol{\xi}_t$, for $t\geq0$ where $\boldsymbol{\xi}_t=\Y_t-\lambdab_t$ and $f(\Y_{t-1},\thetab)=\lambdab_t=\betab_0+\G\Y_{t-1}$. Define the following predictors, for $J>0$:
	\begin{equation}
		\bar{\Y}_t=
		\begin{cases}
			f(\bar{\Y}_{t-1},\thetab), & t>0 \\
			\Y_0, & t\leq0
		\end{cases}\,\,, \quad\quad \hat{\Y}^s_{t-J}=
		\begin{cases}
			f(\hat{\Y}^{s-1}_{t-J},\thetab)+\boldsymbol{\xi}_s, & \max\left\lbrace t-J,0\right\rbrace < s\leq t \\
			\bar{\Y}_s, & s\leq\max\left\lbrace t-J,0\right\rbrace 
		\end{cases}\,,
		\nonumber
	\end{equation}
	where $f(\bar{\Y}_{t-1},\thetab)=\betab_0+\G\bar{\Y}_{t-1}$ and $f(\hat{\Y}^{t-1}_{t-J},\thetab)=\hat{\lambdab}^{t}_{t-J}=\betab_0+\G\hat{\Y}^{t-1}_{t-J}$. Let $\tilde{\Y}^{*}_{t}=c\Y_t+(1-c)\bar{\Y}_{t}$ and $\tilde{\Y}_{t}=c\Y_t+(1-c)\hat{\Y}^{t}_{t-J}$ with $0\leq c\leq 1$. Then,
	\begin{equation}
		\norm{\Y_t-\hat{\Y}^t_{t-J}}_\infty\leq d^J\sum_{j=0}^{t-J-1}d^j\norm{\boldsymbol{\xi}_{t-J-j}}_\infty\,,\nonumber
	\end{equation}
	where $\norm{\boldsymbol{\xi}_t}_\infty=\max_{1\leq j\leq N}|\xi_{i,t}|$.
\end{lemma}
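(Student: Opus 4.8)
The plan is to exploit the key structural fact that, over the $J$ most recent time steps, the true process $\Y_t$ and its predictor $\hat{\Y}^{t}_{t-J}$ are driven by the \emph{same} innovations $\boldsymbol{\xi}_s$; consequently their difference propagates purely through the coefficient matrix $\G$ and contracts geometrically at rate $d:=\vertiii{\G}_\infty=\beta_1+\beta_2<1$. Here I would first record that $\vertiii{\G}_\infty=\beta_1+\beta_2$, since $\G=\beta_1\W+\beta_2\I_N$ has nonnegative entries with row sums $\beta_1\vertiii{\W}_\infty+\beta_2=\beta_1+\beta_2$, and hence $\vertiii{\G^{n}}_\infty\leq d^{\,n}$ by submultiplicativity of the induced $\infty$-norm together with the compatibility $\norm{\G\mathbf{x}}_\infty\leq\vertiii{\G}_\infty\norm{\mathbf{x}}_\infty$.

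First I would treat the recent window $\max\{t-J,0\}<s\leq t$, where both recursions read $\Y_s=\betab_0+\G\Y_{s-1}+\boldsymbol{\xi}_s$ and $\hat{\Y}^{s}_{t-J}=\betab_0+\G\hat{\Y}^{s-1}_{t-J}+\boldsymbol{\xi}_s$. Subtracting, the constant $\betab_0$ and the innovation $\boldsymbol{\xi}_s$ cancel, leaving $\Y_s-\hat{\Y}^{s}_{t-J}=\G(\Y_{s-1}-\hat{\Y}^{s-1}_{t-J})$. Iterating this identity from $s=t$ down to the bottom of the window, and using $\hat{\Y}^{t-J}_{t-J}=\bar{\Y}_{t-J}$ (I take $t\geq J$; for $t<J$ the claimed bound holds trivially, as its right-hand side is an empty sum and the left-hand side vanishes), I obtain
\[
\Y_t-\hat{\Y}^{t}_{t-J}=\G^{J}\bigl(\Y_{t-J}-\bar{\Y}_{t-J}\bigr).
\]

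The remaining ingredient is an explicit expression for $\Y_{s}-\bar{\Y}_{s}$. Since $\bar{\Y}_0=\Y_0$ and $\bar{\Y}_s=\betab_0+\G\bar{\Y}_{s-1}$ for $s>0$, subtracting from the defining recursion for $\Y_s$ gives $\Y_s-\bar{\Y}_s=\G(\Y_{s-1}-\bar{\Y}_{s-1})+\boldsymbol{\xi}_s$ with vanishing initial difference at $s=0$, so unrolling yields $\Y_s-\bar{\Y}_s=\sum_{k=0}^{s-1}\G^{k}\boldsymbol{\xi}_{s-k}$. Evaluating at $s=t-J$ and substituting into the previous display gives $\Y_t-\hat{\Y}^{t}_{t-J}=\sum_{k=0}^{t-J-1}\G^{J+k}\boldsymbol{\xi}_{t-J-k}$. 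Taking $\ell^\infty$ norms, applying the triangle inequality together with $\vertiii{\G^{J+k}}_\infty\leq d^{\,J+k}$, and factoring out $d^{J}$ produces exactly the asserted bound $\norm{\Y_t-\hat{\Y}^{t}_{t-J}}_\infty\leq d^{J}\sum_{j=0}^{t-J-1}d^{\,j}\norm{\boldsymbol{\xi}_{t-J-j}}_\infty$. I do not anticipate any genuine obstacle: the entire argument is a careful bookkeeping of the two nested definitions, and the only point needing attention is recognizing the cancellation of $\boldsymbol{\xi}_s$ inside the recent window, after which the geometric decay supplied by $\G^{J}$ completes the proof.
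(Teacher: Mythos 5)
Your proof is correct and follows essentially the same route as the paper: the same two-stage decomposition (exact contraction by $\G^J$ over the recent window where the innovations cancel, then unrolling $\Y_s-\bar{\Y}_s$ as a geometric sum of past innovations), with the bound $\vertiii{\G}_\infty\leq\beta_1+\beta_2=d$ driving both stages. The only cosmetic difference is that you keep exact identities until the final norm step, whereas the paper applies the mean value theorem and takes norms at each recursion step; the edge case $t<J$ is handled equivalently in both.
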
 
\begin{proof}
	\label{proofs}
	Set $t\geq0$,
	\begin{eqnarray}
		&\norm{\Y_t-\bar{\Y}_t}_\infty&=\norm{f(\Y_{t-1},\thetab)+\boldsymbol{\xi}_t-f(\bar{\Y}_{t-1},\thetab)}_\infty \nonumber \\
		&&\leq\vertiii{\frac{\partial}{\partial\Y}f(\tilde{\Y}^{*}_{t-1},\thetab)}_\infty|\Y_{t-1}-\bar{\Y}_{t-1}|_\infty+|\boldsymbol{\xi}_t|_\infty \nonumber\\
		&&\leq d\,\norm{\Y_{t-1}-\bar{\Y}_{t-1}}_\infty+\norm{\boldsymbol{\xi}_t}_\infty \nonumber\\
		&&\leq d^2\,\norm{\Y_{t-2}-\bar{\Y}_{t-2}}_\infty+d\,\norm{\boldsymbol{\xi}_{t-1}}_\infty+\norm{\boldsymbol{\xi}_t}_\infty\nonumber\\
		&&\phantom{\quad}\vdots\nonumber\\
		&&\leq d^t\,\norm{\Y_{0}-\bar{\Y}_{0}}_\infty+\sum_{j=0}^{t-1}d^j\norm{\boldsymbol{\xi}_{t-j}}_\infty\nonumber\\
		&&=\sum_{j=0}^{t-1}d^j\norm{\boldsymbol{\xi}_{t-j}}_\infty\,.\nonumber
	\end{eqnarray}
	The first inequality holds for an application of the multivariate mean value theorem. Moreover, recall that ${\partial f(\Y_{t-1},\thetab)}/{\partial\Y}=\G$ and $\vertiii{\G}_\infty\leq \beta_1+\beta_2=d<1$.
	Now set $t-J>0$,
	\begin{eqnarray}
		&\norm{\Y_t-\hat{\Y}^t_{t-J}}_\infty&=\norm{f(\Y_{t-1},\thetab)+\boldsymbol{\xi}_t-f(\hat{\Y}^{t-1}_{t-J},\thetab)-\boldsymbol{\xi}_t}_\infty \nonumber \\
		&&\leq\vertiii{\frac{\partial f(\tilde{\Y}_{t-1},\thetab)}{\partial\Y}}_\infty\norm{\Y_{t-1}-\hat{\Y}^{t-1}_{t-J}}_\infty \nonumber\\
		&&\leq d\,\norm{\Y_{t-1}-\hat{\Y}^{t-1}_{t-J}}_\infty \nonumber\\
		&&\leq d^2\,\norm{\Y_{t-2}-\hat{\Y}^{t-2}_{t-J}}_\infty\nonumber\\
		&&\phantom{\quad}\vdots\nonumber\\
		&&\leq d^J\,\norm{\Y_{t-J}-\bar{\Y}_{t-J}}_\infty\nonumber\\
		&&\leq d^J\sum_{j=0}^{t-J-1}d^j\norm{\boldsymbol{\xi}_{t-J-j}}_\infty\nonumber
	\end{eqnarray}
	and the last inequality comes from the previous recursion. It is immediate to see that, for $t-J<0$, $\norm{\Y_t-\hat{\Y}^t_{t-J}}_\infty\leq d^{J-t}\norm{\Y_0-\bar{\Y}_0}_\infty=0$.
\end{proof}

\subsection{Proof of Lemma~\ref{limits_log}} \label{proof log}

The proof is analogous to that of Lemmas~\ref{limits}-\ref{limits 2}. We will point out only the parts which differ substantially.
\begin{lemma}\rm
	\label{construction_log}
	Define $\mathbf{Z}_t=\log(1+\Y_t)$ and set $d=\norm{\beta_1}+\norm{\beta_2}$. Rewrite the linear model \eqref{log_lin2} as $\mathbf{Z}_t=\nub_t+\boldsymbol{\psi}_t$, for $t\geq0$, where $\nub_t=\betab_0+\G\mathbf{Z}_{t-1}$. Define the  predictors $\hat{\mathbf{Z}}^t_{t-J}=\hat{\nub}^t_{t-J}+\boldsymbol{\psi}_t$, where $\hat{\nub}^t_{t-J}=\betab_0+\G\hat{\mathbf{Z}}^{t-1}_{t-J}$ and $\bar{\mathbf{Z}}^t_{t-J}$ analogously to Lemma~\ref{construction}. Then, $\norm{\mathbf{Z}_t-\hat{\mathbf{Z}}^t_{t-J}}_\infty\leq d^J\sum_{j=0}^{t-J-1}d^j\norm{\boldsymbol{\psi}_{t-J-j}}_\infty$.
\end{lemma}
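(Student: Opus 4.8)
The plan is to mirror the proof of Lemma~\ref{construction}, exploiting the fact that after the log-transformation the recursion for $\mathbf{Z}_t=\log(\mathbf{1}_N+\Y_t)$ is exactly affine, namely $\mathbf{Z}_t=\betab_0+\G\mathbf{Z}_{t-1}+\boldsymbol{\psi}_t$. This is in fact cleaner than the linear case treated in Lemma~\ref{construction}: the map $\mathbf{z}\mapsto\betab_0+\G\mathbf{z}$ has constant Jacobian $\G$, so no mean value theorem with an intermediate evaluation point is needed, and the contraction factor is supplied directly by the matrix norm of $\G$.

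First I would establish the auxiliary bound for the frozen predictor $\bar{\mathbf{Z}}_t$, defined analogously to $\bar{\Y}_t$ in Lemma~\ref{construction} (equal to $\betab_0+\G\bar{\mathbf{Z}}_{t-1}$ for $t>0$ and to $\mathbf{Z}_0$ for $t\leq 0$). Subtracting the two recursions gives $\mathbf{Z}_t-\bar{\mathbf{Z}}_t=\G(\mathbf{Z}_{t-1}-\bar{\mathbf{Z}}_{t-1})+\boldsymbol{\psi}_t$, whence $\norm{\mathbf{Z}_t-\bar{\mathbf{Z}}_t}_\infty\leq \vertiii{\G}_\infty\norm{\mathbf{Z}_{t-1}-\bar{\mathbf{Z}}_{t-1}}_\infty+\norm{\boldsymbol{\psi}_t}_\infty$. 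Iterating down to $t=0$ and using $\mathbf{Z}_0=\bar{\mathbf{Z}}_0$ yields $\norm{\mathbf{Z}_t-\bar{\mathbf{Z}}_t}_\infty\leq\sum_{j=0}^{t-1}d^j\norm{\boldsymbol{\psi}_{t-j}}_\infty$, once the contraction $\vertiii{\G}_\infty\leq d$ is in place.

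Next I would treat the main predictor $\hat{\mathbf{Z}}^t_{t-J}$. Because for $s>t-J$ it is driven by the \emph{same} innovations $\boldsymbol{\psi}_s$ as $\mathbf{Z}_s$, the noise cancels upon subtraction: $\mathbf{Z}_t-\hat{\mathbf{Z}}^t_{t-J}=\nub_t-\hat{\nub}^t_{t-J}=\G(\mathbf{Z}_{t-1}-\hat{\mathbf{Z}}^{t-1}_{t-J})$. Applying $\norm{\cdot}_\infty$ and iterating this pure contraction $J$ times gives $\norm{\mathbf{Z}_t-\hat{\mathbf{Z}}^t_{t-J}}_\infty\leq d^J\norm{\mathbf{Z}_{t-J}-\bar{\mathbf{Z}}_{t-J}}_\infty$, and substituting the auxiliary bound from the previous step produces the claimed inequality. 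The boundary case $t-J<0$ is immediate, since there both processes reduce to $\bar{\mathbf{Z}}$ and the difference vanishes.

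The only point requiring care — the analogue of the obstacle in Lemma~\ref{construction} — is the verification of the contraction $\vertiii{\G}_\infty\leq d<1$. Unlike the linear model, the log-linear coefficients $\beta_1,\beta_2$ need not be nonnegative, so I would argue $\vertiii{\G}_\infty=\vertiii{\beta_1\W+\beta_2\I_N}_\infty\leq\norm{\beta_1}\vertiii{\W}_\infty+\norm{\beta_2}=\norm{\beta_1}+\norm{\beta_2}=d$, where $\vertiii{\W}_\infty=1$ by the row-normalization of $\W$. The standing hypothesis $\norm{\beta_1}+\norm{\beta_2}<1$ of Lemma~\ref{limits_log} (equivalently of Theorem~\ref{Thm. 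Ergodicity of log-linear model N}) then gives $d<1$, so all the geometric sums converge exactly as in the linear case and the bound is finite.
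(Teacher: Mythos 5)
Your proof is correct and is essentially the paper's own argument: the paper omits the proof of this lemma, stating only that it is ``analogous to Lemma~\ref{construction},'' and your two-step argument (first bounding $\norm{\mathbf{Z}_t-\bar{\mathbf{Z}}_t}_\infty$ by the geometric sum of innovations, then iterating the pure contraction $\mathbf{Z}_t-\hat{\mathbf{Z}}^t_{t-J}=\G(\mathbf{Z}_{t-1}-\hat{\mathbf{Z}}^{t-1}_{t-J})$ down to the frozen predictor) is exactly that analogue, with the correct adjustment $\vertiii{\G}_\infty\leq\norm{\beta_1}+\norm{\beta_2}=d$ to handle possibly negative coefficients.
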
 
\begin{proof}
	The proof is analogous to Lemma~\ref{construction} and therefore is omitted.
\end{proof}
\subsubsection{Proof of (1)}
Set $\hat{\Y}^t_{t-J}=\exp(\hat{\nub}_{t-J}^t)+\boldsymbol{\xi}_t$, $\W_t=(\mathbf{Z}_t, \mathbf{Z}_{t-1}, \Y_t)^T$, $\hat{\W}^t_{t-J}=(\hat{\mathbf{Z}}^{t}_{t-J}, \hat{\mathbf{Z}}^{t-1}_{t-J}, \hat{\Y}^t_{t-J})^T\coloneqq f(\boldsymbol{\psi}_t,\dots,\boldsymbol{\psi}_{t-J})$. Consider the triangular array $\left\lbrace g_{Nt}(\W_t): 1\leq t\leq T_N; N\geq1\right\rbrace $, where $T_N\to\infty$ as $N\to\infty$. For any $\etab\in\R^m$, $g_{Nt}(\W_t)=N^{-1}\etab^T\frac{\partial\nub_t^T}{\partial\thetab}\textbf{D}_t\frac{\partial\nub_t}{\partial\thetab^T}\etab=\sum_{r=1}^{m}\sum_{l=1}^{m}\eta_r\eta_lh_{rl,t}$. Then,
\begin{align}
	\norm{h_{22,t}-h_{22,t-J}^{t}}&=\norm{\frac{1}{N}\sum_{i=1}^{N}(\w_i^T\mathbf{Z}_{t-1})^2\exp(\nu_{i,t})-\frac{1}{N}\sum_{i=1}^{N}(\w_i^T\hat{\mathbf{Z}}^{t-1}_{t-J})^2\exp(\hat{\nu}_{i,t})} \nonumber\\
	&\leq\frac{\beta_0^{-4}}{N}\sum_{i=1}^{N}c^{*}_{1,i,t} \norm{\exp(\nu_{i,t})-\exp(\hat{\nu}_{i,t})}+\frac{\beta_0^{-4}}{N}\sum_{i=1}^{N}c_{2,i,t}\norm{ \sum_{j=1}^{N}w_{ij}(Z_{j,t-1}-\hat{Z}_{j,t-1})}\nonumber\\
	&\leq\frac{\beta_0^{-4}}{N}\sum_{i=1}^{N}c^{*}_{1,i,t} \exp(2\norm{\nu_{i,t}}+\norm{\hat{\nu}_{i,t}})\norm{\nu_{i,t}-\hat{\nu}_{i,t}}+\frac{\beta_0^{-4}}{N}\sum_{i=1}^{N}c_{2,i,t}\norm{ \sum_{j=1}^{N}w_{ij}(Z_{j,t-1}-\hat{Z}_{j,t-1})}\nonumber\,,
\end{align}
where $c^{*}_{1,i,t}=(\w_i^T\mathbf{Z}_{t-1})^2$ 
and $c_{2,i,t}=\exp(\hat{\nu}_{i,t})(\w_i^T\mathbf{Z}_{t-1}+\w_i^T\hat{\mathbf{Z}}^{t-1}_{t-J})$. The second inequality follows by $\norm{\exp(x)-\exp(y)}=\norm{\exp(y)(\exp(x-y)-1)}$ and $\norm{(\exp(x-y)-1)}\leq\exp(\norm{x-y})\norm{x-y}\leq\exp(\norm{x}+\norm{y})\norm{x-y}$, for $x,y\in\R$. Set $1/a+1/b=1/2$ and $1/q+1/p+1/n=1/a$. It is easy to show that $\max_{1\leq i\leq N}\lnorm{(\w_i^T\mathbf{Z}_{t-1})^2}_q\leq\max_{1\leq i\leq N}\lnorm{Z_{i,t}^2}_q$, by Cauchy-Schwartz inequality. Moreover, $\sup_{i\geq 1}\lnorm{Z_{i,t}}_q\leq \sup_{i\geq 1}\lnorm{Y_{i,t}}_q$ and $\sup_{i\geq 1}\lnorm{\nu_{i,t}}_q\leq\norm{\beta_0}+(\norm{\beta_1}+\norm{\beta_2})\sup_{i\geq 1}\lnorm{Z_{i,t}}_q$. All these quantities are bounded by Proposition~\ref{finite_moment_log}. 
Similarly to the linear model, Lemma~\ref{construction_log} entails $\sup_{i\geq1}\lnorm{Z_{i,t}-\hat{Z}_{i,t}}_b=\sup_{i\geq1}\lnorm{\nu_{i,t}-\hat{\nu}_{i,t}}_b\leq d^J\sum_{j=0}^{t-J-1}d^j\sup_{i\geq1}\lnorm{\psi_{i,t}}_b\leq d^JC$, where $C$ is a constant, $\sup_{i\geq1}\lnorm{\hat{Z}_{i,t}}_q\leq2\norm{\beta_0}\sum_{j=0}^{\infty}d^j+\sum_{j=0}^{\infty}d^j\sup_{i\geq1}\lnorm{\psi_{i,t}}_q<\Delta<\infty$. Similarly, $\sup_{i\geq1}\lnorm{\hat{\nu}_{i,t}}_q$ are bounded. 
Similarly to the Proof of Proposition~\ref{finite_moment_log}, the existence of the latter moments allows to perform a Taylor approximation for the function $\exp(q\norm{\hat{\nu}_{i,t}})$ of any arbitrary order on $\norm{\hat{\nu}_{i,t}}$, around its mean, leading to 
$\lnorm{\exp(\norm{\hat{\nu}_{i,t}})}_q<\infty$, $\forall q\geq 1$, $\sup_{i\geq1}\lnorm{c_{1,i,t}}_q<c_1<\infty$ and $\sup_{i\geq1}\lnorm{c_{2,i,t}}_q<c_2<\infty$. Then, $\lnorm{h_{22,t}-h_{22,t-J}^{t}}_2\leq c_{22}\nu_J$ is $L^p$-NED and, by Assumption B1$^L$, the conclusion follows as for the linear model. The proof of existence of the matrix $\mathbf{H}$ in \eqref{H,B div N log}, follows by B2$^L$ and is a special case of the existence of the matrix $\mathbf{B}$, showed in the next point.\qed
\subsubsection{Proof of (2)}
Let $\tilde{g}_{Nt}(\W_t)=N^{-1}\etab^T\frac{\partial\nub_t^T}{\partial\thetab}\boldsymbol{\Sigma}_t\frac{\partial\nub_t}{\partial\thetab^T}\etab=\sum_{r=1}^{m}\sum_{l=1}^{m}\eta_r\eta_lb_{rl,t}$, where $\boldsymbol{\Sigma}_t=\E(\boldsymbol{\xi}_{t}\boldsymbol{\xi}_{t}^T|\Fb^N_{t-1})$, with $\boldsymbol{\xi}_{t}=\Y_t-\exp(\nub_t)=\hat{\Y}^{t}_{t-J}-\exp(\hat{\nub}^{t}_{t-J})$, since $\E(\hat{\Y}^{t}_{t-J}|\Fb^N_{t-1})=\exp(\hat{\nub}^{t}_{t-J})$. Working analogously as before
\begin{align}
	\norm{b_{22,t}-b_{22,t-J}^{t}}\leq
	&\frac{1}{N}\sum_{i,j=1}^{N}\norm{\sigma_{ijt}}\norm{\left( n_{1,i,t}+n_{2,i,t}\right)    \sum_{j=1}^{N}w_{ij}(Y_{j,t-1}-\hat{Y}_{j,t-1})}  \nonumber\,,
\end{align}
where $n_{1,i,t}+n_{2,i,t}=\w_i^T\mathbf{Z}_{t-1}+w^T_i\hat{\mathbf{Z}}^{t-1}_{t-J}$, $\sup_{i\geq1}\lnorm{n_{1,i,t}+n_{2,i,t}}_q<\Delta<\infty$ and $N^{-1}\sum_{i,j=1}^{N}\lnorm{\sigma_{ijt}} _a<\lambda<\infty$, where $\lambda=\left\lbrace C^{1/a}_a, 4\Phi C_a^{1/(2a)}\right\rbrace$, for B4$^L$, similarly to the linear model, proving $L^p$-NED. We prove the existence of the matrix $\B$ as in \eqref{H,B div N log}, using the properties of the network (B2$^L$). $B_{11}/N=N^{-1}\E(\mathbf{1}^T_N\boldsymbol{\Sigma}_t\mathbf{1}_N)=N^{-1}\sum_{i=1}^{N}\E(\sigma_{iit})+N^{-1}\sum_{i\neq j}\E(\xi_{i,t}\xi_{j,t})=\mu_y+B_{11b}/N\to \mu_y+\varsigma\coloneqq \mu_y^*$, as $N\to\infty$, by B3$^L$. $B_{12}/N=B_{12a}/N+B^*_{12b}/N$, where $B_{12a}=\mu B_{11}\to \mu\mu^*_y$ and $B^*_{12b}/N=H_{12b}/N+B_{12b}/N$, where $H_{12b}/N\to l_1$, by B3$^L$, and $B_{12b}/N\to 0$, by B2$^L$, similarly to the linear model, as $N\to\infty$. $B_{22}/N=B_{22a}/N+B_{22b}/N+B_{22c}/N+B_{22d}/N$, where $B_{22a}/N=\mu^2 B_{11}\to \mu^2\mu^*_y$, $B_{22b}/N=B_{22c}/N=\mu B^*_{12b}/N\to \mu l_1$, and finally $B_{22d}=N^{-1}\text{tr}[\Delta^L(0)]\to g_5$, by, B3$^L$, as $N\to\infty$. The other elements follow similarly. The proof of asymptotic normality is established in the same fashion of the linear model and therefore is omitted.\qed
\subsubsection{Proof of (3)}
Consider the third derivative
\begin{equation}
	\frac{\partial^3l_{i,t}(\thetab)}{\partial\thetab_j\partial\thetab_l\partial\thetab_k}=2Y_{i,t}\left( \frac{\partial\nu_{i,t}(\thetab)}{\partial\thetab_j}\frac{\partial\nu_{i,t}(\thetab)}{\partial\thetab_l}\frac{\partial\nu_{i,t}(\thetab)}{\partial\thetab_k}\right)\coloneqq m_{i,t}\,. \nonumber
\end{equation}
Take, for example, the case $\thetab_j^*=\thetab_l^*=\thetab_k^*=\beta_1$,
\begin{equation}
	\frac{1}{N}\sum_{i=1}^{N}\frac{\partial^3l_{i,t}(\thetab)}{\partial\beta_1^3}=\frac{1}{N}\sum_{i=1}^{N}2Y_{i,t}\left( \w_i^T\mathbf{Z}_{t-1}\right)^3\coloneqq\frac{1}{N}\sum_{i=1}^{N}m_{i,t}\,.\nonumber
\end{equation}
The rest of the proof can be derived analogously to the proof of Lemma~\ref{limits}. We omit the details.\qed

\subsection{Proof of Theorem~\ref{can2}} \label{proof CAN}

Recall the quasi log-likelihood \eqref{log-lik} and set $\mathcal{K}_{N}(\delta)=\left\lbrace \thetab : \norm{\thetab-\thetab_0}_2\leq \delta/\sqrt{NT_N}\right\rbrace $ a compact neighborhood of $\thetab_0$, for any $\delta>0$. If $\thetab^*$ lies between $\thetab$ and $\thetab_0$, a Taylor expansion gives
\begin{align}
	l_{NT_N}(\thetab)-l_{NT_N}(\thetab_0)&=(\thetab-\thetab_0)^T\mathbf{S}_{NT_N}(\thetab_0)-\frac{1}{2}(\thetab-\thetab_0)^T\mathbf{H}_{NT_N}(\thetab^*)(\thetab-\thetab_0) \nonumber \\
	&=(\thetab-\thetab_0)^T\mathbf{S}_{NT_N}(\thetab_0)-\frac{1}{2}(\thetab-\thetab_0)^T\left[\mathbf{H}_{NT_N}(\thetab^*)-\mathbf{H}_{NT_N}(\thetab_0)\right] (\thetab-\thetab_0) \nonumber \\
	&\quad\,\, -\frac{1}{2}(\thetab-\thetab_0)^T\mathbf{H}_{NT_N}(\thetab_0)(\thetab-\thetab_0) \nonumber \\
	&=I_{1,NT_N}+I_{2,NT_N}+I_{3,NT_N}\,, \nonumber 
\end{align}
where $I_{1,NT_N}\leq \norm{\mathbf{S}_{NT_N}(\thetab_0)}_2\delta/\sqrt{NT_N}$, $I_{3,NT_N}\leq-1/2\delta^2/(NT_N)\lambda_{\min}(\mathbf{H}_{NT_N}(\thetab_0))$, where $\lambda_{\min}$ denotes  the minimum eigenvalue of a matrix, and $I_{2,NT_N}\xrightarrow{p}0$, as $\left\lbrace N,T_N\right\rbrace \to \infty$, by an application of mean value theorem and Lemma~\ref{limits}. By the continuous mapping theorem and Lemma~\ref{limits} we have that $\lambda_{\min}(\mathbf{H}_{NT_N}(\thetab_0)/(NT_N))\xrightarrow{p}\lambda_{\min}(\mathbf{H}(\thetab_0))>0$. Moreover, $\E\norm{\mathbf{S}_{NT_N}(\thetab_0)/\sqrt{NT_N}}^2_2\leq C<\infty$ by Lemma~\ref{limits 2}. By combining all the above and following \citet[Proof of Thm.~1, p.~1224]{fokianos_tjostheim_2012nonlinear} we obtain that there exists, with probability
tending to one, a solution to the system $\mathbf{S}_{NT_N}(\thetab)=\boldsymbol{0}_m$, denoted by  $\hat{\thetab}$, in the interior of $\mathcal{K}_{N}(\delta)$. Since all elements of \eqref{H_T} are positive,  we have $\boldsymbol{\nu}^T\mathbf{H}_{NT_N}(\thetab_0)\boldsymbol{\nu}>0$, for any non null $\boldsymbol{\nu}\in\R^m$. Then, $\hat{\thetab}$ is unique solution in the interior of $\mathcal{K}_{N}(\delta)$. The same argument applies for any $0<\delta_1<\delta$, i.e. there exists with probability
tending to one a solution to the score equations in $\mathcal{K}_{N}(\delta_1)$. But $\hat{\thetab}$ is the unique
solution to the score equations in $\mathcal{K}_{N}(\delta)$ and therefore lies in $\mathcal{K}_{N}(\delta_1)$ with probability
tending to one. Then $\hat{\thetab}$ is consistent. The asymptotic normality follows by a Taylor expansion of the score and Lemmas~\ref{limits}-\ref{limits 2}. We omit the details. \qed

\subsection{Proof of Theorem \ref{covariance}} \label{proof covariance}
For any deterministic non-null vector $\etab\in\R^m$, $\norm{(NT)^{-1}\etab^T\mathbf{H}_{NT}(\hat{\thetab})\etab-(NT)^{-1}\etab^T\mathbf{H}_{NT}(\thetab_0)\etab}\xrightarrow{p} 0$, by the consistency of the QMLE (Theorem~\ref{can2}) and the continuous mapping theorem (CMT); this result coupled with condition 1 in Lemma~\ref{limits}, provides  $(NT)^{-1}\mathbf{H}_{NT}(\hat{\thetab})\xrightarrow{p}\mathbf{H}(\thetab_0)$. For the information matrix, note that $\norm{(NT)^{-1}\etab^T\hat{\B}_{NT}(\hat{\thetab})\etab-\etab^T\B(\thetab_0)\etab}\leq (I)+(II)+(III)$, where, by Theorem~\ref{can2} and CMT, it holds that 
$(I)=\norm{(NT)^{-1}\etab^T\hat{\B}_{NT}(\hat{\thetab})\etab-(NT)^{-1}\etab^T\hat{\B}_{NT}(\thetab_0)\etab}\xrightarrow{p} 0$, 
similarly as above. Now consider $(II)=\norm{(NT)^{-1}\etab^T\hat{\B}_{NT}(\thetab_0)\etab-(NT)^{-1}\etab^T\B_{NT}(\thetab_0)\etab}$. Define $\left\lbrace J_{Nt}(\thetab_0): 1\leq t\leq T_N, N\geq1\right\rbrace $, the triangular array $J_{Nt}(\thetab_0)=N^{-1}\etab^T\hat{\mathbf{B}}_{Nt}(\thetab_0)\etab-N^{-1}\etab^T \mathbf{B}_{Nt}(\thetab_0)\etab$, where $\mathbf{B}_{Nt}$ and $\hat{\mathbf{B}}_{Nt}$ are the single summands of \eqref{B_T} and \eqref{B_hat}, respectively. Furthermore, set $\U_t(\thetab_0)=(\Y_t-\lambdab_t(\thetab_0))(\Y_t-\lambdab_t(\thetab_0))^T$, so that, in \eqref{B_T}, $\mathbf{\Sigma}_t(\thetab_0)=\E(\U_t(\thetab_0)|\Fb_{t-1})$. One can easily see that
\begin{equation}
	J_{Nt}(\thetab_0)=\frac{1}{N}\etab^T\frac{\partial\lambdab^T_{t}(\thetab_0)}{\partial\thetab}\mathbf{D}^{-1}_t(\thetab_0)\left[ \U_t(\thetab_0)-\mathbf{\Sigma}_t(\thetab_0)\right] \mathbf{D}^{-1}_t(\thetab_0)\frac{\partial\lambdab_{t}(\thetab_0)}{\partial\thetab^T}\etab \nonumber
\end{equation}
is a martingale difference array and, since $\E(J_{Nt}(\thetab_0))^2\leq 4 \E(\etab^T\mathbf{s}_{Nt})^4<\infty$, by the results of Theorem~\ref{can2}, the sequence $J_{Nt}(\thetab_0)$ is a uniformly integrable $L^1$-mixingale. By \citet[Thm.~2]{and1988}, $T_N^{-1}\sum_{t=1}^{T_N}J_{Nt}(\thetab_0)\xrightarrow{p} 0$ as $N\to\infty$ (and $T_N\to\infty$); consequently $(II)\xrightarrow{p}0$. Finally, $(III)\xrightarrow{p}0$, as it is identical to the result  of Lemma~\ref{limits 2}-(1). An application of Slutsky's lemma yields the result. The proof is analogous for the log-linear model \eqref{log_lin2}, therefore is omitted. \qed

\subsection{Proof of Corollary~\ref{Cor mis}} \label{SUPP mis}
To prove the result note that by standard algebra and  $\log(x)-\log(y)\leq \norm{x-y}/\min(x,y)$ it follows that  $l_{t}^*(\thetab) - l_{t}(\thetab) \leq \sum_{i=1}^{N}(Y_{i,t}/\beta_0+1)\norm{\lambda_{i,t}(\thetab) - \lambda^*_{i,t}(\thetab)}$. Moreover,  $\norm{\lambda_{i,t}(\thetab) - \lambda^*_{i,t}(\thetab)} \leq \beta_1 \sum_{j=1}^{N}\norm{w_{ij}-w^*_{ij}} Y_{j,t-1}$. Then $l_{t}^*(\thetab) - l_{t}(\thetab) \leq \sum_{i,j=1}^{N} \norm{w_{ij}-w^*_{ij}} \phi_{i,j,t}$ where $\phi_{i,j,t}= \beta_1 Y_{j,t-1} (Y_{i,t}/\beta_0+1)$. Since $\E\norm{\phi_{i,j,t}} < \infty$ we have that $ \phi_{i,j,t}=O_p(1) $ and recalling that $\Delta(\W, \W^*)=o(1)$ leads to $\norm{l_{t}^*(\thetab) - l_{t}(\thetab)}\xrightarrow{p}0$, as $\left\lbrace N,T_N \right\rbrace \to \infty$. By rewriting $l_{NT}^*(\thetab) - l_{NT}(\thetab_0) = l_{NT}^*(\thetab) - l_{NT}(\thetab) + l_{NT}(\thetab)  - l_{NT}(\thetab_0) $ the proof follows analogously to the proof of Theorem~\ref{can2} in \ref{proof CAN}. \qed

\section{Empirical verification of assumptions B2-B3} \label{SUPP network}
This section illustrates some numerical evidence verifying  the network condition B2 and the convergence of the limits assumed in B3-B3$^\prime$. Starting from B2.1, a  sufficient condition for both irreducibility and aperiodicity of the Markov chain, whose states are the nodes of the network $\left\lbrace 1,\dots, N \right\rbrace $, is that the network is always fully connected after a finite number of steps. Moreover, in B2.2, a sufficiently slow diverging rate of $\lambda_{\max}(\W^*)$ is required. These results have already been verified empirically in \citet[Supp. Mat., Sec.~7]{zhu2017}, therefore they are omitted here. The remaining conditions to be verified for B2 are i) $\lambda_{\max}(\Sigmab)=\mathcal{O}((\log N)^\delta)$, for some $\delta\geq 1$, and ii) $\lambda_{\max}(\Sigmab)\sum_{i=1}^{N}\pi_i^2\to0$ as $N\to\infty$; where $\pib$ is the stationary distribution of the Markov chain defined in B2.1, $\pib=\lim_{k\to\infty}\W^k$. To this aim, we consider the same simulation setting of Sec.~\ref{simulations}, with  the copula parameter $\rho=0.5$, $k=1000$, $T=200$ and $N=(200,250,300,350,400)$. The number of simulations is $ S=500 $. Condition i) requires that $\lambda_{\max}(\Sigmab)< C (\log N)^\delta$, where recall that $C$ is a generic constant. We approximate $\Sigmab$ by using the sample counterpart $T^{-1}\sum_{t=1}^{T}\norm{\xi_t\xi^T_t}_v$ and compute the maximum eigenvalue, to obtain $\mu_{\xi}(N,\delta)=\lambda_{\max}(\Sigmab)/(\log N)^\delta$, for various values of $\delta$. The experiment is replicated $S$ times for $\mu_{\xi}(N,\delta)$, whose values are box-plotted in the left plot of Figure~\ref{emp_lim}, with $\delta=8$; we observe that its values are well bounded by a positive constant, when $N$ grows. We obtain similar results for $\delta=(6,7)$. When $\delta>8$, the approximation improves. Once i) is verified, condition ii) is satisfied provided that $\sum_{i=1}^{N}\pi_i^2$ converges to 0 faster than $\mathcal{O}((\log N)^\delta)$, say $\sum_{i=1}^{N}\pi_i^2=\mathcal{O}(1/N^\gamma)$, where $0<\gamma<1$. Then, ii) will be equivalent to show that $\mu_{\pi}(N,\gamma)=N^\gamma\sum_{i=1}^{N}\pi_i^2<C$. The second boxplot in Figure~\ref{emp_lim}, shows that this is indeed the case, when $N$ is increasing, for example, with $\gamma=1/2$. 

Regarding the limit convergence assumptions specified by B3-B3$^\prime$, a similar numerical verification as above can be studied. We consider the most complex element  $N^{-1}\textrm{tr}[ \boldsymbol{\Delta}(0)] $, where $\textrm{tr}[ \boldsymbol{\Delta}(0)] $ is substituted by the empirical counterpart, $\textrm{tr}[\tilde{\boldsymbol{\Delta}}(0)]=T^{-1}\sum_{t=1}^{T}[(\Y_{t-1}-\bar{\Y})^T\W^T\mathbf{D}^{-1}_t\mathbf{\Sigma}_t\mathbf{D}^{-1}_t\W(\Y_{t-1}-\bar{\Y})]$, where $\bar{\Y}=T^{-1}\sum_{t=1}^{T}\Y_{t-1}$, then $\mu_{\Delta}(N)=N^{-1}\textrm{tr}[\tilde{\boldsymbol{\Delta}}(0)]$; analogous results can be derived for the other cases. The values of $\mu_{\Delta}(N)$ computed over 500 simulations are box-plotted in the right graph of Figure~\ref{emp_lim}. As $N$ becomes larger, we see the assumed convergence.

Similar empirical studies have been performed also for the log-linear model \eqref{log_lin2}, for B2$^L$-B3$^L$, with obvious rearrangement of the notation. The results are plotted in Figure~\ref{emp_lim_log} and analogous comments hold. Moreover, all the results are in line to what was found in the special case of OLS, with $\mathbf{\Sigma}_t=\I_N$, and reported by \citet[Supp. Mat., Sec~7]{zhu2017}.

\begin{figure}[H]
	\begin{center}
		\includegraphics[width=0.9\linewidth]{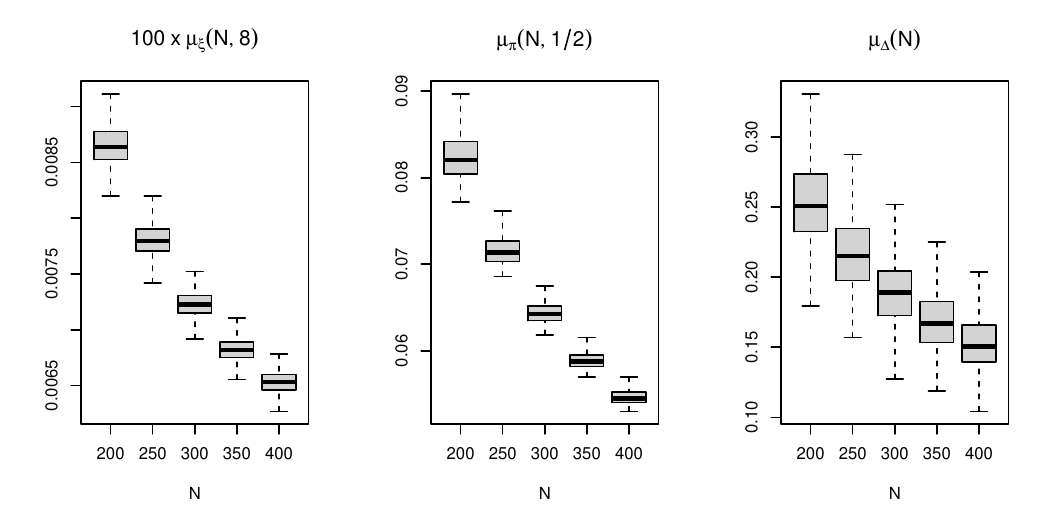}
		\caption{Left: $100\times\mu_{\xi}(N,8)$ versus $N$. Center: $\mu_{\pi}(N,1/2)$ versus $N$. Right: $\mu_{\Delta}(N)$ versus $N$. Simulations are based on the linear model \eqref{lin2}.}%
		\label{emp_lim}
	\end{center}
\end{figure}

\begin{figure}[H]
	\begin{center}
		\includegraphics[width=0.9\linewidth]{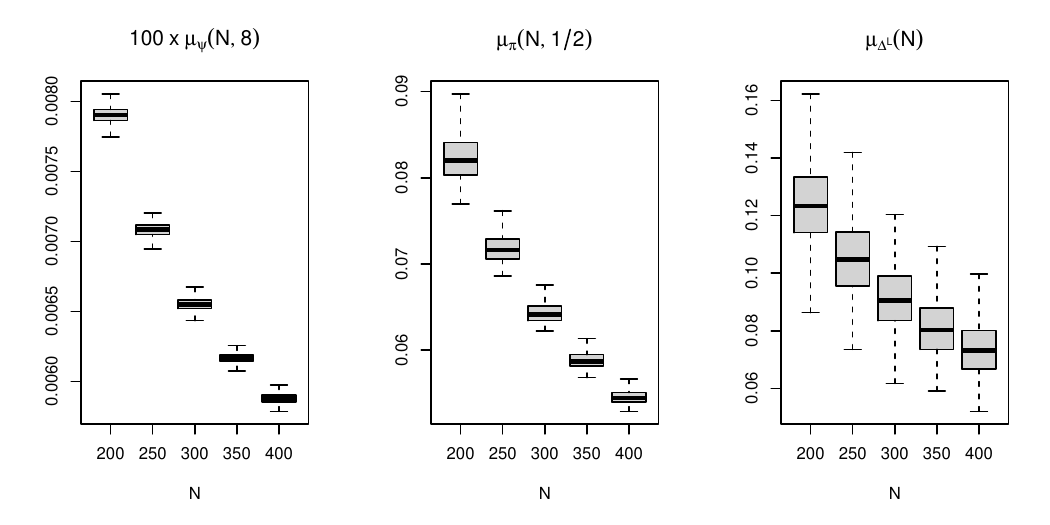}
		\caption{Left: $100\times\mu_{\psi}(N,8)$ versus $N$. Center: $\mu_{\pi}(N,1/2)$ versus $N$. Right: $\mu_{\Delta^L}(N)$ versus $N$. Simulations are based on the log-linear model \eqref{log_lin2}.}%
		\label{emp_lim_log}
	\end{center}
\end{figure}
\section{Further results for assumption B4} \label{SUPP weak dependence}
The condition \eqref{weak dependence} in assumption B4 depends on the copula construction $C(\dots, \rho)$ specified on the exponential waiting times $X_{i,l}$ of the data generating process discussed in Section~\ref{Sec:Properties of order 1 model}. This structure of dependence, in turn, is transferred to the generated conditional Poisson random variables $Y_{i,t}$, through a non-deterministic transformation. While the copula is invariant to one-to-one deterministic transformations, we do not have such a transformation in this case. Then, it is not clear how to establish the form of the conditional covariance (and correlation) of the counts, 
and so assumption B4 cannot be addressed theoretically. However, this problem appears in other settings. Indeed, similar difficulties arise even in alternative frameworks, like imposing the copula construction directly on Poisson (conditional) marginals, see the discussion by \cite{inouye_2017}, among others. Furthermore, direct joint multivariate count distributions implicitly introduce strong constraints to the correlation structure of the counts. For instance the multivariate conditional Poisson distribution described in \citet[Sec.~2.1]{fokianos_2021}, has the property  $\textrm{Cov}(Y_{i,t}, Y_{j,t}\left| \right. \Fb_{t-1} )=\lambda_0>0$, $\forall i\neq j$, with $\lambda_0$ being a constant parameter. 

It is still possible to give an empirical evidence where B4 is satisfied. For example, suppose the selected copula is Gaussian, with AR-1 correlation matrix, as specified in Sec.~\ref{simulations}. Recall that the correlation matrix of such copula has single element $R_{ij}=\rho^{\norm{i-j}}$. Clearly, by fixing $i=1$, for example, and $j=1,\dots, N$, we obtain a geometrically decaying pattern of correlations. However, $R_{ij}$ do not correspond to correlations of the observed count random variables, nor represent the correlations of the exponential inter-arrival times, in general. Then, we want to explore the impact of the  copula correlation coefficients on the correlations of the conditional exponential random variables first, and, in turn, on the conditional Poisson ones.  
In this empirical study the  network structure is given by the Erd\H{o}s-R\'{e}nyi Model (ER), but analogous results were obtained for the SBM of Example~\ref{sbm}, with $K=\left\lbrace 2, 5\right\rbrace $, and therefore are omitted.

\begin{ex} \rm \label{erdos-renyi} (Erd\H{o}s-R\'{e}nyi Model). Introduced by \cite{erdos_1959} and \cite{gilbert_1959}, the network is constructed by connecting $N$ nodes randomly. Each edge is included in the graph with probability $p$, independently from every other edge. In this example we set  $p=\mathrm{P}(a_{ij}=1)=N^{-0.3}$.
\end{ex}  

Figure~\ref{exponentials_er} shows the results of a simulation study on the theoretical copula correlations specified versus the mean of empirical pairwise correlations, for the exponential random variables $X_{i,t,l}$, with $l=1,\dots,50$, obtained by averaging out the correlations matrices along the simulations.  The correlation structure of the copula appears to be essentially transferred to the correlations of the exponential waiting times. Figure~\ref{poissons_er} depicts, instead, the resulting mean empirical pairwise correlations of the conditional Poisson random variables $Y_{i,t}$, obtained from $S=100$ simulations. As expected, the correlation structure of the conditional exponential random variables does not correspond to the correlation structure of the conditional Poisson ones. However, when magnitudes of the network ($\beta_1$) and autoregressive ($\beta_2$) effects are not too large, a detected decaying pattern towards zero of the empirical correlations is still inherited by the counts. This means that a non increasing sequence $\left\lbrace \varphi_h \right\rbrace_{h=1,\dots,\infty}$, such that $\sum_{h=1}^{\infty} \varphi_h = \Phi<\infty$, satisfying \eqref{weak dependence}, can always be found for the correlations of the conditional Poisson variables. For example, the orange line in Table~\ref{poissons_er}, shows the graph of $\varphi_h=6/h^{1.001}$, which clearly satisfies B4. Similar outcomes, not presented here, have been found for the Student's t copula, with 2 and 10 degrees of freedom.
\begin{figure}
	\begin{center}
		\includegraphics[width=0.7\linewidth]{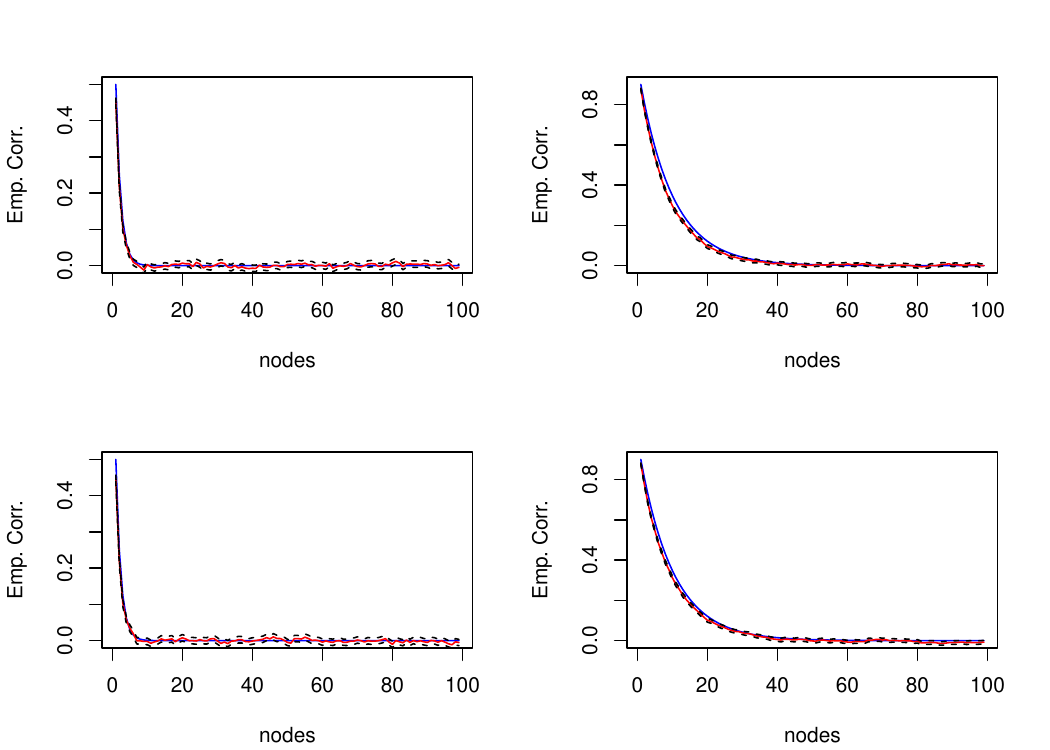}
		\caption{\small{Simulated pairwise correlations of $X_{1,t,l}$ with the other nodes $i=2,\dots,N=100$, for the linear model \eqref{lin2}, with $T=1000$ and $l=1,\dots,50$. Network generated by ER model. Top: $\thetab=(1, 0.5, 0.4)^T$. Bottom: $\thetab=(1, 0.3, 0.2)^T$. Left: $\rho=0.5$. Right: $\rho=0.9$. Blue line: theoretical correlations. Red line: mean of empirical correlations. Dashed lines: confidence bands at $5\%$.}}
		\label{exponentials_er}
	\end{center}
\end{figure}

\begin{figure}[H]
	\begin{center}
		\includegraphics[width=0.7\linewidth]{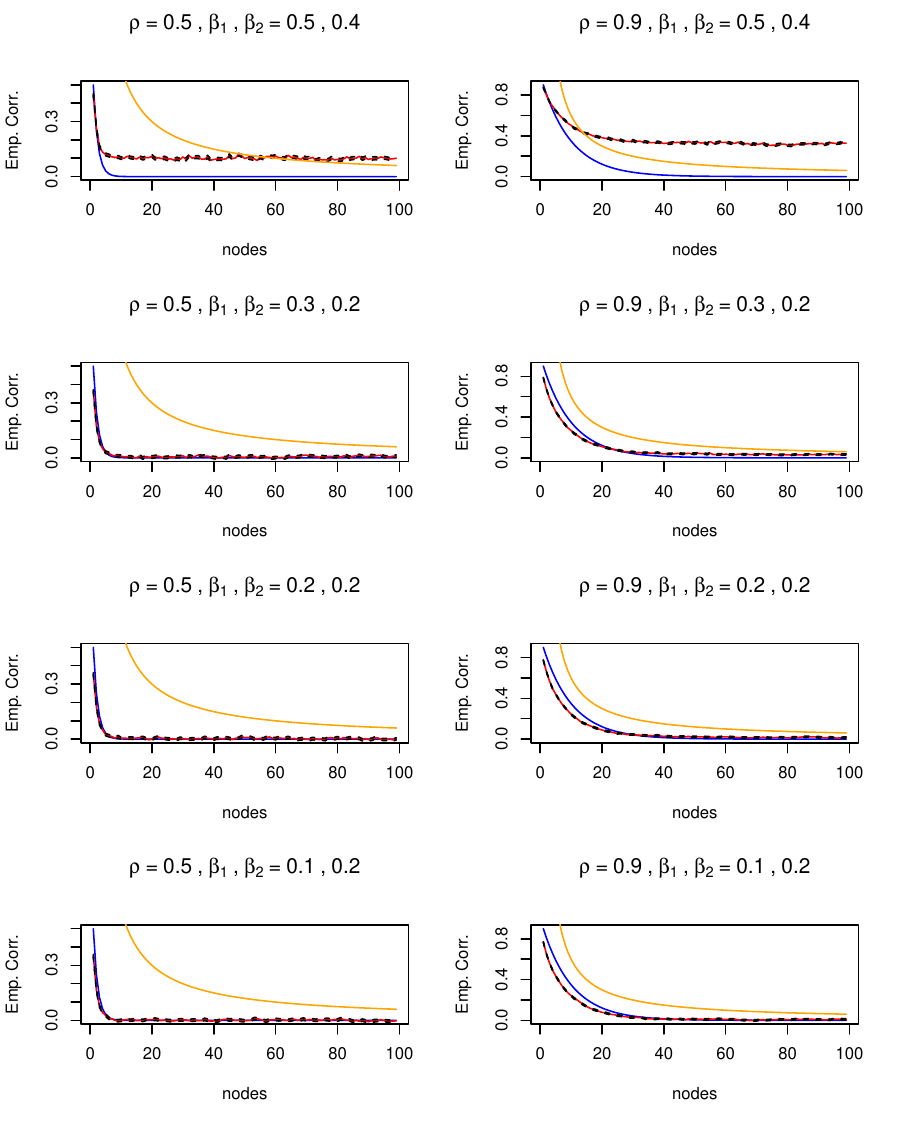}
		\caption{\small{Simulated pairwise correlations of $Y_{1,t}$ with the other nodes $i=2,\dots,N=100$, for the linear model \eqref{lin2}, with $T=1000$, $S=100$ and $\beta_0=1$. Network generated by ER model. Blue line: theoretical correlations. Red line: mean of empirical correlations. Dashed lines: confidence bands at $5\%$. Orange line: graph of $\varphi_h=6/h^{1.001}$, for $h=1,\dots,100$.}}
		\label{poissons_er}
	\end{center}
\end{figure}


\section{Additional  details about Lemma~\ref{limits 2}} \label{fourth moments}
Define the normalized random process $\X_t=\mathbf{D}_t^{-1/2}\xib_t$, so $X_{i,t}=(Y_{i,t}-\lambda_{i,t})/\sqrt{\lambda_{i,t}}$, such that $\E(X_{i,t})=0$ and $\E(X^2_{i,t})=1$. Recall \eqref{weak dependence}, for $i<j$, which can be rewritten as $\norm{\textrm{Cov}(X_{i,t}, X_{j,t}\left| \right. \Fb_{t-1} )}\leq \varphi_{j-i}$. Consider the following assumption.
\begin{itemize}
	\item[B5]  
	The sequence $\left\lbrace \varphi_h \right\rbrace_{h=1,\dots,\infty}$ defined in B4 is such that $\sum_{h=1}^{\infty} h \varphi_h=\Phi_1<\infty$ and, for $i<j<k<l$, {almost surely}
	\begin{equation}
		\norm{\textrm{Cov}(X_{i,t}, X_{j,t}X_{k,t}X_{l,t}\left| \right. \Fb_{t-1} )}\leq \varphi_{j-i}\,, \quad \norm{\textrm{Cov}(X_{i,t}X_{j,t}X_{k,t}, X_{l,t}\left| \right. \Fb_{t-1} )}\leq \varphi_{l-k}\,,
		\nonumber 
	\end{equation}
	\begin{equation}
		\norm{\textrm{Cov}(X_{i,t}X_{j,t}, X_{k,t}X_{l,t}\left| \right. \Fb_{t-1} )}\leq \varphi_{k-j} \,.\nonumber
	\end{equation}
\end{itemize}
\begin{proposition} \rm \label{Prop existence fourth moments}
	Consider model \eqref{lin2_p} and the score \eqref{score}. If conditions B4-B5 hold, then, for all non-null $\etab\in\R^m$,  $N^{-2}\E(\etab^T\mathbf{s}_{Nt})^4\leq C <\infty$, where $C\geq 0$ is a constant.
\end{proposition}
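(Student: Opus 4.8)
The plan is to write the chosen linear combination of the score as a weighted sum of the conditionally normalized innovations and then to reduce the fourth moment to a purely combinatorial covariance estimate. Following the notation preceding the statement, introduce $\X_t=\mathbf{D}_t^{-1/2}\xib_t$, so that $X_{i,t}=(Y_{i,t}-\lambda_{i,t})/\sqrt{\lambda_{i,t}}$ satisfies $\E(X_{i,t}\mid\Fb_{t-1})=0$, $\E(X_{i,t}^2\mid\Fb_{t-1})=1$, and, by the Poisson conditional moments together with $\lambda_{i,t}\ge\beta_0$, $\max_i\E(\norm{X_{i,t}}^r\mid\Fb_{t-1})\le C_r$ almost surely for every $r\in\N$. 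I would write $\etab^T\mathbf{s}_{Nt}=\sum_{i=1}^N a_{i,t}X_{i,t}$, where $a_{i,t}=\bigl(\etab^T(\partial\lambdab_t^T/\partial\thetab)\mathbf{D}_t^{-1/2}\bigr)_i$ is $\Fb_{t-1}$-measurable. The first step is to show that the weights have uniformly bounded moments of every order: since each entry of $\partial\lambda_{i,t}/\partial\thetab$ is a coordinate or a convex combination of the $\Y_{t-h}$ and $\lambda_{i,t}\ge\beta_0$, Proposition~\ref{finite_moment} yields $A:=\max_i\lnorm{a_{i,t}}_4^4<\infty$.

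Expanding the fourth power and conditioning on $\Fb_{t-1}$ gives
\[
N^{-2}\E(\etab^T\mathbf{s}_{Nt})^4=N^{-2}\sum_{i,j,k,l}\E\!\Bigl[a_{i,t}a_{j,t}a_{k,t}a_{l,t}\,\E\bigl(X_{i,t}X_{j,t}X_{k,t}X_{l,t}\mid\Fb_{t-1}\bigr)\Bigr].
\]
Because the conditional covariance bounds in B4--B5 are almost surely deterministic and the weights are $\Fb_{t-1}$-measurable, each summand is controlled by the relevant $\varphi$-bound times $\E\norm{a_{i,t}a_{j,t}a_{k,t}a_{l,t}}\le A$ (Hölder). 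It then suffices to bound $N^{-2}\sum_{i,j,k,l}b_{ijkl}$, where $b_{ijkl}$ is a deterministic bound on $\norm{\E(X_{i,t}X_{j,t}X_{k,t}X_{l,t}\mid\Fb_{t-1})}$, which I would do by classifying the tuples according to the number of distinct indices. Tuples with at most two distinct values (the patterns $X_a^4$, $X_a^3X_b$, $X_a^2X_b^2$) number $O(N^2)$ and, using the uniformly bounded conditional fourth moments of $\X_t$ together with Cauchy--Schwarz, contribute $O(N^2)$ with no decay required; these carry the leading term and fix the correct $N^2$ scaling. For three distinct values the factor is $X_a^2X_bX_c$; centering $X_a^2$ splits the conditional moment into $\E(X_bX_c\mid\Fb_{t-1})$, whose sum over the free index $a$ and over $b<c$ is $O(N^2)$ by B4, plus $\mathrm{Cov}(X_a^2,X_bX_c\mid\Fb_{t-1})$, controlled in the natural blocked form of B5 by a $\varphi$-factor in the gap separating node $a$ from $\{b,c\}$.

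The core is the four-distinct case, summed over $i<j<k<l$ with gaps $g_1=j-i$, $g_2=k-j$, $g_3=l-k$. Since each $X$ is conditionally centered, $\E(X_iX_jX_kX_l\mid\Fb_{t-1})$ equals simultaneously $\mathrm{Cov}(X_i,X_jX_kX_l\mid\Fb_{t-1})$, $\mathrm{Cov}(X_iX_jX_k,X_l\mid\Fb_{t-1})$, and $\mathrm{Cov}(X_iX_j,X_kX_l\mid\Fb_{t-1})+\E(X_iX_j\mid\Fb_{t-1})\E(X_kX_l\mid\Fb_{t-1})$; by B5 and B4 these give $b_{ijkl}\le\min\bigl(\varphi_{g_1},\varphi_{g_3},\varphi_{g_2}+\varphi_{g_1}\varphi_{g_3}\bigr)$. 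I would bound the cross term by $\sum_{g_1,g_3}\varphi_{g_1}\varphi_{g_3}$ times the $O(N)$ admissible values of $g_2$, hence by $N\Phi^2$, and bound the remaining part by $\varphi_{\max(g_1,g_2,g_3)}$, grouping on the largest gap $M$: at most $3M^2$ configurations share $\max=M$, and $\sum_{M\le N}M^2\varphi_M\le N\sum_M M\varphi_M=N\Phi_1$. Multiplying by the $\le N$ choices of the base index $i$ then gives $O(N^2)$ for the four-distinct block as well, so that $N^{-2}\E(\etab^T\mathbf{s}_{Nt})^4\le C$.

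The main obstacle I anticipate is the bookkeeping for the coincident-index configurations, namely making the covariance-decay estimates of B5 apply to blocks containing a repeated node (the $X_a^2X_bX_c$ term and $\mathrm{Cov}(X_a^2,X_bX_c\mid\Fb_{t-1})$), where B5 must be read in its natural blocked form rather than literally for four distinct indices. The fully distinct case, by contrast, is clean once the largest-gap estimate is combined with the strengthened summability $\sum_h h\varphi_h=\Phi_1<\infty$ assumed in B5, which is precisely what upgrades the naive $O(N^3)$ count to the required $O(N^2)$; the plain summability $\sum_h\varphi_h=\Phi<\infty$ of B4 alone would be insufficient there.
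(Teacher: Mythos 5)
Your argument is correct, but it takes a genuinely different route from the paper. The paper's proof is short: it computes the conditional fourth moment of the normalized Poisson innovations explicitly, $\E(X_{i,t}^4\mid\Fb_{t-1})=\lambda_{i,t}^{-2}\lambda_{i,t}(1+3\lambda_{i,t})\le\beta_0^{-1}+3=:\Phi_0$, writes $\etab^T\mathbf{s}_{Nt}=\mathbf{a}_t^T\X_t$ exactly as you do, and then invokes an external result (\citet[Thm.~2.1]{yaskov_2015}, applied conditionally) to get $\E[(\mathbf{a}_t^T\X_t)^4\mid\Fb_{t-1}]\le C(\Phi_0+\Phi_1)(\mathbf{a}_t^T\mathbf{a}_t)^2$ in one stroke; the conclusion then follows because $N^{-2}\E(\mathbf{a}_t^T\mathbf{a}_t)^2$ is (up to a constant) the squared normalized Hessian, which is uniformly bounded. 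Conditions B4--B5 are written precisely to match the hypotheses of that imported inequality. What you have done, in effect, is reprove that inequality from first principles by expanding the fourth power, classifying tuples by the number of distinct indices, and using the largest-gap count $\sum_M M^2\varphi_M\le N\Phi_1$ to upgrade the $O(N^3)$ raw count to $O(N^2)$ --- which is exactly the role of the strengthened summability in B5, as you correctly identify. Your termwise handling of the weights via $\max_i\lnorm{a_{i,t}}_4^4<\infty$ and H\"older is a legitimate alternative to the paper's route through the Hessian; both rest on Proposition~\ref{finite_moment}. The one caveat you flag yourself --- that the coincident-index configurations (e.g.\ $\mathrm{Cov}(X_a^2,X_bX_c\mid\Fb_{t-1})$) require reading B5 in a blocked form that allows repeated indices within a block --- is real, but it is not a defect specific to your argument: the paper's appeal to the external theorem implicitly requires the same reading, since B5 as literally stated only covers four strictly ordered distinct indices. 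Your version has the merit of making explicit where each piece of B4--B5 (and where $\Phi_1$ rather than $\Phi$) is actually used, at the cost of considerably more bookkeeping.
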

\begin{proof}
	Note that $\E(X^4_{i,t}|\Fb_{t-1})=\lambda_{i,t}^{-2}\E(\xi^4_{i,t}|\Fb_{t-1})=\lambda_{i,t}^{-2}[\lambda_{i,t}(1+3\lambda_{i,t})]\leq 1/\beta_{0}+3\coloneqq \Phi_0$, where the last equality follows for the Poisson third central moment. Recall that $\etab^T\mathbf{s}_{Nt}=\etab^T\partial\lambdab_t^T/\partial\thetab \mathbf{D}_t^{-1}\xib_t=\mathbf{a}_t^T\X_t$, where $\mathbf{a}_t=\mathbf{D}_t^{-1/2}\partial\lambdab_t/\partial\thetab^T\etab$. Then,
	\begin{equation}
		\E\left[\left( \etab^T\mathbf{s}_{Nt}\right) ^4|\Fb_{t-1}\right]=\E\left[\left( \mathbf{a}_t^T\X_t\right) ^4|\Fb_{t-1}\right]\leq C\left( \Phi_0+\Phi_1\right)(\mathbf{a}_t^T\mathbf{a}_t)^2 \nonumber 
	\end{equation}
	where $C\geq 0$ is a universal constant and the inequality follows by an application of \citet[Thm.~2.1]{yaskov_2015} with conditional expectation. Define $ C^\prime=C(\Phi_0+\Phi_1)$. So $N^{-2}\E(\etab^T\mathbf{s}_{Nt})^4\leq C^\prime/N^2\E(\etab^T\partial\lambdab_t^T/\partial\thetab\mathbf{D}_t^{-1}\partial\lambdab_t/\partial^T\etab)^2$ which is the Hessian matrix squared (up to a constant), and it can be shown to be uniformly bounded, by standard Cauchy inequalities and the boundedness of all the moments of $\Y_t$.  
\end{proof}

A similar result is established for the log-linear model (Lemma~\ref{limits_log}), by adding the extra assumption $\exp(\nu_{i,t})\geq c_0$, for $i=1,\dots, N$, where $c_0>0$ is a constant. Indeed, by \eqref{weak dependence_log}, for $i<j$, $\norm{\textrm{Cov}(X_{i,t}, X_{j,t}\left| \right. \Fb_{t-1} )}=\norm{\textrm{Cov}(Y_{i,t}, Y_{j,t}\left| \right. \Fb_{t-1} )}/(\sqrt{\exp(\nu_{i,t})}\sqrt{\exp(\nu_{j,t})})\leq \phi_{j-i}/c_{0}\coloneqq \varphi_{j-i}$. 


\section{Improving efficiency of QMLE} \label{SUPP 2 step GEE}
We have developed a complete theoretical framework for QMLE inference. However, in general, the QMLE is computed under the independence assumption and therefore it will suffer efficiency loss. The lack of efficiency might become large especially when large correlations among the count processes are present. Some empirical evidence regarding this problem has been found in numerical studies of Sec.~\ref{simulations} and \ref{simulations_app}. Since, in practice, the dependence structure among the integer-valued random variables is modeled through a copula, with a parameter $\rho$, the problem can be intuitively viewed on examining the performance of the QMLE when the copula parameter $\rho$ takes large values. We propose here a two-step Generalized Estimating Equation (GEE) procedure, originally introduced for longitudinal data analysis \citep{liang1986}, which adds a further estimation step to the previous QMLE methodology. Consider model \eqref{lin2_p}, for example. The proposed QMLE maximizing the quasi log-likelihood \eqref{log-lik} is the $m$-dimensional vector of unknown parameter $\hat{\thetab}$ which solves the system of equation $S_{NT}(\thetab)=\mathbf{0}_m$, where $S_{NT}$ is defined in \eqref{score}. The GEE estimator $\tilde{\thetab}$ will be then the vector of roots of the following system.
\begin{equation}
	\sum_{t=1}^{T}\frac{\partial\lambdab^T_{t}(\thetab)}{\partial\thetab}\mathbf{V}_t^{-1}(\thetab, \tau)\Big(\Y_t-\lambdab_{t}(\thetab)\Big)=\mathbf{0}_m\,,
	\label{gee}
\end{equation}
where $\mathbf{V}_t(\thetab, r)^{-1}=\mathbf{D}_t(\thetab)^{-1/2}\mathbf{P}(\tau)^{-1}\mathbf{D}_t(\thetab)^{-1/2}$ and $\mathbf{P}(\tau)$ is an $N\times N$ matrix, the so called working correlation matrix, depending on a correlation parameter $\tau$. The matrix $\mathbf{P}(\cdot)$ is specified by the researcher and it will not necessarily reflect the true contemporaneous correlation structure of the multivariate process $\Y_t$, which in fact depends on the copula and whose analytical form is not available. Moreover, the working correlation matrix  should not be confused with the copula correlation matrix $\mathbf{R}$ which is specified on elliptical copulas, see for example Sec.~\ref{simulations}, as the two matrices can be, in general, different.

Note that the QMLE can be obtained as a particular case of \eqref{gee}, when $\mathbf{P}(\tau)=\I_N$, recovering the score function \eqref{score}. Intuitively, the estimator obtained by \eqref{gee}, which incorporates a correlation structure among multivariate process $\Y_t$, by introducing the matrix $\mathbf{P}(\tau)$, is expected to explain a larger part of the variability connected to the process of study. Therefore, \eqref{gee} should lead to improved efficiency results, compared to the QMLE with independence likelihood \eqref{log-lik}. In summary, we propose the following two-step GEE estimator.
\begin{enumerate}
	\item Compute the QMLE $\hat{\thetab}$ of model \eqref{lin2_p}. Then choose  $\mathbf{P}(\tau)$ and compute $\hat{\tau}=\tau(\hat{\thetab})$.
	\item Estimate $\tilde{\thetab}$ from \eqref{gee}, with working correlation $\mathbf{P}(\hat{\tau})$.
\end{enumerate}
There are two issues: i) choosing a suitable form for the working correlation matrix, among several available alternatives; see for example \citet[p.~121]{pan2001}, ii) find a suitable estimator for the parameter $\tau$ . For problem i), we note the following. Since the network dimension $N$ may be large, the inversion of the $N \times N$ matrix $\mathbf{P}(\cdot)$ may be extremely expensive and, ultimately, computationally unfeasible. Furthermore, since only the inverse of such matrix would be required by the proposed estimating equations, we suggest to find a correlation structure where an analytical form for the inverse is  known. For example, in this work  
we decide to impose the AR-1 correlation structure $\mathbf{P}(\tau)=(P_{ij})$ where $P_{ij}=\tau^{|i-j|}$, for $i,j=1,\dots, N$ and $i\neq j$. Such working correlation matrix is appealing since it has an analytical form of the inverse, i.e. $\mathbf{P}^{-1}(\tau)=1/(1-\tau^2)\mathbf{T}(\tau)$, where $\mathbf{T}(\tau)$ is a tridiagonal matrix, with the main diagonal consisting of the elements of the $N\times 1$ vector $(1, 1+\tau^2\dots,1+\tau^2,1$), and the remaining two diagonals are the elements of the $(N-1) \times 1$ vector of $(-\tau,\dots,-\tau)$; see \citet[eq.~1.1]{sutradhar_2003}. The estimation of the correlation parameter is usually performed using moment estimators \citep[Sec.~4]{liang1986}. We select two estimators, $\hat\tau_1=2\sum_{i=1}^{N}\sum_{j>i}\hat{\tau}_{ij}(\hat{\thetab})$ and $\hat{\tau}_2=\max_{i,j=1,\dots, N,\,j>i}\hat{\tau}_{ij}(\hat{\thetab})$, where
\begin{equation}
	\hat{\tau}_{ij}(\hat{\thetab})=\frac{\sum_{t=1}^{T}[ Y_{i,t}-\lambda_{i,t}(\hat{\thetab})][Y_{j,t}-\lambda_{j,t}(\hat{\thetab})]}{\sqrt{\sum_{t=1}^{T}[ Y_{i,t}-\lambda_{i,t}(\hat{\thetab})]^2}\sqrt{\sum_{t=1}^{T}[ Y_{j,t}-\lambda_{j,t}(\hat{\thetab})]^2}}
\end{equation} 
are the empirical Pearson correlations from the quasi-likelihood estimation.

To check the performance of the two-step GEE estimator we run a simulation study, by generating data from model \eqref{lin2} using the same setting of the simulation study outlined in Sec.~\ref{simulations}; data are produced by employing a copula parameter $\rho$ selected by an equidistant grid of values in the interval $[0.3,0.9]$. Then, we compare the estimation performances of the QMLE versus the GEE, by measuring their relative efficiency, with the relative Mean Square Error, $e(\hat{\thetab},\tilde{\thetab})=\sum_{s=1}^{S}|\hat{\thetab}_s-\thetab|^2_2/\sum_{s=1}^{S}|\tilde{\thetab}_s-\thetab|^2_2$, where $S=500$ is the number of simulations performed and $\hat{\thetab}_s$ is the estimator associated with the replication $s$. Clearly, $e(\hat{\thetab},\tilde{\thetab})>1$ shows improved efficiency of $\tilde{\thetab}$ when compared to $\hat{\thetab}$. The same comparison can be done marginally for each parameter of the model $e(\hat{\beta}_h,\tilde{\beta}_h)=\sum_{s=1}^{S}(\hat{\beta}_{h,s}-\beta_h)^2/\sum_{s=1}^{S}(\tilde{\beta}_{h,s}-\beta_h)^2$, for $h=1,\dots,m$. The results of the Monte Carlo simulations are summarized in Figure~\ref{rel_eff}.
\begin{figure}[H]
	\begin{center}
		\includegraphics[width=0.5\linewidth]{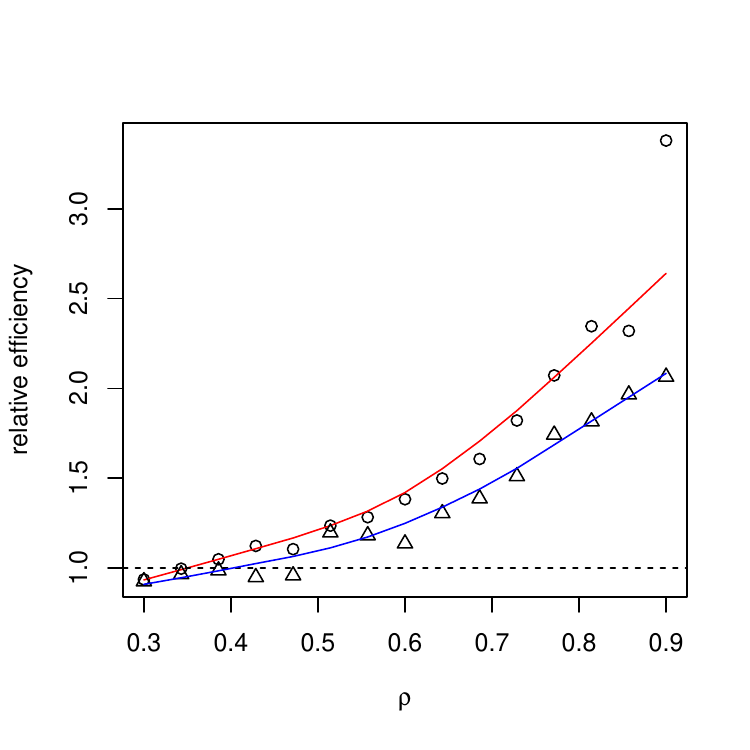}
		\caption{Plot for $e(\hat{\thetab},\tilde{\thetab})$ versus values of copula parameter $\rho$, with AR-1 working correlation matrix and $\tau(\hat{\thetab})=\hat{\tau}_2$. Data generated by model \eqref{lin2}, as in Sec.~\ref{simulations}, with 500 simulations. Triangles: $N,T=100$. Points: $N,T=200$. Blue line: LOWESS smoother at $N,T=100$. Red line: LOWESS smoother at $N,T=200$. Dashed line: horizontal line $e(\hat{\thetab},\tilde{\thetab})=1$.}%
		\label{rel_eff}
	\end{center}
\end{figure}
When the data are generated with a moderate or strong correlation (copula parameter $\rho > 0.5$) the GEE is relatively more efficient than the QMLE. The improvement in efficiency is larger as the correlation becomes stronger and tends to grow even by increasing network and time size ($N,T$). This would be expected since the QMLE becomes a poor approximation of the true likelihood as the dependence structure of the multivariate count process $\Y_t$ becomes stronger, whereas the specified GEE methodology appears to be able to account for a significant part of the correlations among the counts, even though the working correlation does not reflect the true correlation structure of the data. Similar results are obtained by comparing the marginal efficiencies $e(\hat{\beta}_h,\tilde{\beta}_h)$, with $h=0,1,2$, therefore are omitted. The employment of the moment estimator $\hat{\tau}_1$ gave analogous results but with gain in efficiency less than the gain obtained by used $\hat{\tau}_2$. 
The results of the current section can be established similarly for the log-linear version of the model.

Although the result of the present simulation study are encouraging, the problem of improving the efficiency of the QMLE  is only at the initial stage and further studies are required. For example, alternative working correlation structures may be used, like the equicorrelation structure (EQC), $\mathbf{P}(\tau)=(1-\tau)\I_N+\tau \mathbf{J}_N$, where $\mathbf{J}_N$ is a $N\times N$ matrix of ones and an equal pairwise correlation $\tau$ is assigned to all the possible couples of the multivariate time series. This matrix also have analytical inverse, $\mathbf{P}^{-1}(\tau)=(a-b)\I_N+b \mathbf{J}_N$, where $a=[1+(N-2)\tau]/\left\lbrace (1-\tau)[1+(N-1)\tau]\right\rbrace$ and $b=-\tau/\left\lbrace (1-\tau)[1+(N-1)\tau]\right\rbrace$. For a proof see \citet[p.~67]{rao_2002}; moreover, more complex correlation structures could be considered. In addition, further estimators for the correlation parameter $\tau$ may be available. Finally, the development of an asymptotic theory for the proposed GEE estimator would be of interest.
Such extensions will be studied  in a future contribution.

\section{Additional simulation results} \label{simulations_app} 

We present here further comments and findings from the simulation study reported in Sec. \ref{simulations}. The same parameter values have been used as in Sec. \ref{simulations}.
In the situation of independence ($\rho=0$) the QMLE reduces to the standard MLE. When $N$ is big and $T$ is small we see that QMLE
provides  satisfactory results (Table \ref{sim_gauss_lin_00}, \ref{sim_gauss_log_00}). However, this is not always the case.
{Following the results of Sec. \ref{simulations} and   when $\rho\gg0$,  a  more complex structure of dependence among variables will be observed. Therefore  the quasi likelihood \eqref{log-lik} can be thought  
	as a crude  approximation to  the true likelihood.} 
In particular, when $N\to\infty$ and $T$ is small, care must be taken in the interpretation of QMLE. This fact is also confirmed by the Tables \ref{sim_gauss_lin_02} and  \ref{sim_gauss_log_02}
which illustrate slightly poorer results in the case of   strong dependence among  counts.  Finally, if both the temporal size $T$ and the network size $N$ are reasonably large, then inferential results {in Section~\ref{SEC: inference}} are confirmed. 

Figure \ref{qq_log} shows a  QQ-plot  of the standardized estimators  for the  log-linear  model of order 1, with  Gaussian 
copula ($\rho=0.5$) and $N=100$. 
When both  dimensions are large, then  the approximation is satisfactory. 
Clearly, by reducing  dependence among count variables, we can obtain  better large-sample approximations but  
these results are not plotted due to  space constraints.

\begin{table}[H]
	\centering
	\caption{Estimators obtained from $S=1000$ simulations of model \eqref{lin2}, for various values of $N$ and $T$. Network generated by Ex.~\ref{sbm}. Data are generated by using the Gaussian AR-1 copula, with $\rho=0.9$ and $p=1$. Model \eqref{lin2_p} is also fitted using $p=2$ to check the performance of various information criteria.}
	\hspace*{-1cm}
	\scalebox{0.75}{
		\begin{tabular}{c|c|ccc|ccccc|ccc}\hline\hline
			\multicolumn{2}{c|}{Dim.} & \multicolumn{3}{c|}{$p=1$}  & \multicolumn{5}{c|}{$p=2$} & \multicolumn{3}{c}{IC $(\%)$}\\\hline
			$N$ & $T$ & $\hat{\beta}_0$ & $\hat{\beta}_1$ & $\hat{\beta}_2$ & $\hat{\beta}_0$ & $\hat{\beta}_{11}$ & $\hat{\beta}_{21}$ & $\hat{\beta}_{12}$ & $\hat{\beta}_{22}$  & $AIC$ & $BIC$ &$QIC$\\\hline
			\multirow{6}{*}{20} & \multirow{3}{*}{100} & 0.202 & 0.295 & 0.194 & 0.198 &  0.290 & 0.189 & 0.012 &  0.008 & \multirow{3}{*}{85.3} & \multirow{3}{*}{93.3} & \multirow{3}{*}{94.0}\\
			&  & (0.032) & (0.052) & (0.040) & (0.034) & (0.005) & (0.042) & (0.042) & (0.031) &  &  & \\
			&  & 100 & 100 & 99.6 & 100 &  100 & 99.3 & 1.8 &  1.5 &  &  & \\\cline{2-13}
			& \multirow{3}{*}{200} & 0.202 & 0.295 & 0.198 & 0.198 &  0.291 & 0.195 & 0.010 &  0.006 & \multirow{3}{*}{83.4} & \multirow{3}{*}{93.6} & \multirow{3}{*}{94.2} \\
			&  & (0.024) & (0.042) & (0.029) & (0.025) & (0.043) & (0.030) & (0.035) & (0.022) & & & \\\addlinespace[-0.4ex]
			&  & 100 & 100 & 100 & 100 &  100 & 100 & 1.8 &  1.7 &  &  & \\
			\hline
			\multirow{12}{*}{100} & \multirow{3}{*}{20} & 0.209 & 0.271 & 0.195 & 0.202 &  0.262 & 0.191 & 0.018 &  0.010 & \multirow{3}{*}{80.7} & \multirow{3}{*}{85.6} & \multirow{3}{*}{93.5}\\
			&  & (0.045) & (0.079) & (0.047) & (0.050) & (0.083) & (0.048) & (0.074) & (0.037) &  &  & \\\addlinespace[-0.4ex]
			&  & 99.8 & 88.9 & 98.5 & 99.8 &  86.4 & 97.4 & 1.9 & 1.5 &  &  & \\\cline{2-13}
			& \multirow{3}{*}{50} & 0.203 & 0.291 & 0.196 & 0.198 &  0.285 & 0.193 & 0.013 &  0.007 & \multirow{3}{*}{78.3} & \multirow{3}{*}{87.4} & \multirow{3}{*}{94.4} \\
			&  & (0.028) & (0.053) & (0.030) & (0.031) & (0.055) & (0.031) & (0.048) & (0.023) & & & \\\addlinespace[-0.4ex]
			&  & 100 & 100 & 100 & 100 &  100 & 100 & 1.5 &  1.6 &  &  & \\
			\cline{2-13}
			& \multirow{3}{*}{100} & 0.202 & 0.294 & 0.199 & 0.198 & 0.290 & 0.197 & 0.009 &  0.004 & \multirow{3}{*}{80.0} & \multirow{3}{*}{89.6} & \multirow{3}{*}{95.2}\\
			&  & (0.020) & (0.037) & (0.021) & (0.022) & (0.038) & (0.022) & (0.033) & (0.016) &  &  & \\\addlinespace[-0.4ex]
			&  & 100 & 100 & 100 & 100 &  100 & 100 & 1.1 &  1.4 &  &  & \\\cline{2-13}
			&\multirow{3}{*}{200} & 0.201 & 0.298 & 0.199 & 0.199 &  0.295 & 0.197 & 0.008 &  0.004 & \multirow{3}{*}{73.8} & \multirow{3}{*}{87.7} & \multirow{3}{*}{93.0} \\
			&  & (0.014) & (0.026) & (0.015) & (0.015) & (0.027) & (0.016) & (0.023) & (0.011) & & & \\\addlinespace[-0.4ex]
			&  & 100 & 100 & 100 & 100 &  100 & 100 & 1.7 &  1.4 &  &  & \\
			\hline
			\hline
		\end{tabular}
	}
	\hspace*{-1cm}
	\label{sim_gauss_lin_02}
\end{table}

\begin{table}[H]
	\centering
	\caption{Estimators obtained from $S=1000$ simulations of model \eqref{lin2}, for various values of $N$ and $T$. Network generated by Ex.~\ref{sbm}. Data are generated by using the Gaussian AR-1 copula, with $\rho=0$ and $p=1$. Model \eqref{lin2_p} is also fitted using $p=2$ to check the performance of various information criteria.}
	\hspace*{-1cm}
	\scalebox{0.75}{
		\begin{tabular}{c|c|ccc|ccccc|ccc}\hline\hline
			\multicolumn{2}{c|}{Dim.} & \multicolumn{3}{c|}{$p=1$}  & \multicolumn{5}{c|}{$p=2$} & \multicolumn{3}{c}{IC $(\%)$}\\\hline
			$N$ & $T$ & $\hat{\beta}_0$ & $\hat{\beta}_1$ & $\hat{\beta}_2$ & $\hat{\beta}_0$ & $\hat{\beta}_{11}$ & $\hat{\beta}_{21}$ & $\hat{\beta}_{12}$ & $\hat{\beta}_{22}$  & $AIC$ & $BIC$ &$QIC$\\\hline
			\multirow{6}{*}{20} & \multirow{3}{*}{100} & 0.200 & 0.299 & 0.198 & 0.196 &  0.294 & 0.195 & 0.012 &  0.007 & \multirow{3}{*}{94.7} & \multirow{3}{*}{99.6} & \multirow{3}{*}{94.0}\\
			&  & (0.015) & (0.037) & (0.026) & (0.017) & (0.038) & (0.027) & (0.035) & (0.022) &  &  & \\\addlinespace[-0.4ex]
			&  & 100 & 100 & 100 & 100 &  100 & 100 & 2.0 &  2.1 &  &  & \\\cline{2-13}
			& \multirow{3}{*}{200} & 0.200 & 0.299 & 0.199 & 0.197 &  0.296 & 0.197 & 0.008 &  0.006 & \multirow{3}{*}{93.5} & \multirow{3}{*}{99.6} & \multirow{3}{*}{92.9} \\
			&  & (0.011) & (0.023) & (0.018) & (0.012) & (0.024) & (0.019) & (0.021) & (0.015) & & & \\\addlinespace[-0.4ex]
			&  & 100 & 100 & 100 & 100 &  100 & 100 & 2.4 &  2.8 &  &  & \\
			\hline
			\multirow{12}{*}{100} & \multirow{3}{*}{20} & 0.202 & 0.297 & 0.198 & 0.195 &  0.291 & 0.195 & 0.016 &  0.008 & \multirow{3}{*}{95.2} & \multirow{3}{*}{97.5} & \multirow{3}{*}{93.2}\\
			&  & (0.021) & (0.046) & (0.025) & (0.025) & (0.048) & (0.026) & (0.043) & (0.021) &  &  & \\\addlinespace[-0.4ex]
			&  & 100 & 100 & 100 & 100 & 100 & 100 & 2.9 & 3.1 &  &  & \\\cline{2-13}
			& \multirow{3}{*}{50} & 0.200 & 0.299 & 0.200 & 0.196 &  0.296 & 0.198 & 0.010 &  0.005 & \multirow{3}{*}{95.0} & \multirow{3}{*}{99.5} & \multirow{3}{*}{94.5} \\
			&  & (0.013) & (0.029) & (0.016) & (0.016) & (0.030) & (0.016) & (0.028) & (0.014) & & & \\\addlinespace[-0.4ex]
			&  & 100 & 100 & 100 & 100 &  100 & 100 & 2.8 &  2.0 &  &  & \\
			\cline{2-13}
			& \multirow{3}{*}{100} & 0.201 & 0.298 & 0.200 & 0.197 & 0.296 & 0.198 & 0.008 &  0.003 & \multirow{3}{*}{94.8} & \multirow{3}{*}{99.6} & \multirow{3}{*}{94.7}\\
			&  & (0.009) & (0.021) & (0.012) & (0.011) & (0.022) & (0.012) & (0.020) & (0.010) &  &  & \\\addlinespace[-0.4ex]
			&  & 100 & 100 & 100 & 100 &  100 & 100 & 2.3 &  2.0 &  &  & \\\cline{2-13}
			&\multirow{3}{*}{200} & 0.200 & 0.300 & 0.200 & 0.198 &  0.298 & 0.199 & 0.006 &  0.002 & \multirow{3}{*}{94.7} & \multirow{3}{*}{99.9} & \multirow{3}{*}{94.7} \\
			&  & (0.006) & (0.015) & (0.008) & (0.008) & (0.015) & (0.008) & (0.014) & (0.007) & & & \\\addlinespace[-0.4ex]
			&  & 100 & 100 & 100 & 100 &  100 & 100 & 2.1 & 2.4 &  &  & \\
			\hline
			\hline
		\end{tabular}
	}
	\hspace*{-1cm}
	\label{sim_gauss_lin_00}
\end{table}
\begin{table}[H]
	\centering
	\caption{Estimators obtained from $S=1000$ simulations of model \eqref{log_lin2}, for various values of $N$ and $T$. Network generated by Ex.~\ref{sbm}. Data are generated by using the Gaussian AR-1 copula, with $\rho=0.9$ and $p=1$. Model \eqref{log_lin2_p} is also fitted using $p=2$ to check the performance of various information criteria.}
	\hspace*{-1cm}
	\scalebox{0.75}{
		\begin{tabular}{c|c|ccc|ccccc|ccc}\hline\hline
			\multicolumn{2}{c|}{Dim.} & \multicolumn{3}{c|}{$p=1$}  & \multicolumn{5}{c|}{$p=2$} & \multicolumn{3}{c}{IC $(\%)$}\\\hline
			$N$ & $T$ & $\hat{\beta}_0$ & $\hat{\beta}_1$ & $\hat{\beta}_2$ & $\hat{\beta}_0$ & $\hat{\beta}_{11}$ & $\hat{\beta}_{21}$ & $\hat{\beta}_{12}$ & $\hat{\beta}_{22}$  & $AIC$ & $BIC$ &$QIC$\\\hline
			\multirow{6}{*}{20} & \multirow{3}{*}{100} & 0.207 & 0.291 & 0.197 & 0.212 &  0.293 & 0.199 & -0.007 & -0.002 & \multirow{3}{*}{54.9} & \multirow{3}{*}{76.3} & \multirow{3}{*}{81.8}\\
			&  & (0.104) & (0.061) & (0.049) & (0.118) & (0.062) & (0.051) & (0.061) & (0.048) &  &  & \\\addlinespace[-0.4ex]
			&  & 51.8 & 99.4 & 97.9 & 42.8 & 99.3 & 97.0 & 3.8 &  2.4 &  &  & \\\cline{2-13}
			& \multirow{3}{*}{200} & 0.204 & 0.296 & 0.197 & 0.209 &  0.298 & 0.198 & -0.003 &  -0.007 & \multirow{3}{*}{48.6} & \multirow{3}{*}{78.9} & \multirow{3}{*}{85.6} \\
			&  & (0.073) & (0.044) & (0.039) & (0.082) & (0.050) & (0.039) & (0.050) & (0.037) & & & \\\addlinespace[-0.4ex]
			&  & 78.6 & 100 & 100 & 72.0 & 100 & 100 & 2.2 & 1.9 &  &  & \\
			\hline
			\multirow{12}{*}{100} & \multirow{3}{*}{20} & 0.233 & 0.265 & 0.200 & 0.252 &  0.271 & 0.202 & -0.025 & -0.006 & \multirow{3}{*}{37.9} & \multirow{3}{*}{48.5} & \multirow{3}{*}{77.9}\\
			&  & (0.146) & (0.110) & (0.058) & (0.168) & (0.112) & (0.059) & (0.109) & (0.057) &  &  & \\\addlinespace[-0.4ex]
			&  & 37.6 & 66.5 & 91.8 & 32.9 & 65.7 & 91.5 & 2.8 & 3.4 &  &  & \\\cline{2-13}
			& \multirow{3}{*}{50} & 0.213 & 0.289 & 0.198 & 0.218 &  0.291 & 0.199 & -0.007 &  -0.002 & \multirow{3}{*}{37.3} & \multirow{3}{*}{58.4} & \multirow{3}{*}{83.5} \\
			&  & (0.089) & (0.063) & (0.038) & (0.103) & (0.066) & (0.038) & (0.065) & (0.037) & & & \\\addlinespace[-0.4ex]
			&  & 65.1 & 99.1 & 100 & 55.0 & 98.7 & 100 & 1.8 &  3.1 &  &  & \\
			\cline{2-13}
			& \multirow{3}{*}{100} & 0.207 & 0.294 & 0.199 & 0.212 & 0.296 & 0.199 & -0.007 &  -0.001 & \multirow{3}{*}{35.2} & \multirow{3}{*}{62.7} & \multirow{3}{*}{84.3}\\
			&  & (0.066) & (0.050) & (0.026) & (0.078) & (0.051) & (0.027) & (0.051) & (0.026) &  &  & \\\addlinespace[-0.4ex]
			&  & 87.4 & 100 & 100 & 76.6 & 100 & 100 & 1.7 &  2.4 &  &  & \\\cline{2-13}
			&\multirow{3}{*}{200} & 0.204 & 0.297 & 0.199 & 0.206 &  0.298 & 0.200 & -0.004 &  0.000 & \multirow{3}{*}{39.4} & \multirow{3}{*}{67.2} & \multirow{3}{*}{84.3} \\
			&  & (0.046) & (0.034) & (0.019) & (0.053) & (0.036) & (0.019) & (0.035) & (0.019) & & & \\\addlinespace[-0.4ex]
			&  & 99.1 & 100 & 100 & 100 &  96.8 & 100 & 1.9 &  2.5 &  &  & \\
			\hline
			\hline
		\end{tabular}
	}
	\hspace*{-1cm}
	\label{sim_gauss_log_02}
\end{table}
\begin{table}[H]
	\centering
	\caption{Estimators obtained from $S=1000$ simulations of model \eqref{log_lin2}, for various values of $N$ and $T$. Network generated by Ex.~\ref{sbm}. Data are generated by using the Gaussian AR-1 copula, with $\rho=0$ and $p=1$. Model \eqref{log_lin2_p} is also fitted using $p=2$ to check the performance of various information criteria.}
	\hspace*{-1cm}
	\scalebox{0.75}{
		\begin{tabular}{c|c|ccc|ccccc|ccc}\hline\hline
			\multicolumn{2}{c|}{Dim.} & \multicolumn{3}{c|}{$p=1$}  & \multicolumn{5}{c|}{$p=2$} & \multicolumn{3}{c}{IC $(\%)$}\\\hline
			$N$ & $T$ & $\hat{\beta}_0$ & $\hat{\beta}_1$ & $\hat{\beta}_2$ & $\hat{\beta}_0$ & $\hat{\beta}_{11}$ & $\hat{\beta}_{21}$ & $\hat{\beta}_{12}$ & $\hat{\beta}_{22}$  & $AIC$ & $BIC$ &$QIC$\\\hline
			\multirow{6}{*}{20} & \multirow{3}{*}{100} & 0.201 & 0.299 & 0.200 & 0.202 &  0.299 & 0.200 & 0.000 & -0.001 & \multirow{3}{*}{87.1} & \multirow{3}{*}{98.9} & \multirow{3}{*}{86.1}\\
			&  & (0.044) & (0.034) & (0.032) & (0.050) & (0.039) & (0.032) & (0.039) & (0.031) &  &  & \\\addlinespace[-0.4ex]
			&  & 99.5 & 100 & 100 & 97.8 & 100 & 100 & 2.7 & 2.7 &  &  & \\\cline{2-13}
			& \multirow{3}{*}{200} & 0.202 & 0.299 & 0.199 & 0.203 &  0.298 & 0.199 & 0.001 &  -0.002 & \multirow{3}{*}{86.1} & \multirow{3}{*}{99.2} & \multirow{3}{*}{85.8} \\
			&  & (0.031) & (0.024) & (0.022) & (0.036) & (0.028) & (0.023) & (0.029) & (0.022) & & & \\\addlinespace[-0.4ex]
			&  & 100 & 100 & 100 & 100 &  100 & 100 & 3.0 & 1.7 &  &  & \\
			\hline
			\multirow{12}{*}{100} & \multirow{3}{*}{20} & 0.206 & 0.294 & 0.199 & 0.207 &  0.294 & 0.199 & 0.000 & -0.001 & \multirow{3}{*}{85.6} & \multirow{3}{*}{94.5} & \multirow{3}{*}{81.8}\\
			&  & (0.075) & (0.068) & (0.032) & (0.088) & (0.069) & (0.032) & (0.064) & (0.031) &  &  & \\\addlinespace[-0.4ex]
			&  & 76.3 & 98.3 & 100 & 66.2 & 98.6 & 100 & 3.6 & 4.1 &  &  & \\\cline{2-13}
			& \multirow{3}{*}{50} & 0.203 & 0.298 & 0.199 & 0.203 &  0.298 & 0.199 & -0.001 &  0.000 & \multirow{3}{*}{84} & \multirow{3}{*}{98.3} & \multirow{3}{*}{82.7} \\
			&  & (0.047) & (0.042) & (0.020) & (0.058) & (0.043) & (0.020) & (0.041) & (0.020) & & & \\\addlinespace[-0.4ex]
			&  & 98.9 & 100 & 100 & 92.5 & 100 & 100 & 3.3 & 4.2 &  &  & \\
			\cline{2-13}
			& \multirow{3}{*}{100} & 0.203 & 0.298 & 0.200 & 0.204 & 0.298 & 0.200 & -0.001 &  0.000 & \multirow{3}{*}{86.9} & \multirow{3}{*}{99.5} & \multirow{3}{*}{85.1}\\
			&  & (0.033) & (0.029) & (0.014) & (0.041) & (0.030) & (0.014) & (0.029) & (0.014) &  &  & \\\addlinespace[-0.4ex]
			&  & 100 & 100 & 100 & 99.9 &  100 & 100 & 2.7 &  2.6 &  &  & \\\cline{2-13}
			&\multirow{3}{*}{200} & 0.200 & 0.299 & 0.200 & 0.201 &  0.299 & 0.200 & -0.001 &  0.000 & \multirow{3}{*}{87.2} & \multirow{3}{*}{99.5} & \multirow{3}{*}{86.6} \\
			&  & (0.023) & (0.021) & (0.010) & (0.029) & (0.021) & (0.010) & (0.021) & (0.010) & & & \\\addlinespace[-0.4ex]
			&  & 100 & 100 & 100 & 100 &  100 & 100 & 2.5 &  2.5 &  &  & \\
			\hline
			\hline
		\end{tabular}
	}
	\hspace*{-1cm}
	\label{sim_gauss_log_00}
\end{table}

\begin{figure}[H]
	\begin{center}
		\includegraphics[width=0.9\linewidth]{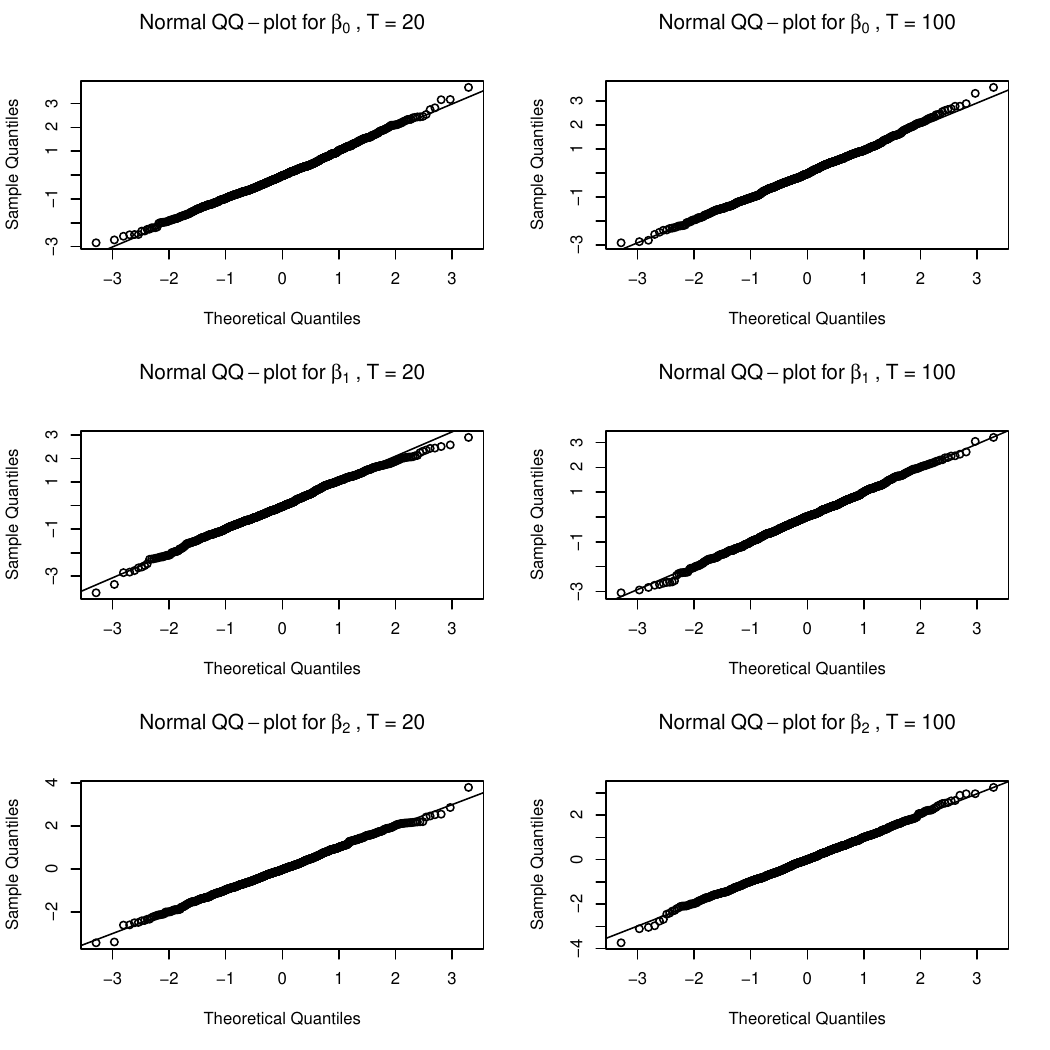}
		\caption{QQ-plots for the linear model \eqref{lin2}, Gaussian AR-1 copula, with $\rho=0.5$, $N=100$. Left: $T=20$. Right: $T=100$.}%
		\label{qq_lin}
	\end{center}
\end{figure}

\begin{figure}[H]
	\begin{center}
		\includegraphics[width=0.9\linewidth]{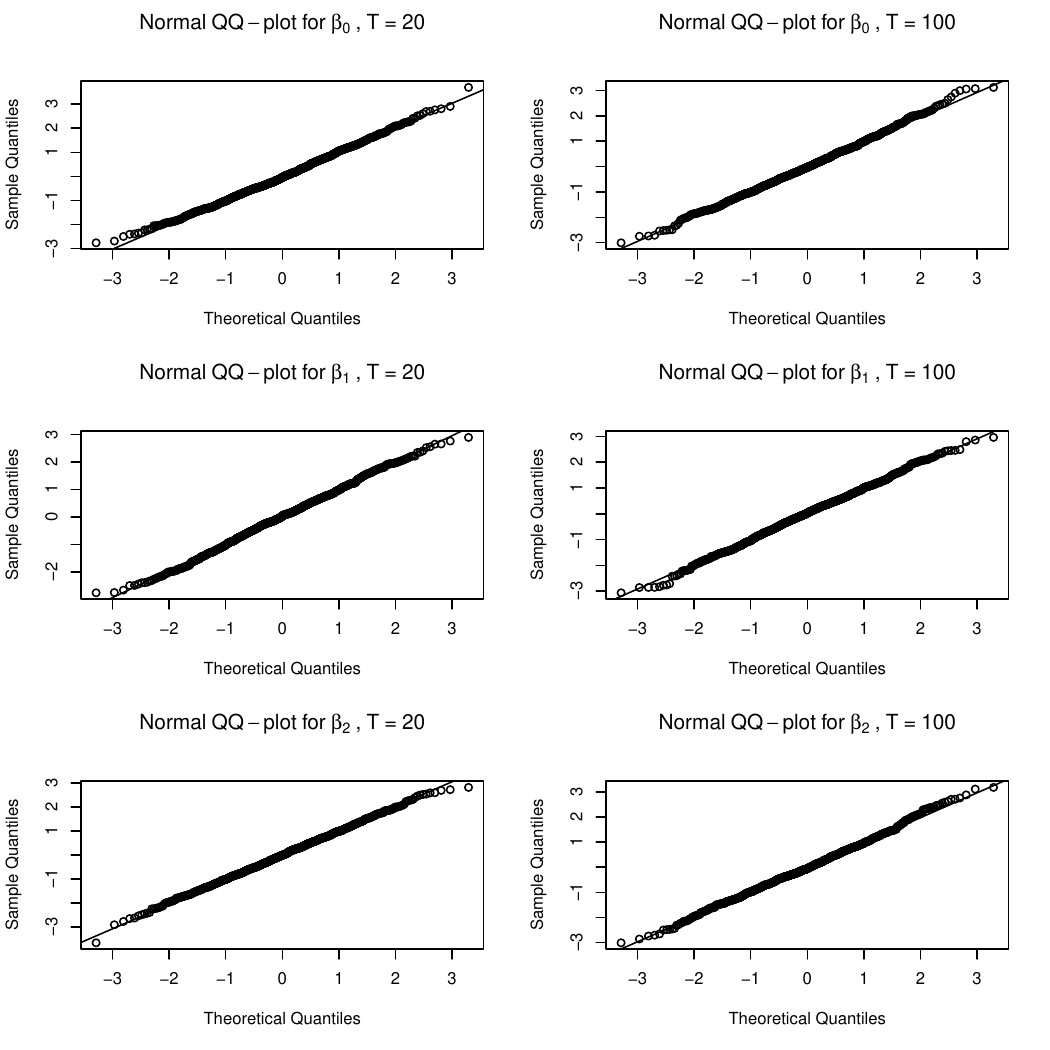}
		\caption{QQ-plots for the log-linear model \eqref{log_lin2}, Gaussian AR-1 copula, with $\rho=0.5$, $N=100$. Left: $T=20$. Right: $T=100$.}%
		\label{qq_log}
	\end{center}
\end{figure}

It can be seen that the conclusions of the simulation study hold true even when the number of network nodes $N$ is big and eventually $N \gg T$. See Table~\ref{sim_gauss_1000_00}.

We present here an additional simulation result obtained using a  spatial network (Table~\ref{sim_gauss_space_00}). The network is generated by first-order spatial ordering neighbor structure as defined in \citet[Fig.~1]{pfeifer1980identification}. In this construction each node is connected in a spatial network to the nodes that are immediately adjacent to it. The mean of estimated coefficients is basically equal to the true values, the standard errors in brackets are small and the coefficient are significantly different from 0 on the 100\% of replications providing results comparable to the other network models.

\begin{table}[H]
	\centering
	\caption{Estimators obtained from $S=1000$ simulations of model \eqref{lin2}, for various values of $N$ and $T$. Network generated by Ex.~\ref{sbm}. Data are generated by using the Gaussian AR-1 copula, with $\rho=0.5$ and $p=1$.} 
\hspace*{-1cm}
\scalebox{0.75}{
	\begin{tabular}{c|c|ccc}\hline\hline
		\multicolumn{2}{c|}{Dim.} & \multicolumn{3}{c}{Parameters} \\\hline
		$N$ & $T$ & $\hat{\beta}_0$ & $\hat{\beta}_1$ & $\hat{\beta}_2$ \\\hline
		\multirow{3}{*}{500} & \multirow{3}{*}{200} & 0.200 & 0.299 & 0.200 \\
		&  & (0.007) & (0.015) & (0.009) \\\addlinespace[-0.4ex]
		&  & 100 & 100 & 100 \\\hline 
		\multirow{3}{*}{1000} & \multirow{3}{*}{200} & 0.200 & 0.300 & 0.200 \\
		&  & (0.004) & (0.010) & (0.003)  \\\addlinespace[-0.4ex]
		&  & 100 & 100 & 100 \\
		\hline
		\hline
	\end{tabular}
}
\hspace*{-1cm}
\label{sim_gauss_1000_00}
\end{table}

\begin{table}[H]
\centering
\caption{Estimators obtained from $S=1000$ simulations of model \eqref{lin2}, for various values of $N$ and $T$. Network generated by first-order spatial ordering. Data are generated by using the Gaussian AR-1 copula, with $\rho=0.5$ and $p=1$.} 
\hspace*{-1cm}
\scalebox{0.75}{
\begin{tabular}{c|c|ccc}\hline\hline
	\multicolumn{2}{c|}{Dim.} & \multicolumn{3}{c}{Parameters} \\\hline
	$N$ & $T$ & $\hat{\beta}_0$ & $\hat{\beta}_1$ & $\hat{\beta}_2$ \\\hline
	\multirow{3}{*}{100} & \multirow{3}{*}{100} & 0.200 & 0.299 & 0.199 \\
	&  & (0.008) & (0.018) & (0.011) \\\addlinespace[-0.4ex]
	&  & 100 & 100 & 100 \\\hline 
	\multirow{3}{*}{200} & \multirow{3}{*}{200} & 0.200 & 0.300 & 0.200 \\
	&  & (0.004) & (0.009) & (0.005)  \\\addlinespace[-0.4ex]
	&  & 100 & 100 & 100 \\
	\hline
	\hline
\end{tabular}
}
\hspace*{-1cm}
\label{sim_gauss_space_00}
\end{table}

\section{Copula estimation} \label{SUPP copula estimation} 

We propose an heuristic parametric bootstrap algorithm for both identification of the copula structure $C(\dots, \rho)$ and estimation of the unknown copula parameter $\rho$. We consider the case of a copula which depends on a univariate parameter.. A thorough study of the problem will be discussed elsewhere. The methodology is based on parametric bootstrap and it is outlined below:
\begin{enumerate}
\item Given the observations, $Y_{i,t}, i=1,\dots, N$, estimate $\hat{\thetab}$.
\item For a given a copula structure and for a given value of the copula parameter, generate a sample of conditional marginal Poisson counts, $Y^b_{i,t}, i=1,\dots, N$, with the algorithm introduced in Sec.~\ref{Sec:Properties of order 1 model}, using the estimates from step 1.
\item Compute the weighted mean absolute error (WMAE) $\sum_{t=1}^{T}\sum_{i=1}^{N}\norm{Y_{i,t}-Y^b_{i,t}}/\sum_{t=1}^{T}\sum_{i=1}^{N}Y_{i,t}$.
\item Repeat step 2-3 for different copula structure and over a grid of values for the copula parameter. Estimate  the copula  and its parameter  by  $\hat{\rho}_b$, as those
choices  which minimize the WMAE.
\item Repeat steps 2-4 for $B$ times, where $b=1,\dots,B$, to select the copula structure and parameter value minimizing the WMAE, giving the realizations $\hat{\rho}_1,\dots,\hat{\rho}_B$. The chosen copula structure is the one that is selected most of the times. The final estimate of the copula parameter $\hat{\rho}$ is the average of copula parameters for such realizations (computed by only considering the realizations of the copula structure that is selected most of the times). Similarly, the associated standard error are computed only from these realizations.
\end{enumerate}


We run a small simulation study to show the effectiveness of the proposed estimation algorithm. The network is obtained from the SBM model presented in Ex.~\ref{sbm}, with $K=2$ blocks. We set $N=100$, $T=1000$ and $(\beta_0,\beta_1,\beta_2)=(1,0.2,0.1)$. Data are generated by linear and log-linear PNAR(1) models as in \eqref{lin2},\eqref{log_lin2}, respectively, by using the algorithm of Section~\ref{Sec:Properties of order 1 model}, with a Gaussian AR-1 copula structure, as described in Sec.~\ref{simulations}, where the true value of the copula parameter is $\rho=0.5$. We run the bootstrap algorithm, with $B=500$, comparing the Gaussian AR-1 copula and the Clayton copula. The parameters are picked on a grid of 17 equidistant values in the interval (0.1,0.9) for the Gaussian copula parameter, and (0.5,8) for the Clayton one. The results are summarized in Table~\ref{copula_estimation}. The third row indicates the percentage selection of the right copula structure (Gaussian AR-1); we see that the right copula structure is selected the vast majority of the times, for both models. Within these realizations the obtained estimate for the copula parameter, displayed in the first row, is quite accurate (for both models). Finally the second row shows that the associated standard errors are quite small and confirm the significance of the estimates.

\begin{table}[H]
\centering
\caption{Copula estimation for Gaussian AR-1 versus Clayton copula. Network generated as in Ex.~\ref{sbm}, with $K=2$. Data generated by models \eqref{lin2},\eqref{log_lin2} and Gaussian AR-1 copula, with $\rho=0.5$. Results based on $B=500$ bootstrap replications.}
\scalebox{0.8}{
\begin{tabular}{ccc}
	\hline\hline
	& \multicolumn{2}{c}{Estimates} \\
	\hline
	& linear & log-linear \\
	$\hat{\rho}$ & 0.511 & 0.505 \\ 
	$SE(\hat{\rho})$ & 0.0595  & 0.0520 \\
	Selection & 96.2 & 94.2\\
	\hline\hline
\end{tabular}
}
\label{copula_estimation}
\end{table}

\small

\bibliographystyle{chicagoa}
\bibliography{poisson_nar}
\end{document}